\documentclass[12pt,english,final]{article}
\usepackage{iftex}
\usepackage{mathtools}

\ifXeTeX
  \usepackage{fourier-otf}
\else

  \usepackage{fourier}
  \usepackage[T1]{fontenc}
  \usepackage[latin9]{inputenc}
\fi

\usepackage{cprotect}

\usepackage{url}
\usepackage{amsmath}
\usepackage{amsthm}
\ifXeTeX\else
  \usepackage{amssymb}
\fi
\usepackage{graphicx}
\usepackage{xcolor}
\usepackage[letterpaper]{geometry}
\usepackage{setspace}
\usepackage[authoryear,longnamesfirst]{natbib}
\usepackage{bibunits}
\makeatletter

\theoremstyle{definition}

\theoremstyle{plain}

\theoremstyle{plain}

\usepackage{subcaption}
\usepackage{diagbox}
\usepackage{appendix}
\usepackage{fancyvrb}
\usepackage{booktabs}

\ifXeTeX\else
  \usepackage{epstopdf}
\fi
\usepackage[font=normalsize]{caption}
\usepackage{enumerate}
\usepackage{mathrsfs}
\usepackage{bm}
\usepackage{pdfpages}
\usepackage{float}
\usepackage{afterpage}
\usepackage{pdflscape}
\usepackage{array}
\usepackage{multirow}
\usepackage{longtable}
\usepackage{makecell}
\usepackage{algorithm}
\usepackage{algpseudocode}
\usepackage{tikz}
\usetikzlibrary{decorations.pathreplacing}
\usepackage{titletoc}

\definecolor{darkblue}{rgb}{0,0,.6}
\definecolor{darkred}{rgb}{.6,0,0}
\definecolor{darkorange}{RGB}{204,85,0}

\usepackage[
 unicode=true,
 bookmarks=false,
 breaklinks=false,
 pdfborder={0 0 1},
 colorlinks=true,
 linkcolor=darkred,
 citecolor=darkblue,
 urlcolor=darkblue
]{hyperref}

\def\munderbar#1{\underline{\sbox\tw@{$#1$}\dp\tw@\z@\box\tw@}}

\newcommand{\commentout}[1]{}

\renewcommand{\paragraph}{\@startsection{paragraph}{4}{\z@}
  {1.5ex \@plus1ex \@minus.2ex}
  {-1em}
  {\normalfont\normalsize\bfseries}}

\theoremstyle{plain}
\providecommand{\assumptionname}{Assumption}
\providecommand{\observationname}{Observation}
\newtheorem*{observation*}{\protect\observationname}\providecommand{\observationname}{Observation}
\newtheorem{appassumption}{Assumption}[section]
\newtheorem{theorem}{Theorem}

\newcommand{\sym}[1]{\ifmmode^{#1}\else$^{#1}$\fi}

\makeatother

\usepackage{babel}
\providecommand{\definitionname}{Definition}
\providecommand{\lemmaname}{Lemma}
\providecommand{\propositionname}{Proposition}
\providecommand{\corollaryname}{Corollary}

\defaultbibliographystyle{aea}
\defaultbibliography{Inflation_Forecasting_refs}

\begin{document}
\begin{bibunit}
\title{\textbf{Forecasting Inflation with Microdata:\\An Adaptive Machine Learning Approach}\thanks{Chen: Stanford University; \url{cyc2152@stanford.edu}. Gao: Peking University; \url{chengao0716@gmail.com}. Hazell: London School of Economics; \url{j.hazell@lse.ac.uk}. Lei: Stanford Graduate School of Business; \url{lihualei@stanford.edu}. Lian: UC Berkeley and NBER; \url{chen_lian@berkeley.edu}. We thank Nicholas Tokay and Sven van Holten Charria for outstanding research assistance.}}
\author{\textbf{Catherine Chen} \qquad{}\textbf{Chen Gao} \qquad{}\textbf{Jonathon Hazell} \qquad{}\textbf{Lihua Lei} \qquad{}\textbf{Chen Lian}}
\date{\today}
\pagenumbering{gobble}
\maketitle
\begin{center}
Preliminary.
\end{center}

\begin{abstract}
  \noindent
Does microeconomic heterogeneity help to forecast aggregate inflation in a non-stationary environment? We develop a scan test for whether one forecast outperforms another, over an interval with unknown starting point and duration. To exploit any occasional forecasting power that the scan test detects, we design an adaptive machine learning pipeline. We encode the distribution of price changes into a high-dimensional vector, which we combine with a gradient boosted trees algorithm. We then combine this micro forecast with other benchmark forecasts, using an adaptive algorithm that makes use of the micro forecast only when it performs well. We apply the pipeline to UK microdata, with four main results. First, the micro forecast outperforms a univariate benchmark, but only in the volatile period after 2020. Second, the scan test detects periods of micro outperformance, so the micro forecast enters the combined forecast. Third, the combined forecast performs comparably to the univariate benchmark before 2020 and better at every horizon after 2020. Fourth, the value of microdata for the combined forecast materializes after 2020. We conclude that microdata are valuable for forecasting aggregate inflation, but only after large shocks.\end{abstract}
\thispagestyle{empty} \setlength{\abovedisplayskip}{.5em} \setlength{\belowdisplayskip}{.5em}

\clearpage{}

\pagenumbering{arabic}

\section{Introduction}\label{sec:introduction}

How can we forecast the economy? Traditional forecasting approaches rely on macroeconomic data. Yet aggregate outcomes are generated by millions of microeconomic decisions about price setting, employment, consumption, and production. In recent decades, high-dimensional micro datasets tracking this behavior have become available. These data create an opportunity to use micro-level information to improve forecasts of aggregate variables. In taking advantage of this opportunity, we can answer a fundamental question in macroeconomics: does microeconomic heterogeneity matter for aggregate dynamics---and if so, when?

This paper focuses on inflation, one of the most difficult macroeconomic variables to forecast. Even sophisticated models often struggle to outperform simple univariate benchmarks \citep{AtkesonOhanian2001}. A central challenge is non-stationarity: the process governing inflation dynamics changes over time. Figure~\ref{fig:us_uk_cpi} illustrates this clearly for the United Kingdom and the United States over 2015--2024, with markedly different inflation behavior before and after 2020.

\afterpage{
\begin{figure}[t!]
  \centering
  \caption{Headline CPI Inflation in the United States and United Kingdom, 2015--2024}
  \label{fig:us_uk_cpi}
  \includegraphics[width=\textwidth]{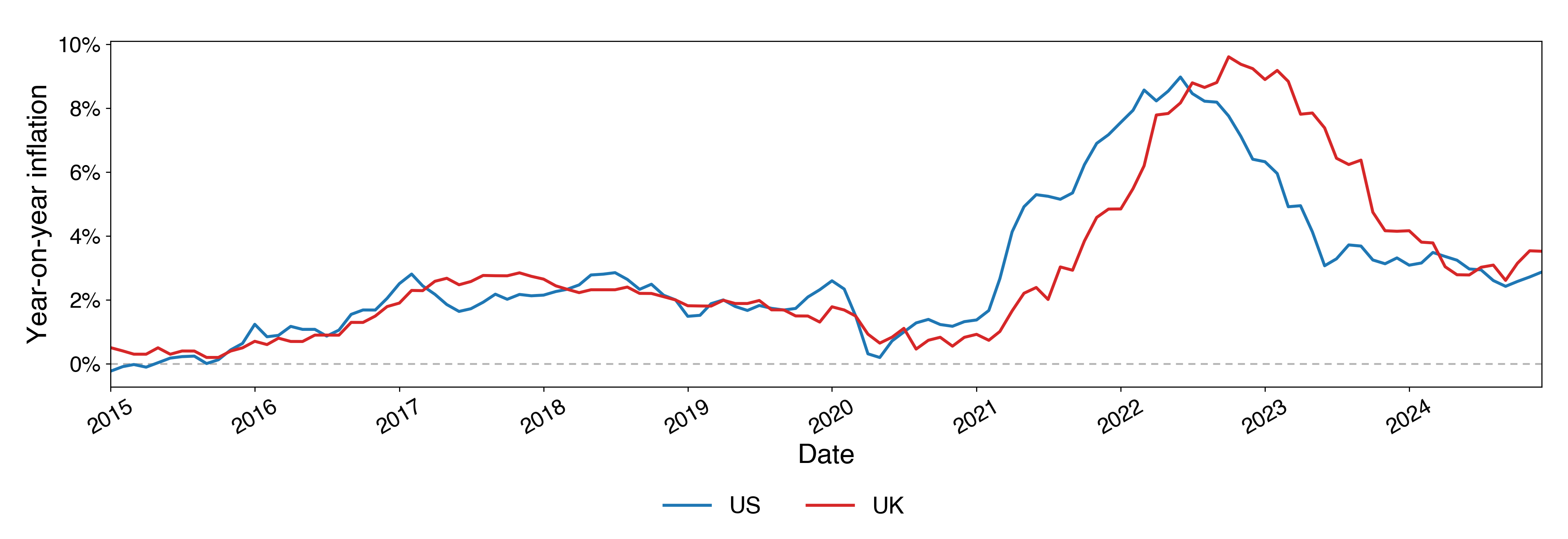}
  \smallskip
  \begin{minipage}{\textwidth}\setstretch{1.0}
    \footnotesize\textit{Notes:} Year-on-year headline CPI inflation for the United States (blue) and United Kingdom (red), January 2015 to December 2024. Data source: Federal Reserve Economic Data (FRED).
  \end{minipage}
\end{figure}
}

There is good reason to think that microdata---in our case, the millions of price changes observed each month---may improve inflation forecasts. Each price change reflects a forward-looking decision by an individual firm, and the cross section of such decisions may contain information about future aggregate inflation. In simple macroeconomic models, such as the linearized Phillips Curve with Calvo price setting, microdata does not matter for aggregate dynamics (e.g. \citealp{Gali2015}). Richer macroeconomic models contain more nuance: microdata should help forecast inflation in some environments but not others. When aggregate shocks are small, microdata do not add forecasting power conditional on macroeconomic aggregates in state-dependent models of price setting \citep{AuclertRogatoRognlieStraub2024}. When aggregate shocks are large, however, micro information can help predict aggregate inflation \citep{Blanco2021}. That is, microdata matters \textit{only occasionally} for aggregate dynamics.

In this work, we introduce a method to test for and exploit the power of high dimensional microdata for forecasting inflation, in a setting in which microdata is only occasionally powerful. We develop a scan test designed to test whether one forecast occasionally outperforms a benchmark, which we apply to the microdata-based forecast. We then create an adaptive machine learning pipeline in order to exploit the occasional forecasting power of microdata, motivated by existing tools from the adaptive machine learning literature \citep{HerbsterWarmuth1998,korotin2020adaptive}. These algorithms are well suited for high dimensional prediction in a non-stationary environment, and produce easy-to-interpret statistics that connect to macroeconomic models.

Our primary dataset consists of official UK CPI micro-price quotes. The data contain tens of thousands of monthly observations on individual prices across the economy and underlie the construction of aggregate inflation. The sample runs from 1996 to 2024, covering both the low and stable inflation period before 2020 and the more volatile period thereafter (see Figure~\ref{fig:us_uk_cpi}). Each observation is a price quote identified by an item code, shop, region, and shop type. We supplement these microdata with a large panel of macroeconomic predictors \citep{GouletCoulombeMarcellinoStevanovic2021}.

First, we develop a procedure for assessing whether microdata improve aggregate inflation forecasts, at least occasionally, allowing for non-stationary inflation dynamics. We develop a scan test with a composite one-sided null hypothesis that one forecast is at least as accurate as another at every point in time and at every forecasting horizon. The alternative hypothesis is that the second forecast outperforms over some contiguous interval. For each candidate interval and horizon, we compute a partially-studentized statistic based on cumulative loss differentials, and define the scan statistic as the maximum of this object over all admissible intervals and horizons. Because this maximum has a non-standard distribution, we obtain critical values using a computationally efficient Gaussian multiplier bootstrap, after first testing the loss differentials for serial dependence to determine the appropriate bootstrap regime. The interval and horizon that maximize the scan statistic point to the period and forecasting distance at which the micro forecast outperforms by the most. The scan statistic is a maximum over a large collection of candidate intervals and horizons, meaning we require the recent high-dimensional Gaussian approximation results \citep{ChernozhukovChetverikovKato2013,ChernozhukovChetverikovKato2017, CCW24} to derive our test.

In this paper, the primary use of the scan test is forecast expert selection. The aim is to pick candidate forecasts that should be combined in order to forecast inflation, with a microdata forecast being one such candidate. However the scan test is of independent interest: There are many circumstances when one wishes to know whether one forecast outperforms another during some ex ante unknown length of time that is shorter than the full sample. By contrast, standard forecast comparison tests focus on average performance over the full sample (e.g. \citealp{DieboldMariano1995}).\footnote{The test of \cite{GiacominiRossi2010} is an influential exception that we will discuss shortly.}

We then develop an adaptive machine learning pipeline to exploit the occasional forecasting power of microdata.
Our pipeline involves three steps. The first step encodes the raw micro-price data into a stable, high-dimensional vector of distributional statistics. This step is necessary because individual products regularly enter and exit the sample, so the raw data do not admit a fixed predictor vector. We therefore summarize, for each month and each of the 11 retained broad
COICOP1 consumption categories, the cross-sectional distribution of price
changes using statistics that are available in every period. These include the
fraction of prices that do not change, the deciles of the distribution of
non-zero price changes, and the mean non-zero price change. Stacking these
statistics across categories, and collecting their lags, yields an
interpretable but high-dimensional representation of the micro-price
environment, which can then be used for forecasting.

The second step acknowledges that because vector of predictors is high dimensional, conventional forecasting methods are prone to overfitting. We therefore use a machine learning method designed for high-dimensional prediction: gradient-boosted trees, implemented with XGBoost, which performs strongly in tabular forecasting settings \citep{ChenGuestrin2016,Grinsztajn2022}. We estimate the model on rolling windows, and develop a real time procedure for tuning the hyperparameters. We generate direct forecasts for horizons from 1 to 24 months ahead.

We contrast the microdata forecast with two benchmark forecasts from the literature. The first is a rolling-window seasonal ARIMA (SARIMA) model whose order is selected in real time using the same tuning procedure as our other forecasts, and which nests several leading univariate inflation forecasts used in the literature, including those of \citet{AtkesonOhanian2001} and \citet{StockWatson2007}. The second benchmark applies the same rolling-window XGBoost procedure used for microdata, but with a large set of macroeconomic predictors instead, allowing us to isolate the marginal contribution of micro-price information. We are careful to train and tune both benchmarks in the same way as our microdata forecast, to avoid advantaging the latter.

The third step combines the micro, macro, and univariate forecasts, while acknowledging that each modality may have only occasional forecasting power. We use a variant of the Fixed Share algorithm of \citet{HerbsterWarmuth1998} by \citet{korotin2020adaptive} which can work with multi-step ahead forecasting. The algorithm forms a weighted average of the three forecasts and updates those weights as multi-step forecast errors are realized, placing more weight on experts with better recent performance and using only information available in real time. There is a regret guarantee, valid in non-stationary environments, which bounds the algorithm's loss relative to the best sequence of experts chosen in hindsight. That is, unlike standard forecast combination methods that effectively target a single best model, Fixed Share is designed to track the best sequence of forecasts when the identity of the best forecast can change. The regret bound grows if forecasts are added that do not provide power. As such, we use the scan test to assess whether all three forecasts, micro, macro and univariate, should be included.

The Fixed Share algorithm also yields an interpretable time series of expert weights. In particular, the weight on the micro forecast measures how useful micro-price information is for forecasting inflation at a given date and horizon. This is valuable because macroeconomic models make predictions about when microdata should matter for aggregate inflation. We can therefore use the evolution of the Fixed Share weights to assess whether those theoretical predictions line up with the data. This interpretability is an important advantage of the approach, especially relative to machine learning methods that are often criticized as opaque ``black boxes'' \citep{MullainathanLudwig2024}.

Overall, our empirical results show that microdata are useful for aggregate
forecasting, but only during periods of large shocks. We report four main
empirical results. First, the microdata forecast outperforms the univariate
benchmark only after 2020: before 2020 the univariate benchmark is better at
every horizon, while after 2020 the micro forecast wins at nearly every
horizon, and by the most at long horizons.

Second, the scan tests imply that the Fixed Share combined forecast should
include all three modalities: micro, macro and univariate forecasts. We first
test whether the macro forecast adds forecasting power relative to the
univariate benchmark. We then test the micro forecast  against the
univariate and macro forecasts. The scan tests support adding both micro and macro forecasts.

Third, the combined forecast, which uses micro, macro and univariate
information, performs comparably to the univariate benchmark before 2020 and
improves on it at all 24 horizons in mean absolute error in the
post-2020 window.

Fourth, the value of microdata for the combined forecast materializes after
2020. The combined forecast leans on the univariate forecast before 2020 and
shifts weight toward the micro forecast afterwards, with the weight on the
micro forecast peaking in 2023. A more demanding test asks whether the
three-expert combined forecast outperforms, on average, a two-expert benchmark
that uses only macro and univariate information. It finds no outperformance before 2020, but finds outperformance after
2020. Microdata therefore add significantly to the two-expert benchmark but only in the post-2020 high-inflation window.

Finally, we ``open the black box'' by investigating which features of the microdata
matter for aggregate dynamics. We compute grouped Shapley values for different sets
of features that map onto objects in price-setting theory.\footnote{For example, these sets include the
extensive-margin price-adjustment block, the center of the price-change distribution,
its upper and lower tails, and monthly dummies. Various models of price setting
emphasize one of these margins, such as the extensive margin of price adjustment
\citep{KlenowKryvtsov2008,NakamuraSteinsson2008,Blanco2021} or asymmetries and
tail movements in the distribution of price changes
\citep{BallMankiw1995,Midrigan2011,Vavra2014,Alvarez2016,AlvarezBlaser2025}.}
Grouped Shapley values assess which groups contributed most to the forecast gains
of the microdata. After 2020, the positive contributions come from the center
and tails of the price-change distribution together with monthly dummies,
while the extensive-margin block contributes essentially nothing.

Our results therefore have implications for models of inflation. If macroeconomic variables recover the history of aggregate shocks
and the distributional state evolves according to a stable, estimable
transition law, macro information alone should be sufficient for forecasting
inflation. Microdata can improve feasible forecasts when this benchmark fails
in either of two ways. First, standard macroeconomic variables may not recover
the shocks that move the distributional state. Granular disturbances
originating in particular firms, products, or sectors are a leading example:
they can affect aggregate inflation without being fully reflected in observed
aggregates \citep{AlvarezBlaser2025}. Second, the mapping from the aggregate shock history
to the distributional state may be unstable or difficult to estimate,
particularly in periods of structural change. These failures are likely to become more consequential
during periods of high and volatile inflation, when large shocks are more
prevalent and price-setting nonlinearities make inflation more sensitive to
the distributional state \citep{Blanco2021}. The post-2020 forecast gains are
consistent with microdata revealing information that macroeconomic history
alone does not reliably recover during this period.

\textbf{Related literature.} Our paper contributes to the literature on forecasting inflation.
A classic result is that inflation is hard to forecast, meaning simple
univariate models such as random walks are difficult to beat \citep{AtkesonOhanian2001,
  StockWatson2007}. Notwithstanding this result, models with various degrees
of sophistication, using macroeconomic information, judgemental forecasts,
and inflation expectations by households, forecasters and financial
markets, have also been used to forecast inflation (see \citet{FaustWright2013} for a review). There
is a newer literature that seeks to forecast inflation using increasingly
rich macroeconomic datasets combined with machine learning methods.
For instance \citet{Medeiros2021} find that rich macroeconomic data and random
forest algorithms lead to forecasting gains for US inflation.\footnote{Alternative ways to forecast with a large number of predictors, and prevent overfitting, include Bayesian shrinkage or factor analysis \citep[e.g.,][]{StockWatson2002,BanburaGiannoneReichlin2010}.} Likewise, recent work attempts to
forecast inflation using aggregated statistics drawn from rich sources
of microdata \citep[e.g.,][]{Beck2024, BrandaoMarques2024}. We develop a
method that encodes the rich distributional information in microdata,
uses machine learning to construct a forecast, and adaptively combines
it with existing benchmark forecasts. The resulting weights are
interpretable and reveal when microdata add forecasting power. In doing so,
we contribute
more broadly to the literature on machine learning in time series
forecasting \citep[e.g.,][]{GiannoneLenzaPrimiceri2021, ChiEtAl2025}.

There is previous work that forecasts macroeconomic variables using
the distribution of microdata. A seminal, recent paper is \citet{ChangChenSchorfheide2024},
which develops a ``functional'' vector autoregression (VAR) approach. The
method involves summarizing the log density of a distribution of microdata
by a low dimensional collection of objects: finite dimensional sieves
with fixed basis functions and time varying coefficients. One can
then estimate a vector autoregression using aggregate variables and
the time varying sieve coefficients, allowing a forecast of macro
variables with microdata. The main focus is on a stationary environment.\footnote{See \citet{LenzaSavoia2024} for an application of this approach to firm heterogeneity
  in the Euro Area.} \citet{MeeksMonti2023} develop a related approach that summarizes
cross-sectional distributions with functional principal components,
and links them to inflation in a linear, Phillips Curve style
model. We offer a different approach: summarizing microdata with a
high dimensional series of statistics, which we combine with black-box forecasting algorithms and expert aggregation via the Fixed Share algorithm. Our approach is designed for a non-stationary environment.

This non-stationarity motivates us to develop a scan test for localized outperformance
of the microdata forecast relative to a benchmark. Classic
time series tests of forecasting performance test whether one forecast
outperforms another on average, throughout the full time series (e.g.
\citealp{DieboldMariano1995}). Instead, we test for whether there are occasional
periods of time in which a chosen forecast out-performs a benchmark, without
specifying ex ante when and for how long the forecast outperforms.

As such, our approach builds on the influential test for forecast comparisons of \citet{GiacominiRossi2010}. We take inspiration from their paper---which points out that comparisons across the full time series can miss localized episodes in which one forecast outperforms another---but study a different inferential problem. A first key difference in our framework is the null hypothesis. \cite{GiacominiRossi2010} test whether two forecasts have equal predictive ability. We instead study a one-sided null, asserting that one forecast does not outperform the other on any relevant interval. Rejection of equal predictive ability is not sufficient for our purposes, because rejecting the null does not identify which forecast performs better.\footnote{While \citet{GiacominiRossi2010} entertain a one-sided alternative hypothesis, their critical values are calibrated around the case of equal predictive ability. It is not clear whether this case is the least favorable for Type-I error in a one-sided null parameter space. The main obstacle is that the relevant t-statistics are partially studentized and hence need not be monotone in the data.}  A second difference is the nature of the forecast out-performance. The tests of \cite{GiacominiRossi2010} either assess equal predictive ability over intervals of unknown location but fixed length, or search for a single date after which one forecast outperforms a second; in both cases at a fixed forecast horizon. Our scan test searches for the possibility that one forecast outperforms another at an unknown start date and end date, for an unknown duration of time and at an unknown horizon. Lastly, since a forecast can outperform for an unknown interval, the scan
test considers a large collection of candidate
intervals. We develop an asymptotics that suits this setting. We study a high-dimensional asymptotic regime in which the minimal length of time for which the chosen forecast outperforms a benchmark grows logarithmically in the sample size; while the number of intervals grows exponentially in the sample size. For inference, we develop a high dimensional Gaussian multiplier bootstrap, which accounts for the large number of intervals and horizons that the test studies. Another novelty is partial studentization: instead of fully self-normalizing each short interval, we introduce an exponent that interpolates between unstudentized cumulative loss differentials and full local studentization. This stabilizes the scan when many short windows are searched, and we choose the exponent with simulations calibrated to the empirical loss-differential panel to avoid small-sample over-rejection.

Finally, we use an adaptive machine learning approach, the Fixed Share algorithm, to exploit that microdata helps to forecast inflation,
but only occasionally (i.e.\ there is non-stationarity). A large literature
studies how to combine forecasts \citep{BatesGranger1969, Timmermann2006},
including with Bayesian model averaging approaches
\citep[e.g.,][]{KoopKorobilis2012}. Many methods require a full specification of the likelihood function,
making it difficult to combine them with high dimensional covariates
and black-box machine learning methods. The Fixed Share algorithm provides worst-case regret guarantees
without distributional assumptions, in the presence of non-stationarity, and when some experts use high dimensional data and machine learning. An important previous paper also uses adaptive
machine learning, namely \citet{FouliardHowellReyStavrakeva2023}.

In addition to applying online learning in a different setting and using
microdata, we combine the algorithm with a scan test and include only forecasts
that provably improve the regret bound and are likely to improve the final
forecast's overall performance.

\section{Data}\label{sec:data_description}

This section describes the construction of the datasets used in the empirical
analysis. We combine two main sources of information. First, we use official
UK CPI micro-price quotes collected by the Office for National Statistics
(ONS), which provide the micro-price panel underlying the official
inflation statistics. Second, we assemble a large set of
macroeconomic and financial time series which we use to forecast inflation with macro data.

\subsection{Microdata on Prices}
\label{subsec:raw_data}

The forecasting target is monthly, headline, non-seasonally adjusted consumer
price index (CPI) inflation for the United Kingdom.\footnote{We measure inflation $\pi_t=100(\mathrm{CPI}_t/\mathrm{CPI}_{t-1}-1)$, using the ONS D7BT headline CPI
index.} Our primary dataset
consists of official micro-level price quotes underlying the CPI, collected by
the Office for National Statistics (ONS) and spanning January 1996 to
December 2024. For 1996--2016, we use the
replication package of \citet{AdamWeber2023}; for 2017 onward, we extend the
dataset using price quotes and item index files from the ONS
website.

Each observation is a price quote identified by four fields: a six-digit item
code, an anonymised shop code, a region code for the ONS collection regions,
and a shop type distinguishing chain retailers (``multiples'') from
independents.\footnote{Price-quote files from 2018 onward add two further
shop-type codes, covering roughly 1.7\% of observations.} For instance, an
item might be a large loaf of white sliced bread, collected from a Sainsbury's outlet in
a region such as London. Each item is associated with an expenditure weight; consumer price index (CPI) inflation is calculated by aggregating individual price changes using expenditure weights. There is also a flag for whether the price has a sale, as measured by the ONS price quote collector. For each observation, we calculate the monthly unit price change as
\[
  \Delta p_{jt} = 100 \times (\log p_{jt} - \log p_{j,t-1}),
\]
the month-on-month log price change in percentage points. There are around 90,000 price quotes in the typical month.

The raw ONS panel is first cleaned using standard validity, comparability, and price filters, which we list in Appendix~\ref{app:ons_cleaning}. \footnote{If the same product identifier (item, shop, region, and shop type) appears more than once in a month, the public data do not say which repeated quote should be linked to observations in adjacent months. Rather than choose one quote arbitrarily, we drop the whole series; this removes roughly 6.6\% of observations per quarter.} We organise the data at the cell level, where a cell is defined by a monthly date and a product category. The product categories are broad groupings of items at the COICOP1 level (there are 12 COICOP1 categories, such as food and non-alcoholic beverages). We balance the data by retaining only COICOP1 codes present in more than 95\% of months, which leaves 11 of the 12 COICOP1 categories.\footnote{The Education category (COICOP1 code 10) is the one category excluded by this filter: it appears in only 120 of the 347 sample months (February 1996 to January 2006), far below the 95\% retention threshold of 330 months.}

Table~\ref{tab:cells_coicop1} reports the percentile distribution of observations per cell, separately for all price changes and for non-zero price changes
only. The difference between the two panels reflects the well-documented
fact that roughly 90\% of prices are unchanged at monthly frequency. We include cell counts with and without items on sale.

We validate the data in two ways. First, we reconstruct the
aggregate CPI from the underlying micro quotes and verify that the implied
inflation rate closely tracks the official ONS series (see
Appendix~\ref{app:cpi_aggregation}). Second, we benchmark the moments of the
price-change distribution---mean, standard deviation, and excess kurtosis of
$|\Delta p|$---against the existing literature and find that our estimates
replicate the well-known stylised facts closely (see
Appendix~\ref{app:moments}).

Our main dataset for the forecasting exercise uses all observations in the
cleaned posted-price panel. For robustness, we study panels in which we treat
sales prices identified by the Office of National Statistics as missing, or
adjust them using a filter developed by
\citet{NakamuraSteinsson2008} with two parameterisations. The latter filters
identify likely sale episodes algorithmically and replace sale prices with
the inferred regular price, rather than dropping them; Appendix~\ref{app:NS_filter} describes the
algorithm and its parameters in detail. The forecasting robustness exercises
use the ONS-flag panel and the filter's baseline parameterisation (Filter A);
Appendices~\ref{app:nosales_encoding} and~\ref{app:nsa_encoding} report the
results.

\subsection{Macro Data}
\label{sec:macro_data}

In addition to the micro-price data, we assemble a panel of 36 monthly
macroeconomic and financial predictors. The candidate list is the union of
the series in the UK Monthly Database (UKMD) of
\citet{GouletCoulombeMarcellinoStevanovic2021}, series identified as
inflation-forecasting-relevant in prior work, and additional candidates judged useful for
forecasting UK CPI inflation. From this list we retain series that are
monthly, cover the estimation window (January 1996, or shortly after for a
few market series), and are either unrevised in practice (market prices,
exchange rates, policy and market interest rates, CPI indices, claimant
counts) or admit a vintage-clean construction.\footnote{We exclude a series
when a cleaner substitute is already included or when the series terminates
early, as with most Index of Production sub-categories and discontinued
benchmark rates. For average weekly earnings, we join the modern series to its
predecessor to extend coverage back to 1996. We do not use GDP, firms' or
households' inflation expectations, or inflation swaps, which are either not
monthly or only available more recently.} The
retained panel covers actual rents, wages,
labour-market slack, exchange rates, energy and food commodity prices and
futures, house prices, interest rates and gilt yields, equity indices and
volatility, trade, car registrations, the OECD composite leading indicator,
and air-freight price indices as supply-chain pressure proxies.\footnote{The retained macro
series are drawn from the ONS, the Bank of England, FRED/ALFRED, the World
Bank, OECD, and market data providers. The air-freight indices are subject to
revision and enter as ALFRED initial-release vintages, which begin in March
2010; earlier months use the latest available vintage.} Among the CPI
components, only actual rents are retained,
because rents are not covered by the micro-price quotes. All other CPI
sub-indices are excluded, as are CPIH sub-indices, RPI components, and PPI
sub-categories.\footnote{Throughout, the month-$t$ values of these series are
treated as part of the origin-$t$ information set: forecasts dated origin $t$
are formed after the month-$t$ releases.} Wherever possible we use non-seasonally adjusted series, which
avoids the look-ahead bias inherent in two-sided seasonal adjustment filters;
the two aggregate trade series are available only in seasonally adjusted
form.\footnote{The price data are not seasonally adjusted by design; the
aggregate consumer price index and its microdata constituents have not been
revised since 1996.} The sample runs through November 2024. The resulting dataset,
detailed in Appendix Table~\ref{tab:macro_all}, contains the 36 monthly
predictor series plus the headline, non-seasonally adjusted consumer price
index (CPI), and spans the period from the start of 1996 through November
2024.

\section{A Scan Test for Localized Forecast Outperformance}\label{sec:scantest}

This section introduces a scan test. The goal is to detect
localized periods in which one forecast model outperforms another, in a
non-stationary environment. Traditional forecast comparisons instead
focus on average performance over the full sample
\citep[e.g.][]{DieboldMariano1995}. Full sample tests are less useful in our setting. In standard macroeconomic models,
microdata matters for the dynamics of aggregate inflation after large shocks to
inflation, but not otherwise. As such, a forecasting model
that uses microdata might outperform benchmarks only at certain
times. The scan test is designed to test for this possibility.

The test is also more
general and applies to any comparison of two forecast models in a non-stationary
environment. The econometric theory for the test accounts for the composite
null, nonstationarity, temporal dependence, and high-dimensionality. In the
following section we will introduce a pipeline that uses the scan-test evidence to
guide an adaptive machine learning approach, which makes use of microdata only when it at least adds localized forecasting power.

\subsection{\texorpdfstring{Statistical Hypotheses and Methodological Details}{Statistical Hypotheses and Methodological Details}}\label{sec:scan_method}

\paragraph{Roadmap. }We now formalize our scan test. The aim is to decide whether a candidate forecast outperforms a second, benchmark forecast, over particular sub-periods and horizons; with rigorous statistical guarantees. As a roadmap, we first state the null and alternative hypotheses formally, then introduce a test statistic that measures statistical evidence against the null, and finally describe a Gaussian bootstrap procedure that approximates the null distribution of the statistic and converts it into a p-value. We present econometric theory that justifies the validity of our scan test. To maximize readibility, we only discuss technical assumptions informally and leave all details and proofs to Appendix \ref{sec:theory_scan_tests}.

\paragraph{Part I: Null and Alternative Hypotheses.}

Let $n$ denote the number of target months in the evaluation sample and $H$ the maximal forecasting horizon.
Consider a benchmark method $B$ and a competing method $M$. In our application,
$B$ could be a benchmark inflation forecast, such as the SARIMA model introduced
below, and $M$ could be the microdata-based inflation forecast. Writing
$\hat{\pi}_{t\mid t-h}$ for a forecast of inflation in target month $t$
made with information through origin month $t-h$, let
$e^{B}_{t\mid t-h} = \pi_{t} - \hat{\pi}^{B}_{t\mid t-h}$ and
$e^{M}_{t\mid t-h} = \pi_{t} - \hat{\pi}^{M}_{t\mid t-h}$ denote the two
methods' $h$-step-ahead forecast errors for target month $t$, and define the loss
differential, indexed by target month,
\[
  D_t^{(h)} = \bigl|e^{B}_{t\mid t-h}\bigr| - \bigl|e^{M}_{t\mid t-h}\bigr|.
\]
With this sign convention, positive values of $D_t^{(h)}$ indicate that method
$M$ outperforms method $B$ in target month $t$ at horizon $h$. 

The null hypothesis is
\[
  H_0 : \mathbb{E}[D_t^{(h)}] \le 0
  \quad \text{for all } t \in [n] := \{1,\dots,n\},\; h \in [H] := \{1,\dots,H\},
\]
so that under $H_0$, method $B$ is at least as good as method $M$ across all
target months and horizons. Note that it is possible for the null to
be rejected in both directions, that is, for both $M$ versus $B$ and $B$ versus
$M$. If so, there are some months or forecasting horizons in which method $B$
is better than method $M$, and other months or horizons where the opposite is
true.

To make rejection substantively
meaningful, we consider the restricted alternative
\[
  H_1 : \sum_{t\in [s,e]}\mathbb{E}[D_t^{(h)}] > 0
  \quad  \text{ for some interval } [s,e]
  \text{ with } L_{\min}\le e-s+1 \le L_{\max}
  \text{ and some } h \in [H].
\]
That is, method $M$ outperforms method $B$ on average over some contiguous
interval of target months of length between $L_{\min}$ and
$L_{\max}$, at some horizon
under study. The bounds $L_{\min}$ and $L_{\max}$ are
user-specified parameters. We will discuss their choices at the end of this
section. In the empirical implementation below, we use monthly data and take
$H=24$.

\paragraph{Part II: A Partially-Studentized Scan Statistic}

To test the null hypothesis against the localized alternative, we construct a
scan statistic. This statistic searches over all admissible target-month
intervals and forecasting horizons to determine when method $M$ has an
unusually good forecast loss relative to method $B$.

Let $\mathcal{I}$ be the collection of all target-month intervals
$I = [s,e] \subset [n]$ with $L_{\min}\le |I|\le L_{\max}$.

For each $I = [s,e]$, $h \in [H]$, and
$D_t = (D_t^{(1)},\dots,D_t^{(H)})^\top$, define
\[
  S_I^{(h)} = \sum_{t=s}^e D_t^{(h)}, \qquad
  \bar{D}_I^{(h)} = \frac{S_I^{(h)}}{|I|}, \qquad
  Q_I^{(h)} = \sum_{t=s}^e \bigl(D_t^{(h)} - \bar{D}_I^{(h)}\bigr)^2.
\]
In words, $S_I^{(h)}$ is the cumulative loss differential over the target-month
interval $I$,
$\bar{D}_I^{(h)}$ is the corresponding within-interval mean, and
$Q_I^{(h)}$ is the within-interval sum of squared deviations. For a tuning
exponent $\gamma\in[0,1/2]$, define the partially-studentized statistic for
interval $I$ and horizon $h$ as\footnote{When the denominator is numerically
smaller than $10^{-12}$, we set the corresponding observed or bootstrap
interval statistic to zero.}
\[
  T_I^{(h)}
  \;=\;
  \frac{S_I^{(h)}}{|I|^{1/2-\gamma}\bigl(Q_I^{(h)}\bigr)^\gamma}.
\]
The scan statistic is defined as the double maximum
\[
  T^{\max}
  \;=\;
  \max_{I \in \mathcal{I}}\,\max_{1 \le h \le H}\,T_I^{(h)}.
\]
This statistic searches jointly across target-month
intervals and forecasting horizons for the most pronounced localized
outperformance of method $M$ relative to method $B$.

Two features of $T^{\max}$ deserve emphasis. First, the joint
maximization over both intervals $I$ and horizons $h$ searches for localized
outperformance without fixing either the interval or the forecast horizon.
By searching over intervals, we do not have to specify the start month and
duration of any outperformance. Extending the search to $H$ horizons allows
detection of outperformance concentrated at particular forecast horizons rather
than uniformly across all of them. The cost of testing across multiple horizons
will be absorbed automatically by the bootstrap. Second, partial
studentization lets the statistic interpolate between no local variance
studentization $(\gamma=0)$ and full self-normalization $(\gamma=1/2)$.
We will show in Appendix~\ref{sec:theory_scan_tests} that the test is asymptotically
valid for any $\gamma \in [0,1/2]$. Nonetheless, because our empirical design
involves small samples, namely short intervals of length no more than 12,
simulations reveal substantial over-rejection for larger values of $\gamma$.
We discuss a principled approach to choosing $\gamma$ below. To the best of our
knowledge, partial studentization is a novel technique to address the
small-sample issue for Gaussian bootstrap.

\paragraph{Part III: Gaussian Multiplier Bootstrap Distribution.}\label{sec:bootstrap}

Owing to the joint maximization over many intervals and horizons, the null
distribution of $T^{\max}$ is nonstandard. We approximate it using a
Gaussian multiplier bootstrap, which delivers critical values and $p$-values
for testing $H_0$ against the localized alternative.

Define a draw of the bootstrap multiplier
\[
  \xi = (\xi_1,\dots,\xi_n)^\top \sim \mathcal{N}(0,\Theta_n)
\]
which is independent of the data. $\Theta_n$ is a positive semidefinite
$n \times n$ covariance matrix specified below. The role of $\Theta_n$ is to
encode the temporal dependence structure imposed on the bootstrap multipliers,
so its choice should mirror the dependence properties of the loss differential
process under the null.

For each interval $I=[s,e]$ and horizon $h$, define the multiplier-weighted
local scores
\[
  e_{t,I}^{(h)}
  =
  \xi_t\bigl(D_t^{(h)}-\bar{D}_I^{(h)}\bigr),
  \qquad
  S_I^{*(h)} = \sum_{t=s}^{e} e_{t,I}^{(h)}.
\]
The bootstrap denominator is recomputed from these local scores:
\[
  Q_I^{*(h)}
  =
  \sum_{t=s}^{e}\bigl(e_{t,I}^{(h)}-\bar e_I^{(h)}\bigr)^2
  =
  \sum_{t=s}^{e}\bigl(e_{t,I}^{(h)}\bigr)^2
  -
  \frac{\bigl(S_I^{*(h)}\bigr)^2}{|I|},
\]
where $\bar e_I^{(h)}=S_I^{*(h)}/|I|$. The bootstrap interval statistic is
\[
  T_I^{*(h)}
  =
  \frac{S_I^{*(h)}}{|I|^{1/2-\gamma}\bigl(Q_I^{*(h)}\bigr)^\gamma}.
\]
The bootstrap scan statistic is then defined as
\[
  T^{*,\max}
  =
  \max_{I \in \mathcal{I}} \max_{1 \le h \le H} T_I^{*(h)}.
\]
We next take $B$ bootstrap draws
$T^{*,(1),\max}, \ldots, T^{*,(B),\max}$, where each bootstrap
draw involves a new, independent draw of the vector $\xi$, followed by
recalculating $T^{*,\max}$ using the new draw. The $p$-value is
\[
  \hat{p}
  \;=\;
  \frac{1 + \sum_{b=1}^{B}\mathbf{1}\{T^{*,(b),\max} > T^{\max}\}}{B+1}.
\]
The $p$-value is the share of bootstrap draws in which the null produces a
statistic at least as extreme as the one observed in the data (after adding 1
to both the denominator and numerator as a finite sample correction).

\paragraph{Choice of $\Theta_n$}

For the Gaussian bootstrap to give an accurate approximation to the null distribution of the scan statistic, the joint distribution of $\{S_I^{*(h)}: L_{\min}\le |I|\le L_{\max}, h\in [H]\}$, conditional on the observed loss differentials, should approximate the sampling joint distribution of $\{S_I^{(h)}: L_{\min}\le |I|\le L_{\max}, h\in [H]\}$. Focusing on a single interval $I$, the covariance matrix of $S_I = (S_I^{(1)}, \ldots, S_I^{(H)})^\top$ is given by
\begin{equation}\label{eq:Cov_SI}
  \sum_{t,s\in I}\mathrm{Cov}(D_t, D_s),
  \end{equation}
while the covariance matrix of $S_I^* = (S_I^{*(1)}, \ldots, S_I^{*(H)})^\top$ conditional on $(D_1, \ldots, D_n)$ is given by
\begin{equation}\label{eq:Cov_SI*}
  \sum_{t,s\in I}(\Theta_n)_{ts} (D_t - \bar{D}_I)(D_s - \bar{D}_I)^\top,
  \end{equation}
where $\bar{D}_I = (\bar{D}_I^{(1)}, \ldots, \bar{D}_I^{(H)})^\top$.
To build more intuition around the choice of $\Theta_n$, we first consider the \textbf{Regime I} where $D_t$ are independent over time. In this case, we choose
\begin{equation}\label{eq:regimeI_Theta}
\Theta_n = I_n.
\end{equation}
In fact, under independence, the true covariance matrix of $S_I$ and the conditional covariance matrix of $S_I^*$ reduce to
\[\sum_{t\in I}\mathrm{Var}(D_t), \quad \sum_{t\in I}(D_t - \bar{D}_I)(D_t - \bar{D}_I)^\top.\]
When $|I|$ is sufficiently large, the standard concentration argument implies the closeness between the two matrices \citep[e.g.][Chapter 6]{wainwright2019high}.

Next, we consider the general case where $D_t$ are weakly dependent, or, more precisely, strongly mixing (see Assumption \ref{ass:D_mixing} in Appendix \ref{subsubapp:regimeD_main} for the formal definition). We refer to this as \textbf{Regime D}. Approximating equation \eqref{eq:Cov_SI} by equation \eqref{eq:Cov_SI*} is analogous to the problem of constructing heteroscedasticity and autocorrelation consistent standard errors \citep{newey1987simple, Andrews1991}. Following this intuition, we can choose $\Theta_n$ as
\begin{equation}\label{eq:regimeD_Theta}
  \Theta_n(t,u) = K\!\left(\frac{t-u}{b_n}\right),
\end{equation}
where $K$ is a kernel function and $b_n$ is a bandwidth.

Our implementation uses the quadratic spectral (QS) kernel, introduced by \cite{Andrews1991}, because it satisfies our regularity condition; see Appendix \ref{subsec:QS_kernel}. Unlike the standard Bartlett kernel in the Newey-West estimator \citep{newey1987simple}, the QS kernel does
not truncate at a finite lag and instead decays smoothly, placing nonzero
weight on all lags. It is known to be optimal in a mean squared error sense for
long-run variance estimation under certain conditions \citep{Andrews1991}.

Having chosen the QS kernel function, the remaining tuning parameter in Regime D is the bandwidth $b_n$, which controls how quickly dependence decays in the bootstrap covariance matrix. We apply the optimal bandwidth derived in \citet{Andrews1991} based on a data-driven plug-in rule based on an AR(1) approximation to
the loss differentials; see also \citet{CCW24} in the context that is closer to our setting. Specifically, for each horizon $h$, we fit the autoregression
\[
  D_t^{(h)} = c_h + \rho_h D_{t-1}^{(h)} + \varepsilon_{h,t},
\]
and obtain estimates $\hat{\rho}_h$ and $\hat{\sigma}_h^2$, where
$\hat{\sigma}_h^2$ is the innovation variance. We then aggregate these horizon-specific estimates into
\[
  \hat{a}
  =
  \frac{\sum_{h=1}^H 4 \hat{\rho}_h^2 \hat{\sigma}_h^4 (1-\hat{\rho}_h)^{-8}}
       {\sum_{h=1}^H \hat{\sigma}_h^4 (1-\hat{\rho}_h)^{-4}},
\]
and set the QS bandwidth to
\begin{equation}\label{eq:regimeD_bandwidth}
  b_n = 1.3221\,(\hat{a}n)^{1/5},
\end{equation}
where $n$ is the time-series sample size.\footnote{Two numerical safeguards
are applied in the implementation: the estimated AR(1) coefficients
$\hat{\rho}_h$ are clipped to $[-0.99,0.99]$ before forming $\hat{a}$, and the
eigenvalues of the kernel covariance matrix $\Theta_n$ are floored at a small
positive value before the multipliers are drawn.} The complete procedures for
Regime~I and Regime~D are given in Algorithms~\ref{alg:scan_regime1}
and~\ref{alg:scan_regimeD} in Appendix~\ref{app:pseudocode}.

\paragraph{Choice of $\gamma$}

We choose the partial-studentization exponent $\gamma$ by simulation under the null, calibrated to the empirical loss-differential sequence used in the forecast comparison. For each horizon, we fix the observed sequence $\{D_t^{(h)}\}$ and generate null draws
\[
  D_{t,k}^{(h)}=\xi_{t,k}D_t^{(h)}, \qquad
  \xi_{t,k}=\operatorname{sign}(Z_{t,k}),
\]
where $Z_{t,k}$ is a stationary Gaussian AR(1) process with autocorrelation $\rho$. This construction is useful because the random signs center the simulated loss differentials at the null of no systematic outperformance, while the multiplication by the observed $D_t^{(h)}$ preserves the empirical scale, heterogeneity across dates and horizons, and the cross-horizon co-movement in the loss-differential panel. We choose the grid for the simulation autocorrelation $\rho$ to cover the horizon-specific AR(1) estimates $\{\hat{\rho}_h\}_{h=1}^H$ defined above. For each value of $\gamma$ on a grid, each $\rho$, and each bootstrap regime, we simulate rejection probabilities at the nominal five-percent level. We then repeat the exercise for the final procedure that selects between Regime~I and Regime~D using the Ljung--Box pre-test. Across these designs, $\gamma=0.1$ avoids over-rejection most consistently, so we use $\gamma=0.1$ in the empirical scan tests. We include all details, including the DGP, simulation results, and robustness to the choice of $\gamma$ in Appendix~\ref{app:simulation_results}.

\paragraph{Part IV: Pre-Test for Serial Dependence and Regime Selection.}

Because the appropriate choice of $\Theta_n$ depends on whether the loss differentials are serially dependent, we conduct a pre-test before applying the
regime-specific bootstrap algorithms. This step determines whether inference
should proceed under the independent noise regime (Regime~I) or the dependent
noise regime (Regime~D).

To formally test for serial correlation, we apply the Ljung--Box test to
$\{D^{(h)}_t\}$ at multiple lags. The null hypothesis is that the
autocorrelations up to the tested lag are zero. Rejection of the null
indicates serial correlation.

For each horizon $h\in\{1,2,\dots,H\}$, let $r_{t,h} = D^{(h)}_t$ denote the
loss differential series. Fix a maximum lag $m$ and compute the Ljung--Box
statistic
\[
  Q_h \;=\; n(n+2)\sum_{k=1}^{m}\frac{\hat\rho_{h}(k)^2}{n-k},
\]
where $\hat\rho_{h}(k)$ is the sample autocorrelation of
$\{r_{t,h}\}_{t=1}^n$ at lag $k$. We aggregate over horizons via
$Q=\sum_{h=1}^{H} Q_h$.

To approximate the null distribution, we use a Gaussian multiplier bootstrap
that preserves the cross-horizon dependence at each target month $t$ while breaking
temporal dependence. Specifically, for each bootstrap draw $b=1,\dots,B$,
sample i.i.d.\ multipliers $\{\xi_t^{(b)}\}_{t=1}^n$ with
$\xi_t^{(b)}\sim\mathcal{N}(0,1)$, and form the bootstrapped panel
$r^{*(b)}_{t,h}=\xi_t^{(b)}\,r_{t,h}$ for all $h$. Compute $Q_h^{*(b)}$ and
$Q^{*(b)}=\sum_{h=1}^{H}Q_h^{*(b)}$ from
$\{r^{*(b)}_{t,h}\}_{t=1}^n$, and estimate the (one-sided) $p$-value as
\[
  \hat p \;=\; \frac{1+\sum_{b=1}^{B}\mathbf{1}\!\left\{Q^{*(b)} > Q\right\}}{B+1}.
\]

The null hypothesis of the Ljung--Box test corresponds to the absence of serial correlation, up to lag $m$, in the
loss differential sequences. Under this null, any observed autocorrelation
arises purely from random fluctuations. A small $p$-value indicates that the
observed level of autocorrelation is unlikely under temporal independence,
leading us to adopt the dependent noise regime (Regime~D). A large $p$-value
suggests insufficient evidence of dependence, and we proceed under the
independent noise regime (Regime~I).

In our application, we use $m=12$ lags and $B=4{,}999$ bootstrap draws at significance level $\delta=0.05$. The full procedure is given in Algorithm~\ref{alg:mb_ljung_box} in Appendix~\ref{app:pseudocode}.

\subsection{\texorpdfstring{Econometric Theory of Scan Tests in Regimes I and D}{Econometric Theory of Scan Tests in Regimes I and D}}

We face four major challenges. First, the null hypothesis is one-sided, and Type-I error must be controlled even when some or all of the means \(\mathbb{E}[D_t^{(h)}]\) are strictly negative. This is difficult because, due to partial studentization, the relevant \(t\)-statistics are not monotone in \(D_t^{(h)}\). As a result, the boundary of the null, namely \(\mathbb{E}[D_t^{(h)}]=0\) for all \(t\) and \(h\), is no longer necessarily the least favorable case. Second, the time series \(D_t^{(h)}\) is non-stationary, so standard functional central limit theorems do not directly apply. Third, the statistics \(D_t^{(h)}\) are highly correlated across forecasting horizons \(h\). Fourth, the number of intervals and horizons considered by the scan test can be much larger than the sample size. In our application, for example, the occasional-outperformance scan test considers \(1{,}449\) intervals and \(34{,}776\) interval-horizon pairs with only \(215\) observations. This disparity motivates a high-dimensional asymptotic regime in which the number of interval-horizon pairs grows much faster than the sample size.

All four challenges call for new development in our econometric theory. We address the first challenge by introducing a new condition that controls the volatility of $\mathbb{E}[D_t^{(h)}]$. Specifically, we define an interval-wise mean-flatness index:
\begin{equation*}
r_n:=\max_{L_{\min}\le |I|\le L_{\max}, h\in [H]}
\frac{\sum_{t\in I}\bigl(\mathbb{E}[D_t^{(h)}]-\mathbb{E}[\bar{D}_I^{(h)}]\bigr)^2}{\sum_{t\in I}\mathbb{E}\bigl[(D_t^{(h)} - \mathbb{E}[D_t^{(h)}])^{2}\bigr]} = \max_{L_{\min}\le |I|\le L_{\max}, h\in [H]}
\frac{\frac{1}{|I|}\sum_{t\in I}\bigl(\mathbb{E}[D_t^{(h)}]-\mathbb{E}[\bar{D}_I^{(h)}]\bigr)^2}{\frac{1}{|I|}\sum_{t\in I}\mathrm{Var}(D_t^{(h)})}.
\end{equation*}
The numerator measures the sample variance of $\mathbb{E}[D_t^{(h)}]$ over an interval and the denominator measures the average variance of $D_t^{(h)}$. Our theory requires $r_n$ to be small at certain slow rates; see Assumptions \ref{ass:intervalwise_mean} and \ref{ass:D_mean_flatness} for the rate requirements for Regimes I and D, respectively. In particular, $r_n=0$ whenever the mean is constant on each scanned interval.

More generally, $r_n$ is small insofar as the movement of $\mathbb{E}[D_t^{(h)}]$ is much slower than the intrinsic variability of $D_t^{(h)}$, which is plausible for challenging forecasting problems where $\mathrm{Var}(D_t^{(h)})$ is large.

We address the remaining three challenges by adapting recent developments in high-dimensional distributional approximations for independent and dependent data \citep{ChernozhukovChetverikovKato2013,ChernozhukovChetverikovKato2017, CCW24}. Specifically, the cited papers derive central limit theorems for the empirical average of independent but non-identically distributed observations or a weakly dependent, nonstationary multivariate time series. The theory allows the dimension to grow exponentially in a polynomial of the sample size. While these tools are helpful in understanding the distribution of $\{S_I^{(h)}: L_{\min}\le |I|\le L_{\max}, h\in [H]\}$, the numerators of the t-statistics, it remains challenging to handle the partially-studentized denominators $|I|^{1/2-\gamma}(Q_I^{(h)})^\gamma$, which vary with both $I$ and $h$.

We summarize our results in the following theorem.
\begin{theorem}[Informal; see Appendix Theorems \ref{I:thm:main} and \ref{D:thm:main} for formal statements]
  Let $\mathcal{I}=\{I\subset[n]:L_{\min}\le |I|\le L_{\max}\}$, let $p = H\,|\mathcal{I}|$ denote the number of admissible (interval, horizon) pairs, and let $c_{n,1-\alpha}^{*}$ be the conditional $(1-\alpha)$-quantile of $T^{*,\max}$. Assume $\gamma\in[0,1/2]$. If $\gamma<1/2$, assume also that $L_{\max}/L_{\min}$ is bounded; no such bounded-ratio condition is needed when $\gamma=1/2$. Then, under $H_0$, the scan test is asymptotically valid in the sense that
\[
\limsup_{n\to\infty}
\mathbb{P}\bigl(T^{\max}>c_{n,1-\alpha}^{*}\bigr)
\le \alpha
\qquad\forall \alpha\in(0,1),
\]
under either of the following assumptions, together with regularity conditions stated in Appendix~\ref{sec:theory_scan_tests}.
  \begin{itemize}
  \item (Regime I) $(D_t)_{t=1}^n$ are independent over time.
  \item (Regime D) $(D_t)_{t=1}^n$ are $\alpha$-mixing (see Assumption \ref{ass:D_mixing}).
  \end{itemize}
\end{theorem}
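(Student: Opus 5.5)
The plan is to reduce the one-sided composite null to a centered problem, approximate the numerators by a Gaussian vector, control the partially-studentized denominators uniformly, and then compare with the bootstrap.

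\textbf{Step 1: reduction to centered data.} Write $\mu_t^{(h)}=\mathbb{E}[D_t^{(h)}]$ and $\varepsilon_t^{(h)}=D_t^{(h)}-\mu_t^{(h)}$. Then $S_I^{(h)}=\sum_{t\in I}\varepsilon_t^{(h)}+\sum_{t\in I}\mu_t^{(h)}$. Under $H_0$ the second term is $\le 0$.

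Since $\gamma\ge 0$ the denominator is nonnegative, so
\[
T_I^{(h)}\le \frac{\sum_{t\in I}\varepsilon_t^{(h)}}{|I|^{1/2-\gamma}(Q_I^{(h)})^\gamma}
\]
whenever the centered sum is nonnegative. When it is negative, the interval cannot be where the maximum exceeds a positive critical value. This holds despite the non-monotonicity, because we only ever drop a nonpositive drift from the numerator and never touch the denominator.

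The remaining difficulty is that $Q_I^{(h)}$ still involves the means. Expand
\[
Q_I^{(h)}=\sum_{t\in I}(\varepsilon_t-\bar\varepsilon_I)^2+2\sum_{t\in I}(\varepsilon_t-\bar\varepsilon_I)(\mu_t-\bar\mu_I)+\sum_{t\in I}(\mu_t-\bar\mu_I)^2 .
\]
The mean-flatness index $r_n$, used via Assumption \ref{ass:intervalwise_mean} or \ref{ass:D_mean_flatness}, bounds the last term by $r_n\sum_{t\in I}\mathrm{Var}(D_t^{(h)})$. Cauchy--Schwarz then makes the cross term relatively $O(\sqrt{r_n})$. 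Hence, uniformly over $(I,h)$,
\[
Q_I^{(h)}=\tilde Q_I^{(h)}\,(1+o_P(1/\sqrt{\log p})),
\]
where $\tilde Q_I^{(h)}$ is the centered within-interval sum of squares.

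\textbf{Step 2: uniform control of the denominators.} Let $V_I^{(h)}=\sum_{t\in I}\mathrm{Var}(D_t^{(h)})$, or the long-run analogue in Regime D. I will show
\[
\max_{I,h}\Bigl|\tilde Q_I^{(h)}/V_I^{(h)}-1\Bigr|=o_P(1/\log p),
\]
up to the constant relating the variance to the long-run variance in Regime D.

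In Regime I this follows from a Bernstein/maximal inequality for sums of $\varepsilon_t^2$ over $p$ intervals, using $L_{\min}\gtrsim\log n$ and moment or sub-exponential conditions. In Regime D I will use a blocking argument under $\alpha$-mixing.

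With this in hand, $T_I^{(h)}$ is within a $o_P(1/\sqrt{\log p})$ relative error of
\[
\frac{\sum_{t\in I}\varepsilon_t}{|I|^{1/2-\gamma}(V_I^{(h)})^\gamma}.
\]
This is a linear statistic $\sum_t a_{t,I,h}\varepsilon_t^{(h)}$ with deterministic weights. Anti-concentration of the Gaussian maximum (Nazarov's inequality) converts this relative error into a vanishing change in probability. This step is where the condition $L_{\max}/L_{\min}$ bounded enters when $\gamma<1/2$: the weights $|I|^{\gamma-1/2}V^{-\gamma}$ scale like $|I|^{-1/2}\cdot(\text{scale})$ only up to factors $(|I|)^{1/2-\gamma}$. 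Keeping the normalized coordinate variances bounded above and below uniformly, which the Chernozhukov--Chetverikov--Kato bounds require, needs the ratio of interval lengths to be bounded. For $\gamma=1/2$ the coordinates are exactly unit-normalized, so the condition is not needed.

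\textbf{Step 3: Gaussian approximation and bootstrap comparison.} I will apply the high-dimensional CLT of \citet{ChernozhukovChetverikovKato2017} in Regime I, and \citet{CCW24} for nonstationary mixing data in Regime D, to the $p$-dimensional vector of linear statistics. This gives a Kolmogorov-distance approximation of its maximum by a centered Gaussian maximum with the same covariance, valid when $\log^{7}(pn)/L_{\min}\to0$ or a similar rate.

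For the bootstrap, conditional on the data, $S_I^{*(h)}$ is exactly Gaussian. I will apply the same denominator argument to $Q_I^{*(h)}$, which concentrates around $\tilde Q_I^{(h)}$ given the data. The Gaussian comparison inequality then reduces the problem to bounding the max-norm difference between the bootstrap covariance \eqref{eq:Cov_SI*}, suitably normalized, and the true covariance \eqref{eq:Cov_SI}, plus the nonnegative term coming from $\mu_t-\bar\mu_I$.

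That extra term is positive semidefinite in the right direction, or negligible by $r_n$. Either way it can only inflate bootstrap quantiles, which yields the one-sided $\limsup\le\alpha$ rather than equality. In Regime D the covariance error is a HAC-type error controlled by the QS kernel regularity conditions and the bandwidth $b_n$.

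\textbf{Expected main obstacle.} The hardest part is Step 2 together with the covariance comparison: obtaining the uniform relative bounds on the random, interval-dependent denominators (both observed and bootstrap) at rate $o(1/\log p)$ with short intervals, where $|I|\ge L_{\min}$ is only logarithmic, and doing so under mixing. My first attempt would be exponential inequalities with a union bound over the polynomially many intervals. I would also try to exploit that $\gamma$ enters only through a power, so that the errors are multiplied by $\gamma$.
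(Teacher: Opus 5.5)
Your skeleton is the paper's. You drop the nonpositive drift from the numerator and keep the realized denominator (Lemma~\ref{lem:null_pathwise_dominance}). No case split on signs, and no positivity of the critical value, is needed there, because the denominator is common to both sides. You then replace $Q_I^{(h)}$ by a deterministic scale through $Q=q+2C+\nu_\mu$ (Lemma~\ref{lem:denominator_decomposition}), Gaussian-approximate the resulting linear statistics, and use a fixed-denominator, conditionally Gaussian bootstrap compared through max-norm covariance consistency and anti-concentration. Two steps, however, fail as written.

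\emph{Rates.} The coordinates you perturb have maximum of order $\sqrt{\log p}$, and Nazarov's bound costs another $\sqrt{\log p}$. So each relative denominator error must be $o_P(1/\log p)$. Your Step~1 rate $o_P(1/\sqrt{\log p})$, and the resulting ``$o_P(1/\sqrt{\log p})$ relative error'' for $T_I^{(h)}$, give only an $o_P(1)$ absolute error, which anti-concentration cannot absorb. The hypotheses do deliver the needed rate: $r_n=o((\log n)^{-2})$ in Regime~I, and $b_nr_n(\log n)^2\to0$ with $b_n\to\infty$ in Regime~D, make your Cauchy--Schwarz bound $\sqrt{r_n}=o(1/\log n)$. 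The Bernstein step likewise needs $L_{\min}\gg(\log n)^3$ rather than $L_{\min}\gtrsim\log n$. You have to carry these through.

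\emph{Regime~D normalization.} The denominator does not target a long-run variance. $Q_I^{(h)}$ concentrates around $\nu_a=\sum_{t\in I}\mathrm{Var}(D_t^{(h)})$ even under mixing, and so does $Q_I^{*(h)}$, because $\Theta_n$ has unit diagonal. For nonstationary data no constant links $\nu_a$ to $\mathrm{Var}(\sum_{t\in I}D_t^{(h)})$, so $\tilde Q_I^{(h)}/V_I^{(h)}\to1$ is false for a long-run $V$. Use your Regime~I scale $\nu_a$ in both regimes. Serial dependence then enters only through the covariance of the linear statistics, which the kernel multipliers must reproduce. There you must also reconcile the bootstrap's local centering $D_t-\bar D_I$ with the globally centered array to which the kernel-consistency bound (Proposition~\ref{prop:consistency_covariance_matrix}) applies; the paper pays a term of order $(b_n/L_{\min})\log n$ for this. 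Note too that with these dependent-data tools the envelope of the summands is $\sqrt{n/L_{\min}}$. Regime~D therefore forces $L_{\min}$ to be polynomially close to $n$ (for the QS kernel with $b_n\asymp n^{1/5}$, $L_{\min}\gg n^{14/15}$ up to logarithms), not logarithmic as your last paragraph anticipates.

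Three further claims should be dropped or corrected. First, the mean-induced part of the bootstrap covariance cannot be dismissed as ``positive semidefinite in the right direction.'' Its cross terms, built from $(\varepsilon_t-\bar\varepsilon_I)(\mu_s-\bar\mu_J)$, are not sign-definite. Even a genuine PSD inflation does not raise every quantile of a Gaussian maximum: already for $p=1$ it lowers every quantile below the median, while the theorem is claimed for all $\alpha\in(0,1)$. The one-sidedness of the conclusion comes entirely from your Step~1 dominance. The mean terms must simply be shown to be $o_P((\log p)^{-2})$ in max norm, which is exactly where $r_n$ enters.

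Second, your account of the length-ratio condition is not right. With $V_I\asymp|I|$, the coordinate $\sum_{t\in I}\varepsilon_t/(|I|^{1/2-\gamma}V_I^\gamma)$ has variance $(V_I/|I|)^{1-2\gamma}$ in Regime~I, times $\mathrm{Var}(\sum_{t\in I}\varepsilon_t)/V_I$ in Regime~D. This is bounded above and below by the moment and nondegeneracy assumptions alone. Carrying $L_{\max}/L_{\min}=O(1)$ as a hypothesis is harmless, but it is not what keeps the coordinates nondegenerate.

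Third, $Q_I^{*(h)}$ is not controlled by ``the same'' data-concentration argument, because conditionally on the data the randomness is in $\xi\sim N(0,\Theta_n)$. The paper writes $Q_a^*/Q_a-1=\sum_{t\in I}w_{t,a}(\xi_t^2-1)-(S_a^*)^2/(|I|Q_a)$ with $w_{t,a}=(D_t^{(h)}-\bar D_a)^2/Q_a$. It then applies the Gaussian quadratic-form bound of Proposition~\ref{prop:gaussian_quadratic} using $\max_t\sum_u|\Theta_n(t,u)|\lesssim b_n$. This is why the Regime~D length conditions carry factors of $b_n$ (Lemma~\ref{lem:random_denominator}).
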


\subsection{\texorpdfstring{Three Practical Uses of the Scan Tests}{Three Practical Uses of the Scan Tests}}
\label{sec:scan_test_uses}

The scan-test framework has three practical uses in our empirical analysis. The
first two uses are for expert selection, and the third use is for the
confirmatory analysis.

The first is \emph{occasional outperformance}. In this use, the test asks whether a
candidate forecast outperforms a benchmark over some ex ante unknown interval
$I\in\mathcal{I}$ and some horizon $h$, with
$L_{\min}\le |I|\le L_{\max}$. In our application, we choose
$L_{\min}=6$ and $L_{\max}=12$.

The second use is \emph{overall outperformance}. In this use, we set
$\mathcal{I}=\{[1,n]\}$, or $L_{\min}=L_{\max}=n$. The statistic tests whether the candidate forecast improves
on the benchmark over the sample as a whole at some horizon:
\[
  H_0:\sum_{t=1}^{n}\mathbb{E}[D_t^{(h)}]\le 0
  \quad \text{for all } h=1,\ldots,H.
\]
This use separates sustained full-sample gains from the localized gains targeted
by the occasional-outperformance scan.

We use both occasional and overall outperformance for the expert selection.
This selection determines which candidate forecasts enter our online-learning
combined forecasts, which updates their weights adaptively.
Starting from a benchmark forecast, the researcher screens candidate experts
sequentially and includes a candidate only if it outperforms every expert
selected so far. By outperformance, we mean that, for each selected expert used
as the comparison forecast, either the occasional-outperformance test or the
overall-outperformance test rejects the corresponding no-outperformance null at
the 5\% nominal level. We use these two tests because they are complementary and
tend to be most powerful when the sample size is small, as in many macro
forecasting problems: occasional outperformance has stronger signal strength
because it targets localized gains, but it also has more noise from searching
over intervals; overall outperformance has weaker signal because it smooths over
time, but it also has less noise. The next section develops a pipeline that
combines the experts selected by this sequential screen.

The third use is a \emph{horizon-pooled test}. Given a user-chosen window of
target months $I^\star=[t_1,t_2]$ of length $T_w=t_2-t_1+1$, we average the
loss differential across horizons,
\[
  \bar D_t=\frac{1}{H}\sum_{h=1}^{H}D_t^{(h)},
\]
and restrict the interval collection to $\mathcal{I}=\{I^\star\}$, or
$L_{\min}=L_{\max}=T_w$. The resulting statistic is the partially-studentized
mean over that window,
\[
  T^{\max}
  =
  \frac{S_{I^\star}}{T_w^{1/2-\gamma}(Q_{I^\star})^\gamma},
  \qquad
  S_{I^\star}=\sum_{t=t_1}^{t_2}\bar D_t,
  \qquad
  Q_{I^\star}=\sum_{t=t_1}^{t_2}\bigl(\bar D_t-\bar D_{I^\star}\bigr)^2,
\]
where $\bar D_{I^\star}=S_{I^\star}/T_w$. The null becomes
\[
  H_0:\frac{1}{T_w}\sum_{t=t_1}^{t_2}\mathbb{E}[\bar D_t]\le 0.
\]
This horizon-pooled test keeps the one-sided directional null from the scan test,
but targets average performance across horizons in a pre-specified period. In
our empirical analysis, we use this confirmatory test to assess whether the
combined-expert forecast outperforms the other benchmarks. We reuse the Gaussian
multiplier bootstrap of Part~III and the Regime~I and Regime~D specifications
without modification.

\subsection{\texorpdfstring{Discussion: Existing Forecast Comparison Tests}{Discussion: Existing Forecast Comparison Tests}}

Our null hypothesis differs substantially from the
literature on forecast comparisons. Developing a valid test for this distinct
null is one of our main methodological contributions. For one thing, we consider
multiple horizons $h\in [H]$ all at once, whereas most forecast comparisons study a single horizon only. Moreover, standard forecast comparison tests
assume stationarity and test whether, on average across a whole sample, one
method outperforms a second (e.g. \citealp{DieboldMariano1995}). In effect, the
null hypothesis for a given $h$ is $\mathbb{E}[D_1^{(h)}] = \ldots = \mathbb{E}[D_n^{(h)}] \le
0$, under the assumption that $D_1^{(h)}, \ldots, D_n^{(h)}$ are stationary.

An influential series of tests by \citet{GiacominiRossi2010} relaxes stationarity, in
order to test whether one method outperforms a second only at certain times. Our scan test shares the same motivation: forecast
comparisons can miss localized episodes of relative performance. However, we target a different inferential problem. \citet{GiacominiRossi2010} test local equal predictive
ability, $\mathbb{E}[D_t^{(h)}]=0$ at each month, for a fixed horizon
$h$. They study two tests: a ``Fluctuation test'' that fixes the length of the
interval for which the candidate forecast outperforms and searches for the best
interval; and a ``One-Time Reversal test'', which looks for a single point in
the sample after which the candidate forecast outperforms. Their statistics are
standardized by a global heteroskedasticity and autocorrelation consistent
long-run variance, and
their critical values come from Brownian-motion/Brownian-bridge limits under a
high-level functional Central Limit Theorem with a positive scalar long-run variance.

Our scan test instead
considers a composite one-sided null, $\mathbb{E}[D_t^{(h)}]\le 0$ for all
months and horizons. Therefore rejection supports the directional claim
that one forecast outperforms another, somewhere. Our scan test searches
jointly over an unknown starting month, an unknown ending month, an unknown
duration, and an unknown horizon, without imposing any more restrictions on when
outperformance can happen. We use a high-dimensional Gaussian multiplier bootstrap to account
for the search across many intervals and horizons; the bootstrap also accounts for dependence across forecast horizons. Under
dependence, we do not require a single asymptotic long-run variance to exist: serial correlation is handled through the bootstrap covariance matrix $\Theta_n$,
with $\Theta_n=I_n$ under independence and a kernel covariance under dependence.
Finally, the critical values in the one-sided Fluctuation test of \cite{GiacominiRossi2010} are calibrated around both forecasts having equal ability. Our forecast comparison is
designed to control size uniformly over the full one-sided null by using a
one-sided max bootstrap critical value.

Our test is calibrated to handle the very small sample sizes that arise in the
occasional outperformance application, including intervals of length 6 to 12. In
Appendix~\ref{app:sim_gr}, we consider a suite of simulations based on our $D_t^{(h)}$ and
show that existing tests substantially
over-reject. In contrast, our scan test is well calibrated even for such small
sample sizes.

\section{Methods: Macro Forecasting with Micro Data}\label{sec:methods}

Section~\ref{sec:scantest} developed a scan test for localized forecast outperformance. We can use the test to ask whether microdata-based forecasts occasionally improve on benchmark forecasts. This section develops a forecasting pipeline that is designed to exploit the occasional forecasting power of microdata. There are three steps. First, we encode the distribution of price changes into a stable and high dimensional vector of statistics. Second, we combine the high dimensional vector with a machine learning algorithm in order to forecast inflation. We use a gradient boosted tree algorithm, which we tune in real time. Third, if the micro forecast is indeed useful according to the scan test, we combine it with the benchmark using an adaptive machine learning method. The algorithm gives more weight to the microdata forecast when it outperforms the benchmark and downweights it otherwise. Later, in Section~\ref{sec:group_shapley}, we discuss a method to ``open the black box'', letting us dissect which features of the microdata lead to its forecasting performance.

\subsection{Step 1: Encoding Micro Price Data}\label{sec:encoding_training_tuning}

\paragraph{Challenges.} The aim is to forecast inflation with microdata. The raw microdata from the consumer price index, however, are not in a form that can be used directly by the prediction methods below.

The first issue is mechanical. Machine learning methods such as XGBoost require a fixed-length input vector, so each training and test observation must contain the same predictors in the same order. The raw CPI micro panel does not satisfy this requirement because products, defined as item-store-region price quotes, regularly enter and exit the data. The typical product remains in the sample for less than a year. This turnover reflects shop behavior, for example, when a 6-pack of beer in a given grocery store is replaced by a close substitute. Because the set of observed products changes over time, using product-level price quotes directly would produce a predictor vector whose coordinates appear and disappear across months.

The second issue is statistical. Even if the raw product-level panel could be forced into a common design matrix, it would contain many noisy disaggregated predictors relative to the length of the monthly time series. A machine learning model trained on those inputs, even with regularization, risks fitting idiosyncratic product noise rather than systematic information about future inflation. More formally, statistical learning theory \citep[e.g.][]{wainwright2019high} suggests that, absent nonstationarity, forecasting accuracy can be decomposed into
\begin{equation}
  \label{eq:risk_decomp}
  \mathbb{E}[|Y - \hat{f}(X)|] \asymp \mathbb{E}\left|\mathbb{E}[Y\mid X] - \hat{f}(X)\right| + \sqrt{\frac{\mathrm{dim}(X)}{n}},
\end{equation}
where $n$ is the sample size and $\mathrm{dim}(X)$ is the number of predictors. The first term is the approximation error, which always goes down when more predictors are included. The second term is the estimation error, which grows in the number of predictors due to overfitting. With the small sample sizes typical of inflation forecasting, adding many noisy disaggregated predictors tends to increase estimation error more than it reduces approximation error. Overfitting would be even more salient under non-stationarity, as $X$ tends to be less predictive. While these two components cannot be easily assessed empirically in a changing environment, the guiding principle is to avoid overfitting by restricting the number of predictors.

\paragraph{Encoding method.} The first step is therefore to encode the raw
microdata into a stable vector of statistics. This encoding addresses the two
issues above. It replaces unstable product coordinates with fixed monthly
summaries, and it reduces the dimension of the predictor set by replacing
individual product-level price changes with distributional statistics computed
within broad cells. Cells are defined as the set of products in a COICOP1
category (e.g. food or clothing) in a given month. The baseline analysis uses a parsimonious summary
of the cross-sectional distribution of price changes. Specifically, for each
cell we calculate:

\begin{enumerate}
  \item The fraction of observations with $\Delta p_{jt} = 0$.
  \item The nine deciles (at probabilities $0.1,0.2,\ldots,0.9$) of the
  \emph{non-zero} price-change distribution within the cell.
  \item The mean of the non-zero price-change distribution.\footnote{Exact
  formulas for all statistics are given in
  Appendix~\ref{app:encoding_formulas}.}
\end{enumerate}
The fraction of zero price changes captures the extensive margin of price
adjustment.
The deciles and mean
summarize the distribution of realized price adjustments without using each
individual quote as a separate predictor. This is the dimension-reduction step
motivated by the statistical issue above: adding many disaggregated predictors
can increase estimation error in a short and non-stationary monthly sample. All cell-level statistics are weighted by the expenditure weights associated with the Consumer Price Index. Subsection~\ref{sec:micro_robustness} considers an alternative microdata encoding that constructs these statistics without CPI expenditure weights.

These statistics give 11 micro features for each retained COICOP1 category:
the fraction of observations with zero price changes, nine deciles of the
non-zero price-change distribution, and the mean non-zero price change. Since
the balanced sample retains 11 COICOP1 categories, stacking these features
across categories, and using 12 lags, yields
$11 \times 11 \times 12 = 1452$ microdata statistics at each point in time.
All series are constructed at the monthly frequency and are not seasonally
adjusted. For consistency with the benchmark forecasts, which also allow for
seasonality, we include target-month indicators as inputs for micro forecasts
so that expert comparisons and scan tests place the forecasts on comparable
footing.

\paragraph{Augmented encoding.} In
Appendix~\ref{app:augmented_encoding}, we also study an augmented encoding
that adds the remaining statistics from a richer price-setting summary. The
additional information consists of:
\begin{enumerate}
   \item Higher moments of the non-zero price-change distribution: variance
   with Bessel's correction ($m_2$), and the standardised third, fourth and
   fifth central moments ($m_3$, $m_4$, $m_5$), where $m_4$ is expressed as
   excess kurtosis.
  \item The nine deciles of time since the last reset, measured among
  observations whose current valid price change is zero and whose
  time-since-reset is defined. The clock resets on each valid non-zero
  price change, on the first valid price observation of a product spell,
  and whenever the gap between two consecutive observations of the same
  product exceeds 12 months.
  \item The mean and the same four higher moments of this restricted
  time-since-change distribution.
\end{enumerate}
Taken together with the baseline statistics, the augmented encoding gives 29
micro features for each retained COICOP1 category and $11 \times 29 = 319$
microdata statistics in each month, the same set of micro variables as the
six-group Shapley decomposition of Appendix~\ref{app:6_group_shapley}.
Consistent with the overfitting concern in
the second issue above, this augmented encoding gives similar but slightly
weaker forecasting performance than the baseline encoding in the final
  three-expert online-learning combination, with a larger gap at the standalone
  micro stage; full per-horizon comparisons are reported in
Appendix~\ref{app:augmented_encoding}.

\subsection{Step 2: Training and Tuning Individual Forecasts}\label{sec:training_tuning}

\paragraph{The Training Framework.}

After this encoding, the input to the forecasting model has fixed length but
remains high dimensional. Let $Z_t$ denote the vector of
microdata statistics observed in month $t$. For horizon
$h \in \{1,\ldots,24\}$ and origin month $t$,
$\hat{\pi}_{t+h\mid t}$ denotes the forecast of inflation in target
month $t+h$ formed with information through $t$. The forecasting
objective is to produce this forecast using the 12 most recent monthly
values of $Z_t$ available at $t$. We use a (non-linear)
local projection approach: for each horizon $h$, we train a separate model,
rather than relying on a single iterative model for all horizons,
as in multi-step-ahead forecasting by VAR.

Formally, define the predictor vector available at origin month $t$,
\[
  X_t^h
  =
  \bigl(Z_{t-11},\, Z_{t-10},\, \ldots,\, Z_{t},\, m_{t+h}\bigr),
\]
which stacks the 12 monthly microdata-statistics vectors observed through
month $t$ together with $m_{t+h} \in \{0,1\}^{11}$, the vector of eleven
target-month indicators (January omitted).

\paragraph{Training.}
To limit the influence of distant history, at origin $t$ we train the forecasting
model on a rolling window containing the most recent (up to) 120 realized target
months, using the pairs $(X_{t-j-h}^h,\, \pi_{t-j})$ for $j = 0,\ldots,119$
whenever available. We then apply the fitted model to $X_t^h$, which yields the
forecast $\hat{\pi}_{t+h\mid t}$. Because training uses only data through $t$,
this is a valid real-time $h$-step-ahead forecast.

\paragraph{The Real-time Tuning Framework.} Training ML algorithms involve multiple tuning hyperparameters (e.g., learning rates) that are crucial for forecasting performance. In cross-sectional prediction problems, it is routine to carve out a random validation set outside the test data to evaluate the machine learning algorithm under different hyperparameters. For time-series forecasting, it is important to respect the temporal ordering. Thus, a random validation set is not ideal. A common exercise in the literature is to (a) fix a set of hyperparameters from a grid, (b) train the model with a rolling window, and (c) evaluate the tuning parameter based on the overall forecasting accuracy, such as the mean absolute error (MAE) or root mean-squared error (RMSE), for the entire sample period. While steps (a) and (b) are standard, step (c) incurs look-ahead bias: the hyperparameter is chosen using forecasting errors from periods that would not yet have been observed when earlier forecasts were made. Since overall accuracy depends on all time steps, this procedure may lead to overfitting and unfair comparisons.

To avoid look-ahead bias, we modify step (c) by introducing a real-time
tuning framework. The idea is to score each candidate hyperparameter, at every
forecast origin $t$, on the 48 most recent target months that have already
realized, rather than on the full sample. This produces a time-varying
performance metric for each candidate. Formally, we start with a grid
$\mathcal{G}$ of values for the hyperparameter. For each origin $t$ and each
candidate value $\Gamma\in \mathcal{G}$, we replicate the $h$-step-ahead
forecasting exercise for the validation targets
$\pi_{t-47}, \pi_{t-46}, \ldots, \pi_t$. The forecast of a validation target
$\pi_{t-j}$, $j = 0,\ldots,47$, is formed at its own origin $t-j-h$: we use
the same rolling-window training procedure at that origin and apply the fitted model to
$X_{t-j-h}^h$. This restriction ensures that each validation forecast uses
only information available at its own origin, so it is a genuine
$h$-period-ahead forecast. Averaging the 48 absolute errors gives a real-time
mean absolute error (MAE) for each candidate $\Gamma$; we choose the
$\Gamma^*$ that minimizes this criterion over the grid $\mathcal{G}$.

Given the selected hyperparameter $\Gamma^*$, the deployed forecast at origin $t$
is the $h$-step-ahead forecast $\hat{\pi}_{t+h\mid t}$ obtained by applying the
rolling-window training procedure above with hyperparameter $\Gamma^*$. We
repeat the procedure at every origin and horizon
$h \in \{1, \ldots, 24\}$, so that the evaluated target months span 2007/01
through 2024/11.

Our framework works with any machine learning algorithm. Since this
predictor set is large relative to the length of the monthly time series, we
use gradient-boosted trees, a machine learning method suited to
high-dimensional covariates and tabular forecasting applications, which outperforms other methods such as random forest or LASSO in settings similar to ours
\citep{Grinsztajn2022}. Gradient boosted trees build a forecast as an additive ensemble of regression trees of moderate depth. Each new tree is fitted to the residual from the previous ensemble; the extent to which each new tree modifies the ensemble is the shrinkage or ``learning rate'', written $\nu$. Our implementation uses XGBoost
\citep{ChenGuestrin2016}.
The parameter grid $\mathcal{G}$ for XGBoost is a $7 \times 7$ grid of 49
candidates: shrinkage values
$\nu \in \{0.0003, 0.001, 0.003, 0.01, 0.03, 0.1, 0.3\}$ combined with tree
counts $n_{\text{est}} \in \{50, 100, 200, 400, 600, 800, 1000\}$, with the
tree depth fixed at the XGBoost default of six. Forecast accuracy is not
sensitive to this choice: coarser and shifted grids deliver mean absolute
errors within about one percent of the baseline grid.

\paragraph{A Computationally More Efficient Tuning Procedure.} The tuning
procedure described above nominally requires refitting the model 48 times for
each candidate hyperparameter vector at every origin. In practice no
refitting is needed, because the validation forecasts at successive
origins overlap: the same forecast can later serve either as a validation
forecast or as the deployed forecast. We therefore precompute, for each
candidate hyperparameter $\Gamma \in \mathcal{G}$ and each horizon, a single rolling sequence
of real-time forecasts: at every origin $t$, a model trained using the pairs
$(X_{t-j-h}^h,\, \pi_{t-j})$ for $j = 0,\ldots,119$, whenever available, produces
the forecast $\hat{\pi}^{\Gamma}_{t+h\mid t}$. Once
$\pi_{t+h}$ realizes, the absolute forecast error is cached. At each
origin $t$, averaging the trailing 48 cached forecast errors for a
candidate hyperparameter gives that candidate's real-time MAE.
Hyperparameter selection therefore reduces to a lookup of these trailing
MAEs, and the deployed forecast is the cached
$\hat{\pi}^{\Gamma^*}_{t+h\mid t}$. Thus, for each horizon, one rolling pass
per candidate replaces the 48 refits that would otherwise be required at every
origin, while preserving the same tuning criterion.
Algorithms~\ref{alg:mlcgi} and~\ref{alg:ml_deploy} in
Appendix~\ref{app:pseudocode} give the pseudocode; the accompanying notation
is collected in Table~\ref{tab:notation}.

\subsection{Step 3: Combining Forecasts with Adaptive Machine Learning}\label{sec:fixed_share}

Steps 1 and 2 produce a microdata-based forecast for each horizon. The remaining task is to decide whether and how this forecast should be combined with existing benchmarks, for which we develop an adaptive machine learning approach.

We study two benchmark forecasts for inflation. Our first benchmark is a univariate SARIMA
model, which historically tends to forecast inflation well (\citealp{AtkesonOhanian2001}; \citealp{StockWatson2007}). However both classic and more recent work also emphasize inflation forecasts based on large numbers of macroeconomic variables (e.g. \citealp{StockWatson2002Diffusion}; \citealp{Medeiros2021}). Therefore our second benchmark uses a large number of macroeconomic variables, combined with the XGBoost algorithm.

We combine forecasts using an adaptive machine learning algorithm: the Fixed Share algorithm of
\citet{HerbsterWarmuth1998}. The algorithm forms a weighted average of forecasts, or experts,
and updates the weights as forecast errors are realized.
This subsection first defines the SARIMA benchmark and the macro forecast,
then describes the Fixed Share algorithm and how to select experts for inclusion.

\paragraph{Benchmark I: SARIMA}\label{subsubsec:SARIMA}

Our first benchmark is a Seasonal Autoregressive Integrated Moving Average
(SARIMA) model estimated on a rolling window, with the order selected in real
time. The SARIMA order is denoted $\Upsilon = (p,d,q)(P,D,Q,M)$, with
fitted parameters $\Theta^{\Upsilon}_{c}$. We search over $p,d,q,P,D,Q \in \{0,1\}$ and fix
$M=12$ to account for monthly seasonality. We also include a candidate that forecasts $h$-step ahead inflation as the trailing
12-month average, $\hat{\pi}_{t+h\mid t} = \frac{1}{12}\sum_{j=0}^{11}\pi_{t-j}$; which we refer to as AO-12MA, following \citet{AtkesonOhanian2001}.

We choose this SARIMA forecast because it nests a wide range of popular, univariate specifications for inflation forecasting in the literature. Many of these specifications are known to forecast inflation as well as, or better than, richer alternatives. Not only does our specification nest the forecast of \cite{AtkesonOhanian2001}, but also the IMA(1,1) forecast of \cite{StockWatson2007}, the ``airline model'' of inflation forecasting used by statistics agencies (e.g. \citealp{Eurostat2015}), and the ARMA(1,1) model of \cite{AngBekaertWei2007}. As we shall see, this benchmark often performs excellently.

Each candidate order is fitted by maximum likelihood on the most recent 120
months of inflation data at every forecast origin. The
ten-year rolling window keeps the model responsive to shifts in inflation
dynamics without allowing distant observations to dominate the fit. Order selection follows the cached real-time tuning of Subsection~\ref{sec:training_tuning}, with the candidate orders $\Upsilon$ taking the role of the XGBoost hyperparameter vector $\Gamma$.
The full procedure is given in Algorithms~\ref{alg:sr1} and~\ref{alg:sr_deploy} in
Appendix~\ref{app:pseudocode}.

A SARIMA model expresses inflation as a linear function of its own lags,
seasonal lags, and moving-average terms, after differencing to remove trends
and seasonal unit roots.\footnote{Formally, a
$\mathrm{SARIMA}(p,d,q)(P,D,Q,s)$ model takes the form
$\Phi(B)\,\Phi_s(B^s)\,(1-B)^d\,(1-B^s)^D\,\pi_t =
\Omega(B)\,\Omega_s(B^s)\,\varepsilon_t$,
where $\Phi,\Phi_s,\Omega,\Omega_s$ are lag polynomials, $B$ is the backshift
operator, and $\varepsilon_t \sim \mathrm{i.i.d.}(0,\sigma^2)$.}
To produce an $h$-step ahead forecast, we iterate the
fitted model forward from the last observed value, replacing each future
innovation with zero (its conditional expectation given the history). Because
the model is univariate and forecasts are iterated rather than direct, a single
fit at each origin yields the forecast at every horizon, in contrast to the
machine-learning experts, which train a separate model for each horizon.

\paragraph{Benchmark II: Macro Forecast}\label{subsubsec:macro_xgboost}

Our second benchmark applies the rolling-window XGBoost procedure of
Algorithms~\ref{alg:mlcgi} and~\ref{alg:ml_deploy} to \emph{macro} rather than micro features.
Concretely, for a forecast formed at origin $t$, the predictors consist of the
twelve most recent monthly observations of each macro
time series described in Section~\ref{sec:data_description}, that is, lags
$h, h+1, \ldots, h+11$ of the target. All other
aspects of the algorithm, including the rolling window length (120 months),
hyperparameter grid, cached real-time tuning, target-month indicators, and direct
$h$-step forecasting, are held fixed. This
approach follows the high-dimensional macro forecasting literature, including \citet{Medeiros2021}, who use machine learning with a large macro panel to
forecast US inflation.\footnote{Our implementation differs from
  \citet{Medeiros2021} in two respects. First, we use gradient boosted trees
  rather than random forests. Second, we use non-seasonally adjusted data throughout to avoid look-ahead bias, whereas
  \citet{Medeiros2021} use seasonally adjusted series.}
\paragraph{Adaptive Combination of Forecasts with Fixed Share.}
Suppose that the scan test detects localized outperformance of a candidate expert compared to the benchmark. Then in principle, the candidate can improve forecasting performance. What remains is to find an adaptive algorithm, which makes use of the candidate when it outperforms the benchmark, but not otherwise. To this end, we combine forecasts using the Delayed Fixed Share algorithm of \citet[Algorithm~4]{korotin2020adaptive}, a variant of the Fixed Share algorithm of \citet{HerbsterWarmuth1998}.\footnote{Most standard online-learning algorithms, such as multiplicative weights (MW; \citealp{CesaBianchiLugosi2006}) and the original Fixed Share algorithm of \citet{HerbsterWarmuth1998}, assume that the outcome used to update weights is observed immediately after the forecast is made. They therefore do not directly apply to $h$-step-ahead forecasting with $h>1$, where the outcome is observed only after an $(h-1)$-step delay. A smaller branch of the literature studies online learning under delayed feedback; see, for example, \citet{joulani2013online} and \citet{korotin2020adaptive}.} The delayed variant is necessary because the loss from an $h$-step-ahead forecast is observed only when its target is realized; updating weights before then would introduce look-ahead bias. For convenience, we refer to this delayed variant simply as Fixed Share throughout.
The Fixed Share algorithm works as follows. Let $K$ denote the number of experts in the admitted pool, indexed by
$k = 1,\dots,K$. For
each horizon $h \in \{1,\dots,24\}$ and forecast origin $t$, let
$\hat{\pi}^{k}_{t+h\mid t}$ denote the $h$-step-ahead forecast of expert $k$ and
$\pi_t$ the realized inflation in month $t$. The algorithm maintains a probability vector of
weights $\{w^{h,k}_t\}_{k=1}^K$ over experts and uses these weights to form the
aggregate forecast
\[
  \hat{\pi}^{\mathrm{FS}}_{t+h\mid t}
  \;=\;
  \sum_{k=1}^{K} w^{h,k}_t\,\hat{\pi}^{k}_{t+h\mid t}.
\]

The loss on an $h$-step forecast is not realized until $h$ months after the
forecast is formed, so the weights must be updated using only losses that have
been realized by the origin. The algorithm therefore maintains, for each
horizon $h$, an internal weight vector $\{v^{h,k}_t\}_{k=1}^K$ over experts,
indexed by the \emph{target} month, the month whose inflation is being
forecast.
When the inflation rate of month $t$ becomes available, the $h$-step forecasts
of target $t$, formed at origin $t-h$, are evaluated using their absolute
forecast errors $\ell^{k}_{t\mid t-h} = |\pi_t - \hat{\pi}^{k}_{t\mid t-h}|$, and the
internal weights are updated in two steps. First, a Fixed Share mixing step
redistributes a fraction $\alpha \in [0,1]$ of the weight uniformly across all
experts,
preventing any expert from being permanently abandoned. Second, a
multiplicative weights (MW) step down-weights experts in proportion to their
realized absolute forecast error:
\[
  v^{h,k}_{t}
  \;\propto\;
  \Bigl[(1-\alpha)\,v^{h,k}_{t-1} + \frac{\alpha}{K}\Bigr]
  \exp\!\bigl(-\eta\,\ell^{k}_{t\mid t-h}\bigr),
  \qquad
  \sum_{k=1}^{K} v^{h,k}_{t} = 1,
\]
where $\eta > 0$ is the learning rate. The forecast weights used at origin
$t$ are obtained by propagating these internal weights through the $h$ months
separating the origin from its target, during which no further loss of horizon
$h$ is realized, so only the mixing step operates:
\[
  w^{h,k}_t
  \;=\;
  (1-\alpha)^h\, v^{h,k}_{t}
  \;+\;
  \frac{1-(1-\alpha)^h}{K}.
\]
The two parameters control distinct aspects of the algorithm's responsiveness.
A larger $\eta$ makes the multiplicative step more reactive: weight moves more
quickly toward experts with lower recent losses, at the cost of responding more
to short-run noise. A smaller $\eta$ smooths the update and retains more of the
prior weight. A larger $\alpha$ keeps more weight spread across experts through
the mixing step, which lets previously dominated experts regain influence once
their relative performance improves. The internal weights are initialized
uniformly, $v^{h,k}_{t_0} = 1/K$, and remain uniform until the first $h$-step
loss is realized, so early forecasts start from equal weights. The full path of
forecasts and weights for a
given $\eta$ is produced by Algorithm~\ref{alg:fixedshare_path} in
Appendix~\ref{app:pseudocode}.

A chief attraction of the Fixed Share algorithm is its interpretability. If one expert is particularly relevant for forecasting, its weight is high. Therefore one can inspect the sequence of weights to understand when certain data modalities, such as microdata, are relevant.\footnote{The weights also have a natural Bayesian interpretation. For a fixed horizon $h$, let the identity of the best expert be a latent state $n_s$ at time $s$. The Fixed Share algorithm considers a Markov prior on the sequence of $n_s$ with
$p(n_s=n\mid n_{s-1}=m)=(1-\alpha)\mathbf{1}\{n=m\}+\alpha/K$,
which says that this latent state usually persists but can switch uniformly across experts. Let $L_{t\mid s}=(\ell^1_{t\mid s},\ldots,\ell^K_{t\mid s})$ denote the vector of realized expert losses for target month $t$ from origin $s$, where $\ell^j_{t\mid s}=\ell(\pi_t,\hat{\pi}^j_{t\mid s})$. When the $h$-step loss for target month $s$ is realized, exponential weighting is equivalent to the likelihood
$p(L_{s\mid s-h}=\boldsymbol{\ell}\mid n_{s-h}=n)\propto \exp\{-\eta \ell\}$,
where $\ell$ denotes the $n$th component of $\boldsymbol{\ell}$.
In this notation, $v_t^{h,n}=p(n_{t-h}=n\mid L_{h+1\mid1},\ldots,L_{t\mid t-h})$, and the forecast-origin weight is $w_t^{h,n}=p(n_t=n\mid L_{h+1\mid1},\ldots,L_{t\mid t-h})$. We refer to \citet{korotin2020adaptive} for details.}

\paragraph{Regret Guarantee.}
The Fixed Share algorithm provides a worst-case performance guarantee,
clarifying precisely what the algorithm optimizes in a non-stationary
environment. The relevant object is \emph{dynamic regret}: the gap between the
cumulative loss of the online learner and the cumulative loss of an ideal
benchmark sequence of experts chosen in hindsight after seeing the full sample.
Formally, let
\[
  L_n^{\mathrm{FS},h}
  =
  \sum_{t=1}^{n} |\hat{\pi}_{t\mid t-h}^{\mathrm{FS}} - \pi_t|
\]
denote the cumulative loss of the Fixed Share forecast over $n$ periods. Let
$L_n^{(k_0,k_1,\ldots,k_S),h}$ denote the cumulative loss of the best sequence of
experts $k_0, k_1, \ldots, k_S$ in hindsight, when we allow the identity of the expert to switch
$S$ times over the sample. The dynamic regret of the algorithm is then defined as
\[
  \mathrm{Reg}_{S,n}^h = L_n^{\mathrm{FS},h} - \min_{k_0, k_1, \ldots, k_S}L_n^{(k_0, k_1, \ldots, k_S),h}.
\]
The standard notion of regret is a special case with $S=0$, which measures the comparison with the best ex-post single expert forecast \citep{orabona2019modern}.\par Let $\pi_t, \hat{\pi}_{t\mid t-h}^{\mathrm{FS}}$ be bounded. Tailoring Corollary~1 of \citet{korotin2020adaptive} to the Fixed Share update and our setting, we can easily obtain the following result:
\begin{equation}
  \mathrm{Reg}_{S,n}^h = O\left(\eta nh + \frac{1}{\eta}\left\{(S + 1)\log K + S \log \frac{1}{\alpha}
  + (n - S - 1)\log \frac{1}{1-\alpha}\right\}\right).
\end{equation}
where $K$ is the number of experts, $\eta$ is the learning rate, $\alpha$ is
the Fixed Share mixing parameter. Crucially, the guarantee holds for \textit{any}
sequence of outcomes. It does not require stationarity, ergodicity, or any
specific model of the data generating process, which is exactly why the method
is attractive for inflation forecasting in a non-stationary environment, with machine learning-based experts.

Each term on the right-hand side has a simple interpretation. The first term, $\eta nh$, is the cost of using a positive learning rate in the exponential-weighting update; it accumulates over the $n$ target months and grows with the forecast horizon $h$ because feedback arrives with delay. The bracketed terms constitute the complexity cost of tracking a switching comparator: $(S+1)\log K$ reflects the number of experts available on each segment, while the two $\alpha$ terms price the chosen pattern of switches and non-switches. Thus, adding experts raises the regret bound through the $\log K$ penalty. As a consequence, one should include only experts whose marginal contribution is non-negligible, in the sense that the decrease in the comparator loss outweighs the increase in the $\log K$ penalty term. The last two terms imply $\alpha$ should be small when $S <\!\!< n$. In particular, it is straightforward to show that the optimal $\alpha$ is given by $S/(n-1)$. In our empirical analysis, we set $\alpha = 0.02$, which is approximately $4/n$ at our sample size.

To gain more insight into the impact of the learning rate $\eta$, we set $\alpha$ to be the optimum. Then the regret can be written as
\begin{equation}
  \mathrm{Reg}_{S,n}^h = O\left(\eta nh + \frac{\log(Kn)}{\eta}\right).
\end{equation}
The tension
between the two pins down the role of the learning rate: a small $\eta$
yields stable weights but adapts too slowly, while a large $\eta$ reallocates across experts
quickly but overreacts to short-run noise.

Fixed Share can be understood as a generalization of the multiplicative weights
(MW) algorithm from the online learning literature \citep{CesaBianchiLugosi2006}.
The MW algorithm provides a similar regret guarantee, but only relative to the
single best expert in hindsight. That is too restrictive for our application:
we do not expect the same forecast to be best both before and after large
changes in inflation dynamics. The reason MW may fail in this environment is that
it updates weights multiplicatively based on losses, so an expert that performs
poorly early on sees its weight decay exponentially toward zero. Even if that
expert later becomes the best, it cannot recover meaningful weight. The key
innovation of Fixed Share is the mixing step governed by $\alpha$: after each
multiplicative update, a fraction of weight is redistributed across all
experts. This keeps every expert alive, so the algorithm can track switches in
the identity of the best forecast source over time.

It might be tempting to consider a more straightforward forecast which applies XGBoost to data from all three experts,  and lets a machine learning (ML) algorithm perform variable selection. We will perform this exercise as robustness, but it is not our baseline. While this practice is more common in the ML literature, it is prone to overfitting---the sample size for inflation forecasting is extremely low and most predictors are not highly predictive. More formally, equation \eqref{eq:risk_decomp} suggests that the overfitting cost of blindly combining modalities could easily outweigh the benefit for challenging forecasting problems. In contrast, the Fixed Share algorithm serves as a ``router'' to average multiple experts which rely on a subset of predictors. It is analogous to the sparse mixture-of-experts architecture that proves to be much more effective than dense feed-forward layers in training foundation AI models \citep{shazeer2017outrageously,lepikhingshard}. This approach lets the forecasting system retain a simple benchmark while allowing additional experts to receive weight only when their recent performance justifies it.

Furthermore, it is unrealistic to expect practitioners to replace a long-standing forecasting model, such as SARIMA, with a new and more black-box algorithm that has not yet been evaluated in real time, solely on the basis of backtesting performance. Our framework allows the forecaster to retain the status quo model, introduce new data modalities or machine-learning models as additional experts with zero or small initial weights, and then update these weights adaptively using the Fixed Share algorithm. The resulting weight shifts also provide a succinct and interpretable time-varying summary of the relative performance of each expert, whether the incumbent forecast or a newly introduced one, without imposing stationarity.

\paragraph{Fixed Learning Rate.}
Our baseline fixes $\eta = 0.5$ and $\alpha = 0.02$ throughout, with the
internal weights initialized uniformly at $1/K$; the combined forecasts cover
target months from January 2007 onward, with each forecast formed $h$ months
earlier at its forecast origin. This
specification serves as a transparent benchmark: the fixed learning rate is
consistent with standard practice in the online learning literature
\citep{HerbsterWarmuth1998}, and holding both $\eta$ and $\alpha$ fixed
reduces the implementation to a single pass of
Algorithm~\ref{alg:fixedshare_path} (Appendix~\ref{app:pseudocode})
with $\eta = 0.5$.

\paragraph{Varying Learning Rate.}

While a fixed $\eta = 0.5$ is our baseline, the right degree of
responsiveness may itself change over time: periods of rapid structural change
call for a larger $\eta$ so that weights shift quickly toward accurate experts,
while stable periods favor a smaller $\eta$ to avoid overreacting to noise.
As a robustness check, we therefore also consider a specification with a
varying learning rate, which
re-selects $\eta$ at each forecast origin based on realized forecast performance.

The procedure works as follows. Fix a grid
$\mathcal{H} = \{\eta^{(1)} < \eta^{(2)} < \cdots < \eta^{(G)}\}$
of $G$ candidate learning rates. For each $\eta^{(g)} \in \mathcal{H}$, the
Fixed Share algorithm is run in parallel, producing a path of combined
forecasts
$\hat{\pi}^{\mathrm{FS},(\eta^{(g)})}_{t+h\mid t}$. At each forecast origin $t$, we select the learning rate
$\eta$ whose path has delivered the lowest average realized error over the most
recent $W$ target months. Rather than re-evaluating all $G$ candidates at every step, we
restrict attention to the previously chosen value $\eta_{t-1}^\star$ and its
immediate grid neighbors, reflecting the assumption that the optimal learning
rate evolves gradually. Formally, let $g_{t-1}$ denote the index of
$\eta_{t-1}^\star$ in $\mathcal{H}$. The candidate index set at time $t$ is
\[
  \mathcal{I}_t
  \;=\;
  \bigl\{\max(1,\, g_{t-1}-1),\; g_{t-1},\; \min(G,\, g_{t-1}+1)\bigr\},
\]
where the $\max$ and $\min$ prevent the index from falling outside the grid
boundaries. Among these (at most three) candidates, we choose the one with
the smallest rolling-window MAE of realized $h$-step errors over the most
recent $W$ target months:
\[
  \eta_t^\star
  \;=\;
  \eta^{(g_t)},
  \qquad
  g_t = \arg\min_{g \in \mathcal{I}_t}
  \frac{1}{W}\sum_{s=t-W+1}^{t}
  \bigl|\pi_s - \hat{\pi}^{\mathrm{FS},(\eta^{(g)})}_{s\mid s-h}\bigr|,
\]
with ties kept at the incumbent value. Every term in the sum is available at
the origin: the forecast of target $s$ was formed at origin $s-h$, and $\pi_s$
is realized by $t \ge s$.
In our implementation, $\mathcal{H}$ consists of $G=21$ values
$\{0.25, 0.30, \ldots, 1.25\}$ in steps of $0.05$, with rolling window $W=60$
months and initial value
$\eta_0 = 0.5$, the fixed-rate baseline. The complete procedure is described in
Algorithms~\ref{alg:fixedshare_path}
and~\ref{alg:fs_precompute_neighbor_select_eta} in Appendix~\ref{app:pseudocode}.\footnote{We have considered other algorithms for updating $\eta$, such as AdaHedge (\citealp{deRooijVanErvenGrunwaldKoolen2014}); however they performed poorly in our setting.}

\paragraph{Scan Test for Expert Selection.}

As discussed above, to maximize the performance gain
from the Fixed Share algorithm, we need to choose experts judiciously such that
the improvement in the comparator loss $L_n^{(k_0, k_1, \ldots, k_S),h}$ exceeds
the $\log K$ penalty increase. We therefore use a sequential expert inclusion
procedure based on the scan test introduced in Section~\ref{sec:scantest}.
The idea is to start with a benchmark and add a new expert
only if it outperforms every expert selected so far, in the sense that either
the occasional-outperformance test or the overall-outperformance test
(Subsection~\ref{sec:scan_test_uses}) rejects
the corresponding no-outperformance null at the 5\% nominal level.

In practice, the most common forecast is given by the univariate SARIMA model,
which will always be considered as the first expert. We consider the macro
forecast as the second expert if it outperforms the univariate forecast. The
micro model described in
Sections~\ref{sec:encoding_training_tuning} and \ref{sec:training_tuning} joins the
pool of experts only if it outperforms both the univariate and the macro
forecasts, each used as the comparison forecast. Effectively, we set a rather
high bar for the usage of microdata. As a consequence, the inclusion of the
microdata-based expert suggests strong statistical evidence that the microdata
bring new information that is helpful for inflation forecasting.
 
In our expert-selection scan tests, we use $B=4{,}999$ bootstrap draws applied
to a panel of $n=215$ monthly observations spanning January 2007 through
November 2024, across $H=24$ forecast horizons. The occasional-outperformance
scan uses $L_{\min}=6$ and $L_{\max}=12$, giving
$\sum_{L=6}^{12}(n-L+1)=1{,}449$ candidate intervals and
$H \times 1{,}449 = 34{,}776$ interval-horizon pairs; the
overall-outperformance test uses the single full-sample interval,
$L_{\min}=L_{\max}=n$.

\section{Results: Forecasting Inflation with Microdata}\label{sec:results}

We now present the paper's four main empirical results. First, the micro
forecast outperforms the univariate benchmark only in the post-2020 window.
Before then, the univariate benchmark performs better at every horizon;
afterward, the micro forecast wins at nearly every horizon, with the largest
gains at long horizons.
Second, applying scan tests, we find that the Fixed Share combined forecast
should include all three modalities: micro, macro and univariate forecasts.
Third, this combined forecast performs comparably to the univariate
benchmark before 2020 and better than it at every horizon after 2020. Fourth,
we confirm the time-varying value of microdata. The combined forecast leans on
the univariate forecast before 2020 and shifts weight toward the micro forecast
afterward, especially at short and medium horizons. The full three-expert
combined forecast also outperforms the two-expert benchmark that excludes
microdata, on average, only in the post-2020 window.

\subsection{Micro Forecasts versus the Univariate Benchmark}
\label{sec:micro_vs_sarima}

We now present the first result: the micro forecast outperforms the univariate benchmark, but only after 2020. Figure~\ref{fig:xgb_micro_vs_sarima_h12} visualizes it. We plot the 12 month ahead forecast of monthly inflation, for both the univariate forecast and the microdata forecast, as well as actual inflation in the corresponding period. For interpretability, we take a three month moving average of all three series. Clearly, before 2020 the univariate forecast performs excellently. The microdata forecast has notably larger losses, especially in 2018-2019. However, after 2020 the micro forecast performs better. Neither forecast captures the rise in inflation in 2021-22, which according to both models was due to unforecastable shocks. However the microdata-based model does a better job of forecasting the subsequent propagation of these shocks, during the decline in inflation of 2023-24. The micro model predicts a gradual decline, whereas the univariate model predicts that inflation will be persistently high. At the end of the sample, as inflation becomes less volatile again, the univariate model again starts to perform better.

\begin{figure}[t!]
  \centering
  \caption{Micro vs.\ Univariate Forecast, 12 Months Ahead}
  \label{fig:xgb_micro_vs_sarima_h12}
  \includegraphics[width=\textwidth]{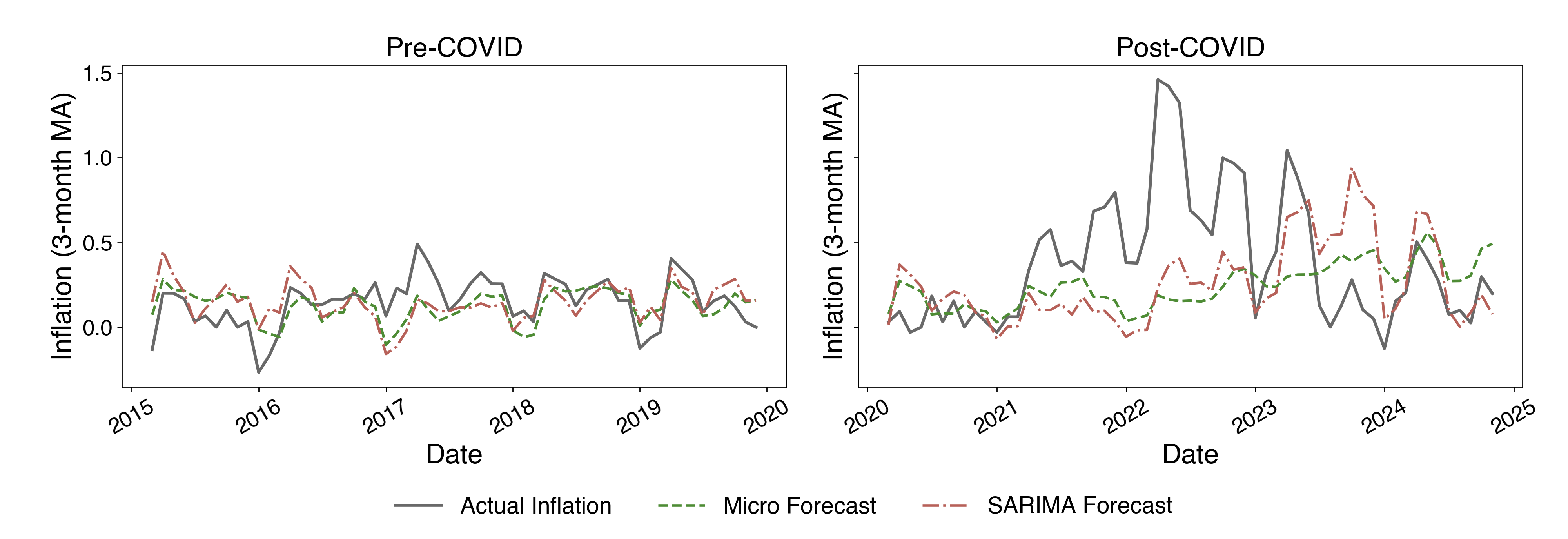}
  \smallskip
  \begin{minipage}{\textwidth}\setstretch{1.0}
    \footnotesize\textit{Notes:} Each panel plots 3-month trailing moving
    averages of realized monthly inflation, the micro
    12-month-ahead forecast, and the univariate (SARIMA) 12-month-ahead
    forecast. Dates on the x-axis are target months: a forecast dated
    $t$ was formed at origin $t-12$. The panels
    correspond to the 2015--2019 and 2020--2024 evaluation windows.
  \end{minipage}
\end{figure}

We then identify the forecast horizons that account for the micro model's post-2020 outperformance. Figure~\ref{fig:micro_vs_sarima_improvement_heatmap} presents a heatmap. The x axis is the horizon of the forecast; the two rows are the pre-2020 and post-2020 evaluation windows, and each cell reports the percent improvement in the mean absolute error (MAE) of the micro forecast relative to the univariate forecast. Evidently, the micro forecast underperforms at all horizons before 2020; whereas after 2020 the microdata outperforms at 23 of the 24 horizons, and by the most at long horizons, where the improvement reaches about 40 percent. Tables~\ref{tab:sarima_mae_two_windows} and~\ref{tab:fine_micro_vs_sarima_improvement} confirm these results by reporting the univariate-forecast MAE and the relative MAE improvement of the micro forecast; Appendix Figures~\ref{fig:xgb_micro_vs_sarima_h1}--\ref{fig:xgb_micro_vs_sarima_h24} plot the time series of forecasts at $h=1, 6,$ and $24$.
 
The occasional-outperformance scan test confirms the first result. It tests the null hypothesis that
microdata do not outperform the univariate forecast at any horizon or during
any time period. The Ljung--Box pre-test rejects serial
independence for this comparison ($p=0.034$), so the critical values use the
Regime~D bootstrap. The test rejects, with a bootstrap $p$-value of 0.041.
The scan test therefore provides statistical evidence that
microdata occasionally outperform the univariate forecast.

\begin{figure}[t!]
\centering
\caption{Micro Forecast Improvement Relative to Univariate}
\label{fig:micro_vs_sarima_improvement_heatmap}
\includegraphics[width=\textwidth]{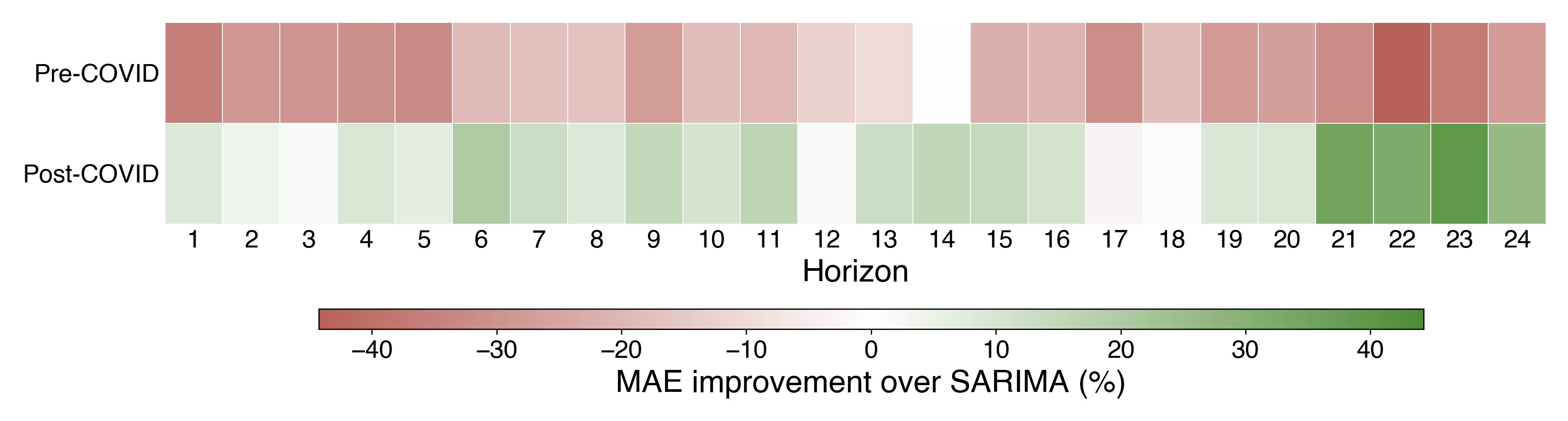}
  \smallskip
  \begin{minipage}{\textwidth}\setstretch{1.0}
    \footnotesize\textit{Notes:} Cells report the percentage improvement in
    mean absolute error (MAE) of the micro forecast relative to the univariate forecast by
    horizon $h=1,\ldots,24$ and evaluation window. Positive values indicate
    lower MAE for the micro forecast than for the univariate forecast; negative values indicate the
    reverse. Green shading denotes positive improvements and red shading
    denotes negative improvements. Numerical values are reported in
    Table~\ref{tab:fine_micro_vs_sarima_improvement}.
  \end{minipage}
\end{figure}

\subsection{Selecting Experts with the Scan Test}

We now present the second result: scan tests imply that the Fixed Share combined forecast should include all three experts: the micro, macro and univariate forecasts. As discussed in Section~\ref{sec:scantest}, we select experts sequentially. Starting from the univariate SARIMA benchmark, a candidate forecast is admitted only if it outperforms every expert selected so far, where outperformance means that either the occasional-outperformance test or the overall-outperformance test rejects the corresponding no-outperformance null at the 5\% level. We first test whether the macro forecast outperforms the univariate benchmark. We then test the micro forecast separately against the univariate benchmark and the macro forecast.

Appendix~\ref{app:macro_vs_sarima} contains details on the comparison between the macro forecast and the univariate benchmark. Before 2020, the macro forecast has a worse MAE than the univariate benchmark at all 24
horizons. After 2020, the macro forecast improves on the univariate forecast at 19 of the 24 horizons, by 10.5\% on average, with the largest percent gain in the MAE reaching 41.9\% at $h=23$. The scan test confirms that the macro expert adds forecasting power: the bootstrap $p$-values are 0.037 for occasional outperformance and 0.024 for overall outperformance.\footnote{For this comparison, the Ljung--Box pre-test does not reject serial independence ($p=0.223$), so both tests use the Regime~I bootstrap.} We therefore add the macro forecast to the selected expert set.

We next apply the same sequential rule to the micro forecast. As shown in the
preceding subsection, the occasional-outperformance test rejects the null of
no occasional outperformance of the micro forecast relative to the univariate
benchmark
($p=0.041$). Against the macro
forecast,\footnote{For the comparison with the macro forecast,
the Ljung--Box pre-test does not reject serial independence ($p=0.226$), so
the corresponding tests use the Regime~I bootstrap.} the overall-outperformance test
rejects the null of no overall outperformance of the micro forecast
($p=0.026$).\footnote{Against the univariate forecast, the
overall-outperformance test does not reject the null that the micro forecast
fails to improve on the univariate forecast over the full sample at any
horizon ($p=0.152$). Against the macro forecast, the
occasional-outperformance test does not reject the null that the micro forecast
fails to improve on the macro forecast over any admissible interval and
horizon ($p=0.258$). These results illustrate the tradeoff discussed in
Section~\ref{sec:scan_test_uses}. The occasional test targets a stronger
localized signal but incurs more noise from searching over intervals. The
overall test smooths over time, reducing noise but also weakening the signal.}
Thus, the micro
forecast passes the sequential screen against both previously selected
experts, through occasional outperformance relative to the univariate forecast
and overall outperformance relative to the macro forecast. The sequential rule
therefore admits the micro forecast, and all three experts enter the combined
forecast.

\subsection{Combining Experts with Fixed Share}

We combine the three forecasts, micro, macro and univariate, using Fixed Share. This exercise leads to our third result: the combined forecast performs comparably to the univariate forecast before 2020, and better than it at every horizon after 2020.

\begin{figure}[t!]
  \centering
  \caption{Combined Forecast and Combination Weights, 12 Months Ahead}
  \label{fig:online_h12}
  \includegraphics[width=\textwidth]{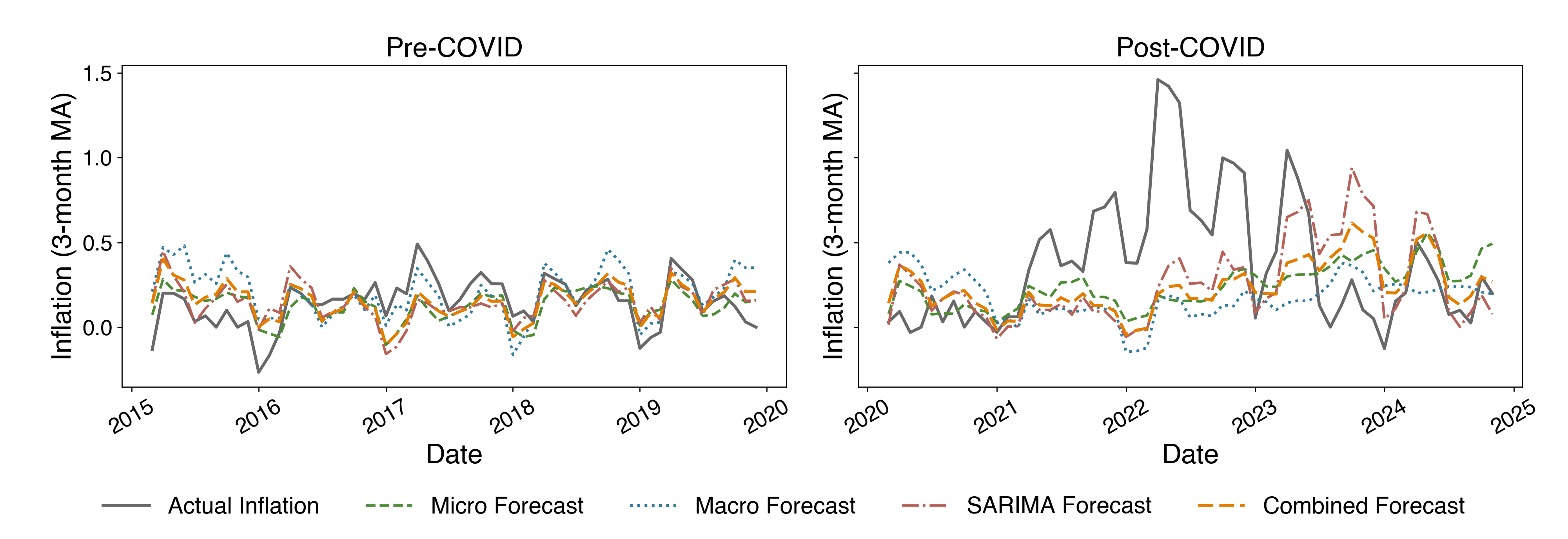}

  \medskip

  \includegraphics[width=\textwidth]{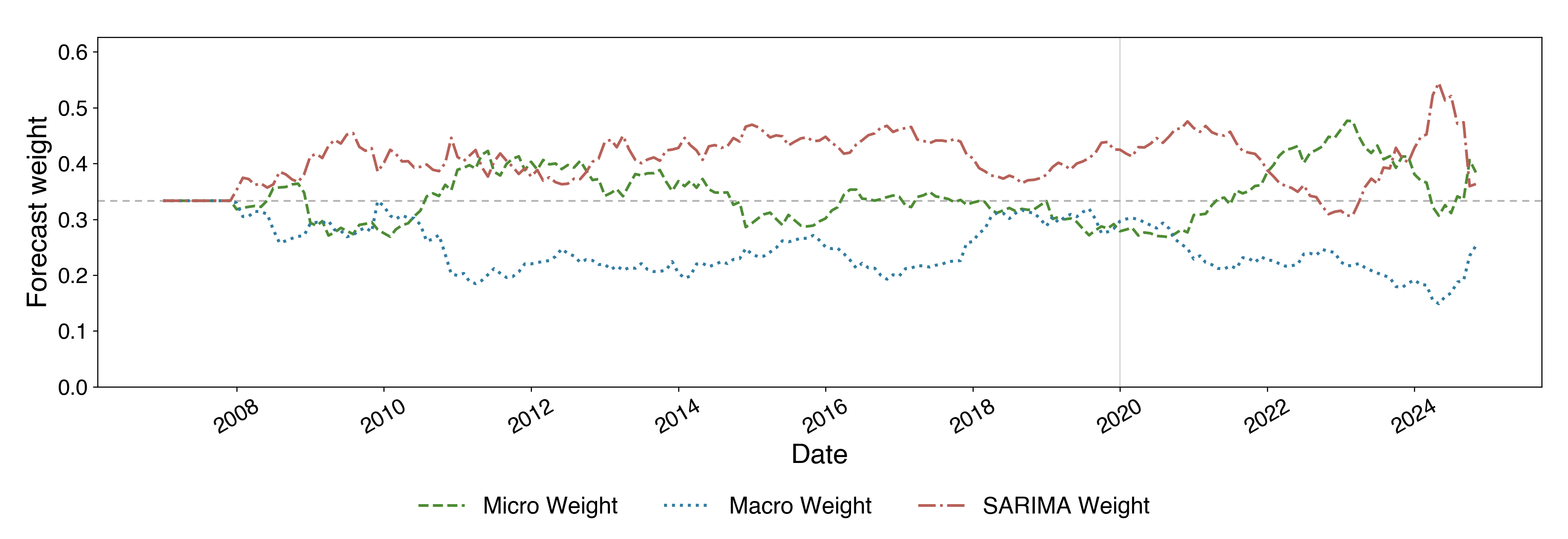}
  \smallskip
  \begin{minipage}{\textwidth}\setstretch{1.0}
    \footnotesize\textit{Notes:} The top panel plots 3-month trailing moving
    averages of realized inflation, the three expert forecasts (micro, macro,
    and univariate), and the Fixed Share combined forecast
    for the pre-2020 (2015--2019) and post-2020 (2020--2024) windows; dates
    on the x-axis are target months, so a forecast dated $t$ was formed
    at origin $t-12$. The bottom panel plots the
    combination weights on the three experts over the full sample, with the
    dashed horizontal line at the equal weight $1/3$ and the vertical line
    marking 2020.
  \end{minipage}
\end{figure}
  
Our baseline implementation holds the learning rate fixed at $\eta=0.5$, with
$K=3$ experts and $\alpha=0.02$, and combined forecasts covering target months
from January 2007 onward. As a robustness
check, we also run the varying-learning-rate version of
Algorithm~\ref{alg:fs_precompute_neighbor_select_eta}, described in
Subsection~\ref{sec:fixed_share}.\footnote{This version re-selects the
learning rate at each origin using the neighbor-only rolling search with
$W=60$ months. Its accuracy is nearly indistinguishable from the fixed-rate
baseline: it has lower MAE at roughly half of the horizons and higher MAE at
the rest, and the average gap is within 0.3 percent in both windows
(Appendix~\ref{app:three_expert_adaptive},
Table~\ref{tab:adaptive_mae_two_windows}). We therefore report the fixed-rate
baseline throughout.} Results are reported for the pre-2020 (2015--2019) and
post-2020 (2020--2024) evaluation windows.
 
Figure~\ref{fig:online_h12}, top panel, plots the 12-month-ahead forecast from
the Fixed Share combination, together with the three expert forecasts. The
adaptive combination performs well because it leans on the univariate forecast
when that benchmark is useful and shifts toward the micro forecast when
microdata add predictive content. Before 2020,
it stays close to the univariate forecast: in 2017--18, for instance, the
combined and univariate forecasts nearly coincide. During the 2023 disinflation, when the micro forecast is most useful, the
combined forecast adaptively breaks away from the univariate forecast, which
overpredicts inflation, and moves toward the micro forecast.

We confirm the advantages of the combined forecast in the heatmap of
Figure~\ref{fig:online_three_expert_improvement_heatmap}. The heatmap shows
that prior to 2020, the combined forecast has an MAE comparable to the
univariate model: the average gap across horizons is 1.2 percent, and no
shortfall exceeds 9 percent. After 2020, the combined forecast has a lower MAE
at every horizon, by 16 percent on average and by around 30 percent at the
longest horizons. Table~\ref{tab:fs_mae_two_windows} reports the underlying
MAEs of the combination and of each expert, by horizon and window.

\begin{figure}[t!]
\centering
\caption{Combined Forecast Improvement Relative to Univariate}
\label{fig:online_three_expert_improvement_heatmap}
\includegraphics[width=\textwidth]{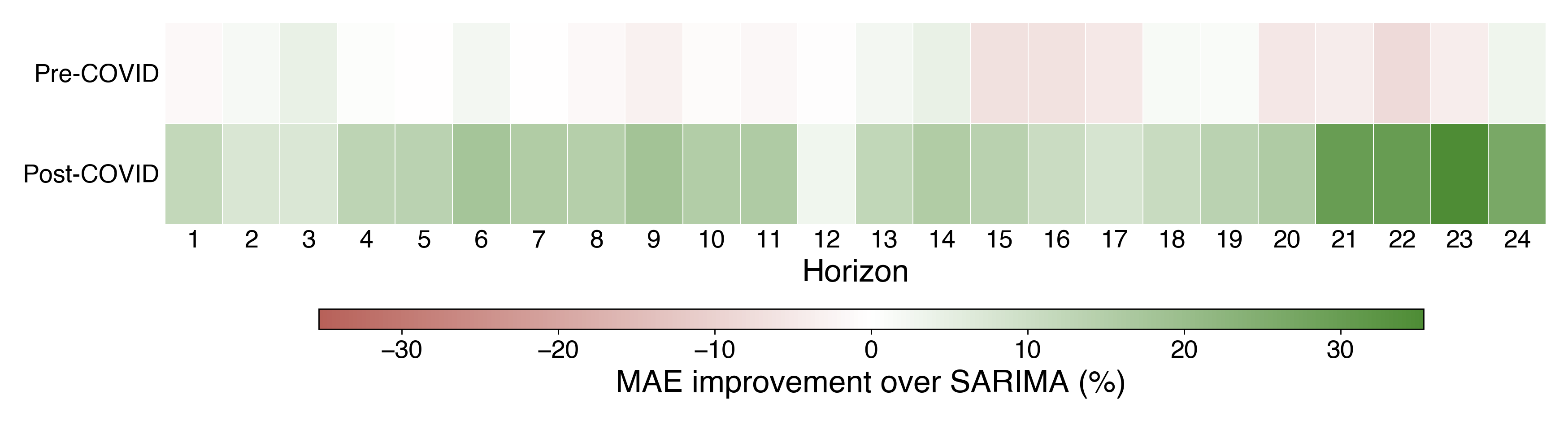}
  \smallskip
  \begin{minipage}{\textwidth}\setstretch{1.0}
    \footnotesize\textit{Notes:} Cells report the percentage MAE improvement
    of the three-expert Fixed Share combination relative to the univariate forecast
    by forecast horizon $h=1,\ldots,24$ and evaluation window. The three
    experts are the micro forecast, the macro forecast, and the univariate forecast. Positive values
    indicate lower MAE for the combination than for the univariate forecast;
    negative values indicate the reverse. Numerical values are reported in
    Table~\ref{tab:fs_mae_two_windows}.
  \end{minipage}
\end{figure}

\subsection{The Time Varying Value of Microdata}

We now present our fourth result, which confirms the time varying value of
microdata. One way to see the value of microdata is to inspect the weights of
the Fixed Share combination, which are plotted for the 12-month-ahead forecast in
the bottom panel of Figure~\ref{fig:online_h12}. The weight on
microdata is below its equal-weight benchmark of $1/3$ in the years
before 2020. After 2020 it rises, and it peaks in 2023, the period in which
the micro forecast is most informative about the disinflation. The pattern is
horizon specific: the average micro weight roughly doubles after 2020 at
$h=6$, from 0.21 to 0.41, rises from 0.32 to 0.36 at $h=12$, and drifts
slightly down at the longest horizons.\footnote{The
corresponding weight paths at other horizons are reported in
Appendix~\ref{app:figures}.}

We next show the time varying value of microdata by comparing the MAE of the
three-expert combined forecast to the two-expert benchmark.
Figure~\ref{fig:three_vs_two_expert_improvement_heatmap} presents a heatmap.
Before 2020, adding the micro forecast leaves the MAE essentially unchanged
on average: it helps modestly at some short and medium horizons and hurts at
the longest ones. After 2020, adding the micro forecast lowers MAE at 19 of
the 24 horizons, by 2.9 percent on average and by up to 9.5 percent.

We use the alternative, more demanding application of the scan test described
in Subsection~\ref{sec:scan_test_uses}. Specifically, we test the
three-expert combined forecast against the two-expert benchmark that combines
the univariate and macro forecasts, without microdata. The question is whether
adding the microdata expert improves forecast accuracy \textit{on average},
rather than only at some point in time. We apply the horizon-pooled test
described in Subsection~\ref{sec:scan_test_uses} separately before and after
2020 (for 2015--19 and 2020--24). Specifically, we study the
horizon-averaged loss differential
\[
  \bar D_t
  =
  \frac{1}{H}\sum_{h=1}^{H}D_t^{(h)}
  =
  \frac{1}{H}\sum_{h=1}^{H}
  \Bigl(\bigl|\varepsilon_{t,\text{two-expert}}^{(h)}\bigr|
  - \bigl|\varepsilon_{t,\text{three-expert}}^{(h)}\bigr|\Bigr),
\]
with $H=24$. The baseline regime for each window is chosen by the
multiplier-bootstrap Ljung--Box diagnostic computed on that window.

For the pre-2020 window, the Ljung--Box test does not
reject ($p=0.301$), so the baseline uses Regime~I. The test fails
to reject the null of no average improvement over the two-expert benchmark,
with a Regime~I bootstrap $p$-value of 0.548.
Adding the micro expert therefore
does not improve on the two-expert forecast on average before 2020.

\begin{figure}[t!]
  \centering
  \caption{Improvement from Adding the Micro Forecast, Relative to Two Experts}
  \label{fig:three_vs_two_expert_improvement_heatmap}
  \includegraphics[width=\textwidth]{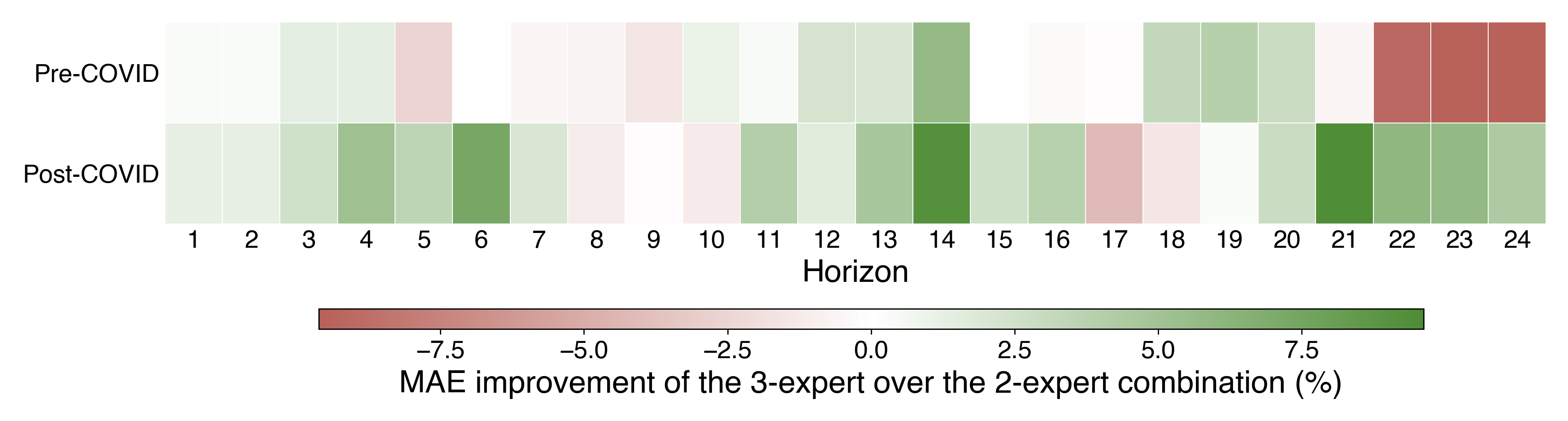}
  \smallskip
  \begin{minipage}{\textwidth}\setstretch{1.0}
    \footnotesize\textit{Notes:} Cells report the percentage MAE improvement
    of the three-expert Fixed Share combination relative to the two-expert
    Fixed Share benchmark by forecast horizon $h=1,\ldots,24$ and evaluation
    window. The three-expert combination uses the micro forecast, the macro
    forecast, and the univariate forecast; the two-expert benchmark uses only
    the macro and univariate forecasts. Positive values indicate lower MAE after
    adding the micro forecast; negative values indicate higher MAE. Numerical
    values are reported in
    Table~\ref{tab:three_vs_two_mae_improvement}.
  \end{minipage}
\end{figure}

After 2020, microdata is valuable for forecasting. For the post-COVID window,
the Ljung--Box test does not reject serial independence
($p=0.709$), so the baseline uses Regime~I. The test rejects decisively, with
a bootstrap $p$-value of 0.001. The full
three-expert combination therefore adds significantly to the two-expert
benchmark on average after 2020.

This exercise moves beyond the previous uses of the scan test, for expert selection. Those tests
showed that microdata improved upon the macro and univariate forecasts for at
least some horizon and interval. Now, we have shown a stronger claim:
the three-expert combined forecast, including microdata, improves upon the
corresponding combination without microdata, on average after 2020.

\subsection{Combined Three-Expert Forecast vs. a Single XGBoost Forecast}

A natural alternative to the combined three-expert forecast is a
single XGBoost forecast built from one pooled nonlinear regression. This pooled
forecast uses the full feature set across the three experts: the 121
micro distributional statistics, the macro covariates (the rents CPI
permutation), and lagged inflation, for 158 pre-lag predictors in total. The
comparison isolates the value of using Fixed Share to combine experts,
relative to a pooled XGBoost forecast based on the same information.

The single XGBoost forecast performs much worse before 2020, and moderately
worse afterwards. Before 2020, the percent MAE gaps between the combined
three-expert forecast and the single XGBoost forecast are $+25.7$ (on average
for horizons 1-6), $+8.2$ (horizons 7-12), $+20.1$ (horizons 13-18), and
$+25.1$ (horizons 19-24) percentage points, averaging a $19.8$
percentage-point gap.  In the post-2020 window, the gaps narrow to $+8.4$,
$+3.2$, $+5.2$, and $-0.6$ percentage points, averaging a $4.1$
percentage-point gap. The combined three-expert forecast retains the edge
through $h=20$; the single XGBoost forecast pulls ahead only at the longest
horizons ($h=21$ to $24$), by margins comparable to the three-expert
forecast's own lead at shorter horizons. The contrast highlights the advantages of the Fixed Share algorithm,
which exploits the univariate forecast when it is useful, and adaptively switches to the
micro forecast at the relevant time. The pre-2020 weakness of the single XGBoost
forecast is consistent with the overfitting concern formalized in
equation~\eqref{eq:risk_decomp}: when the microdata signal is limited, putting
all predictors into one pooled XGBoost regression can raise estimation error
rather than improve the forecast.

\subsection{Robustness to Alternative Microdata Encodings}
\label{sec:micro_robustness}

The baseline micro forecast incorporates three choices in encoding  microdata: the 11-feature encoding, expenditure-weighted within-category statistics,
and price quotes that include sales. We evaluate each choice against an alternative that appears to perform worse; the detailed results are reported in
Appendix~\ref{app:robustness_alt}.
 
\paragraph{Augmented encoding.}
We replace the 11-feature encoding with the full 29-feature augmented
encoding, which adds the variance (with
Bessel's correction), the standardised third and fifth central moments and the
excess kurtosis of the non-zero price-change distribution, the nine deciles of
the time-since-last-change distribution, and five moments of the same
distribution (Appendix~\ref{app:augmented_encoding}). The three-expert
combined forecast is nearly unchanged: relative to the baseline, the
augmented-encoding three-expert combined forecast's MAE performance deteriorates
by $2.8$ percentage points after 2020 and $0.3$ percentage points before 2020,
averaging across forecast horizons. The standalone micro forecast's MAE
performance deteriorates more relative to the baseline: by $2.5$ percentage
points before 2020 and $7.3$ percentage points after 2020. The deterioration in
MAE performance is likely due to overfitting from the higher-dimensional
encoding.

\paragraph{Unweighted statistics.}
Replacing expenditure-weighted statistics with equal-weighted statistics in the
encoding (Appendix~\ref{app:unweighted_encoding}) deteriorates the standalone
micro forecast's MAE performance relative to the baseline in both windows: by
$1.6$ percentage points in the pre-2020 window and by $3.3$ percentage points
in the post-2020 window, averaging across forecast horizons. The three-expert
combined forecast is more stable: its MAE performance is essentially unchanged
before 2020 and deteriorates by $0.9$ percentage points after 2020.
 
\paragraph{Prices excluding sales.}
Dropping ONS-sale-flagged quotes (Appendix~\ref{app:nosales_encoding})
improves the standalone micro forecast's MAE performance relative to the
baseline by $2.3$ percentage points in the pre-2020 window, but deteriorates
that performance by $8.9$ percentage points in the post-2020 window. The
three-expert combined forecast is more stable: its MAE performance is
essentially unchanged before 2020 and deteriorates by $2.6$ percentage points
after 2020, again leaving the baseline preferred. Sale-driven price changes
therefore carry useful predictive signal during the post-COVID inflation
surge, even though they add noise in stable times.

\paragraph{Sales-price imputation.}
Replacing temporary sale prices with the regular price inferred by the
\citet{NakamuraSteinsson2008} filter, parameterisation A
(Appendix~\ref{app:nsa_encoding}), deteriorates the standalone micro forecast's
MAE performance relative to the baseline by $8.9$ percentage points in the
pre-2020 window and $6.3$ percentage points in the post-2020 window. In the pre-2020 window the imputation step is more harmful than dropping the
flagged quotes outright, injecting noise of its own even when sales carry
little signal; in the post-2020 window it is somewhat less harmful than
dropping the quotes, though both alternatives still underperform the baseline.
The three-expert combined forecast is less affected:
its MAE performance is essentially unchanged before 2020 and deteriorates by
$2.1$ percentage points after 2020.

\section{Opening the Black Box: Grouped Shapley Decompositions}\label{sec:group_shapley}

Machine learning methods are sometimes criticized as a ``black box'' (e.g. \citealp{MullainathanLudwig2024}). We propose to ``open'' the black box by investigating which features of the microdata matter for its forecasting power. In particular, we use a grouped Shapley decomposition to ask which micro variables account for the incremental
value of the micro expert relative to the two-expert benchmark combining only the univariate and macro forecasts.

\paragraph{Construction.}
We partition the micro variables, computed within each COICOP1 category and
then stacked across categories (except the monthly dummies, which are common
across categories), into five groups:
\begin{enumerate}
  \item the fraction of zero price changes;
  \item the mean plus the middle three deciles of the price-change distribution;
  \item the top three deciles of the price-change distribution;
  \item the bottom three deciles of the price-change distribution;
  \item monthly dummies.
\end{enumerate}
Our choice of grouping is useful because it maps the forecasting exercise into
objects that price-setting theory treats as economically meaningful. The first
group captures the extensive margin of price adjustment,
a central empirical moment in studies of price rigidity
\citep{KlenowKryvtsov2008,NakamuraSteinsson2008,Blanco2021}. The second group
captures the center of the price-change distribution through the mean and
middle deciles, which summarize typical repricing magnitudes. The third and
fourth groups isolate the upper and lower tails of the price-change
distribution, allowing the decomposition to distinguish unusually large price
increases from unusually large price decreases or discounts. These tail groups
are natural objects for theories that emphasize asymmetry, state dependence,
and granular shocks \citep{BallMankiw1995,Midrigan2011,Vavra2014,Alvarez2016,AlvarezBlaser2025}. The fifth and last
group consists of monthly dummies, which absorb the seasonal structure in inflation.
The benchmarks already account for seasonality: the macro expert includes month
indicators, while the univariate SARIMA benchmark explicitly allows for monthly
seasonal dynamics. The monthly-dummy group therefore should not be read as
adding a seasonal adjustment that is missing from the competing forecasts. The
relevant comparison is whether seasonal timing improves forecasts when it is
combined with micro-price information, beyond what the macro and univariate
forecasts already extract from seasonality.

Grouped Shapley values reveal the marginal contribution of each group to the overall forecasting power of the microdata, in a way that accounts for the interactions and nonlinearity of the machine learning algorithm. We calculate grouped Shapley values as follows. For each horizon $h$ and nonempty coalition $S$ of grouped micro variables, we
estimate the micro expert using only the groups in $S$ and combine the
resulting forecast with the macro and univariate experts using the same Fixed
Share procedure as in Subsection~\ref{sec:fixed_share}. Let
$\mathrm{MAE}_{S,h}$ denote the mean absolute error of this three-expert
forecast, and let $\mathrm{MAE}_{\mathrm{baseline},h}$ denote
the mean absolute error of the two-expert alternative. We define
the coalition value as the signed percentage reduction in mean absolute error,
\[
  v_h(S)
  \;=\;
  100 \times
  \frac{\mathrm{MAE}_{\mathrm{baseline},h} - \mathrm{MAE}_{S,h}}
  {\mathrm{MAE}_{\mathrm{baseline},h}}.
\]
For the empty coalition, we set $v_h(\varnothing)=0$: it corresponds to the
benchmark without any micro inputs.

For each horizon, we evaluate all $2^5$ coalitions of these groups, compute
the corresponding values $v_h(S)$, and then assign each group its Shapley
value, that is, its average marginal contribution across all possible orders in
which the groups can enter the model. Formally, if $\mathcal{G}$ denotes the
set of the five groups and $g \in \mathcal{G}$ is one particular group, its
Shapley value at horizon $h$ is
\[
  \phi_{g,h}
  \;=\;
  \sum_{S \subseteq \mathcal{G}\setminus\{g\}}
  \frac{|S|!\,\bigl(|\mathcal{G}|-|S|-1\bigr)!}{|\mathcal{G}|!}
  \Bigl[v_h(S \cup \{g\}) - v_h(S)\Bigr].
\]
This expression averages the marginal contribution of group $g$ over all
possible subsets $S$ that exclude it, with the standard combinatorial weights.
Positive Shapley values indicate that a group lowers forecast errors relative
to the two-expert online-learning benchmark on average, while negative values
indicate that the group worsens forecast performance.
The unit of $\phi_{g,h}$ is percent decrease in MAE compared to the
two-expert benchmark MAE. For example, $\phi_{g,h}=1$ means that, at
horizon $h$, group $g$ accounts for an average MAE improvement equal to
one percent of the two-expert benchmark MAE, averaged over all coalitions
of the other groups. Because $v_h(\varnothing)=0$, the Shapley values add
up exactly: the group contributions sum to $v_h(\mathcal{G})$, the full
three-expert combination's percentage MAE improvement over the two-expert
benchmark. 

\paragraph{Computation.}
Re-tuning the micro expert for every coalition would multiply the full
hyperparameter search by $2^5$. We avoid this by freezing the
hyperparameters: each coalition's micro expert is re-trained at every origin
on its feature subset, but reuses the hyperparameter sequence that the full
micro expert already selected in real time
(Subsection~\ref{sec:training_tuning}). Freezing preserves the no-look-ahead
structure of the pipeline, because each reused hyperparameter was selected
from information available at its origin, and it reduces the cost of the full
decomposition to roughly $1.3$ times one training run of the micro expert.
Algorithms~\ref{alg:shapley_micro} and~\ref{alg:group_shapley} in
Appendix~\ref{app:pseudocode} give the complete procedure.

\paragraph{Results.}
We apply the grouped Shapley decomposition to measure
which groups of micro variables account for the MAE gain from adding the micro
expert to the two-expert benchmark that combines only the univariate and macro
forecasts.

\begin{table}[htbp!]
\centering
\caption{Grouped Shapley Values for the Micro Expert, Averaged over 24 Horizons}
\label{tab:micro_shapley_aggregate_24h}
\begin{tabular}{lrr}
\toprule
Feature group & Pre-2020 & Post-2020 \\
\midrule
Fraction of zero price changes & 0.0859 & -0.1215 \\
Mean + middle 3 deciles & -0.7840 & 0.6796 \\
Top 3 deciles & -0.1960 & 0.7720 \\
Bottom 3 deciles & 0.2602 & 0.5342 \\
Monthly dummies & 0.2360 & 1.0435 \\
\bottomrule
\end{tabular}
\begin{minipage}{0.92\textwidth}\vspace{0.4em}\footnotesize\textit{Notes:}
Entries are grouped Shapley contributions of each feature group to the MAE gain
from adding the micro expert to the two-expert online-learning benchmark, which
combines the univariate and macro forecasts. Values are in percent of benchmark
MAE and are averaged over horizons $h=1,\ldots,24$ within each evaluation
window. Positive entries reduce MAE; negative entries increase it.\end{minipage}
\end{table}

Table~\ref{tab:micro_shapley_aggregate_24h} reports Shapley values averaged over
horizons $h=1,\ldots,24$. In the pre-COVID window, every contribution is
small: the mean plus middle-three-deciles group is the most negative at
$-0.78$, the top-three-deciles group is mildly negative at $-0.20$, and the
bottom three deciles, monthly dummies, and the fraction of zero price changes
are mildly positive, at $0.26$, $0.24$, and $0.09$. Thus, before COVID, no group delivers a
large positive contribution. This mirrors the broader finding that microdata
add little forecasting power in a stable inflation environment.
   
In the post-COVID window, the positive contributions come from the mean plus
middle three deciles at $0.68$, the top three deciles at $0.77$, the bottom
three deciles at $0.53$, and monthly dummies at $1.04$.\footnote{The positive Shapley value
for monthly dummies does not mean that the benchmarks lack seasonal controls.
It means that, within the micro forecast, seasonal timing adds predictive
content when combined with micro-price statistics.} The contribution from the group for the
fraction of zero price changes is essentially zero on average, at $-0.12$.
The five contributions sum to $2.91$ percent, the three-expert combination's
average post-2020 MAE gain over the two-expert benchmark
(Table~\ref{tab:three_vs_two_mae_improvement}).  

Figure~\ref{fig:micro_shapley_six_horizon_ma} reports the same decomposition
at the horizon level: at each horizon $h$, the plotted value is the average of
the per-horizon grouped Shapley values over a six-horizon local window, using
the available adjacent horizons. Pre-COVID, the contributions are small at
short and medium horizons. The lower tail is positive only at short and medium
horizons and turns negative later, while the losses concentrate at the longest
horizons, where the mean plus middle-three-deciles group and the bottom three
deciles become sharply more negative, both falling below $-3$ by $h=24$.
Post-COVID, the bottom three deciles peak at medium horizons, around $h=11$ to
$14$, and remain positive through roughly  $h=16$ before turning negative at longer horizons; the top three deciles
contribute most at short and medium horizons before fading near the end of the
horizon range; the mean plus middle-three-deciles group is strongest at the
shortest horizons and again from about $h=20$ onward; and monthly dummies are
positive at every horizon and grow toward the long end. The fraction of zero
price changes is negative at short horizons and positive at the longest ones,
netting out to essentially zero on average. 
 
\begin{figure}[t!]
  \centering
  \caption{Grouped Shapley Values by Horizon}
  \label{fig:micro_shapley_six_horizon_ma}
  \begin{subfigure}[t]{0.49\textwidth}
    \centering
    \caption{Pre-COVID}
    \includegraphics[width=\textwidth]{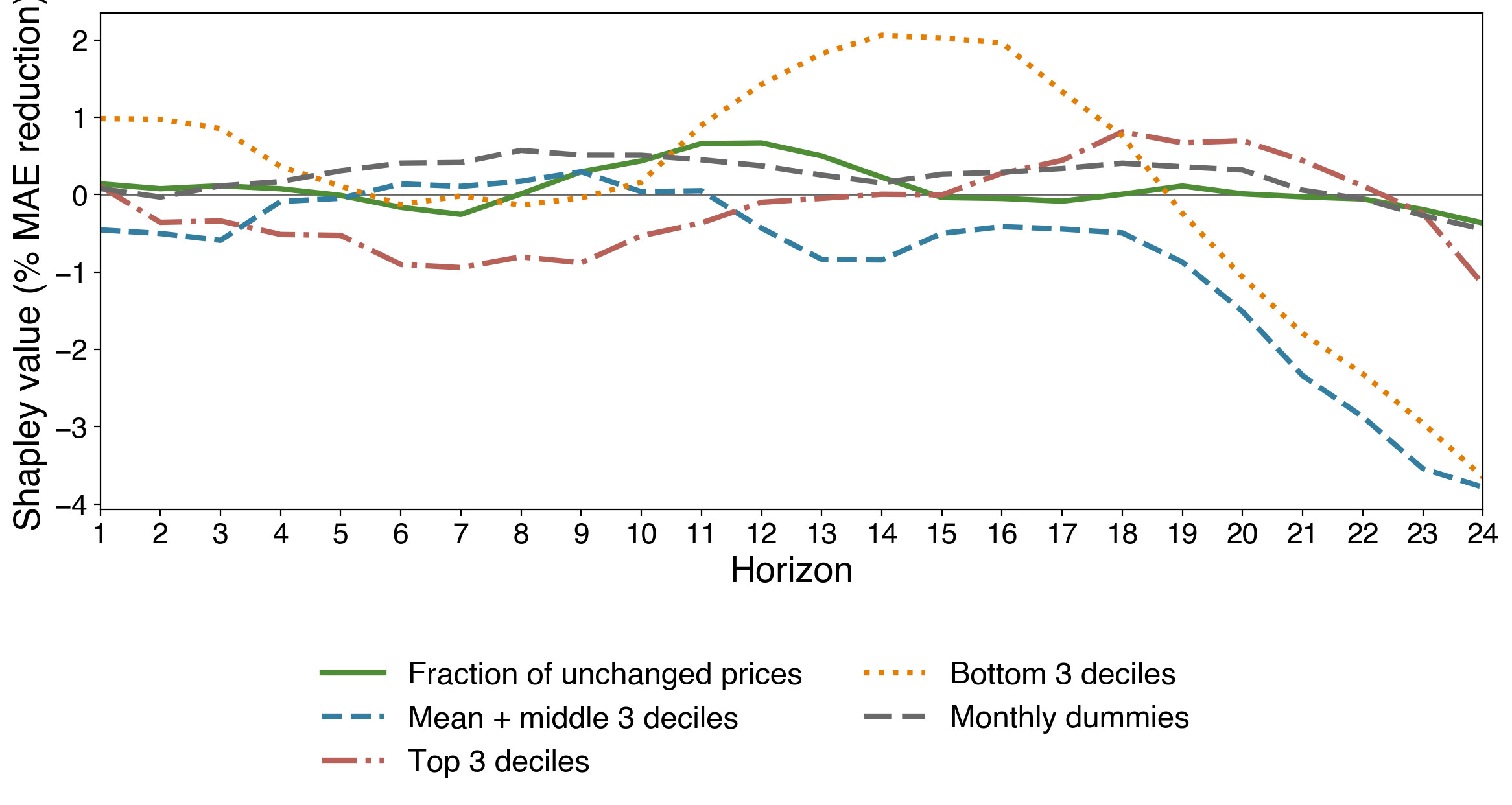}
  \end{subfigure}
  \hfill
  \begin{subfigure}[t]{0.49\textwidth}
    \centering
    \caption{Post-COVID}
    \includegraphics[width=\textwidth]{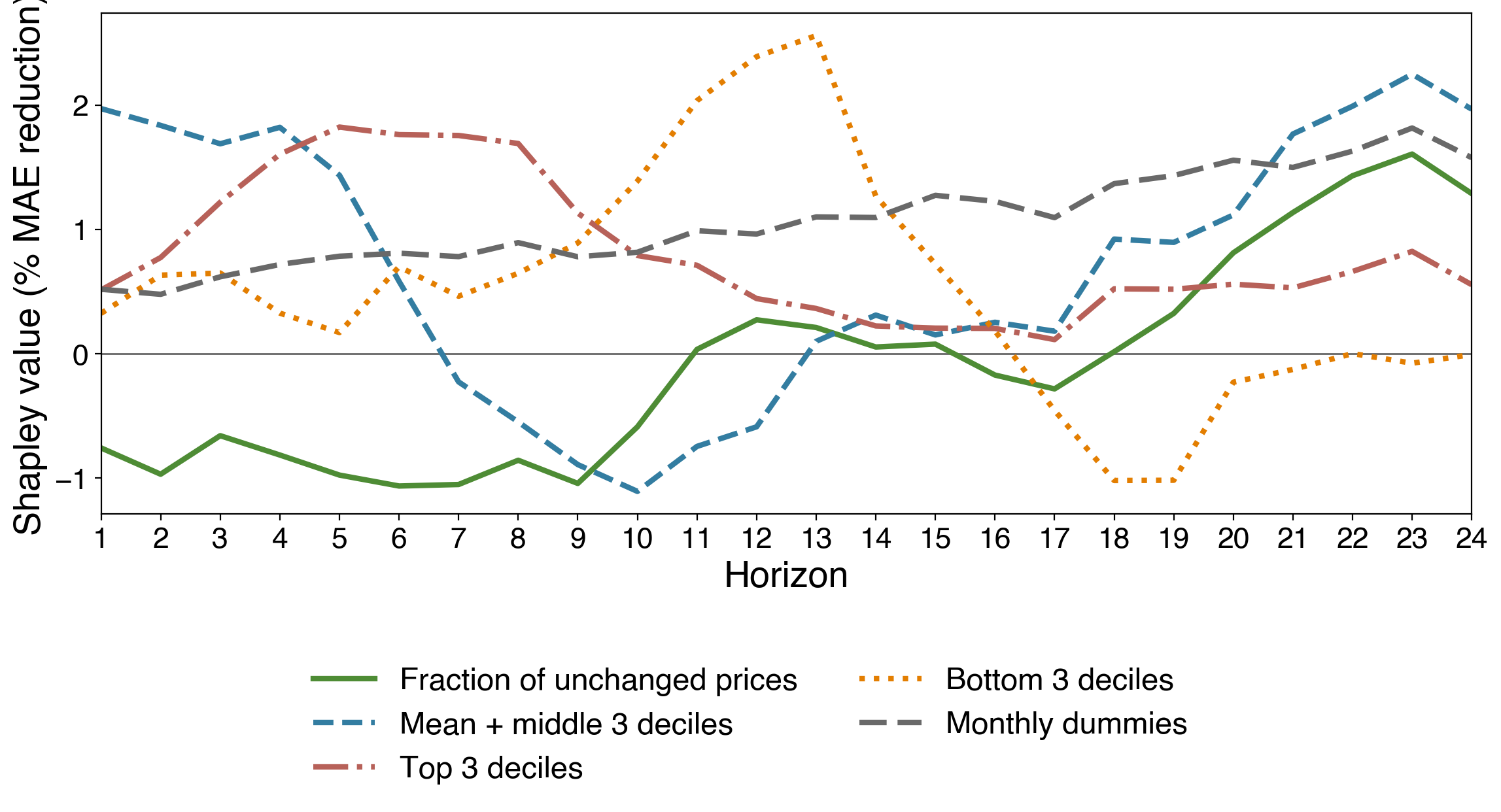}
  \end{subfigure}
  \smallskip
  \begin{minipage}{\textwidth}\setstretch{1.0}
    \footnotesize\textit{Notes:} The figure reports 5-group Shapley
    contributions for the incremental value of the micro forecast in the
    three-expert combined forecast relative to the two-expert
    benchmark that combines the univariate and macro forecasts. Values are smoothed
    across adjacent horizons using a 6-horizon moving average. Positive
    values indicate reductions in MAE from the corresponding feature group;
    negative values indicate increases in MAE. The horizontal line marks a
    zero contribution.
  \end{minipage}
\end{figure}

\section{Implications for Models of Inflation}\label{sec:models}

Overall, our results suggest that microdata matters for aggregate inflation dynamics, but only during periods of high and volatile inflation such as after 2020. We briefly discuss implications of our findings for models of price setting.

Consider first a benchmark. Let $\mu_t$ denote the cross-sectional
distribution of firms' pricing states relevant for future inflation (e.g.,
price gaps in standard menu-cost models), let $u_{-\infty:t}$ denote the
history of aggregate shocks through date $t$. Under a stable transition law, this distributional
state can be written as
\[
  \mu_t=\Gamma(u_{-\infty:t}).
\]
If the history of observed macro variables recovers $u_{-\infty:t}$ and
$\Gamma$ is consistently estimable from aggregate data, then $\mu_t$ can be
recovered from macro information alone. In this benchmark, observing microdata
separately cannot improve the optimal forecast of future inflation.

The familiar first-order result is a local version of this benchmark. In
standard menu-cost models, a first-order approximation is accurate when
aggregate shocks are small \citep{AuclertRogatoRognlieStraub2024}. Aggregate
inflation then admits a linear vector moving-average representation,
\[
  \pi_t \;=\; \mu_\pi + \sum_{l=0}^{\infty} \Psi_l\,
  u_{t-l},
\]
where $\mu_\pi$ denotes steady-state inflation, $u_t$ is the vector
of aggregate innovations, and $\{\Psi_l\}$ are impulse response coefficients.
If macro variables recover current and past aggregate shocks,
this representation implies that microdata do not add predictive content.

Microdata can improve feasible forecasts when this benchmark fails in one of
two ways. First, the observed macro variables may not recover the shocks
$u_{-\infty:t}$ that move the distributional state. A leading case is granular shocks. Disturbances to large
firms, products, or sectors need not wash out in a finite economy and can
therefore affect aggregate inflation. Micro-price distributions may contain
information about these disturbances beyond what appears in standard
aggregates \citep{AlvarezBlaser2025}. Dispersion, skewness, or sectoral relative-price
shocks can likewise change the distributional state and future inflation
without being summarized by the observed macro variables
\citep{BallMankiw1995,Vavra2014}.

Second, the mapping $\Gamma$ from the shock history to the distributional
state may not be consistently estimable. Changes in policy, regimes,
structural parameters, or the initial cross-sectional state can make this
mapping unstable, so that it cannot be recovered from past data.
Even when the mapping is stable, the combination of the inversion from macro
variables to shocks, the transition law for the distributional state, and the
law of motion for inflation may be difficult to estimate in a short sample,
especially in a large-shock region that is rarely observed before 2020.
Statistics of observed micro-level price changes can provide additional
information about the underlying distributional state and thereby improve a
feasible forecast.

Periods of high and volatile inflation such as after 2020 can lead to both
of these failures. The disturbances key to those episodes, for example
shocks to supply chains, are often granular in origin, so the observed macro
variables miss a larger share of the relevant shocks. Such episodes also
coincide with changes in policy and in the structure of the economy, which
can make the mapping from the shock history to the distributional state
unstable, while the data needed to re-estimate it remain scarce.
Price-setting nonlinearity amplifies both failures: in menu-cost models, the
frequency of price adjustment and inflation pass-through respond nonlinearly
to shock size, so inflation becomes more sensitive to the distributional
state after large shocks \citep{Blanco2021}. The observed distribution of
price changes can supply part of the information lost to unrecovered shocks
or to an unstable or poorly estimated mapping. Consistent with this
discussion, the incremental value of the micro expert appears after the
large inflationary shocks shown in Figure~\ref{fig:us_uk_cpi}, rather than
during the stable 2015--2019 period.

\section{Conclusion}\label{sec:conclusion}

This paper asks whether microdata matters for aggregate inflation dynamics. Standard models, as well as the non-stationary dynamics of inflation, suggest that microdata might be important only occasionally. Therefore we start by developing a new scan test, which assesses whether one forecast outperforms another at an unknown horizon and time period. This scan test can be used to detect the occasional forecasting power of microdata for inflation, but also used more broadly. We then develop an adaptive machine learning based pipeline to take advantage of this occasional forecasting power. There are three steps, which we apply to microdata underlying the UK consumer price index. First, we encode the distribution of price changes into a high dimensional vector of statistics. Second, we combine this vector with a machine learning algorithm, gradient boosted trees, tuned by a cached real-time procedure, in order to forecast inflation. Third, we combine the resulting forecast with the benchmarks through the Fixed Share algorithm, which raises the weight on microdata when it is useful for forecasting and lowers it otherwise. Microdata enter the combination only after the scan test detects localized outperformance beyond the existing experts.
  
There are four main empirical results. First, the microdata forecast
outperforms the univariate benchmark only after 2020; before 2020 the
univariate benchmark performs better at every horizon. Second, the scan tests imply that the Fixed Share combined forecast
should include all three modalities: micro, macro and univariate forecasts.
Third, the combined forecast performs comparably to the univariate benchmark before 2020 and
improves on it for all 24 horizons. Fourth, the value of microdata for the combined forecast materializes
after 2020. We also ``open the black box'' using grouped Shapley values
to study why microdata help forecast inflation. This decomposition suggests
that the positive contributions come from the center and tails of the
price-change distribution together with monthly dummies, while the extensive
margin of price adjustment contributes little.

Our findings connect to models of inflation dynamics. The finding that microdata improve forecasts only during high and volatile inflation is consistent with benchmark models of price setting, in which microdata carry additional predictive content when shocks are large. Several directions for future work emerge naturally. The same pipeline of distributional encoding, scan testing, and adaptive combination could be applied to forecast other macroeconomic aggregates, such as GDP growth or employment, using firm-level or household-level microdata. Extending the analysis to other countries would help assess whether the forecasting value of microdata during high and volatile inflation is a general feature of inflation dynamics.

\clearpage

\section*{Tables}
\begin{table}[!ht]
  \centering
  \caption{COICOP1 Cell Counts (Observations per Cell)}
  \label{tab:cells_coicop1}
  \begin{tabular}{@{}lrrrrrrr@{}}
    \toprule
    Treatment       & p5  & p10     & p25     & p50     & p75      & p90      & p95      \\
    \midrule
    \multicolumn{8}{@{}l}{\textit{Panel A: All price changes}}                           \\
    \addlinespace
    Drops ONS sales & 316 & 1{,}297 & 2{,}731 & 6{,}809 & 9{,}526  & 15{,}747 & 20{,}503 \\
    Includes sales  & 343 & 1{,}310 & 2{,}800 & 7{,}288 & 11{,}111 & 17{,}609 & 21{,}676 \\
    \addlinespace
    \midrule
    \multicolumn{8}{@{}l}{\textit{Panel B: Non-zero price changes only}}                 \\
    \addlinespace
    Drops ONS sales & 23  & 55      & 211     & 423     & 739      & 1{,}600  & 2{,}778  \\
    Includes sales  & 45  & 89      & 264     & 788     & 1{,}597  & 2{,}813  & 4{,}328  \\
    \bottomrule
  \end{tabular}
  \smallskip
  \begin{minipage}{\textwidth}\setstretch{1.0}
    \footnotesize \textit{Notes:} This table reports percentiles of the distribution
    of cell counts across all (date, COICOP1 code) cells, separately for all price
    observations (Panel~A) and for non-zero price changes only (Panel~B). The
    ``Drops ONS sales'' panel sets ONS-sale-flagged prices to missing, sets the
    following observation's price change to missing, reruns short-spell deletion,
    and recomputes time-since-change. NS-filtered datasets have cell counts equal
    to the ``includes sales'' figures, since observations are adjusted rather than
    dropped.
  \end{minipage}
\end{table}

\clearpage

\begin{table}[htbp!]
\centering
\caption{Univariate Forecast Mean Absolute Error Before and After 2020}
\label{tab:sarima_mae_two_windows}
\begin{tabular}{crr}
\toprule
$h$ & Pre 2020 MAE & Post 2020 MAE \\
\midrule
1 & 0.1364 & 0.3915 \\
2 & 0.1373 & 0.3842 \\
3 & 0.1424 & 0.3985 \\
4 & 0.1512 & 0.3957 \\
5 & 0.1476 & 0.3975 \\
6 & 0.1503 & 0.4268 \\
7 & 0.1479 & 0.4764 \\
8 & 0.1532 & 0.4843 \\
9 & 0.1474 & 0.5070 \\
10 & 0.1547 & 0.4504 \\
11 & 0.1479 & 0.4883 \\
12 & 0.1556 & 0.4235 \\
13 & 0.1825 & 0.4753 \\
14 & 0.1820 & 0.4878 \\
15 & 0.1651 & 0.4893 \\
16 & 0.1655 & 0.4768 \\
17 & 0.1605 & 0.4722 \\
18 & 0.1694 & 0.4736 \\
19 & 0.1792 & 0.4720 \\
20 & 0.1670 & 0.4712 \\
21 & 0.1656 & 0.6530 \\
22 & 0.1693 & 0.6208 \\
23 & 0.1686 & 0.7061 \\
24 & 0.1778 & 0.5808 \\
\bottomrule
\end{tabular}
\smallskip
\begin{minipage}{0.82\textwidth}\setstretch{1.0}
\footnotesize\textit{Notes:} The table reports mean absolute error (MAE) for
the rolling-window univariate forecast at forecast horizons $h=1,\ldots,24$.
The pre-2020 evaluation window is 2015--2019 and the post-2020 evaluation
window is 2020--2024. Errors are computed against the unsmoothed monthly
inflation target.
\end{minipage}
\end{table}

\clearpage

\begin{table}[htbp!]
\centering
\caption{Micro percent MAE improvement over SARIMA, pre- and post-COVID.}
\label{tab:fine_micro_vs_sarima_improvement}
\begin{tabular}{crr}
\toprule
$h$ & Pre-COVID \% impr. & Post-COVID \% impr. \\
\midrule
1 & -35.67 & 8.49 \\
2 & -28.92 & 4.46 \\
3 & -29.10 & 2.14 \\
4 & -30.97 & 9.67 \\
5 & -32.52 & 6.73 \\
6 & -19.51 & 20.31 \\
7 & -17.33 & 13.00 \\
8 & -17.05 & 8.63 \\
9 & -27.49 & 14.63 \\
10 & -18.19 & 10.43 \\
11 & -19.84 & 16.04 \\
12 & -12.81 & 1.79 \\
13 & -10.32 & 12.95 \\
14 & -0.13 & 15.28 \\
15 & -22.62 & 14.27 \\
16 & -21.06 & 11.03 \\
17 & -31.47 & -3.33 \\
18 & -18.48 & 0.82 \\
19 & -28.04 & 9.07 \\
20 & -26.65 & 9.34 \\
21 & -32.13 & 35.42 \\
22 & -44.26 & 32.27 \\
23 & -36.87 & 39.33 \\
24 & -27.88 & 26.46 \\
\bottomrule
\end{tabular}
\end{table}

\clearpage

\begin{table}[htbp!]
\centering
\caption{Fixed-Share Combination Mean Absolute Error Before and After 2020}
\label{tab:fs_mae_two_windows}
\begin{tabular}{crrrrrrrr}
\toprule
& \multicolumn{4}{c}{Pre 2020} & \multicolumn{4}{c}{Post 2020} \\
\cmidrule(lr){2-5} \cmidrule(lr){6-9}
$h$ & FS & Micro & Macro & SARIMA & FS & Micro & Macro & SARIMA \\
\midrule
1 & 0.1385 & 0.1850 & 0.1708 & 0.1364 & 0.3444 & 0.3583 & 0.3607 & 0.3915 \\
2 & 0.1348 & 0.1770 & 0.1678 & 0.1373 & 0.3548 & 0.3670 & 0.3962 & 0.3842 \\
3 & 0.1363 & 0.1838 & 0.1629 & 0.1424 & 0.3690 & 0.3899 & 0.4137 & 0.3985 \\
4 & 0.1502 & 0.1980 & 0.1679 & 0.1512 & 0.3436 & 0.3575 & 0.4080 & 0.3957 \\
5 & 0.1478 & 0.1956 & 0.1559 & 0.1476 & 0.3429 & 0.3707 & 0.4035 & 0.3975 \\
6 & 0.1467 & 0.1796 & 0.1681 & 0.1503 & 0.3505 & 0.3401 & 0.4221 & 0.4268 \\
7 & 0.1481 & 0.1736 & 0.1631 & 0.1479 & 0.4019 & 0.4145 & 0.4072 & 0.4764 \\
8 & 0.1557 & 0.1793 & 0.1835 & 0.1532 & 0.4130 & 0.4425 & 0.4040 & 0.4843 \\
9 & 0.1523 & 0.1879 & 0.1843 & 0.1474 & 0.4145 & 0.4328 & 0.4079 & 0.5070 \\
10 & 0.1562 & 0.1829 & 0.1876 & 0.1547 & 0.3819 & 0.4035 & 0.4050 & 0.4504 \\
11 & 0.1507 & 0.1773 & 0.1763 & 0.1479 & 0.4107 & 0.4100 & 0.4622 & 0.4883 \\
12 & 0.1561 & 0.1756 & 0.1878 & 0.1556 & 0.4116 & 0.4160 & 0.4656 & 0.4235 \\
13 & 0.1783 & 0.2014 & 0.1921 & 0.1825 & 0.4170 & 0.4137 & 0.4582 & 0.4753 \\
14 & 0.1741 & 0.1822 & 0.2004 & 0.1820 & 0.4122 & 0.4132 & 0.4755 & 0.4878 \\
15 & 0.1761 & 0.2024 & 0.2042 & 0.1651 & 0.4213 & 0.4195 & 0.4340 & 0.4893 \\
16 & 0.1763 & 0.2003 & 0.2079 & 0.1655 & 0.4266 & 0.4242 & 0.4569 & 0.4768 \\
17 & 0.1686 & 0.2110 & 0.1928 & 0.1605 & 0.4332 & 0.4879 & 0.4360 & 0.4722 \\
18 & 0.1668 & 0.2007 & 0.1868 & 0.1694 & 0.4221 & 0.4697 & 0.4303 & 0.4736 \\
19 & 0.1771 & 0.2294 & 0.2034 & 0.1792 & 0.4077 & 0.4292 & 0.4244 & 0.4720 \\
20 & 0.1762 & 0.2115 & 0.2062 & 0.1670 & 0.3964 & 0.4272 & 0.4172 & 0.4712 \\
21 & 0.1725 & 0.2188 & 0.1988 & 0.1656 & 0.4569 & 0.4217 & 0.4339 & 0.6530 \\
22 & 0.1831 & 0.2442 & 0.1877 & 0.1693 & 0.4332 & 0.4205 & 0.4090 & 0.6208 \\
23 & 0.1754 & 0.2307 & 0.1782 & 0.1686 & 0.4569 & 0.4284 & 0.4105 & 0.7061 \\
24 & 0.1719 & 0.2273 & 0.1907 & 0.1778 & 0.4267 & 0.4271 & 0.4113 & 0.5808 \\
\bottomrule
\end{tabular}
\smallskip
\begin{minipage}{0.95\textwidth}\setstretch{1.0}
\footnotesize\textit{Notes:} Mean absolute error of the patient fixed-share
combination (FS, $\alpha=0.02$, $\eta=0.5$, $\rho=0$, grid=fine) against
its experts (Micro, Macro, SARIMA), by forecast horizon.
The pre-2020 window is 2015--2019 and the post-2020 window is 2020--2024.
\end{minipage}
\end{table}

\clearpage

\begin{table}[htbp!]
\centering
\caption{Adding the Micro Expert: Three- vs Two-Expert Fixed Share}
\label{tab:three_vs_two_mae_improvement}
\begin{tabular}{crrrrrr}
\toprule
& \multicolumn{3}{c}{Pre 2020} & \multicolumn{3}{c}{Post 2020} \\
\cmidrule(lr){2-4} \cmidrule(lr){5-7}
$h$ & 3-expert & 2-expert & $\Delta$ (\%) & 3-expert & 2-expert & $\Delta$ (\%) \\
\midrule
1 & 0.1385 & 0.1390 & 0.32 & 0.3444 & 0.3487 & 1.24 \\
2 & 0.1348 & 0.1352 & 0.33 & 0.3548 & 0.3596 & 1.32 \\
3 & 0.1363 & 0.1382 & 1.39 & 0.3690 & 0.3786 & 2.54 \\
4 & 0.1502 & 0.1523 & 1.37 & 0.3436 & 0.3624 & 5.19 \\
5 & 0.1478 & 0.1440 & -2.63 & 0.3429 & 0.3553 & 3.48 \\
6 & 0.1467 & 0.1469 & 0.12 & 0.3505 & 0.3782 & 7.34 \\
7 & 0.1481 & 0.1472 & -0.56 & 0.4019 & 0.4104 & 2.08 \\
8 & 0.1557 & 0.1546 & -0.70 & 0.4130 & 0.4082 & -1.18 \\
9 & 0.1523 & 0.1500 & -1.51 & 0.4145 & 0.4139 & -0.15 \\
10 & 0.1562 & 0.1579 & 1.12 & 0.3819 & 0.3773 & -1.22 \\
11 & 0.1507 & 0.1513 & 0.40 & 0.4107 & 0.4283 & 4.11 \\
12 & 0.1561 & 0.1598 & 2.30 & 0.4116 & 0.4183 & 1.59 \\
13 & 0.1783 & 0.1820 & 2.04 & 0.4170 & 0.4374 & 4.67 \\
14 & 0.1741 & 0.1848 & 5.78 & 0.4122 & 0.4539 & 9.18 \\
15 & 0.1761 & 0.1760 & -0.04 & 0.4213 & 0.4325 & 2.58 \\
16 & 0.1763 & 0.1757 & -0.35 & 0.4266 & 0.4439 & 3.89 \\
17 & 0.1686 & 0.1683 & -0.16 & 0.4332 & 0.4158 & -4.18 \\
18 & 0.1668 & 0.1723 & 3.20 & 0.4221 & 0.4156 & -1.55 \\
19 & 0.1771 & 0.1844 & 3.95 & 0.4077 & 0.4087 & 0.26 \\
20 & 0.1762 & 0.1814 & 2.88 & 0.3964 & 0.4083 & 2.91 \\
21 & 0.1725 & 0.1715 & -0.59 & 0.4569 & 0.5050 & 9.53 \\
22 & 0.1831 & 0.1678 & -9.13 & 0.4332 & 0.4607 & 5.97 \\
23 & 0.1754 & 0.1600 & -9.62 & 0.4569 & 0.4850 & 5.79 \\
24 & 0.1719 & 0.1570 & -9.48 & 0.4267 & 0.4463 & 4.39 \\
\bottomrule
\end{tabular}
\smallskip
\begin{minipage}{0.95\textwidth}\setstretch{1.0}
\footnotesize\textit{Notes:} Mean absolute error of the 3-expert patient
fixed-share combination (micro + macro + SARIMA) and the 2-expert macro + SARIMA
combination ($\alpha=0.02$, $\eta=0.5$, $\rho=0$, grid=fine), by
horizon. $\Delta$ is the percent MAE improvement from adding the micro expert
(positive = the 3-expert combination is better). Pre-2020 = 2015--2019,
post-2020 = 2020--2024.
\end{minipage}
\end{table}

\clearpage
\begin{singlespace}
  \putbib
\end{singlespace}
\end{bibunit}

\clearpage
\appendix
\begin{bibunit}
\numberwithin{equation}{section}
\numberwithin{table}{section}
\numberwithin{figure}{section}
\setcounter{table}{0}
\setcounter{figure}{0}
\renewcommand{\thetable}{\Alph{section}.\arabic{table}}
\renewcommand{\thefigure}{\Alph{section}.\arabic{figure}}

\phantomsection
\begin{center}
  {\LARGE\bfseries Appendix for ``Forecasting Inflation with Microdata: An Adaptive Machine Learning Approach''}
\end{center}
\bigskip

This appendix collects supporting material for the main
text. Appendix~\ref{app:data} details the construction and cleaning of the
micro-price and macro datasets. Appendix~\ref{sec:theory_scan_tests} develops
the econometric theory of the scan tests, and
Appendix~\ref{app:simulation_results} reports the simulation studies behind
their finite-sample calibration, including the choice of the
partial-studentization exponent and a comparison with the
Giacomini--Rossi fluctuation test. Appendix~\ref{app:figures} contains
additional figures, and Appendix~\ref{app:pseudocode} collects the notation
and pseudocode for all algorithms. Appendix~\ref{app:robustness_alt} examines
robustness to alternative microdata encodings, aggregation schemes, and data
filters, and Appendix~\ref{app:additional_results} presents additional
empirical results.
\medskip

\begingroup
\begin{singlespace}
\normalsize
\titlecontents{section}[1.25em]{\addvspace{0.25em}\bfseries}
  {\contentslabel{1.25em}}{\hspace*{-1.25em}}{\hfill\contentspage}
\titlecontents{subsection}[3.25em]{}
  {\contentslabel{2em}}{\hspace*{-2em}}{\titlerule*[0.5pc]{.}\contentspage}
\startcontents[appendices]
\printcontents[appendices]{l}{1}{\setcounter{tocdepth}{2}}
\end{singlespace}
\endgroup
\clearpage

\section{Data Appendix}
\label{app:data}

This appendix gathers implementation details on data construction that are
useful for replication but are not central to the main exposition. We organise
the material into six subsections: cleaning of the ONS micro-price panel
(Appendix~\ref{app:ons_cleaning}), aggregation and CPI replication
(Appendix~\ref{app:cpi_aggregation}), benchmarking of the price-change
distribution (Appendix~\ref{app:moments}), treatment of temporary sales
(Appendix~\ref{app:NS_filter}), macro predictors
(Appendix~\ref{app:macro_predictors}), and exact formulas for distributional
statistics (Appendix~\ref{app:encoding_formulas}).

\subsection{Price Quotes Data Cleaning}
\label{app:ons_cleaning}

\subsubsection{Cleaning}
\label{app:ons_cleaning_steps}

We follow \citet{AdamWeber2023} and apply the steps below in order.

\begin{enumerate}
  \item \textit{Sample and region filters.} We drop January 1996, since it is a
partial initial month and cannot be used to compute lagged price changes. Since
the monthly price-change variable requires a valid consecutive-month lagged
price, the encoded micro-feature panel begins in March 1996 rather than January
1996. Thus, while the raw quote panel spans January 1996 through December 2024,
the encoded monthly micro-feature panel spans March 1996 through December 2024.
We also remove observations with invalid region code 0 and exclude Region 1
(``Catalogue collections''), which covers items collected centrally, such as
shelter, university tuition fees, and rail fares, for which only item-level
indices are published; following \citet{Blanco2021}, we exclude these
observations. Region 13 (``NI'') is Northern Ireland; these observations
(1,357,662 rows, 4.31\% of the cleaned panel) are retained.

  \item \textit{Validity filter.} Retain observations with valid current prices.
  For observations that are not flagged as non-comparable substitutions, we also
  require the base price to be valid according to the Office for National
  Statistics codes. For non-comparable substitutions, identified by the
  \texttt{N} or \texttt{Z} flags, we retain the observation for price-change
  encoding but set its base price to missing before any base-price-dependent
  calculation.

  \item \textit{Duplicate removal.} If a product identifier appears more than once in any given period, the entire series is dropped. This deliberately conservative rule removes roughly 6.6\% of otherwise valid observations per quarter (ranging between 5.1\% and 8.3\%). Most colliding cells show distinct prices, consistent with the ONS caveat that distinct outlets can share a shop code, so a per-month resolution rule would have to guess which quote to keep; we drop the series instead.

  \item \textit{Price rounding.} All prices are rounded to two decimal places.

  \item \textit{Implausible prices.} Zero prices, zero base prices, and prices below \pounds 0.001 are removed.

  \item \textit{Substitution spells.} Comparable substitutions, identified by the
  \texttt{C} or \texttt{X} flags, non-comparable substitutions, identified
  by the \texttt{N} or \texttt{Z} flags, and permanent pack-size, weight, or
  volume changes, identified by the \texttt{W} flag, start new product spells. Therefore
  the substitution month is treated as the first observation of a new spell:
  its contemporaneous price change is missing, while the following month's
  price change is computed relative to the new product whenever the next
  observation is consecutive.
  \item \textit{Short-spell removal.} Rather than splitting product series at gaps, we store prices in a time $\times$ product matrix where missing months are represented as \texttt{NaN}; gaps therefore propagate naturally as missing observations. Single isolated observations (with \texttt{NaN} on both sides) are set to missing, and any product column with fewer than two non-missing observations is dropped. This rule conditions on the adjacent month $t{+}1$: whether an isolated month-$t$ quote is kept can depend on the presence of a quote in month $t{+}1$, so the cleaned month-$t$ panel embeds one month of look-ahead relative to a strictly real-time vintage.\footnote{\citet{AdamWeber2023} instead split each product series into new sub-series at any gap exceeding one month. Their stricter rule removes 3.3\% of price-quote observations and 23\% of product series.}
\end{enumerate}
At the initial cleaning checkpoint, the validity, duplicate, price, and
flag-based filters reduce the raw ONS file from 42,013,725 to 33,053,579
price-quote rows, a reduction of 21.33\%. After the remaining baseline
filters, including region exclusion, substitution spell splitting, short-spell
deletion, COICOP1 balancing, and final price-change recomputation, the final
baseline panel contains 31,001,584 non-missing price observations. Relative to
the raw ONS file, this corresponds to an effective removal share of 26.21\%.
The final number of valid non-missing price-change observations used to
construct the encoded micro features is 27,847,570, corresponding to 33.72\%
fewer observations than the raw price-quote count.

\subsection{Aggregation and CPI Replication}
\label{app:cpi_aggregation}

Aggregation proceeds in four steps.

\subsubsection{Step 1 -- Stratum index}

Within each item $i$ and stratum cell $k$, the price quotes are aggregated into
a stratum-level price index using a shop-weighted geometric mean of log price
relatives to the February base:
\begin{equation}
  \text{stratum index}_{t,k}
  = 100 \times \exp\!\left(
  \frac{\displaystyle\sum_{j \in k} w_j^{\text{shop}} \cdot \log(p_{jt}/p_{j0})}
  {\displaystyle\sum_{j \in k} w_j^{\text{shop}}}
  \right),
\end{equation}
where $j$ runs over all products in stratum $k$, $p_{j0}$ is the February base price carried in the ONS base-price field, and $w_j^{\text{shop}}$ is the shop weight. Observations with missing base prices, including retained \texttt{N}/\texttt{Z}
non-comparable substitution rows, are excluded from this Jevons calculation.
They are retained only for the construction of product-level price-change
features. Shop weights are set to \texttt{NaN} wherever the corresponding price is missing, so the denominator always sums only over observed products. If a stratum cell code is zero (no sub-item stratification), a uniform stratum weight of 100 is applied.

\subsubsection{Step 2 -- Item index}

Stratum indices are combined into the item index by a weighted arithmetic mean
using stratum weights $w_k^{\text{stratum}}$:
\begin{equation}
  \text{item index}_{t,i}
  = \frac{\displaystyle\sum_k w_{k,t}^{\text{stratum}} \cdot \text{stratum index}_{t,k}}
  {\displaystyle\sum_k w_{k,t}^{\text{stratum}}}.
\end{equation}
Each price quote $j$ in stratum $k$, item $i$, COICOP1 category $c$, and
month $t$ receives an implicit product weight
\begin{equation}
  \label{eq:omega_jt}
  \omega_{j,t}
  = \underbrace{\frac{w_j^{\text{shop}}}{\sum_{j'\in k} w_{j'}^{\text{shop}}}}_{\text{share within stratum}}
  \times
  \underbrace{\frac{w_{k,t}^{\text{stratum}}}{\sum_{k'\in i} w_{k',t}^{\text{stratum}}}}_{\text{share within item}}
  \times
  \underbrace{\frac{w_{i,t}^{\text{item}}}{\sum_{i'\in c} w_{i',t}^{\text{item}}}}_{\text{share within COICOP1}}.
\end{equation}
The first two factors give the price quote's proportional contribution to the
item index; multiplying by the third factor extends $\omega_{j,t}$ to the
quote's proportional contribution within its COICOP1 category in month $t$,
so that $\sum_{j \in c,t} \omega_{j,t} = 1$ for each category-month cell.
This three-factor weight is the basis for all expenditure-weighted
price-change features used in the forecasting exercise
(Appendix~\ref{app:encoding_formulas}).
Single-stratum items receive $\text{share within item}=1$; items whose
stratum weights are all zero in a month are likewise assigned a uniform
within-item weight $1/|\{k': k' \in i\}|$.

\subsubsection{Step 3 -- Quality screen}

Each reconstructed item index is compared to the corresponding official ONS
item series over their joint sample. An item is deemed \emph{suitable} if the
RMSE of the log-level difference between the two series is below 2\% and there
are no gaps in the reconstructed series. Of the 1{,}318 reconstructed items,
944 are suitable under this screen over the full joint sample.

\subsubsection{Step 4 -- CPI aggregation and chaining}

The reconstructed headline index is a weighted average of suitable item
indices, with item weights normalised to sum to one in each period:
\begin{equation}
  \text{CPI}_t^{\text{unchained}}
  = \sum_i \tilde{w}_{i,t}^{\text{item}} \cdot \text{item index}_{t,i},
  \qquad
  \tilde{w}_{i,t}^{\text{item}}
  = \frac{w_{i,t}^{\text{item}}}{\sum_{i'} w_{i',t}^{\text{item}}}.
\end{equation}
Because each annual index is expressed relative to the February base of that
year, the unchained series restarts at 100 each February. The resulting chained index tracks the official ONS CPI closely over
1996--2016. The post-2016 extension is somewhat noisier, likely reflecting
major ONS weight revisions introduced in 2017.

To assess the quality of the aggregation more transparently, we compare the
inflation rate implied by our reconstructed CPI with the corresponding official
ONS inflation series, for the baseline panel with all
posted prices.

Figure~\ref{fig:cpi_replication_sales} plots the
corresponding year-on-year inflation series. The replicated
series closely tracks the official ONS benchmark, indicating that the sequence
of cleaning, weighting, item-index construction, and chaining reproduces the
aggregate inflation dynamics well. This figure therefore serves as an
important validation exercise for the data construction pipeline: it shows
that the resulting micro-price panel remains consistent
with the official CPI at the aggregate level.

\clearpage
\begin{figure}[t!]
  \centering
  \caption{Official ONS Inflation and Replicated CPI Inflation: All Posted Prices}
  \label{fig:cpi_replication_sales}
  \includegraphics[width=0.82\textwidth]{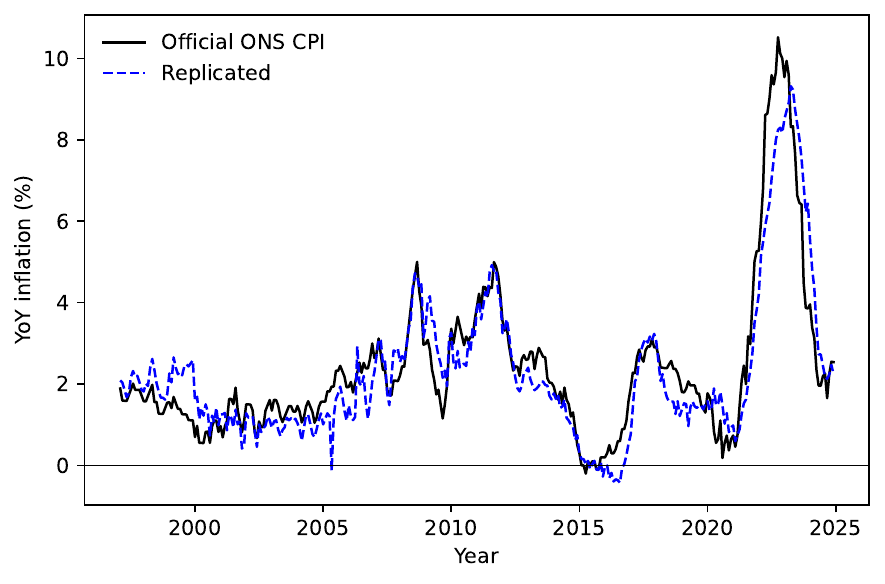}
  \smallskip
  \begin{minipage}{0.82\textwidth}\setstretch{1.0}
    \footnotesize\textit{Notes:} The figure compares official UK ONS
    non-seasonally-adjusted year-on-year CPI inflation with the rate implied
    by the reconstructed chained index built from the baseline micro-price
    panel using all posted prices. The sample runs from February 1997 to
    December 2024. The solid black line denotes official CPI inflation and
    the dashed blue line denotes the replicated series. Over the joint
    sample, the root mean squared difference is 0.76 percentage points and
    the correlation is 0.92.
  \end{minipage}
\end{figure}

\subsection{Benchmarking the Price-Change Distribution}
\label{app:moments}

To validate the pipeline we compute moments of $|\Delta p_t|$ following the
protocol of \citet{Blanco2021}. Each non-zero price change is standardised at
the item level by its weighted within-item mean and standard deviation, using
the CPI product weight $\omega_{j,t}$ defined in
Appendix~\ref{app:cpi_aggregation}. The top and bottom 2\% of the
standardised distribution are trimmed, zero price changes are excluded, and
the reported mean and standard deviation are weighted averages of $|\Delta p|$
in original (proportion) units, while the reported kurtosis is the
$\omega$-weighted raw Pearson kurtosis of the standardised price change.
These aggregate statistics are not used in estimation; they serve as a check
that the cleaning, weighting, and aggregation are correctly implemented.

Table~\ref{tab:moments} reports the resulting moments across five sales
treatments and lists published estimates from the literature.

\begin{table}[!ht]
  \centering
  \caption{Moments of $|\Delta p|$ across Sales Treatments and Literature (2\% Trim)}
  \label{tab:moments}
  \begin{tabular}{@{}lccc@{}}
    \toprule
                                        & Mean $|\Delta p|$ & Std dev $|\Delta p|$ & Kurtosis \\
    \midrule
    \multicolumn{4}{@{}l}{\textit{This paper}}                                                       \\
    \quad All prices, no trim           & 0.157             & 0.177                & 7.75            \\
    \quad All prices (2\% trim)         & 0.141             & 0.150                & 2.90            \\
    \quad ONS sales dropped             & 0.088             & 0.099                & 3.23            \\
    \quad NS Filter A                   & 0.115             & 0.138                & 3.24            \\
    \quad NS Filter B                   & 0.127             & 0.145                & 3.13            \\
    \addlinespace
    \multicolumn{4}{@{}l}{\textit{Literature}}                                                       \\
    \quad \citet{KlenowKryvtsov2008}    & 0.113             & --                   & --              \\
    \quad \citet{NakamuraSteinsson2008} & 0.085             & --                   & --              \\
    \quad \citet{Blanco2021}            & 0.110             & --                   & 4.02            \\
    \quad \citet{KaradiReiff2019}       & 0.099             & --                   & 3.98            \\
    \quad \citet{Blanco2021} (UK)       & 0.153             & 0.205                & 3.81            \\
    \bottomrule
  \end{tabular}
  \smallskip
  \begin{minipage}{\textwidth}\setstretch{1.0}
    \footnotesize \textit{Notes:} All ``This paper'' moments are computed on
    the baseline cleaned ONS micro-price panel using the CPI product weight
    $\omega_{j,t}$ defined in
    Appendix~\ref{app:cpi_aggregation}. Mean and standard deviation of
    $|\Delta p|$ are reported in original (proportion) units; kurtosis is the
    raw (Pearson) kurtosis of price changes after item-level standardisation,
    matching the protocol of \citet{Blanco2021}. The 2\% trim drops the top
    and bottom 2\% of the item-standardised distribution. For the trimmed
    rows, the baseline (with sales) uses all valid price quotes; the
    ``ONS sales dropped'' row sets ONS \texttt{S}-flagged observations to
    missing; ``NS Filter A'' and ``NS Filter B'' apply the
    \citet{NakamuraSteinsson2008} sale filter with the parameters of
    Table~\ref{tab:ns_params}. \citet{NakamuraSteinsson2008} report the median
    $|\Delta p|$ rather than the mean. The literature values use different
    datasets, different sale-handling, and (in some cases) different
    standardisation rules; see Appendix~\ref{app:moments} for a discussion of
    comparability.
  \end{minipage}
\end{table}

The most direct comparison is the UK row of \citet{Blanco2021}, who use the same
ONS micro-price data, weight observations by item-level CPI expenditure
weights, standardise at the item level, and trim the top and bottom 2\% of
the standardised distribution. Our NS Filter A and NS Filter B rows, which
remove temporary sales by the \citet{NakamuraSteinsson2008} filter, are the
closest like-for-like analogues: they imply a mean $|\Delta p|$ of
$0.115$--$0.127$ and kurtosis of $3.13$--$3.24$, bracketing the
\citet{Blanco2021} UK values of $0.153$ and $3.81$. The remaining gap
reflects differences in the sale-detection rule (the V-shape filter used in
\citet{Blanco2021} is less aggressive than NS Filter A), differences in the
item-coverage and base-price cleaning, and a slightly different sample
period.

The remaining literature rows are listed for context only.
\citet{KlenowKryvtsov2008}, \citet{NakamuraSteinsson2008}, the US row of
\citet{Blanco2021}, and \citet{KaradiReiff2019} use US scanner or CPI
microdata under distinct sample-construction, weighting, and standardisation
rules, and are not directly comparable to our ONS replication. The kurtosis
values in the literature panel are raw Pearson kurtosis as reported in those
papers, so a value near $3$ indicates a near-normal standardised distribution
and a value near $1$ a near two-point distribution in the sense of
\citet{GolosovLucas2007}.

\subsection{Treatment of Sales}
\label{app:NS_filter}
We consider three sale treatments. The first is an ONS-sale-dropped panel.
For this panel, observations with the ONS \texttt{S} flag are not used as
valid prices: their price, log price, and contemporaneous price change are set
to missing. Since the price change in the month after a sale depends on the
sale price as the lagged price, the following observation's price change is
also set to missing within each product series. After this operation, we rerun
the short-spell deletion rule, setting isolated observations to missing and
dropping product series with fewer than two valid prices. Finally, we recompute
the time-since-change variable from the resulting filtered price-change series.

The ONS \texttt{S} flag may incompletely capture temporary price
reductions. We therefore also construct two  sale-adjusted
panels, following \cite{NakamuraSteinsson2008}. The filter operates product
by product on the price level series. It is
governed by three integer parameters: a short rebound window $JJ$, a minimum
number of distinct price points $KK$, and a longer look-ahead window $LL$. At
each period $t$ with observed price $p_t$, a \emph{regular price} $r_t$ is
assigned via the following ordered rules:

\begin{enumerate}
  \item \textbf{Continuity.} $p_t = r_{t-1}$: set $r_t = r_{t-1}$.
  \item \textbf{Price increase.} $p_t > r_{t-1}$: set $r_t = p_t$.
  \item \textbf{Drop with rebound.} $p_t < r_{t-1}$ and $p_{t+s} = r_{t-1}$ for some $s \le JJ$: classify period $t$ as a sale, set $r_t = r_{t-1}$.
  \item \textbf{Drop, no rebound, sufficient variation.} If the price does not rebound within $JJ$ periods but the following $LL$ observations contain at least $KK$ distinct price values, treat the drop as permanent and set $r_t = p_t$.
  \item \textbf{Drop, no rebound, modal-price rule.} If the maximum price in the $LL$-period window reappears later in the series, adopt it as the regular price.
  \item \textbf{Default.} Set $r_t = p_t$.
\end{enumerate}

The output is the adjusted series $\{r_t\}$ and a binary indicator equal to one
when $r_t \neq p_t$. We identify sales using the following two parameterisations:

\begin{table}[h]
  \centering
  \caption{Nakamura--Steinsson Filter Parameters}
  \label{tab:ns_params}
  \begin{tabular}{@{}lccc@{}}
    \toprule
                            & Rebound window $JJ$ & Min.\ distinct prices $KK$ & Look-ahead $LL$ \\
    \midrule
    Filter A (baseline)     & 3                   & 3                          & 3               \\
    Filter B (conservative) & 1                   & 1                          & 3               \\
    \bottomrule
  \end{tabular}
  \smallskip
  \begin{minipage}{0.78\textwidth}\setstretch{1.0}
    \footnotesize\textit{Notes:} Filter A is the baseline sales-filter
    parameterisation used in the robustness exercises. Filter B is a more
    conservative parameterisation. The columns report the rebound window,
    minimum number of distinct prices in a spell, and look-ahead horizon used
    by the Nakamura--Steinsson sale-filter algorithm. For Filter B the
    look-ahead parameter is inert: with $KK=1$ the modal-price branch that
    uses $LL$ is never reached, so Filter B produces identical sale flags for
    any value of $LL$ (the canonical parameterisation sets $L=1$).
  \end{minipage}
\end{table}

\subsection{Macro Predictors}
\label{app:macro_predictors}

Table~\ref{tab:macro_all} lists the baseline macroeconomic and financial
predictors used in the macro forecasting exercise.

\begingroup
\footnotesize
\setlength{\tabcolsep}{4pt}
\begin{longtable}{p{4.6cm} p{6.7cm} p{1.9cm} c}
  \caption{Macro Panel Series by Category}\label{tab:macro_all} \\
  \toprule
  \textbf{Variable} & \textbf{Description} & \textbf{Source} & \textbf{Tr.} \\
  \midrule
  \endfirsthead
  \multicolumn{4}{l}{\small\textit{(continued)}} \\[2pt]
  \toprule
  \textbf{Variable} & \textbf{Description} & \textbf{Source} & \textbf{Tr.} \\
  \midrule
  \endhead
  \midrule
  \multicolumn{4}{r}{\small Continued on next page} \\
  \endfoot
  \bottomrule
  \multicolumn{4}{p{13.5cm}}{\footnotesize \textit{Notes:} This table lists the 37 series of the macro panel. ``Tr.''\ is the stationarity transformation applied before estimation: $\Delta\ln$~=~$100\times$ log difference; level~=~untransformed; $\Delta$~=~first difference of the level (VIX). All series are non-seasonally adjusted except the two trade series ($^{\dagger}$), which are only available seasonally adjusted and enter as the first-estimate row of the ONS trade revisions triangles. $^{*}$\texttt{CPI\_\allowbreak{}ALL\_\allowbreak{}D7BT} is the history of the target index itself and is never used as a predictor.\par} \\
  \endlastfoot

  \multicolumn{4}{l}{\textit{Consumer prices}} \\[2pt]
  \texttt{CPI\_\allowbreak{}ALL\_\allowbreak{}D7BT}$^{*}$ & CPI index 00 all items & ONS & $\Delta\ln$ \\
  \texttt{CPI\_\allowbreak{}RENTS\_\allowbreak{}D7CE} & CPI index 04.1 actual rents for housing & ONS & $\Delta\ln$ \\
  \addlinespace[4pt]
  \multicolumn{4}{l}{\textit{Wages and labour-market slack}} \\[2pt]
  \texttt{AWE\_\allowbreak{}NSA\_\allowbreak{}TOTAL\_\allowbreak{}PAY\_\allowbreak{}COMPOSITE} & ONS AWE whole-economy NSA total-pay composite: MD9M historic level through 1999 and KA5Q current index from 2000 & ONS & $\Delta\ln$ \\
  \texttt{CLAIMS} & Claimant Count, K02000001 UK, people, NSA, thousands & ONS & $\Delta\ln$ \\
  \texttt{CLAIMS\_\allowbreak{}RATE} & Claimant Count rate, K02000001 UK, people, NSA, percentage & ONS & level \\
  \addlinespace[4pt]
  \multicolumn{4}{l}{\textit{Exchange rates}} \\[2pt]
  \texttt{GBP\_\allowbreak{}BROAD} & Broad effective exchange rate index, sterling, Jan 2005=100 & BoE & $\Delta\ln$ \\
  \texttt{GBP\_\allowbreak{}CAN} & GBP/CAD spot rate, monthly average & BoE & $\Delta\ln$ \\
  \texttt{GBP\_\allowbreak{}EUR} & GBP/EUR spot rate, monthly average & BoE & $\Delta\ln$ \\
  \texttt{GBP\_\allowbreak{}JAP} & GBP/JPY spot rate, monthly average & BoE & $\Delta\ln$ \\
  \texttt{GBP\_\allowbreak{}US} & GBP/USD spot rate, monthly average & BoE & $\Delta\ln$ \\
  \addlinespace[4pt]
  \multicolumn{4}{l}{\textit{Energy, food and commodity prices and futures}} \\[2pt]
  \texttt{COMP\_\allowbreak{}P} & World Bank primary commodity price index & World Bank & $\Delta\ln$ \\
  \texttt{CRUDE\_\allowbreak{}GAS\_\allowbreak{}FUT} & Natural gas futures front-month (FN1 or UK/European gas future depending vendor) & ICE & $\Delta\ln$ \\
  \texttt{CRUDE\_\allowbreak{}OIL\_\allowbreak{}FUT} & Brent crude oil front-month futures (CO1) & ICE & $\Delta\ln$ \\
  \texttt{FOODFUT} & Wheat futures index (INFAWHP) & Bloomberg & $\Delta\ln$ \\
  \texttt{OIL\_\allowbreak{}PRICE} & Crude oil prices: Brent Europe, dollars per barrel, monthly NSA & FRED & $\Delta\ln$ \\
  \addlinespace[4pt]
  \multicolumn{4}{l}{\textit{House prices}} \\[2pt]
  \texttt{HALIFAX\_\allowbreak{}HPI\_\allowbreak{}NSA} & Halifax UK house price index, non-seasonally adjusted & Halifax & $\Delta\ln$ \\
  \texttt{NATIONWIDE\_\allowbreak{}HPI\_\allowbreak{}NSA} & Nationwide UK house price index, non-seasonally adjusted & Nationwide & $\Delta\ln$ \\
  \addlinespace[4pt]
  \multicolumn{4}{l}{\textit{Interest and mortgage rates}} \\[2pt]
  \texttt{BANK\_\allowbreak{}RATE} & Official Bank Rate, monthly average & BoE & level \\
  \texttt{BGS\_\allowbreak{}10yrs\_\allowbreak{}yld} & 10-year gilt nominal par yield & BoE & level \\
  \texttt{BGS\_\allowbreak{}20yrs\_\allowbreak{}yld} & 20-year gilt nominal par yield & BoE & level \\
  \texttt{BGS\_\allowbreak{}5yrs\_\allowbreak{}yld} & 5-year gilt nominal par yield & BoE & level \\
  \texttt{MORT\_\allowbreak{}FRATE\_\allowbreak{}2YRS} & 2-year fixed mortgage rate, 75\% LTV & BoE & level \\
  \texttt{MORT\_\allowbreak{}FRATE\_\allowbreak{}5YRS} & 5-year fixed mortgage rate, 75\% LTV & BoE & level \\
  \texttt{SONIA} & Sterling overnight index average rate & FRED/BoE & level \\
  \addlinespace[4pt]
  \multicolumn{4}{l}{\textit{Financial markets and uncertainty}} \\[2pt]
  \texttt{FTSE250} & FTSE 250 index & FTSE & $\Delta\ln$ \\
  \texttt{FTSE\_\allowbreak{}ALL} & FTSE All Share index & FTSE & $\Delta\ln$ \\
  \texttt{SP500} & S\&P 500 index & S\&P & $\Delta\ln$ \\
  \texttt{UK\_\allowbreak{}focused\_\allowbreak{}equity} & iShares MSCI United Kingdom ETF (EWU) & iShares & $\Delta\ln$ \\
  \texttt{VIX} & CBOE Volatility Index & CBOE & $\Delta$ \\
  \addlinespace[4pt]
  \multicolumn{4}{l}{\textit{Trade}} \\[2pt]
  \texttt{ONS\_\allowbreak{}TOTAL\_\allowbreak{}TRADE\_\allowbreak{}EXPORTS\_\allowbreak{}REV\_\allowbreak{}TRI}$^{\dagger}$ & UK trade revisions triangle: total trade exports, current prices, seasonally adjusted & ONS & $\Delta\ln$ \\
  \texttt{ONS\_\allowbreak{}TOTAL\_\allowbreak{}TRADE\_\allowbreak{}IMPORTS\_\allowbreak{}REV\_\allowbreak{}TRI}$^{\dagger}$ & UK trade revisions triangle: total trade imports, current prices, seasonally adjusted & ONS & $\Delta\ln$ \\
  \addlinespace[4pt]
  \multicolumn{4}{l}{\textit{Supply-chain pressure}} \\[2pt]
  \texttt{AIR\_\allowbreak{}FREIGHT\_\allowbreak{}IN\_\allowbreak{}ASIA\_\allowbreak{}IC1312} & Inbound Price Index (International Services): Air Freight for Asia & FRED/ALFRED & $\Delta\ln$ \\
  \texttt{AIR\_\allowbreak{}FREIGHT\_\allowbreak{}IN\_\allowbreak{}EUROPE\_\allowbreak{}IC1311} & Inbound Price Index (International Services): Air Freight for Europe & FRED/ALFRED & $\Delta\ln$ \\
  \texttt{AIR\_\allowbreak{}FREIGHT\_\allowbreak{}OUT\_\allowbreak{}ASIA\_\allowbreak{}IS2312} & Outbound Price Index (International Services): Air Freight for Asia & FRED/ALFRED & $\Delta\ln$ \\
  \texttt{AIR\_\allowbreak{}FREIGHT\_\allowbreak{}OUT\_\allowbreak{}EUROPE\_\allowbreak{}IS2311} & Outbound Price Index (International Services): Air Freight for Europe & FRED/ALFRED & $\Delta\ln$ \\
  \addlinespace[4pt]
  \multicolumn{4}{l}{\textit{Activity and sentiment}} \\[2pt]
  \texttt{ACEA\_\allowbreak{}UK\_\allowbreak{}CAR\_\allowbreak{}REG\_\allowbreak{}NSA} & New passenger car registrations, United Kingdom & OECD & $\Delta\ln$ \\
  \texttt{CLI} & OECD composite leading indicator for UK & OECD & level \\
  \addlinespace[4pt]
\end{longtable}
\endgroup

\clearpage
\subsection{Exact Formulas for Distributional Statistics}
\label{app:encoding_formulas}

This subsection gives the exact definitions of the statistics used to encode
the microdata, described in Section~\ref{sec:methods}. Let $c$ index a
COICOP1 category and $t$ a calendar month. A cell $(c,t)$ contains all
retained price observations $j$ in category $c$ during month $t$; let
$\mathcal{J}_{ct}$ denote this set, with $N_{ct}=|\mathcal{J}_{ct}|$.
Price changes $\Delta p_{jt}$ are defined as in
Section~\ref{sec:data_description}. Some observations in $\mathcal{J}_{ct}$
may have missing $\Delta p_{jt}$, for example when a valid consecutive-month
price change cannot be computed.

\paragraph{Weighted price-change statistics.}

Each price observation carries the CPI product weight $\omega_{j,t}$ defined
in Equation~\eqref{eq:omega_jt}. For each cell, denote the total weight and
the total weight on zero and non-zero price-change observations by
\[
  W_{ct} = \sum_{j \in \mathcal{J}_{ct}} \omega_{j,t},
  \qquad
  \mathcal{J}_{ct}^{0}
  =
  \{j \in \mathcal{J}_{ct}: \Delta p_{jt}=0\},
  \qquad
  W_{ct}^{0} = \sum_{j \in \mathcal{J}_{ct}^{0}} \omega_{j,t},
\]
\[
  \mathcal{J}_{ct}^{+}
  =
  \{j \in \mathcal{J}_{ct}: \Delta p_{jt}\ne 0
  \text{ and } \Delta p_{jt}\text{ is observed}\},
  \qquad
  W_{ct}^{+} = \sum_{j \in \mathcal{J}_{ct}^{+}} \omega_{j,t}.
\]
The weighted fraction of zero price changes is
\[
  f_{ct}^{0} = \frac{W_{ct}^{0}}{W_{ct}}.
\]
The nine deciles $q_{k,ct}$ for $k \in \{0.1, 0.2, \ldots, 0.9\}$ are the
$\omega$-weighted empirical quantiles of
$\{\Delta p_{jt}\}_{j \in \mathcal{J}_{ct}^{+}}$, defined as the smallest
observed value $q$ such that
\[
  \frac{1}{W_{ct}^{+}}
  \sum_{j \in \mathcal{J}_{ct}^{+}\,:\,\Delta p_{jt} \le q} \omega_{j,t}
  \;\ge\; k.
\]
The five moments use the Kish effective sample size
\[
  n_{\text{eff},ct}^{+}
  = \frac{\bigl(W_{ct}^{+}\bigr)^{2}}
          {\sum_{j \in \mathcal{J}_{ct}^{+}} \omega_{j,t}^{2}},
\]
which equals $n_{ct}^{+}$ when all weights within the cell are equal and
shrinks toward $1$ as the weights concentrate on a few observations. Let
\[
  \bar{m}_{ct} = \frac{1}{W_{ct}^{+}}
                 \sum_{j \in \mathcal{J}_{ct}^{+}} \omega_{j,t}\,\Delta p_{jt},
  \qquad
  s_{ct}
  = \sqrt{\,
    \frac{n_{\text{eff},ct}^{+}}{n_{\text{eff},ct}^{+}-1}
    \cdot
    \frac{1}{W_{ct}^{+}}
    \sum_{j \in \mathcal{J}_{ct}^{+}} \omega_{j,t}
    \bigl(\Delta p_{jt} - \bar{m}_{ct}\bigr)^{2}
    \,}.
\]
Then
\begin{align*}
  m_{1,ct} &= \bar{m}_{ct}, \\
  m_{2,ct} &= s_{ct}^{2}, \\
  m_{k,ct} &= \frac{1}{W_{ct}^{+}\,s_{ct}^{k}}
              \sum_{j \in \mathcal{J}_{ct}^{+}}
              \omega_{j,t}\bigl(\Delta p_{jt} - \bar{m}_{ct}\bigr)^{k},
              \quad k = 3, 5, \\
  m_{4,ct} &= \frac{1}{W_{ct}^{+}\,s_{ct}^{4}}
              \sum_{j \in \mathcal{J}_{ct}^{+}}
              \omega_{j,t}\bigl(\Delta p_{jt} - \bar{m}_{ct}\bigr)^{4} - 3.
\end{align*}
Here $m_1$ is the $\omega$-weighted mean, $m_2$ is the Kish/Bessel-corrected
weighted sample variance, $m_3$ is weighted skewness, $m_4$ is weighted
excess kurtosis (so that a normal distribution has $m_4 = 0$), and $m_5$ is
the weighted standardised fifth central moment. The higher-order moments
$m_3$--$m_5$ use the Bessel-corrected standard deviation $s_{ct}$ in the
denominator but divide the weighted raw central moments by $W_{ct}^{+}$ in
the numerator. The unweighted robustness check in
Appendix~\ref{app:unweighted_encoding} replaces $\omega_{j,t}$ throughout
with uniform weights, in which case $W_{ct}^{+}$ collapses to $n_{ct}^{+}$
and $n_{\text{eff},ct}^{+}$ to $n_{ct}^{+}$, and the formulas above reduce
to the standard unweighted analogues.

\paragraph{Comparison with \citet{Alvarez2016} kurtosis.}
\citet{Alvarez2016} report raw (non-excess) kurtosis, denoted
$\mathrm{Kur}(\Delta p)$, which equals $m_4 + 3$ in our notation.
Under normality $\mathrm{Kur} = 3$; under a \citet{GolosovLucas2007} two-point
distribution $\mathrm{Kur} = 1$; under Calvo $\mathrm{Kur} = 6$.
Their sufficient-statistic result for cumulative output is
$M = \tfrac{\delta}{6\varepsilon}\,\mathrm{Kur}(\Delta p)/N(\Delta p)$,
where $N(\Delta p)$ is the frequency of price changes.
After correcting for cross-sectional heterogeneity and measurement error,
\citet{Alvarez2016} estimate $\mathrm{Kur}(\Delta p) \approx 4$ in U.S.\ data,
corresponding to $m_4 \approx 1$ in our convention.

\citet{Alvarez2016} compute kurtosis on price changes that have been
standardised at the UPC-outlet (cell) level, with very small changes
discarded before pooling. Our COICOP1-month forecasting features defined
above use $\omega_{j,t}$-weighted within-cell moments without prior
item-level standardisation, while the validation exercise in
Appendix~\ref{app:moments} follows the protocol of \citet{Blanco2021}, which
applies $\omega_{j,t}$ weighting together with item-level standardisation
and a 2\% trim of the standardised distribution. Both of our pipelines
differ from \citet{Alvarez2016} in either the standardisation step or the
trimming rule, so the resulting kurtosis values are not directly comparable
to $\mathrm{Kur}(\Delta p)\approx 4$.

\paragraph{Time-since-change statistics.}

For each product $j$ in month $t$, after comparable and non-comparable
substitution events have been split into separate product spells, define the
time since the last reset as
\[
  \delta_{jt} = t - t^{*}_{jt},
\]
where $t^{*}_{jt}$ is the most recent reset month $t' \le t$. Reset months are
defined as follows: (i)~the first valid price observation of a product spell;
(ii)~any valid consecutive-month price-change observation with
$\Delta p_{jt'} \ne 0$; and (iii)~any observation following a gap of more than
12 months within the same product spell. Thus the entry month of a product
spell has $\delta_{jt}=0$ even though its price change is missing. Because
time-since-change features are computed only from observations with zero valid
price changes, entry months do not enter these moments.

For the time-since-change features, we use observations with zero valid
price changes and a defined time-since-reset. Let
\[
  \mathcal{J}_{ct}^{0,\delta}
  = \bigl\{j \in \mathcal{J}_{ct} : \Delta p_{jt} = 0
             \text{ and } \delta_{jt} \text{ is non-missing}\bigr\},
  \qquad
  W_{ct}^{0,\delta}
  = \sum_{j \in \mathcal{J}_{ct}^{0,\delta}} \omega_{j,t}.
\]
The nine deciles and five moments of
$\{\delta_{jt}\}_{j \in \mathcal{J}_{ct}^{0,\delta}}$ are computed by the
same $\omega_{j,t}$-weighted formulas as above, substituting $\delta_{jt}$
for $\Delta p_{jt}$, $\mathcal{J}_{ct}^{0,\delta}$ for
$\mathcal{J}_{ct}^{+}$, and $W_{ct}^{0,\delta}$ for $W_{ct}^{+}$.

\clearpage

\section{Econometric Theory of Scan Tests}\label{sec:theory_scan_tests}

\numberwithin{equation}{section}
\makeatletter
\@ifundefined{proposition}{\newtheorem{proposition}[theorem]{Proposition}}{}
\@ifundefined{lemma}{\newtheorem{lemma}[theorem]{Lemma}}{}
\@ifundefined{corollary}{\newtheorem{corollary}[theorem]{Corollary}}{}
\@ifundefined{remark}{\theoremstyle{remark}\newtheorem{remark}{Remark}}{}
\makeatother
\providecommand{\1}{\mathbf{1}}
\providecommand{\R}{\mathbb R}
\providecommand{\E}{\mathbb E}
\providecommand{\Pbb}{\mathbb P}
\providecommand{\Var}{\mathrm{Var}}
\providecommand{\Cov}{\mathrm{Cov}}
\providecommand{\op}{o_p}
\providecommand{\Op}{O_p}
\providecommand{\ops}{o_{P^*}}
\providecommand{\Ops}{O_{P^*}}
\providecommand{\psito}{\xrightarrow{p}}
\providecommand{\dto}{\Rightarrow}
\providecommand{\st}{\,\colon\,}
\providecommand{\norm}[1]{\left\lVert #1\right\rVert}
\providecommand{\cI}{\mathcal{I}}
\providecommand{\cA}{\mathcal{A}}
\providecommand{\cR}{\mathcal{R}}
\providecommand{\psitwo}[1]{\lVert #1 \rVert_{\psi_2}}
\providecommand{\D}{D}
\providecommand{\abs}[1]{\left\lvert #1\right\rvert}
\providecommand{\one}{\mathbf{1}}
\providecommand{\red}[1]{{\color{red} #1}}
\renewcommand{\d}{d}

\subsection{Setup}
\subsubsection{Review of the scan test under simplified notation}

We use triangular-array notation and suppress the row index.  Fix a
deterministic exponent $\gamma\in[0,1/2]$.
The exponent $\gamma$ is part of the test specification and is not selected
from the data.  We use $\alpha\in(0,1)$ below for the nominal test level.
Let
$[q]:=\{1,\ldots,q\}$, let $\mathcal I$ be a collection of intervals in
$[n]$, and define
\begin{equation}\label{eq:Lmin_p}
  L_{\min}:=\min_{I\in\mathcal I}|I|,
  \qquad
  L_{\max}:=\max_{I\in\mathcal I}|I|,
  \qquad
  A:=\mathcal I\times[H],
  \qquad
  p:=|A|=H|\mathcal I|.
\end{equation}
Because $p\le n^2$,
\begin{equation}\label{eq:logpn_logn}
  \log(pn)\asymp\log n.
\end{equation}
For $a\in A$, write $a=(I,h)$ for its associated tuple, and set
\begin{equation}
  S_a:=\sum_{t\in I}D_t^{(h)},
  \qquad
  \bar D_a:=\frac{S_a}{|I|},
  \qquad
  Q_a
  :=\sum_{t\in I}\left(D_t^{(h)}-\bar D_a\right)^2.
\end{equation}
\begin{equation}\label{eq:observed_Q}
  T_a:=\frac{S_a}{|I|^{1/2-\gamma}Q_a^\gamma},
  \quad
  T_{\max}:=\max_{a\in A}T_a.
\end{equation}
For $\gamma=0$, we define $q^0=1$ for every $q\ge0$, so
$T_a=S_a$.  For $\gamma>0$, a ratio with a zero denominator is defined
to be zero.  The one-sided null
is
\begin{equation}\label{eq:null_standalone}
  H_0:\quad
  \E[D_t^{(h)}]\le0
  \quad\text{for every }t\in[n],\ h\in[H].
\end{equation}

Let $\xi=(\xi_1,\ldots,\xi_n)^\top\sim N(0,\Theta_n)$ independently of
the data, where $\Theta_n$ is positive semidefinite with unit diagonal.
Regime~I uses $\Theta_n=I_n$; Regime~D uses the kernel covariance matrix
defined in Assumption~\ref{ass:D_kernel}.  For $a=(I,h)\in A$, write
\[
  S_a^*:=
  \sum_{t\in I}\xi_t\left(D_t^{(h)}-\bar D_a\right),
  \qquad
  \bar D_a^*:=\frac{S_a^*}{|I|},
  \qquad
  Q_a^*
  :=\sum_{t\in I}\left\{
    \xi_t\left(D_t^{(h)}-\bar D_a\right)
    -\bar D_a^*
  \right\}^2.
\]
The bootstrap statistic is
\begin{equation}\label{eq:implemented_bootstrap}
  T_a^*:=
  \frac{S_a^*}{|I|^{1/2-\gamma}(Q_a^*)^\gamma},
  \quad
  T_{\max}^*:=\max_{a\in A}T_a^*.
\end{equation}
The conditional critical value for the bootstrap statistic above is
\begin{equation}
  c_{n,1-\alpha}^{*}
  :=
  \inf\{x:\Pbb^*(T_{\max}^*\le x)\ge1-\alpha\}.
\end{equation}
Here $\Pbb^*$ denotes probability conditional on the data; conditional
stochastic orders are understood in outer probability.
The test rejects when
$T_{\max}>c_{n,1-\alpha}^{*}$.

\subsubsection{Other notation used for proofs}

Write
\begin{equation}
  \mu_t^{(h)}:=\E[D_t^{(h)}],
  \qquad
  \d_t^{(h)}:=D_t^{(h)}-\mu_t^{(h)}.
\end{equation}
For $a=(I,h)\in A$, write
\begin{equation}
  S_a^\circ:=\sum_{t\in I}\d_t^{(h)}.
\end{equation}
Define
\begin{equation}\label{eq:dta_def}
  \bar\d_a:=\frac1{|I|}\sum_{t\in I}\d_t^{(h)},
  \qquad
  d_{t,a}:=\d_t^{(h)}-\bar\d_a.
\end{equation}
\begin{equation}\label{eq:muta_def}
  \bar\mu_a:=\frac1{|I|}\sum_{t\in I}\mu_t^{(h)},
  \qquad
  \mu_{t,a}:=\mu_t^{(h)}-\bar\mu_a.
\end{equation}
\begin{equation}
  q_a:=\sum_{t\in I}d_{t,a}^2,
  \qquad
  \nu_a:=\sum_{t\in I}\E[\d_t^{(h)2}].
\end{equation}
\begin{equation}\label{eq:denom_Adef}
  q_a^\circ:=q_a-\nu_a=\sum_{t\in I}d_{t,a}^2-\nu_a,
  \qquad
  C_a:=\sum_{t\in I}d_{t,a}\mu_{t,a}.
\end{equation}
The centered comparison statistics are
\begin{equation}
  T_{\max}^\circ
  :=\max_{a\in A}
  \frac{S_a^\circ}{|I|^{1/2-\gamma}Q_a^\gamma}.
\end{equation}
The interval-wise mean-flatness index is
\begin{equation}\label{eq:rn}
  \nu_{a,\mu}:=\sum_{t\in I}\mu_{t,a}^2,
  \qquad
  r_n
  :=
  \max_{a=(I,h)\in A}
  \frac{\nu_{a,\mu}}{\nu_a}.
\end{equation}
It controls the shape of the mean within an interval, not its level.  In
particular, arbitrary constant negative shifts leave $r_n$ equal to zero.

For positive deterministic sequences $a_n$ and $b_n$, $a_n=\Omega(b_n)$
means $b_n=O(a_n)$, and $a_n=\omega(b_n)$ means $b_n=o(a_n)$.

\subsubsection{Common assumptions}

\begin{appassumption}\label{ass:subgaussian}
The centered coordinates $\d_t^{(h)}=D_t^{(h)}-\mu_t^{(h)}$ are uniformly
sub-Gaussian: for a constant
$B_0<\infty$,
\[
  \sup_{t\in[n],\,h\in[H]}\|\d_t^{(h)}\|_{\psi_2}\le B_0.
\]
The Orlicz norms are defined in Section~\ref{app:miscellany}.
\end{appassumption}

\begin{appassumption}\label{ass:bounded_second_moment}
For constants $0<\sigma_-^2\le\sigma_+^2<\infty$,
\[
  \sigma_-^2
  \le\E[\d_t^{(h)2}]
  \le\sigma_+^2
  \qquad\text{for every }t\in[n],\ h\in[H].
\]
\end{appassumption}

\begin{appassumption}
\label{ass:length_ratio}
For a constant $C_L<\infty$,
\[
  \left(\frac{L_{\max}}{L_{\min}}\right)^{1-2\gamma}
  \le C_L.
\]
For fixed $\gamma<1/2$, this requires $L_{\max}/L_{\min}=O(1)$.
For $\gamma=1/2$, it imposes no restriction.
\end{appassumption}

\subsection{Preliminary results}

\subsubsection{Lemmas used for both regimes}

The following algebraic decomposition is used in both regimes to compare the
observed denominator with its oracle variance scale.
\begin{lemma}\label{lem:denominator_decomposition}
For every $a=(I,h)\in A$,
\[
  \frac{Q_a-\nu_a}{\nu_a}
  =
  \frac{q_a^\circ}{\nu_a}
  +
  2\frac{C_a}{\nu_a}
  +
  \frac{\nu_{a,\mu}}{\nu_a}.
\]
\end{lemma}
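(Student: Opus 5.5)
This is a purely algebraic identity, so the plan is to expand $Q_a$ in terms of the centered quantities defined above and then divide by $\nu_a$. Fix $a=(I,h)\in A$. By definition $D_t^{(h)}=\mu_t^{(h)}+\d_t^{(h)}$, so averaging over $t\in I$ gives $\bar D_a=\bar\mu_a+\bar\d_a$. Subtracting, each within-interval deviation splits as
\[
  D_t^{(h)}-\bar D_a=\bigl(\d_t^{(h)}-\bar\d_a\bigr)+\bigl(\mu_t^{(h)}-\bar\mu_a\bigr)=d_{t,a}+\mu_{t,a},
\]
using \eqref{eq:dta_def} and \eqref{eq:muta_def}.

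Next, I will square this decomposition and sum over $t\in I$:
\[
  Q_a=\sum_{t\in I}d_{t,a}^2+2\sum_{t\in I}d_{t,a}\mu_{t,a}+\sum_{t\in I}\mu_{t,a}^2=q_a+2C_a+\nu_{a,\mu},
\]
by the definitions of $q_a$, $C_a$ in \eqref{eq:denom_Adef} and $\nu_{a,\mu}$ in \eqref{eq:rn}. Subtracting $\nu_a$ and using $q_a^\circ=q_a-\nu_a$ gives $Q_a-\nu_a=q_a^\circ+2C_a+\nu_{a,\mu}$.

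Finally, I will divide by $\nu_a$. This is legitimate because $\nu_a=\sum_{t\in I}\E[\d_t^{(h)2}]\ge |I|\sigma_-^2>0$ under Assumption~\ref{ass:bounded_second_moment}, or simply whenever the denominator is positive, which is implicit in the statement. No step is a real obstacle. The only care needed is to check that the centering of $d_{t,a}$ and $\mu_{t,a}$ uses the same interval mean, so that no extra term appears in the cross product. That cross term is exactly $C_a$; it is not forced to vanish, since $\sum_t d_{t,a}=\sum_t\mu_{t,a}=0$ does not make $\sum_t d_{t,a}\mu_{t,a}$ zero, and it is kept as is.
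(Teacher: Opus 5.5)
Your proposal is correct and follows the paper's proof essentially step for step. You split $D_t^{(h)}-\bar D_a=d_{t,a}+\mu_{t,a}$, expand the square to get $Q_a=q_a+2C_a+\nu_{a,\mu}$, then subtract $\nu_a$ using $q_a^\circ=q_a-\nu_a$ and divide by $\nu_a$. Your remark that $\nu_a\ge|I|\sigma_-^2>0$ makes explicit the positivity that the paper leaves implicit.
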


\begin{proof}
Because $\sum_{t\in I}\mu_{t,a}=0$,
\[
  D_t^{(h)}-\bar D_a=d_{t,a}+\mu_{t,a}.
\]
Expanding the square and using the definition of $C_a$ gives
\[
  Q_a=q_a+2C_a+\nu_{a,\mu}.
\]
Also,
\[
  q_a
  =
  \sum_{t\in I}\left(\d_t^{(h)}-\bar\d_a\right)^2
  =
  \sum_{t\in I}\d_t^{(h)2}-|I|\bar\d_a^2,
\]
and the definition \eqref{eq:denom_Adef} gives
$q_a-\nu_a=q_a^\circ$.  The displayed identity follows by subtracting
$\nu_a$ and dividing by $\nu_a$.
\end{proof}

The next lemma gives the deterministic perturbation bound for changing the
variance normalizer in a power-normalized statistic.
\begin{lemma}\label{lem:power_denominator_perturbation}
Let $(s_a,V_a,\widetilde V_a)_{a\in A}$ be real numbers with
$V_a,\widetilde V_a>0$, and set
\[
  \Delta_V:=\max_{a\in A}\left|\frac{V_a}{\widetilde V_a}-1\right|.
\]
On $\{\Delta_V\le1/2\}$,
\[
  \max_{a\in A}
  \left|
  \frac{s_a}{|I|^{1/2-\gamma}V_a^\gamma}
  -
  \frac{s_a}{|I|^{1/2-\gamma}\widetilde V_a^\gamma}
  \right|
  \le
  C\gamma\Delta_V
  \max_{a\in A}
  \frac{|s_a|}{|I|^{1/2-\gamma}\widetilde V_a^\gamma},
\]
where $C$ is universal.
\end{lemma}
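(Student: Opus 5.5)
The bound reduces to a scalar inequality applied coordinatewise. Fix $a\in A$ and write $x_a:=V_a/\widetilde V_a$, so that on $\{\Delta_V\le 1/2\}$ we have $x_a\in[1/2,3/2]$ and $|x_a-1|\le\Delta_V$. Because $V_a^{-\gamma}=\widetilde V_a^{-\gamma}x_a^{-\gamma}$, the difference for coordinate $a$ factors as
\[
  \frac{s_a}{|I|^{1/2-\gamma}V_a^\gamma}-\frac{s_a}{|I|^{1/2-\gamma}\widetilde V_a^\gamma}
  =\frac{s_a}{|I|^{1/2-\gamma}\widetilde V_a^\gamma}\bigl(x_a^{-\gamma}-1\bigr).
\]
The task is therefore to bound $|x^{-\gamma}-1|$ by $C\gamma|x-1|$ uniformly for $x\in[1/2,3/2]$ and $\gamma\in[0,1/2]$.

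For this scalar bound I would apply the mean value theorem to $f(x)=x^{-\gamma}$. The derivative is $f'(y)=-\gamma y^{-\gamma-1}$, so for some $y$ between $x$ and $1$, hence $y\in[1/2,3/2]$,
\[
  |x^{-\gamma}-1|=\gamma\,y^{-\gamma-1}|x-1|\le\gamma\,2^{\gamma+1}|x-1|\le 2^{3/2}\gamma|x-1|.
\]
This lets us take $C=2^{3/2}$, which is universal. When $\gamma=0$ both sides vanish, consistent with the convention $q^0=1$.

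Multiplying by $|s_a|/(|I|^{1/2-\gamma}\widetilde V_a^\gamma)$ and using $|x_a-1|\le\Delta_V$ gives the bound for each $a$ separately. Taking the maximum over $a\in A$ on both sides, and bounding the right-hand side by $C\gamma\Delta_V$ times the maximum of the normalized terms, yields the lemma.

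There is no real obstacle here. The only point needing care is that the constant must not depend on $\gamma$, which holds because $\gamma\le1/2$ and $y\ge1/2$ bound $y^{-\gamma-1}$ by $2^{3/2}$.
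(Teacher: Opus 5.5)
Your proposal is correct and follows essentially the same argument as the paper: factor out $x_a^{-\gamma}$ with $x_a=V_a/\widetilde V_a\in[1/2,3/2]$, bound $|x_a^{-\gamma}-1|\le C\gamma|x_a-1|$ via the mean value theorem, then multiply and take maxima. Your explicit constant $C=2^{3/2}$ is valid under the paper's standing assumption $\gamma\in[0,1/2]$.
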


\begin{proof}
For $\gamma=0$ the two sides on the left are identical.  For
$\gamma>0$, write $u_a=V_a/\widetilde V_a$.  On
$\{\Delta_V\le1/2\}$, $u_a\in[1/2,3/2]$ for every $a$.  The function
$u\mapsto u^{-\gamma}$ has derivative bounded by $C\gamma$ on this
interval, and therefore $|u_a^{-\gamma}-1|\le C\gamma|u_a-1|$.  Multiplying
by $|s_a|/(|I|^{1/2-\gamma}\widetilde V_a^\gamma)$ and taking maxima proves
the claim.
\end{proof}

The next proposition provides the distribution-function perturbation bound
used when one statistic is uniformly close to another.
\begin{proposition}\label{prop:cdf_perturbation}
Let $X$ and $Y$ be real-valued random variables on a probability space with
probability measure $\mathsf P$.  For every $\eta>0$,
\[
  \sup_{x\in\R}
  \left|
  \mathsf P(X\le x)-\mathsf P(Y\le x)
  \right|
  \le
  \mathsf P(|X-Y|>\eta)
  +
  \sup_{x\in\R}\mathsf P(|Y-x|\le\eta).
\]
\end{proposition}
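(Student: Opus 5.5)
The plan is a sandwich argument: compare the events $\{X\le x\}$ and $\{Y\le x\}$ through the event $\{|X-Y|\le\eta\}$, and charge the discrepancy either to the complement of that event or to a small anti-concentration window of $Y$ around $x$.

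Fix $x\in\R$ and $\eta>0$, and split on whether $|X-Y|\le\eta$. On $\{|X-Y|\le\eta\}$ we have $Y-\eta\le X\le Y+\eta$. This gives the inclusions
\[
  \{X\le x\}\subseteq\{Y\le x+\eta\}\cup\{|X-Y|>\eta\},
  \qquad
  \{Y\le x-\eta\}\subseteq\{X\le x\}\cup\{|X-Y|>\eta\}.
\]

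Taking probabilities in the first inclusion yields the upper bound
\[
  \mathsf P(X\le x)-\mathsf P(Y\le x)\le\mathsf P(|X-Y|>\eta)+\mathsf P(x<Y\le x+\eta).
\]
Taking probabilities in the second yields the lower bound
\[
  \mathsf P(Y\le x)-\mathsf P(X\le x)\le\mathsf P(|X-Y|>\eta)+\mathsf P(x-\eta<Y\le x).
\]

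Each of the two window probabilities is at most $\mathsf P(|Y-x|\le\eta)$, since both windows lie inside $[x-\eta,x+\eta]$. Combining the two one-sided bounds therefore gives
\[
  \left|\mathsf P(X\le x)-\mathsf P(Y\le x)\right|\le\mathsf P(|X-Y|>\eta)+\mathsf P(|Y-x|\le\eta).
\]

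Finally, bound the right-hand side by $\mathsf P(|X-Y|>\eta)+\sup_{x}\mathsf P(|Y-x|\le\eta)$ and take the supremum over $x$ on the left. This is the stated bound.

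There is no real obstacle here. The only point needing care is the direction of the inclusions and the treatment of endpoints. Because the closed window $[x-\eta,x+\eta]$ is used, the half-open windows $(x,x+\eta]$ and $(x-\eta,x]$ are covered without any continuity assumption on the law of $Y$.
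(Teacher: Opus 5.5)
Your proposal is correct and follows the paper's proof essentially line for line: the same two event inclusions through $\{|X-Y|>\eta\}$, the same one-sided bounds with the half-open windows $(x,x+\eta]$ and $(x-\eta,x]$, and then a supremum over $x$. You also make explicit that both windows sit inside $\{|Y-x|\le\eta\}$, which the paper leaves implicit.
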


\begin{proof}
For every $x\in\R$,
\[
  \{X\le x\}
  \subset
  \{Y\le x+\eta\}\cup\{|X-Y|>\eta\},
\]
and
\[
  \{Y\le x-\eta\}
  \subset
  \{X\le x\}\cup\{|X-Y|>\eta\}.
\]
The first inclusion gives
\[
  \mathsf P(X\le x)-\mathsf P(Y\le x)
  \le
  \mathsf P(|X-Y|>\eta)
  +
  \mathsf P(x<Y\le x+\eta),
\]
and the second inclusion gives
\[
  \mathsf P(Y\le x)-\mathsf P(X\le x)
  \le
  \mathsf P(|X-Y|>\eta)
  +
  \mathsf P(x-\eta<Y\le x).
\]
Taking the supremum over $x$ proves the result.
\end{proof}

The next lemma converts convergence in probability into a deterministic error
threshold.
\begin{lemma}\label{lem:deterministic_threshold}
If $Z_n\ge0$ and $Z_n=o_p(1)$, then there exists a deterministic sequence
$\rho_n\downarrow0$ such that
\[
  \Pbb(Z_n>\rho_n)\to0.
\]
\end{lemma}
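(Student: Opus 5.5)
The plan is a standard diagonal (subsequence) argument. From $Z_n=o_p(1)$ we extract a deterministic threshold that tends to zero slowly enough that the exceedance probability still vanishes.

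First, for each fixed $k\in\mathbb N$, convergence in probability gives $\Pbb(Z_n>1/k)\to0$. Hence there is an integer $N_k$ such that $\Pbb(Z_n>1/k)\le 1/k$ for all $n\ge N_k$. Replacing $N_k$ by $\max(N_1,\ldots,N_k)+k$, we may take $N_1<N_2<\cdots$ strictly increasing with $N_k\to\infty$.

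Second, define $\rho_n:=1/k$ for $N_k\le n<N_{k+1}$, and $\rho_n:=1$ for $n<N_1$. By construction $\rho_n$ is deterministic and nonincreasing, and since $N_k\to\infty$ it satisfies $\rho_n\downarrow0$. If strict positivity or strict monotonicity were needed later, we could use $\rho_n+1/n$ instead, which does not affect the argument.

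Third, fix $n\ge N_1$ and let $k(n)$ be the unique index with $N_{k(n)}\le n<N_{k(n)+1}$. Then $\Pbb(Z_n>\rho_n)=\Pbb(Z_n>1/k(n))\le 1/k(n)$. Because $k(n)\to\infty$ as $n\to\infty$, the right-hand side tends to zero, which proves the claim.

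There is no real obstacle here. The only point of care is the bookkeeping that makes the $N_k$ strictly increasing, so that $\rho_n$ is well defined and each block $[N_k,N_{k+1})$ is nonempty.
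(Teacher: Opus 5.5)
Your proposal is correct and is essentially the paper's own argument: choose increasing $N_k$ with $\Pbb(Z_n>1/k)\le 1/k$ for $n\ge N_k$, then set $\rho_n=1/k$ on each block $N_k\le n<N_{k+1}$. The extra bookkeeping (forcing the $N_k$ to be strictly increasing and setting $\rho_n=1$ for $n<N_1$) only makes explicit details the paper leaves implicit.
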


\begin{proof}
For each integer $j\ge1$, choose $N_j$ large enough that
$\Pbb(Z_n>1/j)\le1/j$ whenever $n\ge N_j$, and take the sequence
$(N_j)$ increasing.  Set $\rho_n=1/j$ for $N_j\le n<N_{j+1}$.  Then
$\rho_n\downarrow0$ and $\Pbb(Z_n>\rho_n)\le\rho_n$ for all sufficiently
large $n$.
\end{proof}

\subsubsection{Bounding the tail probability of $T_{\max}$ by $T_{\max}^\circ$}

The following lemma gives the pathwise comparison between the feasible
statistic and its centered counterpart under the null.
\begin{lemma}\label{lem:null_pathwise_dominance}
Under \eqref{eq:null_standalone},
\[
  T_{\max}\le T_{\max}^{\circ}.
\]
\end{lemma}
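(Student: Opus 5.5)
The plan is to compare the two statistics coordinate by coordinate and then take the maximum over $a\in A$. The key observation is that $T_{\max}^\circ$ is built with the \emph{same} feasible denominator $|I|^{1/2-\gamma}Q_a^\gamma$ as $T_{\max}$; only the numerator is recentered. So no control of $Q_a$ is needed, and the comparison reduces to comparing numerators.

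First, for each $a=(I,h)\in A$, I write the numerator decomposition
\[
  S_a=\sum_{t\in I}D_t^{(h)}=\sum_{t\in I}\d_t^{(h)}+\sum_{t\in I}\mu_t^{(h)}=S_a^\circ+|I|\bar\mu_a .
\]
Under \eqref{eq:null_standalone}, every $\mu_t^{(h)}\le0$, so $|I|\bar\mu_a\le0$. Hence $S_a\le S_a^\circ$ pointwise on the sample space.

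Second, I split into cases according to the denominator.

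If $\gamma=0$, the convention $q^0=1$ gives denominator $|I|^{1/2}>0$. Dividing preserves the inequality, so $T_a\le S_a^\circ/|I|^{1/2}$.

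If $\gamma>0$ and $Q_a>0$, the denominator $|I|^{1/2-\gamma}Q_a^\gamma$ is strictly positive. Dividing $S_a\le S_a^\circ$ by it gives $T_a\le S_a^\circ/(|I|^{1/2-\gamma}Q_a^\gamma)$.

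If $\gamma>0$ and $Q_a=0$, the zero-denominator convention sets both ratios to zero, so the inequality holds with equality. This is the only point needing care: the same convention must be applied to the centered ratio defining $T_{\max}^\circ$. I would state this explicitly, since $T_{\max}^\circ$ uses the identical denominator.

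Finally, since $T_a\le T_a^\circ:=S_a^\circ/(|I|^{1/2-\gamma}Q_a^\gamma)$ for every $a\in A$, taking maxima over $A$ yields $T_{\max}=\max_a T_a\le\max_a T_a^\circ=T_{\max}^\circ$. There is no real obstacle. The argument is purely pathwise and requires no assumption beyond the one-sided null, with the zero-denominator bookkeeping being the only subtle point.
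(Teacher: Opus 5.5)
Your proof is correct and matches the paper's argument. Like the paper, you write $S_a=S_a^\circ+\sum_{t\in I}\mu_t^{(h)}$, use the one-sided null to get $S_a\le S_a^\circ$, and use the shared denominator $|I|^{1/2-\gamma}Q_a^\gamma$ to pass the inequality to each coordinate before taking maxima. Your explicit handling of the $\gamma=0$ and zero-denominator conventions is slightly more careful than the paper, which leaves those cases implicit.
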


\begin{proof}
Under \eqref{eq:null_standalone}, for every $a=(I,h)\in A$,
\[
  \sum_{t\in I}\mu_t^{(h)}\le 0.
\]
Therefore
\[
  S_a=S_a^{\circ}+\sum_{t\in I}\mu_t^{(h)}\le S_a^{\circ}
  \qquad\text{for every }a\in A.
\]
The two coordinates $T_a$ and the corresponding coordinate in
$T_{\max}^{\circ}$ have the same realized denominator
$|I|^{1/2-\gamma}Q_a^\gamma$, so the coordinatewise inequality implies
$T_{\max}\le T_{\max}^{\circ}$.
\end{proof}

The next lemma gives the common tail-probability comparison between
$T_{\max}^{\circ}$ and $\mathbb{T}_{\max}^{\circ}$.
\begin{lemma}\label{lem:centered_oracle_perturbation}
Set
\[
  \Delta_Q:=\max_{a\in A}\left|\frac{Q_a}{\nu_a}-1\right|.
\]
If
\[
  \Pbb(\Delta_Q>1/2)
  +
  \Pbb\!\left(
    C\gamma\Delta_Q\bar{\mathbb{T}}_{\max}^{\circ}>\eta_n
  \right)
  \to0
\]
for a universal constant $C<\infty$, and
\[
  \sup_{x\in\R}
  \Pbb\!\left(|\mathbb{T}_{\max}^{\circ}-x|\le\eta_n\right)
  \to0,
\]
then
\[
  \sup_{x\in\R}
  \left|
  \Pbb(T_{\max}^{\circ}>x)
  -
  \Pbb(\mathbb{T}_{\max}^{\circ}>x)
  \right|
  =
  \sup_{x\in\R}
  \left|
  \Pbb(T_{\max}^{\circ}\le x)
  -
  \Pbb(\mathbb{T}_{\max}^{\circ}\le x)
  \right|
  =
  o(1).
\]
\end{lemma}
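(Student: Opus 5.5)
The statement uses two objects not defined in the excerpt. I read $\mathbb{T}_{\max}^{\circ}$ as the oracle-normalized centered statistic $\max_{a\in A} S_a^\circ/(|I|^{1/2-\gamma}\nu_a^\gamma)$, where the realized denominator $Q_a$ is replaced by its oracle scale $\nu_a$. I read $\bar{\mathbb{T}}_{\max}^{\circ}$ as its absolute-value version $\max_{a\in A}|S_a^\circ|/(|I|^{1/2-\gamma}\nu_a^\gamma)$.

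The plan is to apply Proposition~\ref{prop:cdf_perturbation} with $X=T_{\max}^\circ$, $Y=\mathbb{T}_{\max}^\circ$ and $\eta=\eta_n$. First note the identity between the two suprema in the conclusion: $\Pbb(X>x)=1-\Pbb(X\le x)$ and likewise for $Y$, so the differences agree up to sign. It therefore suffices to bound $\sup_x|\Pbb(X\le x)-\Pbb(Y\le x)|$. Proposition~\ref{prop:cdf_perturbation} bounds this by
\[
\Pbb\bigl(|T_{\max}^\circ-\mathbb{T}_{\max}^\circ|>\eta_n\bigr)+\sup_{x\in\R}\Pbb\bigl(|\mathbb{T}_{\max}^\circ-x|\le\eta_n\bigr).
\]
The second term is $o(1)$ by the anti-concentration hypothesis, so only the first term needs work.

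For the first term, use $|\max_a u_a-\max_a v_a|\le\max_a|u_a-v_a|$. On the event $\{\Delta_Q\le 1/2\}$, apply Lemma~\ref{lem:power_denominator_perturbation} with $s_a=S_a^\circ$, $V_a=Q_a$ and $\widetilde V_a=\nu_a$. The lemma's $\Delta_V$ is then exactly $\Delta_Q$, and we obtain
\[
|T_{\max}^\circ-\mathbb{T}_{\max}^\circ|\le C\gamma\Delta_Q\,\bar{\mathbb{T}}_{\max}^\circ .
\]
Two positivity points need checking. First, $\nu_a>0$ holds by Assumption~\ref{ass:bounded_second_moment}. Second, $Q_a\ge\nu_a/2>0$ on this event, so the convention for a zero denominator never applies there.

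Putting these together,
\[
\Pbb\bigl(|T_{\max}^\circ-\mathbb{T}_{\max}^\circ|>\eta_n\bigr)\le\Pbb(\Delta_Q>1/2)+\Pbb\bigl(C\gamma\Delta_Q\bar{\mathbb{T}}_{\max}^\circ>\eta_n\bigr),
\]
and the right-hand side tends to $0$ by the first hypothesis. The case $\gamma=0$ is trivial, since then $X=Y$.

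There is no real obstacle here; the lemma is bookkeeping. The only care needed is to make the constant $C$ in the hypothesis the same universal constant as in Lemma~\ref{lem:power_denominator_perturbation}, and to handle the zero-denominator convention as described above.
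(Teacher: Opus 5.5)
Your proposal is correct and follows the paper's proof essentially step for step: Lemma~\ref{lem:power_denominator_perturbation} with $s_a=S_a^\circ$, $V_a=Q_a$, $\widetilde V_a=\nu_a$ on $\{\Delta_Q\le 1/2\}$, then the max-difference inequality, then Proposition~\ref{prop:cdf_perturbation} with $X=T_{\max}^\circ$ and $Y=\mathbb{T}_{\max}^\circ$, with the tail version obtained by complementation. Your remarks on positivity of $Q_a$ and on the trivial case $\gamma=0$ only spell out what the paper states in one line.
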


\begin{proof}
On $\{\Delta_Q\le1/2\}$, $Q_a,\nu_a>0$ for every $a\in A$.
Lemma~\ref{lem:power_denominator_perturbation}, applied with
$s_a=S_a^\circ$, $V_a=Q_a$, and $\widetilde V_a=\nu_a$, gives
\[
  \max_{a\in A}
  \left|
  \frac{S_a^\circ}{|I|^{1/2-\gamma}Q_a^\gamma}
  -
  \mathbb{T}_a^\circ
  \right|
  \le
  C\gamma\Delta_Q\bar{\mathbb{T}}_{\max}^{\circ}.
\]
Using $|\max_a x_a-\max_a y_a|\le\max_a|x_a-y_a|$ gives
\[
  |T_{\max}^{\circ}-\mathbb{T}_{\max}^{\circ}|
  \le
  C\gamma\Delta_Q\bar{\mathbb{T}}_{\max}^{\circ}.
\]
Thus
$\Pbb(|T_{\max}^{\circ}-\mathbb{T}_{\max}^{\circ}|>\eta_n)\to0$ under the
displayed assumptions.  Proposition~\ref{prop:cdf_perturbation}, applied
with $X=T_{\max}^{\circ}$ and $Y=\mathbb{T}_{\max}^{\circ}$, proves the
distribution-function claim.  The tail-probability claim is identical
because tail probabilities are complements of distribution functions.
\end{proof}

\subsubsection{Approximating the bootstrap distribution by a fixed-denominator version}

The proofs use the following auxiliary fixed-denominator bootstrap statistics
\begin{equation}\label{eq:fixed_bootstrap}
  T_{0,a}^*:=
  \frac{S_a^*}{|I|^{1/2-\gamma}Q_a^\gamma},
  \qquad
  T_{0,\max}^*:=\max_{a\in A}T_{0,a}^*.
\end{equation}
Conditionally on the data, $(T_{0,a}^*)_{a\in A}$ is Gaussian.  The actual
bootstrap statistic implemented in \eqref{eq:implemented_bootstrap} is not Gaussian,
because the same multipliers enter its numerator and denominator.  But we
will show that the two versions have the same asymptotic bootstrap
distribution.

The following lemma is used in both regimes.  It permits nonconstant means
through the index $r_n$ and therefore does not impose a bound on the level of
the mean.
It gives a uniform concentration bound for the random bootstrap denominator
and compares the random- and fixed-denominator bootstrap distribution
functions.

\begin{lemma}
\label{lem:random_denominator}
Let $k_n=\Omega(1)$ be a deterministic sequence, and suppose Assumptions~\ref{ass:bounded_second_moment} and
\ref{ass:length_ratio} hold.  Suppose that, for some universal constants
$0<c<C<\infty$, with probability $1-o(1)$ over data, uniformly over
$a=(I,h)\in A$,
\begin{align}
  &Q_a\ge c|I|,
  \label{eq:RD_Qlower}\\
  &\max_{t\in I}(\d_t^{(h)})^2\le C\log n,
  \qquad
  \sum_{t\in I}(\d_t^{(h)})^4\le C|I|,
  \label{eq:RD_fourth}\\
  &c\le\Var^*\!\left(\frac{S_a^*}{\sqrt{Q_a}}\right)\le C.
  \label{eq:RD_var}
\end{align}
Suppose also that $\Theta_n$ has unit diagonal and
\begin{equation}\label{eq:RD_rows}
  \max_t\sum_{u=1}^n|\Theta_n(t,u)|\le Ck_n,
  \qquad
  \max_t\sum_{u=1}^n\Theta_n(t,u)^2\le Ck_n.
\end{equation}
Then, in probability,
\begin{equation}\label{eq:RD_rate}
  \Delta_Q^*
  :={}
  \max_{a\in A}
  \left|\frac{Q_a^*}{Q_a}-1\right|
  =
  O_{P^*}\!\left(
    \sqrt{\frac{k_n\log n}{L_{\min}}}
    +\frac{k_n(\log n)^2}{L_{\min}}
    +k_nr_n\log n
  \right).
\end{equation}
If
\begin{equation}\label{eq:RD_growth}
  \frac{k_n(\log n)^3}{L_{\min}}\to0,
  \qquad
  k_nr_n(\log n)^2\to0,
\end{equation}
then
\begin{equation}\label{eq:RD_cdf}
  \sup_{x\in\R}
  \left|
    \Pbb^*(T_{\max}^*\le x)
    -\Pbb^*(T_{0,\max}^*\le x)
  \right|
  =
  o_p(1).
\end{equation}
\end{lemma}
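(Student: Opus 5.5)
We will work conditionally on the data throughout, on the event of probability $1-o(1)$ on which \eqref{eq:RD_Qlower}, \eqref{eq:RD_fourth} and \eqref{eq:RD_var} hold. The plan has three parts: an exact decomposition of $Q_a^*$, a uniform concentration bound for its pieces, and a transfer of that bound to the distribution functions.

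\textbf{Step 1: decomposition of $Q_a^*$.} Write $e_t:=D_t^{(h)}-\bar D_a=d_{t,a}+\mu_{t,a}$, as in the proof of Lemma~\ref{lem:denominator_decomposition}. Then
\[
Q_a^*=\sum_{t\in I}\xi_t^2e_t^2-\frac{(S_a^*)^2}{|I|},
\]
and since $\Theta_n$ has unit diagonal, $\E^*\sum_{t\in I}\xi_t^2e_t^2=Q_a$. It follows that
\[
\frac{Q_a^*}{Q_a}-1=\frac{\sum_{t\in I}(\xi_t^2-1)e_t^2}{Q_a}-\frac{(S_a^*)^2}{|I|\,Q_a}.
\]

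\textbf{Step 2: bounding the two terms uniformly in $a$.}

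The second term is easy. By \eqref{eq:RD_var}, $S_a^*/\sqrt{Q_a}$ is conditionally Gaussian with variance at most $C$. A maximal inequality over the $p\le n^2$ indices gives $\max_a (S_a^*)^2/Q_a=O_{P^*}(\log n)$. Dividing by $|I|\ge L_{\min}$, the second term is $O_{P^*}(\log n/L_{\min})$, which is absorbed into the claimed rate.

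The first term is a Gaussian chaos $\xi^\top M_a\xi-\operatorname{tr}M_a$, where $M_a=\operatorname{diag}(e_t^2\mathbf 1_{t\in I})$. I would apply Hanson--Wright conditionally:
\[
\Pbb^*\bigl(|\xi^\top M\xi-\operatorname{tr}M|>x\bigr)\le 2\exp\bigl(-c\min\{x^2/\|\Theta^{1/2}M\Theta^{1/2}\|_F^2,\;x/\|\Theta^{1/2}M\Theta^{1/2}\|_{op}\}\bigr).
\]
Then control the two norms.
\begin{itemize}
\item $\|\Theta^{1/2}M\Theta^{1/2}\|_F^2=\sum_{t,u\in I}\Theta(t,u)^2e_t^2e_u^2\le \max_t\sum_u\Theta(t,u)^2\cdot\max_u e_u^2\cdot\sum_t e_t^2$, or alternatively, by Cauchy--Schwarz, at most $Ck_n\sum_t e_t^4$.
\item $\|\Theta^{1/2}M\Theta^{1/2}\|_{op}\le\|\Theta\|_{op}\max_te_t^2\le Ck_n\max_t e_t^2$, using the row-sum bound in \eqref{eq:RD_rows}.
\end{itemize}

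Now split $e_t^2\le 2d_{t,a}^2+2\mu_{t,a}^2$. For the $d$-part, \eqref{eq:RD_fourth} gives $\sum d^4\lesssim|I|$ (after controlling $\bar\d_a$, which is $O(\sqrt{\log n/|I|})$ uniformly) and $\max d^2\lesssim\log n$. For the $\mu$-part, $\sum\mu^4\le(\sum\mu^2)^2$ and $\max\mu^2\le\nu_{a,\mu}\le r_n\nu_a\le Cr_n|I|$, using Assumption~\ref{ass:bounded_second_moment}.

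Normalizing by $Q_a\ge c|I|$ and taking $x\asymp$ (the norms)$\times\log n$ for a union bound over $p$ indices yields
\[
\sqrt{k_n\log n/L_{\min}}+k_n(\log n)^2/L_{\min}+k_n r_n\log n,
\]
which is \eqref{eq:RD_rate}. The extra $\log n$ in the middle term comes from $\max d^2$ times the union-bound $\log n$. The $\mu$-contribution to the operator-norm term gives the $k_nr_n$ piece. The Frobenius piece from $\mu$ gives $\sqrt{k_n}\,r_n|I|/|I|$ times $\sqrt{\log n}$, which is dominated.

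This is the main obstacle: getting the $\mu$ cross terms sharp enough that they scale only through $r_n$. If the crude bound $\max\mu^2\le r_n\nu_a$ turns out too lossy, I would instead bound the $\mu$-part deterministically, via $\sum_t(\xi_t^2-1)\mu_{t,a}^2\le\max_t|\xi_t^2-1|\,\nu_{a,\mu}$ and $\max_t\xi_t^2=O_{P^*}(\log n)$. This gives $r_n\log n$ directly, with $k_n\ge1$ only weakening it.

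\textbf{Step 3: transfer to the distribution functions.} By Lemma~\ref{lem:power_denominator_perturbation} with $s_a=S_a^*$, $V_a=Q_a^*$ and $\widetilde V_a=Q_a$, on $\{\Delta_Q^*\le1/2\}$ we have
\[
|T^*_{\max}-T^*_{0,\max}|\le C\gamma\Delta_Q^*\max_a|T_{0,a}^*|.
\]
Assumption~\ref{ass:length_ratio} and \eqref{eq:RD_var} bound the conditional standard deviation of $T_{0,a}^*$ uniformly: it is $|I|^{\gamma}Q_a^{-\gamma}\cdot O(1)\cdot(Q_a/|I|)^{1/2}$, which is $O(1)$ via $Q_a\ge c|I|$ together with an upper bound on $Q_a/|I|$ coming from \eqref{eq:RD_fourth} and $r_n\to0$. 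Hence $\max_a|T^*_{0,a}|=O_{P^*}(\sqrt{\log n})$, and the difference is $O_{P^*}(\Delta_Q^*\sqrt{\log n})$.

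Under \eqref{eq:RD_growth}, $\Delta_Q^*\sqrt{\log n}=o(1/\sqrt{\log n})$, so choose $\eta_n=o(1/\sqrt{\log n})$ dominating it (via Lemma~\ref{lem:deterministic_threshold}). Conditional Gaussian anti-concentration for the maximum of Gaussians with variances bounded below (Nazarov's inequality / \citet{ChernozhukovChetverikovKato2017}) gives
\[
\sup_x\Pbb^*(|T^*_{0,\max}-x|\le\eta_n)\lesssim\eta_n\sqrt{\log p}=o(1).
\]
Proposition~\ref{prop:cdf_perturbation} then yields \eqref{eq:RD_cdf}.
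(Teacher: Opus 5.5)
Your proposal is correct and follows essentially the same route as the paper's proof. Both use the exact identity $Q_a^*/Q_a-1=Q_a^{-1}\sum_{t\in I}(\xi_t^2-1)e_t^2-(S_a^*)^2/(|I|Q_a)$ and a Gaussian quadratic-form (Hanson--Wright) bound whose Frobenius norm is controlled through $k_n\sum_{t\in I}e_t^4$ and whose operator norm is controlled through $k_n\max_{t\in I}e_t^2$, after splitting $e_t$ into its $d$- and $\mu$-parts; both then use a Gaussian maximal bound for the second term, and finally Lemma~\ref{lem:power_denominator_perturbation} together with anti-concentration. The one thing to tidy is how you get $\max_{t\in I}d_{t,a}^2\lesssim\log n$ and $\sum_{t\in I}d_{t,a}^4\lesssim|I|$: derive them deterministically from \eqref{eq:RD_fourth} via $|\bar d_a|\le\max_{t\in I}|d_t^{(h)}|$ and Jensen's inequality, as the paper does, rather than from a uniform $O(\sqrt{\log n/|I|})$ bound on $\bar d_a$, which the lemma's hypotheses do not provide (your fallback for the $\mu$-part is also unnecessary).
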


\begin{proof}
Fix $a=(I,h)$.
Since the bootstrap sample mean over $I$ is $\bar D_a^*$,
\[
  Q_a^*
  =
  \left\{
    \sum_{t\in I}\left(D_t^{(h)}-\bar D_a\right)^2\xi_t^2
    -|I|(\bar D_a^*)^2
  \right\}.
\]
Define
\[
  w_{t,a}:=\frac{(D_t^{(h)}-\bar D_a)^2}{Q_a}.
\]
Because $\sum_{t\in I}w_{t,a}=1$ and $\Theta_n(t,t)=1$, the exact identity
\begin{equation}\label{eq:RD_identity}
  \frac{Q_a^*}{Q_a}
  =
  1+\sum_{t\in I}w_{t,a}(\xi_t^2-1)-\frac{(S_a^*)^2}{|I|Q_a}
\end{equation}
holds.  Consequently,
\begin{equation}\label{eq:RD_decomposition}
  \Delta_Q^*
  \le
  \max_{a=(I,h)\in A}\left|\sum_{t\in I}w_{t,a}(\xi_t^2-1)\right|
  +\frac1{L_{\min}}\max_a\left(\frac{S_a^*}{\sqrt{Q_a}}\right)^2.
\end{equation}

The bounds in \eqref{eq:RD_fourth} imply
\[
  \max_{t\in I}d_{t,a}^2\le C\log n,
  \qquad
  \sum_{t\in I}d_{t,a}^4\le C|I|
\]
by $|\bar\d_a|\le\max_{t\in I}|\d_t^{(h)}|$ and Jensen's inequality.
Since $D_t^{(h)}-\bar D_a=d_{t,a}+\mu_{t,a}$, Assumption
\ref{ass:bounded_second_moment}, \eqref{eq:rn}, and
\eqref{eq:RD_Qlower}--\eqref{eq:RD_fourth} imply
\begin{align}
  \max_{t\in I}w_{t,a}
  &\le C\left(\frac{\log n}{|I|}+r_n\right),
  \label{eq:RD_maxweight}\\
  \sum_{t\in I}w_{t,a}^2
  &\le C\left(\frac1{|I|}+r_n^2\right).
  \label{eq:RD_l2weight}
\end{align}
Indeed, $\sum_{t\in I}\mu_{t,a}^2=\nu_{a,\mu}\le r_n\nu_a\le C r_n|I|$ by
\eqref{eq:rn} and Assumption~\ref{ass:bounded_second_moment}.  Hence
$\max_{t\in I}\mu_{t,a}^2\le C r_n|I|$, and the inequality
$(u+v)^2\le2(u^2+v^2)$ gives
\[
  \max_{t\in I}\left(D_t^{(h)}-\bar D_a\right)^2
  \le C(\log n+r_n|I|).
\]
Dividing by $Q_a\ge c|I|$ yields \eqref{eq:RD_maxweight}.  Similarly,
\[
  \sum_{t\in I}\left(D_t^{(h)}-\bar D_a\right)^4
  \le
  8\sum_{t\in I}d_{t,a}^4
  +8\sum_{t\in I}\mu_{t,a}^4,
  \qquad
  \sum_{t\in I}\mu_{t,a}^4
  \le\left(\sum_{t\in I}\mu_{t,a}^2\right)^2
  \le C r_n^2|I|^2.
\]
Together with $\sum_{t\in I}d_{t,a}^4\le C|I|$, this gives
\[
  \sum_{t\in I}\left(D_t^{(h)}-\bar D_a\right)^4
  \le C|I|+Cr_n^2|I|^2.
\]
Dividing by $Q_a^2\ge c|I|^2$ gives \eqref{eq:RD_l2weight}.

Write $\xi_I=\Theta_I^{1/2}Z_I$, where $Z_I\sim N(0,I_{|I|})$.  Then
\[
  \sum_{t\in I}w_{t,a}(\xi_t^2-1)
  =Z_I^\top B_aZ_I-\operatorname{tr}(B_a),
  \qquad
  B_a:=\Theta_I^{1/2}\operatorname{diag}(w_{t,a}:t\in I)\Theta_I^{1/2}.
\]
The trace equals one because
\[
  \operatorname{tr}(B_a)
  =
  \sum_{t\in I}w_{t,a}\Theta_n(t,t)
  =
  \sum_{t\in I}w_{t,a}
  =1.
\]
Using $2w_{t,a}w_{u,a}\le w_{t,a}^2+w_{u,a}^2$,
\eqref{eq:RD_rows}, and \eqref{eq:RD_l2weight},
\[
\begin{aligned}
  \|B_a\|_F^2
  &=
  \sum_{t,u\in I}w_{t,a}w_{u,a}\Theta_n(t,u)^2\\
  &\le
  \frac12\sum_{t,u\in I}
  \{w_{t,a}^2+w_{u,a}^2\}\Theta_n(t,u)^2\\
  &=
  \sum_{t\in I}w_{t,a}^2\sum_{u\in I}\Theta_n(t,u)^2\\
  &\le
  Ck_n\sum_{t\in I}w_{t,a}^2
  \le Ck_n\left(\frac1{|I|}+r_n^2\right).
\end{aligned}
\]
Furthermore, \eqref{eq:RD_rows} and \eqref{eq:RD_maxweight} give
\[
  \|B_a\|_{\mathrm{op}}
  \le
  Ck_n\left(\frac{\log n}{|I|}+r_n\right).
\]
The Gaussian quadratic-form inequality in
Proposition~\ref{prop:gaussian_quadratic} therefore yields, for $u\ge1$,
\[
  \Pbb^*\!\left(\left|\sum_{t\in I}w_{t,a}(\xi_t^2-1)\right|>C\left[\sqrt{k_nu\left(\frac1{|I|}+r_n^2\right)}+k_nu\left(\frac{\log n}{|I|}+r_n\right)\right]\right)\le2e^{-u}.
\]
By \eqref{eq:logpn_logn}, taking $u$ proportional to $\log n$ and applying
a union bound gives, after
absorbing $r_n\sqrt{k_n\log n}$ into $k_nr_n\log n$ because
$k_n=\Omega(1)$ and $\log n\to\infty$,
\begin{equation}\label{eq:RD_Rrate}
  \max_{a=(I,h)\in A}\left|\sum_{t\in I}w_{t,a}(\xi_t^2-1)\right|
  =O_{P^*}\!\left(
    \sqrt{\frac{k_n\log n}{L_{\min}}}
    +\frac{k_n(\log n)^2}{L_{\min}}
    +k_nr_n\log n
  \right)
\end{equation}
in probability over the data.  By \eqref{eq:logpn_logn}, \eqref{eq:RD_var},
and Proposition~\ref{prop:subgaussian_max} applied conditionally on the data,
\begin{equation}\label{eq:RD_Gmax}
  \max_a\left|\frac{S_a^*}{\sqrt{Q_a}}\right|=O_{P^*}(\sqrt{\log n})
\end{equation}
in probability over the data.  Equations \eqref{eq:RD_decomposition}--\eqref{eq:RD_Gmax}
and the implication
\[
  \frac{\log n}{L_{\min}}
  =
  o\!\left(\frac{k_n(\log n)^2}{L_{\min}}\right),
\]
which follows from $k_n=\Omega(1)$ and $\log n\to\infty$, prove
\eqref{eq:RD_rate}.

If $\gamma=0$, then
$T_{\max}^*=T_{0,\max}^*$ identically.  Suppose
$\gamma>0$.  Lemma~\ref{lem:power_denominator_perturbation}, applied with
$s_a=S_a^*$, $V_a=Q_a^*$, and $\widetilde V_a=Q_a$, gives
\begin{equation}\label{eq:RD_Tstar_perturb}
  |T_{\max}^*-T_{0,\max}^*|
  \le C\gamma\Delta_Q^*
  \max_a|T_{0,a}^*|.
\end{equation}
Moreover,
\[
  T_{0,a}^*
  =
  \left(\frac{Q_a}{|I|}\right)^{1/2-\gamma}\frac{S_a^*}{\sqrt{Q_a}}.
\]
Equations \eqref{eq:RD_Qlower}--\eqref{eq:RD_fourth}, the definition of
$r_n$, Assumption~\ref{ass:bounded_second_moment}, and
\eqref{eq:RD_growth} imply
with probability $1-o(1)$,
\[
  c
  \le
  \min_{a=(I,h)\in A}
  \left(\frac{Q_a}{|I|}\right)^{1/2-\gamma}
  \le
  \max_{a=(I,h)\in A}
  \left(\frac{Q_a}{|I|}\right)^{1/2-\gamma}
  \le
  C.
\]
Indeed,
\eqref{eq:RD_Qlower} gives $Q_a/|I|\ge c$.  For the upper bound,
\[
  Q_a
  =
  \sum_{t\in I}(d_{t,a}+\mu_{t,a})^2
  \le
  2\sum_{t\in I}d_{t,a}^2+2\nu_{a,\mu}.
\]
By Cauchy--Schwarz and \eqref{eq:RD_fourth},
$\sum_{t\in I}d_{t,a}^2\le |I|^{1/2}
(\sum_{t\in I}d_{t,a}^4)^{1/2}\le C|I|$.  Also
$\nu_{a,\mu}\le r_n\nu_a\le Cr_n|I|$, and \eqref{eq:RD_growth} with
$k_n=\Omega(1)$ and $\log n\to\infty$ gives $r_n=o(1)$.  Hence
$Q_a/|I|\le C$ uniformly.
Since $1/2-\gamma\in[0,1/2]$,
\[
  c\le
  \left(\frac{Q_a}{|I|}\right)^{1/2-\gamma}
  \le C
\]
uniformly.  Therefore, with probability $1-o(1)$ over data,
\begin{equation}\label{eq:RD_T0_var_bounds}
  c
  \le
  \min_{a\in A}\Var^*(T_{0,a}^*)
  \le
  \max_{a\in A}\Var^*(T_{0,a}^*)
  \le C,
\end{equation}
because
\[
  \Var^*(T_{0,a}^*)
  =
  \left(\frac{Q_a}{|I|}\right)^{1-2\gamma}
  \Var^*\!\left(\frac{S_a^*}{\sqrt{Q_a}}\right)
\]
and \eqref{eq:RD_var} holds.  On this event,
$(T_{0,a}^*)_{a\in A}$ is conditionally Gaussian with uniformly bounded
conditional $\psi_2$ norms.  Proposition~\ref{prop:subgaussian_max},
applied conditionally on the data with $m=p$, gives
\begin{equation}\label{eq:RD_T0max_rate}
  \max_a|T_{0,a}^*|
  =O_{P^*}(\sqrt{\log p})
  =O_{P^*}(\sqrt{\log n})
\end{equation}
in probability over the data, where the last bound uses
\eqref{eq:logpn_logn}.
The two conditions in \eqref{eq:RD_growth} imply
$\Delta_Q^*\log n=o_{P^*}(1)$ in probability over the data.  Combining
\eqref{eq:RD_growth}, \eqref{eq:RD_T0max_rate}, and
\eqref{eq:RD_Tstar_perturb} gives
\begin{equation}\label{eq:RD_Tstar_T0max_close}
  |T_{\max}^*-T_{0,\max}^*|
  =o_{P^*}((\log n)^{-1/2})
\end{equation}
in probability over the data.  Choose a deterministic sequence
$\eta_n=o((\log n)^{-1/2})$ such that
\begin{equation}\label{eq:RD_Tstar_close_prob}
  \Pbb^*(|T_{\max}^*-T_{0,\max}^*|>\eta_n)
  =
  o_p(1),
\end{equation}
which is possible by \eqref{eq:RD_Tstar_T0max_close}.
Proposition~\ref{prop:gaussian_anti_concentration},
applied conditionally on the data to $T_{0,\max}^*$ using
\eqref{eq:RD_T0_var_bounds}, gives
\begin{equation}\label{eq:RD_T0_anti_conc}
  \sup_{x\in\R}
  \Pbb^*(|T_{0,\max}^*-x|\le\eta_n)
  =
  o_p(1).
\end{equation}
Lemma~\ref{lem:centered_oracle_perturbation}, applied conditionally on the
data with $\mathsf P_n=\Pbb^*$, $X_n=T_{\max}^*$, and
$Y_n=T_{0,\max}^*$, together with \eqref{eq:RD_Tstar_close_prob} and
\eqref{eq:RD_T0_anti_conc}, proves \eqref{eq:RD_cdf}.
\end{proof}

\subsubsection{Gaussian approximation of the fixed-denominator bootstrap distribution}

Let
\[
  \omega_a
  :=\left(\frac{\nu_a}{|I|}\right)^{1/2-\gamma},
  \qquad
  \mathbb{T}_a^{\circ}
  :=\frac{S_a^\circ}{|I|^{1/2-\gamma}\nu_a^\gamma}
  =\omega_a\frac{S_a^\circ}{\sqrt{\nu_a}}.
\]
Define
\begin{equation}\label{eq:mathbbT_def}
  \begin{aligned}
  \mathbb{T}^{\circ}:=(\mathbb{T}_a^{\circ})_{a\in A},
  \qquad
  \mathbb{T}_{\max}^{\circ}:=\max_{a\in A}\mathbb{T}_a^{\circ},
  \qquad
  \bar{\mathbb{T}}_{\max}^{\circ}:=\max_{a\in A}|\mathbb{T}_a^{\circ}|.
  \end{aligned}
\end{equation}
Let
\[
  \Sigma_{\circ}:=\Cov(\mathbb{T}^{\circ}),
  \qquad
  G_{\circ}\sim N(0,\Sigma_{\circ}).
\]
This definition includes all serial cross-covariances in Regime~D.  Under
Regime~I independence it reduces to
\[
  (\Sigma_{\circ})_{ab}
  =
  \frac{\sum_{t\in I_a\cap I_b}
    \E[\d_t^{(h_a)}\d_t^{(h_b)}]}
  {|I_a|^{1/2-\gamma}|I_b|^{1/2-\gamma}
   \nu_a^\gamma \nu_b^\gamma}.
\]
Write
\[
  G_{\circ,\max}:=\max_{a\in A}G_{\circ,a},
  \qquad
  q_{\circ,n,u}
  :=\inf\{x\in\R:\Pbb(G_{\circ,\max}\le x)\ge u\}.
\]

The following lemma shows that the fixed-denominator bootstrap approximation
follows from consistency of its conditional covariance matrix.
For matrices indexed by $A$, write
\[
  \|\Sigma\|_{\max}:=\max_{a,b\in A}|\Sigma_{ab}|.
\]
\begin{lemma}\label{lem:fixed_bootstrap_covariance}
Let
\[
  \widehat\Sigma_{ab}
  :=
  \Cov^*(T_{0,a}^*,T_{0,b}^*),
  \qquad a,b\in A.
\]
Suppose that, for constants $0<c<C<\infty$,
\[
  c\le\min_{a\in A}(\Sigma_{\circ})_{aa}
  \le
  \max_{a\in A}(\Sigma_{\circ})_{aa}\le C,
\]
and, with probability $1-o(1)$ over data,
\[
  c\le\min_{a\in A}\widehat\Sigma_{aa},
  \qquad
  \max_{a\in A}\widehat\Sigma_{aa}\le C.
\]
If
\begin{equation}\label{eq:fixed_denominator_covariance_condition}
  \|\widehat\Sigma-\Sigma_{\circ}\|_{\max}
  =
  o_p\bigl((\log p)^{-2}\bigr),
\end{equation}
then
\[
  \sup_{x\in\R}
  \left|
  \Pbb^*(T_{0,\max}^{*}\le x)
  -
  \Pbb(G_{\circ,\max}\le x)
  \right|
  =
  o_p(1).
\]
\end{lemma}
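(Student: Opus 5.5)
Conditionally on the data, $(T_{0,a}^*)_{a\in A}$ is a centered Gaussian vector with covariance $\widehat\Sigma$, because the denominators $|I|^{1/2-\gamma}Q_a^\gamma$ are data-measurable and $S_a^*$ is linear in $\xi$. The claim is therefore a comparison of maxima of two centered Gaussian vectors in dimension $p$. The plan is to apply the Gaussian-to-Gaussian comparison inequality of \citet{ChernozhukovChetverikovKato2013} (their Lemma 3.1, or the sharper version in \citet{ChernozhukovChetverikovKato2017}). It states that if $X\sim N(0,\Sigma^X)$ and $Y\sim N(0,\Sigma^Y)$ have diagonal entries in $[c,C]$, then
\[
  \sup_{x\in\R}\left|\mathsf P\bigl(\max_a X_a\le x\bigr)-\mathsf P\bigl(\max_a Y_a\le x\bigr)\right|
  \le C'\,\Delta^{1/3}\bigl(1\vee\log(p/\Delta)\bigr)^{2/3},
  \qquad
  \Delta:=\|\Sigma^X-\Sigma^Y\|_{\max}.
\]
Here $C'$ depends only on $c,C$. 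I would invoke it as an auxiliary proposition alongside Propositions~\ref{prop:subgaussian_max} and \ref{prop:gaussian_anti_concentration}, in the form stated in Section~\ref{app:miscellany}.

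The steps are as follows.
\begin{enumerate}
  \item Fix the event $E_n$ on which $c\le\widehat\Sigma_{aa}\le C$ for all $a$; this has probability $1-o(1)$ by hypothesis.
  \item By \eqref{eq:fixed_denominator_covariance_condition} and Lemma~\ref{lem:deterministic_threshold}, applied to $Z_n=(\log p)^2\|\widehat\Sigma-\Sigma_\circ\|_{\max}$, choose a deterministic $\rho_n\downarrow0$ with $\Pbb\bigl(\|\widehat\Sigma-\Sigma_\circ\|_{\max}>\rho_n(\log p)^{-2}\bigr)\to0$. Let $E_n'$ be the intersection of $E_n$ with the complement of this event.
  \item On $E_n'$, apply the comparison inequality conditionally with $X=T_0^*$ and $Y=G_\circ$. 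Both diagonals are bounded in $[c,C]$: for $G_\circ$ by the deterministic assumption on $\Sigma_\circ$, and for $T_0^*$ on $E_n$.
\end{enumerate}

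The remaining step is the rate bookkeeping. Write $\Delta_n:=\|\widehat\Sigma-\Sigma_\circ\|_{\max}$, so $\Delta_n\le\rho_n(\log p)^{-2}$ on $E_n'$. Then $\Delta_n^{1/3}(\log p)^{2/3}\le\rho_n^{1/3}\to0$. For the $\log(1/\Delta_n)$ part I would note that $x\mapsto x^{1/3}\log(p/x)^{2/3}$ is increasing for small $x$, so it suffices to bound it at $x=\rho_n(\log p)^{-2}$. This gives
\[
  \rho_n^{1/3}(\log p)^{-2/3}\bigl(\log p+2\log\log p+\log(1/\rho_n)\bigr)^{2/3}.
\]
The terms $\rho_n^{1/3}(\log p)^{-2/3}(\log p)^{2/3}=\rho_n^{1/3}$ and $\rho_n^{1/3}(\log p)^{-2/3}(2\log\log p)^{2/3}\le C\rho_n^{1/3}$ both vanish. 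The term $\rho_n^{1/3}\log(1/\rho_n)^{2/3}(\log p)^{-2/3}$ also tends to zero, since $x^{1/3}\log(1/x)^{2/3}\to0$ as $x\to0$. Hence the supremum is bounded by a deterministic $o(1)$ on $E_n'$. Because $\Pbb(E_n')\to1$, the conclusion holds as $o_p(1)$.

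I expect this bookkeeping to be the main obstacle, in particular ensuring the exponent $(\log p)^{-2}$ in the hypothesis suffices for whichever version of the comparison lemma is cited. The older bound with $\Delta^{1/3}\log^{2/3}$ needs only $o((\log p)^{-2})$ as shown above, so that is the version I would cite. Measurability of the conditional statement is handled by working in outer probability, as the paper stipulates.
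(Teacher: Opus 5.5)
Your proposal is correct and matches the paper's proof essentially step for step. Both arguments use the conditional Gaussianity of $(T_{0,a}^*)_{a\in A}$ with covariance $\widehat\Sigma$, a deterministic threshold $\varepsilon_n$ with $\varepsilon_n(\log p)^2\to0$ and $\Pbb(\|\widehat\Sigma-\Sigma_\circ\|_{\max}>\varepsilon_n)\to0$, and the bound of Proposition~\ref{prop:gaussian_comparison} evaluated at $\varepsilon_n$ on an event of probability $1-o(1)$. Your explicit remark that $x\mapsto x^{1/3}\{1\vee\log(p/x)\}^{2/3}$ is increasing once $\log(p/x)>2$ justifies replacing $\Delta$ by the threshold, a step the paper uses only implicitly.
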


\begin{proof}
Let
\[
  \Delta_\Sigma:=\|\widehat\Sigma-\Sigma_{\circ}\|_{\max}.
\]
The covariance consistency assumption gives a deterministic sequence
$\varepsilon_n\downarrow0$ such that
\[
  \varepsilon_n(\log p)^2\to0,
  \qquad
  \Pbb(\Delta_\Sigma>\varepsilon_n)\to0.
\]
On the event where $\Delta_\Sigma\le\varepsilon_n$ and the displayed
variance bounds for $\widehat\Sigma$ hold, the conditional law of
$(T_{0,a}^*)_{a\in A}$ is centered Gaussian with covariance
$\widehat\Sigma$, and $G_{\circ}$ is centered Gaussian with covariance
$\Sigma_{\circ}$.  Proposition~\ref{prop:gaussian_comparison} yields
\[
  \sup_{x\in\R}
  \left|
  \Pbb^*(T_{0,\max}^{*}\le x)
  -
  \Pbb(G_{\circ,\max}\le x)
  \right|
  \le
  C\varepsilon_n^{1/3}
  \{1\vee\log(p/\varepsilon_n)\}^{2/3}.
\]
The right-hand side is $o(1)$ because
$\varepsilon_n(\log p)^2\to0$ and
$\varepsilon_n\{\log(1/\varepsilon_n)\}^2\to0$.  Since the event has
probability tending to one, the desired conclusion follows.
\end{proof}

The next lemma transfers oracle Gaussian approximation and bootstrap
approximation into asymptotic size control.
\begin{lemma}\label{lem:size_transfer}
Fix $\alpha\in(0,1)$.  Under \eqref{eq:null_standalone}, suppose that the
conditions of Lemmas~\ref{lem:random_denominator} and
\ref{lem:fixed_bootstrap_covariance} hold.  Suppose also that
\[
  \sup_{x\in\R}
  \left|
  \Pbb(T_{\max}^{\circ}\le x)
  -
  \Pbb(G_{\circ,\max}\le x)
  \right|
  =
  o(1).
\]
Then
\[
  \limsup_{n\to\infty}
  \Pbb\bigl(T_{\max}>c_{n,1-\alpha}^{*}\bigr)
  \le \alpha.
\]
\end{lemma}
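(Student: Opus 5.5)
The argument chains the three approximations already established and then uses anti-concentration of the Gaussian limit to handle the random critical value. By Lemma~\ref{lem:null_pathwise_dominance}, under \eqref{eq:null_standalone} we have $T_{\max}\le T_{\max}^\circ$ pathwise. It therefore suffices to show
\[
  \limsup_{n\to\infty}\Pbb\bigl(T_{\max}^\circ>c_{n,1-\alpha}^*\bigr)\le\alpha .
\]
Lemmas~\ref{lem:random_denominator} and \ref{lem:fixed_bootstrap_covariance} combine through the triangle inequality to give
\[
  \rho_n^*:=\sup_x\bigl|\Pbb^*(T_{\max}^*\le x)-\Pbb(G_{\circ,\max}\le x)\bigr|=o_p(1).
\]
By Lemma~\ref{lem:deterministic_threshold}, there is a deterministic $\varepsilon_n\downarrow0$ with $\Pbb(\rho_n^*>\varepsilon_n)\to0$.

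Next we compare the critical value with the Gaussian quantile. On the event $E_n:=\{\rho_n^*\le\varepsilon_n\}$, we have $\Pbb^*(T_{\max}^*\le x)\ge1-\alpha$ at $x=q_{\circ,n,1-\alpha+\varepsilon_n}$. The definition of $c^*_{n,1-\alpha}$ as an infimum then gives
\[
  c_{n,1-\alpha}^*\le q_{\circ,n,1-\alpha+\varepsilon_n},
\]
and, symmetrically, $c_{n,1-\alpha}^*\ge q_{\circ,n,1-\alpha-\varepsilon_n}$, although only the lower bound matters for the rejection probability. So on $E_n$,
\[
  \{T^\circ_{\max}>c^*_{n,1-\alpha}\}\subset\{T^\circ_{\max}>q_{\circ,n,1-\alpha-\varepsilon_n}\}.
\]
By the assumed Kolmogorov closeness of $T^\circ_{\max}$ and $G_{\circ,\max}$,
\[
  \Pbb\bigl(T_{\max}^\circ>q_{\circ,n,1-\alpha-\varepsilon_n}\bigr)
  \le \Pbb\bigl(G_{\circ,\max}>q_{\circ,n,1-\alpha-\varepsilon_n}\bigr)+o(1).
\]

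The main obstacle is to show that the last probability is at most $\alpha+\varepsilon_n+o(1)$. This needs $G_{\circ,\max}$ to have no atom at its quantiles, i.e.\ a continuous distribution function, so that $\Pbb(G_{\circ,\max}\le q_{\circ,n,u})=u$. I would get this from Proposition~\ref{prop:gaussian_anti_concentration}, using the diagonal bounds $c\le(\Sigma_\circ)_{aa}\le C$ assumed in Lemma~\ref{lem:fixed_bootstrap_covariance}. These bounds make the law of the maximum of a nondegenerate Gaussian vector continuous; more precisely, its Lévy concentration over windows of width $\eta$ is at most $C\eta\sqrt{\log p}$, which in particular rules out atoms. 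This yields $\Pbb(G_{\circ,\max}>q_{\circ,n,1-\alpha-\varepsilon_n})=\alpha+\varepsilon_n$.

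Finally, adding $\Pbb(E_n^c)=o(1)$ gives
\[
  \Pbb\bigl(T_{\max}>c^*_{n,1-\alpha}\bigr)\le\alpha+\varepsilon_n+o(1),
\]
and taking $\limsup$ completes the argument. One subtlety is that $c^*_{n,1-\alpha}$ is data-dependent and correlated with $T^\circ_{\max}$. This causes no difficulty, because the bound is applied on the data event $E_n$, where $c^*_{n,1-\alpha}$ is sandwiched between deterministic Gaussian quantiles.
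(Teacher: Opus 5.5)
Your proof is correct and follows essentially the same route as the paper: pathwise dominance $T_{\max}\le T_{\max}^{\circ}$, a deterministic threshold for the bootstrap-to-Gaussian Kolmogorov distance, the lower bound $c^{*}_{n,1-\alpha}\ge q_{\circ,n,1-\alpha-\varepsilon_n}$ on the good event, and then the assumed closeness of $T^{\circ}_{\max}$ and $G_{\circ,\max}$. The step you call the main obstacle is not actually needed. The paper only uses $\mathbb{P}(G_{\circ,\max}\le q_{\circ,n,u})\ge u$, which holds for every distribution function by right-continuity, so $\mathbb{P}(G_{\circ,\max}>q_{\circ,n,1-\alpha-\varepsilon_n})\le\alpha+\varepsilon_n$ follows without any anti-concentration; your appeal to it is valid, just unnecessary.
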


\begin{proof}
Lemmas~\ref{lem:random_denominator} and
\ref{lem:fixed_bootstrap_covariance} imply
\[
  Z_n:=
  \sup_{x\in\R}
  \left|
  \Pbb^*(T_{\max}^{*}\le x)
  -
  \Pbb(G_{\circ,\max}\le x)
  \right|
  =
  o_p(1).
\]
Lemma~\ref{lem:deterministic_threshold} gives a deterministic sequence
$\rho_n\downarrow0$ such that $\Pbb(Z_n>\rho_n)\to0$.  Let
\[
  F_n^*(x):=\Pbb^*\bigl(T_{\max}^{*}\le x\bigr),
  \qquad
  E_n:=\left\{
  \sup_{x\in\R}\bigl|F_n^*(x)-\Pbb(G_{\circ,\max}\le x)\bigr|\le\rho_n
  \right\}.
\]
Then $\Pbb(E_n^c)\to0$.  On $E_n$,
$F_n^*(c_{n,1-\alpha}^{*})\ge1-\alpha$, and hence
\[
  \Pbb(G_{\circ,\max}\le c_{n,1-\alpha}^{*})
  \ge1-\alpha-\rho_n.
\]
For all sufficiently large $n$, $\rho_n<1-\alpha$, so
\[
  c_{n,1-\alpha}^{*}
  \ge q_{\circ,n,1-\alpha-\rho_n}
  \qquad\text{on }E_n.
\]
Lemma~\ref{lem:null_pathwise_dominance} gives
$T_{\max}\le T_{\max}^{\circ}$ under \eqref{eq:null_standalone}.  Therefore
\[
  \Pbb\bigl(T_{\max}>c_{n,1-\alpha}^{*}\bigr)
  \le
  \Pbb\bigl(T_{\max}^{\circ}>q_{\circ,n,1-\alpha-\rho_n}\bigr)
  +\Pbb(E_n^c).
\]
Let
\[
  \Delta_{\mathrm{cdf},n}:=
  \sup_{x\in\R}
  \left|
  \Pbb(T_{\max}^{\circ}\le x)
  -
  \Pbb(G_{\circ,\max}\le x)
  \right|.
\]
By assumption, $\Delta_{\mathrm{cdf},n}\to0$, so
\[
  \Pbb\bigl(T_{\max}^{\circ}>q_{\circ,n,1-\alpha-\rho_n}\bigr)
  \le
  \Pbb\bigl(G_{\circ,\max}>q_{\circ,n,1-\alpha-\rho_n}\bigr)
  +\Delta_{\mathrm{cdf},n}
  \le
  \alpha+\rho_n+\Delta_{\mathrm{cdf},n}.
\]
Taking the limit superior proves the claim.
\end{proof}

\subsubsection{Roadmap of the proofs}\label{subsub:roadmap_proofs}

The final goal in each regime is
\[
  \limsup_{n\to\infty}
  \Pbb\bigl(T_{\max}>c_{n,1-\alpha}^{*}\bigr)
  \le\alpha
  \qquad\forall\alpha\in(0,1).
\]
By Lemma~\ref{lem:size_transfer}, it is enough to show
\[
  \sup_{x\in\R}
  \left|
  \Pbb(T_{\max}^{\circ}\le x)
  -
  \Pbb(G_{\circ,\max}\le x)
  \right|
  \to0
\]
and to verify the conditions of Lemmas~\ref{lem:random_denominator} and
\ref{lem:fixed_bootstrap_covariance}.  The display is reduced by
Lemma~\ref{lem:centered_oracle_perturbation}
to proving
\begin{equation}\label{eq:roadmap_centered_gaussian}
  \sup_{x\in\R}
  \left|
  \Pbb(\mathbb T_{\max}^{\circ}\le x)
  -
  \Pbb(G_{\circ,\max}\le x)
  \right|
  \to0.
\end{equation}
The bootstrap part is reduced by Lemmas~\ref{lem:random_denominator} and
\ref{lem:fixed_bootstrap_covariance} to proving the covariance consistency
condition
\begin{equation}\label{eq:roadmap_bootstrap_covariance}
  \|\widehat\Sigma-\Sigma_{\circ}\|_{\max}
  =
  o_p\bigl((\log p)^{-2}\bigr).
\end{equation}
All remaining work in Regimes I and D is therefore to verify the centered
Gaussian approximation \eqref{eq:roadmap_centered_gaussian}, the covariance
consistency condition \eqref{eq:roadmap_bootstrap_covariance}, and the
denominator and anti-concentration conditions required by Lemmas
\ref{lem:centered_oracle_perturbation}, \ref{lem:random_denominator}, and
\ref{lem:fixed_bootstrap_covariance}.

\subsection{Regime I Validity}

\subsubsection{Additional assumptions for Regime I}

We consider the following additional assumptions for Regime I.

\begin{appassumption}\label{ass:independence}
The vectors $D_1,\dots,D_n$ are independent across $t$.
\end{appassumption}

\begin{appassumption}\label{ass:intervalwise_mean}
The interval-wise mean is asymptotically flat:
\[
r_n=o\bigl((\log n)^{-2}\bigr).
\]
\end{appassumption}

\begin{appassumption}\label{ass:Lmin}
The minimal interval length grows at least polynomially in $n$:
\[
  L_{\min}=\omega\bigl((\log n)^7\bigr).
\]
\end{appassumption}

\begin{remark}
Assumption \ref{ass:intervalwise_mean} constrains the \emph{shape} of the mean on each scanned interval, not its level. If $\mu_t^{(h)}\equiv -c_h$ is constant in $t$, then $r_n=0$ for every $n$ regardless of the magnitude of $c_h$.
\end{remark}

\subsubsection{Main theorem}

\begin{theorem}\label{I:thm:main}
Fix the exponent $\gamma\in[0,1/2]$.  Assume Assumptions
\ref{ass:subgaussian}, \ref{ass:bounded_second_moment},
\ref{ass:length_ratio}, and
\ref{ass:independence}--\ref{ass:Lmin}.
Let $c_{n,1-\alpha}^{*}$ be the conditional $(1-\alpha)$-quantile
in \eqref{eq:implemented_bootstrap}.  Then the Gaussian multiplier
bootstrap with $\Theta_n=I_n$ is asymptotically conservative: under
\eqref{eq:null_standalone},
\[
\limsup_{n\to\infty}
\Pbb\bigl(T_{\max}>c_{n,1-\alpha}^{*}\bigr)
\le \alpha
\qquad\forall \alpha\in(0,1).
\]
\end{theorem}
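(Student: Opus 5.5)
The plan follows the roadmap of Section~\ref{subsub:roadmap_proofs} with $\Theta_n=I_n$, so that $k_n=1$ in Lemma~\ref{lem:random_denominator}. By Lemma~\ref{lem:size_transfer}, it suffices to establish three things. The first is the oracle Gaussian approximation \eqref{eq:roadmap_centered_gaussian} for $\mathbb T^\circ_{\max}$. The second is the transfer from $T^\circ_{\max}$ to $\mathbb T^\circ_{\max}$ through Lemma~\ref{lem:centered_oracle_perturbation}. The third is that the hypotheses of Lemmas~\ref{lem:random_denominator} and \ref{lem:fixed_bootstrap_covariance} hold, including \eqref{eq:roadmap_bootstrap_covariance}. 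Throughout, Lemma~\ref{lem:null_pathwise_dominance} handles the composite null, so all analysis is on centered quantities.

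\textbf{Oracle Gaussian approximation.} Write $\mathbb T_a^\circ=\sum_{t=1}^n X_{t,a}$ with $X_{t,a}=\omega_a\mathbf 1\{t\in I\}\d_t^{(h)}/\sqrt{\nu_a}$. These are independent across $t$ and sub-Gaussian with $\psi_2$ norm at most $C L_{\min}^{-1/2}$, using Assumptions~\ref{ass:bounded_second_moment}--\ref{ass:length_ratio}; here $\omega_a$ is bounded above and below. The variances $\Var(\mathbb T_a^\circ)=\omega_a^2$ are bounded away from $0$ and $\infty$. The summands are not of the usual form $n^{-1/2}\sum$ with bounded entries, since each coordinate uses only $|I|$ terms. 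I would therefore rescale: $X_{t,a}=n^{-1/2}Y_{t,a}$ with $\|Y_{t,a}\|_{\psi_2}\le C(n/L_{\min})^{1/2}=:B_n$. The high-dimensional CLT for independent, non-identically distributed vectors \citep{ChernozhukovChetverikovKato2017} then gives an error of order $(B_n^2\log^7(pn)/n)^{1/6}=(\log^7 n/L_{\min})^{1/6}$. This tends to $0$ exactly by Assumption~\ref{ass:Lmin}, which explains the exponent $7$. The CLT's condition on average fourth moments, $n^{-1}\sum_t\E Y_{t,a}^4\lesssim B_n^2$, also holds.

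\textbf{Denominator transfer.} I would apply Lemma~\ref{lem:denominator_decomposition}. The term $\nu_{a,\mu}/\nu_a\le r_n$ is handled directly. For the term $q_a^\circ/\nu_a$, I would use Bernstein's inequality for the sub-exponential variables $\d_t^2-\E\d_t^2$, together with $|I|\bar\d_a^2=O_p(\log n)$ uniformly, and a union bound over $p\le n^2$ pairs. This gives $O_p(\sqrt{\log n/L_{\min}}+\log n/L_{\min})$. For the cross term, $C_a/\nu_a$ is a weighted sub-Gaussian sum with weights $\mu_{t,a}$, so it is $O_p(\sqrt{r_n\log n/L_{\min}})$ uniformly. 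Hence $\Delta_Q=O_p(\sqrt{\log n/L_{\min}}+r_n)$. Since $\bar{\mathbb T}^\circ_{\max}=O_p(\sqrt{\log n})$ by the sub-Gaussian maximal inequality, I choose $\eta_n=o((\log n)^{-1/2})$ with $\Delta_Q\sqrt{\log n}=o_p(\eta_n)$. This is possible because Assumption~\ref{ass:intervalwise_mean} gives $r_n\log n=o((\log n)^{-1})$, and $\log^2 n/L_{\min}\to0$. Gaussian anti-concentration (Proposition~\ref{prop:gaussian_anti_concentration}) for $G_{\circ,\max}$, combined with the oracle CLT just established, yields $\sup_x\Pbb(|\mathbb T^\circ_{\max}-x|\le\eta_n)\le C\eta_n\sqrt{\log p}+o(1)\to0$.

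\textbf{Bootstrap conditions.} For Lemma~\ref{lem:random_denominator} with $k_n=1$, the row conditions \eqref{eq:RD_rows} are trivial. The bound \eqref{eq:RD_Qlower} follows from $\Delta_Q\le1/2$. The bound \eqref{eq:RD_fourth} follows from a sub-Gaussian maximum bound and Bernstein's inequality for $\d^4$; a union bound over $O(n^2)$ intervals absorbs the heavier tail once $L_{\min}\gg\log^2 n$. The bound \eqref{eq:RD_var} holds because $\Var^*(S_a^*)/Q_a=1$ exactly. The growth condition \eqref{eq:RD_growth} follows from Assumptions~\ref{ass:intervalwise_mean} and \ref{ass:Lmin}.

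\textbf{Covariance consistency (expected main obstacle).} We have
\[
\widehat\Sigma_{ab}=\frac{\sum_{t\in I_a\cap I_b}(D_t^{(h_a)}-\bar D_a)(D_t^{(h_b)}-\bar D_b)}{|I_a|^{1/2-\gamma}|I_b|^{1/2-\gamma}Q_a^\gamma Q_b^\gamma}.
\]
I would first replace $Q_a$ by $\nu_a$, at a cost of $O_p(\Delta_Q)$. I would then expand the numerator into four parts. The first is $\sum_{I_a\cap I_b}(\d\d'-\E\d\d')$; it is a product of sub-Gaussians, hence sub-exponential, so Bernstein's inequality with a union bound over $p^2\le n^4$ pairs gives $O_p(\sqrt{L\log n}+\log^2 n)$. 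The second consists of the mean-correction terms $|I\cap I'|\bar\d_a\bar\d_b$, which are $O_p(\log n)$. The third consists of the cross terms with $\mu_{t,a}$, which are $O_p(\sqrt{r_nL\log n})$. The fourth is $\sum\mu_{t,a}\mu_{t,b}\le r_nCL$. Dividing by $L$ gives
\[
\|\widehat\Sigma-\Sigma_\circ\|_{\max}=O_p\bigl(\sqrt{\log n/L_{\min}}+\log^2 n/L_{\min}+r_n\bigr).
\]
Multiplied by $(\log p)^2$, this tends to $0$, because $L_{\min}\gg\log^5 n$ and $r_n\log^2 n\to0$. The delicate points are the uniformity over pairs, the fact that the intervals differ so the normalizations differ, and the need to check the variance bounds on $\widehat\Sigma_{aa}$; I expect this step to require the most care. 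Lemma~\ref{lem:size_transfer} then concludes the proof.
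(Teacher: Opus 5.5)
Your proposal is correct and follows the paper's proof essentially step for step. It uses the same reduction through Lemma~\ref{lem:size_transfer}, the independent high-dimensional CLT with $B_n\asymp\sqrt{n/L_{\min}}$ and error $((\log n)^7/L_{\min})^{1/6}$, the bound $\Delta_Q=O_p(\sqrt{\log n/L_{\min}}+r_n)$ via Lemma~\ref{lem:denominator_decomposition}, the check of Lemma~\ref{lem:random_denominator} with $k_n=1$ and $\mathrm{Var}^*(S_a^*)=Q_a$, and the same four-term expansion of the bootstrap covariance numerator.

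There are two harmless slips, both covered by Assumption~\ref{ass:Lmin}. First, choosing $\eta_n=o((\log n)^{-1/2})$ with $\Delta_Q\sqrt{\log n}=o_p(\eta_n)$ needs $(\log n)^3/L_{\min}\to0$, not $(\log n)^2/L_{\min}\to0$. Second, $(D_t^{(h)}-\mu_t^{(h)})^4$ is only sub-Weibull of order $1/2$, so the fourth-moment bound needs a heavy-tailed Bernstein inequality such as Proposition~\ref{prop:alpha_mixing_concentration} rather than the standard one.
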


Following the roadmap in Section~\ref{subsub:roadmap_proofs}, it remains to prove
\eqref{eq:roadmap_centered_gaussian} and
\eqref{eq:roadmap_bootstrap_covariance} and to verify the conditions in
Lemmas~\ref{lem:centered_oracle_perturbation}--\ref{lem:fixed_bootstrap_covariance}.
In what follows, Section~\ref{subsub:I_preliminary_bounds} gives preliminary
results, Sections~\ref{subsub:I_centered_gaussian} and
\ref{subsub:I_bootstrap_covariance} prove
\eqref{eq:roadmap_centered_gaussian} and
\eqref{eq:roadmap_bootstrap_covariance}, respectively, and
Section~\ref{subsub:I_verify_reduction}
completes the proof of Theorem~\ref{I:thm:main} by verifying the conditions
in Lemmas~\ref{lem:centered_oracle_perturbation}--\ref{lem:fixed_bootstrap_covariance}.
For Sections~\ref{subsub:I_centered_gaussian}--\ref{subsub:I_verify_reduction},
we assume Assumptions
\ref{ass:subgaussian}, \ref{ass:bounded_second_moment},
\ref{ass:length_ratio}, and
\ref{ass:independence}--\ref{ass:Lmin} all hold.

\subsubsection{Preliminary bounds}\label{subsub:I_preliminary_bounds}

The three lemmas in this subsection supply the Regime I inputs needed for
the roadmap.  Lemma~\ref{I:lem:Uw} controls
$\bar{\mathbb{T}}_{\max}^{\circ}$ and the intersection sums
$S_b^\circ/\sqrt{\nu_b}$; these bounds control sample-centering terms in
Lemma~\ref{I:lem:Delta}, in the covariance expansion for
\eqref{eq:roadmap_bootstrap_covariance}, and in the verification of
Lemma~\ref{lem:centered_oracle_perturbation}.  Lemma~\ref{I:lem:Delta}
gives $\Delta_Q=o_p(1)$; this is used to compare $T_{\max}^{\circ}$ with
$\mathbb{T}_{\max}^{\circ}$, to lower bound $Q_a$ in
Lemma~\ref{lem:random_denominator}, and to control the denominator
perturbation in \eqref{eq:roadmap_bootstrap_covariance}.
Lemma~\ref{I:lem:fourth} verifies the maximum and fourth-moment condition
\eqref{eq:RD_fourth} required by Lemma~\ref{lem:random_denominator}.

\begin{lemma}\label{I:lem:Uw}
Under Assumptions \ref{ass:subgaussian}, \ref{ass:bounded_second_moment}, and \ref{ass:independence},
\[
  \bar{\mathbb{T}}_{\max}^{\circ}
  =\Op\bigl(\sqrt{\log n}\bigr).
\]
Moreover, if
\[
\mathcal K:=\{I\cap J\st I,J\in\mathcal I,\ I\cap J\neq \varnothing\},
\]
then
\[
\max_{\substack{b=(K,h):\\ K\in\mathcal K,\ h\in[H]}}
\frac{|S_b^{\circ}|}{\sqrt{\nu_b}}
=
\Op\bigl(\sqrt{\log n}\bigr).
\]
\end{lemma}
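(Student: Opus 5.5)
Both claims are maximal inequalities for sums of independent, centered, sub-Gaussian variables. The plan is to bound every coordinate uniformly in the $\psi_2$ norm, then take a union bound over at most polynomially many coordinates.

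\textbf{Single coordinate.} Fix $a=(I,h)$. Under Assumption~\ref{ass:independence}, $S_a^\circ=\sum_{t\in I}\d_t^{(h)}$ is a sum of independent, mean-zero variables. By Assumption~\ref{ass:subgaussian}, each summand has $\|\d_t^{(h)}\|_{\psi_2}\le B_0$. The standard Hoeffding-type bound for sub-Gaussian sums then gives $\|S_a^\circ\|_{\psi_2}^2\le C\sum_{t\in I}\|\d_t^{(h)}\|_{\psi_2}^2\le CB_0^2|I|$. Assumption~\ref{ass:bounded_second_moment} gives $\nu_a\ge\sigma_-^2|I|$. Hence $\|S_a^\circ/\sqrt{\nu_a}\|_{\psi_2}\le CB_0/\sigma_-$, uniformly in $a$.

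\textbf{Scaling factor and first claim.} Next, $\omega_a=(\nu_a/|I|)^{1/2-\gamma}$ satisfies $\omega_a\le\max(1,\sigma_+^2)^{1/2}$, because $\nu_a/|I|\le\sigma_+^2$ and the exponent lies in $[0,1/2]$. Consequently $\mathbb T_a^\circ=\omega_aS_a^\circ/\sqrt{\nu_a}$ also has uniformly bounded $\psi_2$ norm. We then apply Proposition~\ref{prop:subgaussian_max}, the maximal inequality for $m$ variables with bounded $\psi_2$ norms, with $m=p$. This gives $\bar{\mathbb T}^\circ_{\max}=O_p(\sqrt{\log p})$, and by \eqref{eq:logpn_logn} this is $O_p(\sqrt{\log n})$. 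If that proposition is stated only as an expectation bound, Markov's inequality finishes. Alternatively, a direct union bound $p\cdot 2e^{-cx^2}$ with $x=C'\sqrt{\log n}$ also works.

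\textbf{Second claim.} Every $K\in\mathcal K$ is itself a nonempty interval, the intersection of two intervals. The single-coordinate computation did not use the length constraint $|I|\ge L_{\min}$; it only used nonemptiness, through $\nu_b\ge\sigma_-^2|K|>0$. So $\|S_b^\circ/\sqrt{\nu_b}\|_{\psi_2}\le C$ uniformly over $b=(K,h)$. The index set has cardinality at most $H\cdot n^2$, since an interval in $[n]$ is determined by its two endpoints. Treating $H$ as fixed, or at most polynomial in $n$ as implicit in \eqref{eq:logpn_logn}, its logarithm is $O(\log n)$. The same maximal inequality then yields the $O_p(\sqrt{\log n})$ bound.

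\textbf{Main obstacle.} The only delicate point is uniformity. The sub-Gaussian constant must not depend on $|I|$ or on $h$, and the count of intersection sets must be controlled. The first is handled by normalizing by $\sqrt{\nu_b}\asymp\sqrt{|K|}$ through the lower variance bound. The second is handled by noting that intersections of intervals are intervals, so no new combinatorial growth appears.
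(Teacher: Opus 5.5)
Your proposal is correct and follows the paper's proof: a uniform $\psi_2$ bound on $S_a^\circ/\sqrt{\nu_a}$ from the Hoeffding-type sum bound and $\nu_a\ge\sigma_-^2|I|$, boundedness of $\omega_a$, and a union bound over polynomially many coordinates, with the same bound reused for the intersection intervals. The only difference is cosmetic and lies in the count for the second claim: you bound the index set by $Hn^2$ using interval endpoints, whereas the paper uses $|\mathcal K|\le|\mathcal I|^2$ to get $H|\mathcal K|\le p^2$; either count yields a logarithm of order $\log n$.
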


\begin{proof}
Fix $a=(I,h)$. By Assumptions \ref{ass:subgaussian}, \ref{ass:independence}, and Proposition \ref{prop:sum_subgaussian},
\[
\Bigl\|S_a^{\circ}\Bigr\|_{\psi_2}
\le C B_0\sqrt{|I|}
\]
for a constant $C<\infty$ depending only on universal constants. Since
\[
\nu_a\ge \sigma_-^2 |I|,
\]
we obtain the uniform bound
\[
\left\|\frac{S_a^{\circ}}{\sqrt{\nu_a}}\right\|_{\psi_2}
\le \frac{CB_0}{\sigma_-}.
\]
Hence, for some constants $c_1,c_2>0$,
\[
\Pbb\!\left(\left|\frac{S_a^{\circ}}{\sqrt{\nu_a}}\right|>u\right)
\le c_1 e^{-c_2u^2}
\qquad(u\ge 0)
\]
uniformly in $a\in A$.  Since
$\mathbb{T}_a^{\circ}=\omega_aS_a^\circ/\sqrt{\nu_a}$ and
Assumption~\ref{ass:bounded_second_moment} bounds $\omega_a$ uniformly
above, there exist constants $c_1',c_2'>0$ such that
\[
\Pbb\bigl(|\mathbb{T}_a^{\circ}|>u\bigr)
\le c_1' e^{-c_2'u^2}
\qquad(u\ge 0)
\]
uniformly in $a\in A$.  A union bound over the $p$ coordinates and the
choice $u=M\sqrt{\log n}$, with $M$ large enough, prove the first claim.

For the second claim, note that $|\mathcal K|\le |\mathcal I|^2$, hence
\[
H|\mathcal K|\le H|\mathcal I|^2=\frac{p^2}{H}\le p^2.
\]
For each $b=(K,h)$ with $K\in\mathcal K$ and $h\in[H]$, Assumptions
\ref{ass:subgaussian} and \ref{ass:independence}, Proposition
\ref{prop:sum_subgaussian}, and $\nu_b\ge\sigma_-^2|K|$ give
$\|S_b^{\circ}/\sqrt{\nu_b}\|_{\psi_2}\le C$.  Applying the union
bound over $H|\mathcal K|\le p^2$ variables gives the same
$\sqrt{\log n}$ order.
\end{proof}

The next lemma shows that the observed denominator uniformly approximates
the oracle variance scale in Regime~I.
\begin{lemma}\label{I:lem:Delta}
Under Assumptions \ref{ass:subgaussian}, \ref{ass:bounded_second_moment}, and \ref{ass:independence}--\ref{ass:Lmin},
\[
\Delta_Q:=\max_{a\in A}\left|\frac{Q_a}{\nu_a}-1\right|
=
\Op\!\left(\sqrt{\frac{\log n}{L_{\min}}}+r_n\right).
\]
In particular, $\Delta_Q=o_p(1)$ and $\Pbb(\Delta_Q>1/2)\to 0$.
\end{lemma}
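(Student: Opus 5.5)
The plan is to start from the exact identity in Lemma~\ref{lem:denominator_decomposition},
\[
  \frac{Q_a-\nu_a}{\nu_a}
  =\frac{q_a^\circ}{\nu_a}+2\frac{C_a}{\nu_a}+\frac{\nu_{a,\mu}}{\nu_a},
\]
and bound the three terms uniformly over $a\in A$.

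\textbf{The deterministic term.} By definition \eqref{eq:rn}, $\max_a\nu_{a,\mu}/\nu_a=r_n$ exactly, so this term contributes $r_n$.

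\textbf{The centered fluctuation.} Write $q_a^\circ=\sum_{t\in I}\{\d_t^{(h)2}-\E\d_t^{(h)2}\}-|I|\bar\d_a^2$.

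\emph{First piece.} Under Assumptions~\ref{ass:subgaussian} and \ref{ass:independence}, the summands $\d_t^{(h)2}-\E\d_t^{(h)2}$ are independent, centered and sub-exponential with $\psi_1$ norm bounded by $CB_0^2$. Bernstein's inequality then gives
\[
  \Pbb\Bigl(\Bigl|\sum_{t\in I}(\cdot)\Bigr|>C\bigl(\sqrt{|I|u}+u\bigr)\Bigr)\le2e^{-u}.
\]
Take $u=M\log n$ and union bound over $p\le n^2$ coordinates, using \eqref{eq:logpn_logn}. Dividing by $\nu_a\ge\sigma_-^2|I|$ (Assumption~\ref{ass:bounded_second_moment}) gives the uniform bound
\[
  O_p\bigl(\sqrt{\log n/L_{\min}}+\log n/L_{\min}\bigr).
\]
The second summand is dominated by the first because $L_{\min}\gg\log n$ (Assumption~\ref{ass:Lmin}).

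\emph{Second piece.} Here $|I|\bar\d_a^2/\nu_a=(S_a^\circ/\sqrt{\nu_a})^2/|I|$. By Lemma~\ref{I:lem:Uw}, the maximum of $|S_a^\circ|/\sqrt{\nu_a}$ is $O_p(\sqrt{\log n})$, so this piece is $O_p(\log n/L_{\min})$, which is again of smaller order.

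\textbf{The cross term.} I expect $C_a$ to be the main obstacle, since it mixes the random and deterministic parts. Because $\sum_{t\in I}\mu_{t,a}=0$, we have $C_a=\sum_{t\in I}\d_t^{(h)}\mu_{t,a}$; the $\bar\d_a$ part drops out. This is a weighted sum of independent sub-Gaussian variables with $\psi_2$ norm at most $CB_0\sqrt{\nu_{a,\mu}}$ by Proposition~\ref{prop:sum_subgaussian}. A union bound then gives, uniformly in $a$,
\[
  |C_a|\le C\sqrt{\nu_{a,\mu}\log n}.
\]
Dividing by $\nu_a$,
\[
  \frac{|C_a|}{\nu_a}\le C\sqrt{\frac{\nu_{a,\mu}}{\nu_a}}\sqrt{\frac{\log n}{\nu_a}}\le C\sqrt{r_n}\sqrt{\frac{\log n}{L_{\min}}}.
\]
By AM--GM this is at most $C\bigl(r_n+\log n/L_{\min}\bigr)$, which is absorbed into the stated rate. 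If independence is instead used directly via Cauchy--Schwarz, the bound $|C_a|\le\sqrt{q_a\nu_{a,\mu}}$ together with the bound on $q_a$ above gives the same conclusion, so I would fall back on that if the union-bound constant needs care.

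\textbf{Conclusion.} Summing the three pieces gives
\[
  \Delta_Q=O_p\Bigl(\sqrt{\log n/L_{\min}}+r_n\Bigr).
\]
Assumptions~\ref{ass:intervalwise_mean} and \ref{ass:Lmin} make both terms $o(1)$, so $\Delta_Q=o_p(1)$ and hence $\Pbb(\Delta_Q>1/2)\to0$.
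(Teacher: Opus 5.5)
Your proposal is correct and is essentially the paper's proof: you use the same decomposition via Lemma~\ref{lem:denominator_decomposition}, Bernstein plus a union bound for the centered squares, and Lemma~\ref{I:lem:Uw} for the $|I|\bar d_a^2$ piece; you also use the same sub-Gaussian weighted-sum bound for $C_a=\sum_{t\in I}\mu_{t,a}d_t^{(h)}$, which gives $O_p(\sqrt{r_n\log n/L_{\min}})$ (you absorb it by AM--GM, the paper by $r_n=o(1)$, and both work). One correction to your fallback remark: the Cauchy--Schwarz bound $|C_a|\le\sqrt{q_a\nu_{a,\mu}}$ uses no independence and yields only $|C_a|/\nu_a=O_p(\sqrt{r_n})$, which is strictly weaker than the stated rate (it is the rate the paper settles for in Lemma~\ref{D:lem:Delta}) and would not suffice for the covariance step in Section~\ref{subsub:I_bootstrap_covariance}, because $r_n=o((\log n)^{-2})$ does not force $\sqrt{r_n}(\log n)^2\to0$, so the independence-based sub-Gaussian bound is the one you actually need.
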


\begin{proof}
By Lemma~\ref{lem:denominator_decomposition}, it suffices to control the
three terms in the decomposition of $(Q_a-\nu_a)/\nu_a$ uniformly
over $a\in A$.

\medskip
\noindent\emph{Step 1: control of $\frac{q_a^\circ}{\nu_a}$.}
By \eqref{eq:denom_Adef},
\[
  q_a^\circ
  =
  \sum_{t\in I}\Bigl(\d_t^{(h)2}-\E[\d_t^{(h)2}]\Bigr)
  -
  |I|\bar\d_a^2.
\]
By Assumption \ref{ass:subgaussian} and Proposition
\ref{prop:product_subgaussian},
$\d_t^{(h)2}-\E[\d_t^{(h)2}]$ is uniformly sub-exponential.
Hence Bernstein's inequality yields, for every fixed $a=(I,h)$ and every
$u\ge 1$,
\[
\Pbb\!\left(
  \left|\sum_{t\in I}
    \Bigl(\d_t^{(h)2}-\E[\d_t^{(h)2}]\Bigr)\right|
  >C\bigl(\sqrt{|I|u}+u\bigr)\right)
\le 2e^{-u}
\]
for some constant $C<\infty$.  Since $p=|A|\le n^2$, the choice
$u=x+2\log n$ gives $pe^{-u}\le e^{-x}$.  Taking this $u$ and applying a
union bound over $a\in A$ gives, for every $x\ge1$, with probability at
least $1-2e^{-x}$,
\[
  \max_{a=(I,h)\in A}
  \frac{
    \left|\sum_{t\in I}
      \bigl(\d_t^{(h)2}-\E[\d_t^{(h)2}]\bigr)\right|
  }{\nu_a}
  \le C\max_{a\in A}
  \left\{
    \sqrt{\frac{x+2\log n}{|I_a|}}
    +\frac{x+2\log n}{|I_a|}
  \right\}.
\]
Since $|I_a|\ge L_{\min}$, this proves
\[
  \max_{a=(I,h)\in A}
  \frac{
    \left|\sum_{t\in I}
      \bigl(\d_t^{(h)2}-\E[\d_t^{(h)2}]\bigr)\right|
  }{\nu_a}
  =\Op\!\left(
    \sqrt{\frac{\log n}{L_{\min}}}
    +\frac{\log n}{L_{\min}}
  \right).
\]
Assumption~\ref{ass:bounded_second_moment} gives
\[
  \frac{\nu_a}{|I|}\ge \sigma_-^2,
  \qquad
  \omega_a
  =
  \left(\frac{\nu_a}{|I|}\right)^{1/2-\gamma}
  \ge(\sigma_-^2)^{1/2-\gamma},
  \qquad
  \omega_a^{-1}\le C.
\]
Since $S_a^\circ/\sqrt{\nu_a}=\omega_a^{-1}\mathbb T_a^\circ$,
Lemma~\ref{I:lem:Uw} gives
\[
\max_{a\in A}\frac{(S_a^{\circ})^2}{|I_a|\nu_a}
\le
\frac{C(\bar{\mathbb{T}}_{\max}^{\circ})^2}{L_{\min}}
=
\Op\!\left(\frac{\log n}{L_{\min}}\right).
\]
Combining the last two displays with the decomposition of $q_a^\circ$ at
the start of this step gives
\[
  \max_{a\in A}\frac{|q_a^\circ|}{\nu_a}
  =
  \Op\!\left(
    \sqrt{\frac{\log n}{L_{\min}}}
    +\frac{\log n}{L_{\min}}
  \right).
\]

\medskip
\noindent\emph{Step 2: control of $\frac{C_a}{\nu_a}$.}
Since $\sum_{t\in I}\mu_{t,a}=0$, the random variable $C_a$ in
\eqref{eq:denom_Adef} satisfies
$C_a=\sum_{t\in I}\mu_{t,a}\d_t^{(h)}$.  Conditionally on the deterministic
coefficients $\mu_{t,a}$, it is a weighted sum of the independent centered
sub-Gaussian variables $\{\d_t^{(h)}:t\in I\}$. By Proposition
\ref{prop:sum_subgaussian},
\[
\|C_a\|_{\psi_2}
\le C B_0\left(\sum_{t\in I}\mu_{t,a}^2\right)^{1/2}
=
C B_0 \nu_{a,\mu}^{1/2}.
\]
Therefore, for every fixed $a$ and every $u\ge 0$,
\[
\Pbb\bigl(|C_a|>C\nu_{a,\mu}^{1/2}\sqrt{u}\bigr)\le 2e^{-u}.
\]
A union bound over $a\in A$ yields directly
\[
  \max_{a\in A}\frac{|C_a|}{\nu_a}
  =\Op\!\left(
    \max_{a\in A}\frac{\sqrt{\nu_{a,\mu}\log n}}{\nu_a}
  \right).
\]
Since $\nu_{a,\mu}\le r_n \nu_a$ and $\nu_a\ge \sigma_-^2L_{\min}$,
\[
\max_{a\in A}\frac{|C_a|}{\nu_a}
=
\Op\!\left(\sqrt{\frac{r_n\log n}{L_{\min}}}\right).
\]

\medskip
\noindent\emph{Step 3: control of $\frac{\nu_{a,\mu}}{\nu_a}$.}
By definition \eqref{eq:rn} of $r_n$,
\[
\max_{a\in A}\frac{\nu_{a,\mu}}{\nu_a}=r_n.
\]

Assumptions~\ref{ass:intervalwise_mean} and \ref{ass:Lmin} imply
\[
  \frac{\log n}{L_{\min}}
  =
  o\!\left(\sqrt{\frac{\log n}{L_{\min}}}\right),
  \qquad
  \sqrt{\frac{r_n\log n}{L_{\min}}}
  =
  O\!\left(\sqrt{\frac{\log n}{L_{\min}}}\right).
\]
Collecting the three bounds proves
\[
\Delta_Q
=
\Op\!\left(
\sqrt{\frac{\log n}{L_{\min}}}
+
r_n
\right),
\]
as claimed.
\end{proof}

The next lemma gives uniform maximum and fourth-moment bounds for the
centered variables $\d_t^{(h)}=D_t^{(h)}-\mu_t^{(h)}$ in Regime~I.
\begin{lemma}
\label{I:lem:fourth}
Under Assumptions~\ref{ass:subgaussian}, \ref{ass:independence}, and
\ref{ass:Lmin},
\[
  \max_{a=(I,h)\in A}\max_{t\in I}(\d_t^{(h)})^2
  =O_p(\log n),
  \qquad
  \max_{a=(I,h)\in A}\frac1{|I|}\sum_{t\in I}(\d_t^{(h)})^4
  =O_p(1).
\]
\end{lemma}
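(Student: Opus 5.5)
The two claims need different tools. The first is a union bound with sub-Gaussian tails. The second is a Bernstein-type concentration for the fourth powers on each interval, followed by a union bound over intervals.

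\emph{First claim.} The set $\{(t,h):t\in I,\ (I,h)\in A\}$ is contained in $[n]\times[H]$, so it suffices to bound $\max_{t\in[n],h\in[H]}(\d_t^{(h)})^2$. By Assumption~\ref{ass:subgaussian},
\[
  \Pbb\bigl(|\d_t^{(h)}|>u\bigr)\le 2e^{-u^2/(CB_0^2)}
\]
uniformly in $(t,h)$. Taking $u=M\sqrt{\log n}$ and applying a union bound over the $nH\le n^2$ pairs gives a total probability of at most $2n^{2-M^2/(CB_0^2)}$. This tends to zero once $M$ is large, so the maximum squared deviation is $O_p(\log n)$. This step does not use independence; the same argument works with $H$ bounded, or polynomial in $n$, as is implicit in $p\le n^2$.

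\emph{Second claim.} Write
\[
  \frac1{|I|}\sum_{t\in I}(\d_t^{(h)})^4
  =\frac1{|I|}\sum_{t\in I}\E[(\d_t^{(h)})^4]
  +\frac1{|I|}\sum_{t\in I}\Bigl\{(\d_t^{(h)})^4-\E[(\d_t^{(h)})^4]\Bigr\}.
\]
The first term is bounded by $CB_0^4$ by the sub-Gaussian moment bounds. For the centered sum, the summands are independent across $t$ by Assumption~\ref{ass:independence}. Each $(\d_t^{(h)})^4$ has Orlicz $\psi_{1/2}$ norm bounded by $CB_0^4$, since $\|X^4\|_{\psi_{1/2}}=\|X\|_{\psi_2}^4$. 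This heavier-than-exponential tail is the main obstacle, because ordinary Bernstein does not apply directly. I would handle it by truncation.

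On the event $\mathcal E=\{\max_{t,h}(\d_t^{(h)})^2\le M\log n\}$, which has probability tending to one by the first part, each $(\d_t^{(h)})^4$ equals its truncation $Y_t:=(\d_t^{(h)})^4\wedge (M\log n)^2$. The truncation bias satisfies
\[
  \bigl|\E[Y_t]-\E[(\d_t^{(h)})^4]\bigr|\le \E\bigl[(\d_t^{(h)})^4\1\{|\d_t^{(h)}|>\sqrt{M\log n}\}\bigr]=o(1)
\]
uniformly, by Cauchy--Schwarz and the sub-Gaussian tail. The variables $Y_t$ are bounded by $b=(M\log n)^2$ and have variance at most $\E[(\d_t^{(h)})^8]\le C$. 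Bernstein's inequality for bounded variables therefore gives, for each $a=(I,h)$,
\[
  \Pbb\Bigl(\Bigl|\sum_{t\in I}(Y_t-\E Y_t)\Bigr|>C\bigl(\sqrt{|I|u}+b\,u\bigr)\Bigr)\le 2e^{-u}.
\]
Take $u=3\log n$ and apply a union bound over the $p\le n^2$ pairs $a\in A$. After dividing by $|I|\ge L_{\min}$, the deviation is uniformly
\[
  O\!\left(\sqrt{\frac{\log n}{L_{\min}}}+\frac{(\log n)^3}{L_{\min}}\right)=o(1)
\]
by Assumption~\ref{ass:Lmin}, since $L_{\min}=\omega((\log n)^7)$. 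Combining this with the bounded mean, the vanishing bias and $\Pbb(\mathcal E)\to1$ gives $\max_a|I|^{-1}\sum_{t\in I}(\d_t^{(h)})^4=O_p(1)$.

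An alternative to truncation is a Bernstein-type inequality for $\psi_\alpha$ variables with $\alpha=1/2$, such as the one in Kuchibhotla and Chakrabortty. Its union-bound remainder is of order $(\log n)^{2}\cdot\log n/L_{\min}$, which is also $o(1)$ under Assumption~\ref{ass:Lmin}.
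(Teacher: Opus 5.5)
Your proof is correct. The first claim is handled as in the paper. For the second claim you take a different route.

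The paper does not truncate. It applies the Bernstein inequality of Merlev\`ede, Peligrad and Rio for semi-exponential tails (Proposition~\ref{prop:alpha_mixing_concentration}) directly to the centered fourth powers. Under independence the mixing coefficients are trivially zero and the tail exponent is $1/2$, so the effective exponent is $1/3$. Taking the threshold equal to a constant multiple of the interval length $m$ gives a tail of order $(m+1)e^{-cm^{1/3}}+e^{-cm}$. The union bound over $A$ then closes because $\log n=o(L_{\min}^{1/3})$.

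You instead use the first claim to restrict to a high-probability event on which each fourth power equals its truncation at $(M\log n)^2$. You then invoke the classical Bernstein inequality for bounded summands. This needs only textbook tools and gives an explicit uniform rate $\sqrt{\log n/L_{\min}}+(\log n)^3/L_{\min}$ for the centered interval averages. That is a uniform law of large numbers, sharper than the paper's conclusion that the centered sum is $O(|I|)$. Your truncation-bias step can be dropped: truncation only lowers the mean, and you only need an upper bound.

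Both routes require the same growth, $L_{\min}\gg(\log n)^3$. What the paper's route buys is uniformity with Regime~D. The same inequality, now with genuine geometric mixing, proves Lemma~\ref{D:lem:fourth}. Your truncation argument would need a Bernstein inequality for bounded mixing sequences to carry over to that case.
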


\begin{proof}
Assumption~\ref{ass:subgaussian}, \eqref{eq:logpn_logn}, the union bound over
$nH\le np$, and Proposition~\ref{prop:subgaussian_max} give
\[
  \max_{t\in[n],\,h\in[H]}|\d_t^{(h)}|
  =O_p(\sqrt{\log n}).
\]
This proves the first conclusion.

Uniform sub-Gaussianity implies
\[
  \sup_{t,h}\E[(\d_t^{(h)})^4]<\infty,
  \qquad
  \sup_{t,h}\Pbb\bigl((\d_t^{(h)})^4>u\bigr)
  \le2e^{-c\sqrt u}.
\]
Let $X_t^{(h)}=(\d_t^{(h)})^4-\E[(\d_t^{(h)})^4]$.  For the variance term
$V$ in \eqref{eq:mpr-V}, where
$\varphi_M(x)=(x\wedge M)\vee(-M)$,
\[
\begin{aligned}
  \Cov\!\left(\varphi_M(X_s^{(h)}),\varphi_M(X_t^{(h)})\right)
  &=0
  \quad(s\ne t),\\
  \sup_{M,t,h}\Var\!\left(\varphi_M(X_t^{(h)})\right)
  &\le \sup_{t,h}\E[(X_t^{(h)})^2]
  \le C\sup_{t,h}\E[(\d_t^{(h)})^8]<\infty.
\end{aligned}
\]
Hence $V=O(1)$ uniformly.  Proposition~\ref{prop:alpha_mixing_concentration}
may therefore be applied with $\gamma_1=1$ and $\gamma_2=1/2$.  Taking
$y=Cm$ in \eqref{eq:mpr-strong-mixing} gives, for an interval of length $m$,
\begin{equation}\label{eq:I_fourth_tail}
  \Pbb\left(\left|\sum_{t\in I}X_t^{(h)}\right|>Cm\right)
  \le C(m+1)e^{-cm^{1/3}}+Ce^{-cm}.
\end{equation}
Assumption~\ref{ass:Lmin} gives $(\log n)^7/L_{\min}\to0$, which implies
$\log n=o(L_{\min}^{1/3})$.  A union bound over $A$ in
\eqref{eq:I_fourth_tail} proves the result.
\end{proof}

\subsubsection{Proof of the centered Gaussian approximation}\label{subsub:I_centered_gaussian}

For $t\in[n]$ and $a=(I,h)\in A$, define
\begin{equation}\label{eq:Zta}
Z_{t,a}
:=\frac{\sqrt n\,\1\{t\in I\}\d_t^{(h)}}
{|I|^{1/2-\gamma}\nu_a^\gamma}.
\end{equation}
Let $Z_t:=(Z_{t,a})_{a\in A}$.  By the definition of
$\mathbb{T}^{\circ}$ in \eqref{eq:mathbbT_def},
\[
\mathbb{T}^{\circ}=\frac{1}{\sqrt n}\sum_{t=1}^n Z_t.
\]
Set
\[
  B_n^I
  :=1\vee B_0(1\vee\sigma_-^{-1})
  \sqrt{\frac n{L_{\min}}}.
\]
Assumption~\ref{ass:bounded_second_moment} gives
\[
  \sigma_-^2|I|
  \le
  \nu_a
  =
  \sum_{u\in I}\E[\d_u^{(h)2}]
  \le
  \sigma_+^2|I|.
\]
In particular,
\[
  |I|^{1/2-\gamma}\nu_a^\gamma
  \ge
  |I|^{1/2-\gamma}(\sigma_-^2|I|)^\gamma
  =
  \sigma_-^{2\gamma}|I|^{1/2},
  \qquad
  \sigma_-^{-2\gamma}\le 1\vee\sigma_-^{-1},
\]
because $0\le\gamma\le1/2$.  Therefore
\[
\|Z_{t,a}\|_{\psi_2}
\le
\frac{B_0\sqrt n}{|I|^{1/2-\gamma}\nu_a^\gamma}
\le
B_0(1\vee\sigma_-^{-1})\sqrt{\frac n{|I|}}
\le B_n^I
\qquad\forall t,a.
\]
The standard embedding of Orlicz norms gives
$\|Z_{t,a}\|_{\psi_1}\le
C\|Z_{t,a}\|_{\psi_2}\le CB_n^I$.
Thus the condition
$\max_{t,a}\|Z_{t,a}\|_{\psi_1}\le B$ in
Proposition~\ref{prop:max_gaussian_CLT} holds with $B=CB_n^I$, and
\[
  \frac{(CB_n^I)^2\log^7(pn)}{n}
  =
  C^2\frac{(B_n^I)^2\log^7(pn)}{n}.
\]
Finally,
\[
\Var\!\left(\frac{1}{\sqrt n}\sum_{t=1}^n
Z_{t,a}\right)
=\Var(\mathbb{T}_a^{\circ})
=\left(\frac{\nu_a}{|I|}\right)^{1-2\gamma}
\qquad\forall a\in A.
\]
The same bound from Assumption~\ref{ass:bounded_second_moment} gives
\[
  (\sigma_-^2)^{1-2\gamma}
  \le
  \left(\frac{\nu_a}{|I|}\right)^{1-2\gamma}
  \le
  (\sigma_+^2)^{1-2\gamma}
  \qquad\forall a\in A,
\]
so the coordinate variances are uniformly bounded above and away from zero.
By Assumption \ref{ass:Lmin},
\[
\frac{(B_n^I)^2(\log(pn))^7}{n}
\lesssim
\frac{(\log n)^7}{L_{\min}}
\to 0,
\]
where $\log(pn)\asymp\log n$ by \eqref{eq:logpn_logn} and
the final convergence is Assumption~\ref{ass:Lmin}.  The conditions of
Proposition~\ref{prop:max_gaussian_CLT} are therefore satisfied. Thus,
\begin{equation}\label{eq:I_oracle_U_cdf}
\sup_{x\in\R}
\Bigl|
\Pbb\bigl(\mathbb{T}_{\max}^{\circ}\le x\bigr)
-
\Pbb\bigl(G_{\circ,\max}\le x\bigr)
\Bigr|
\to 0.
\end{equation}
This is \eqref{eq:roadmap_centered_gaussian}.

\subsubsection{Proof of the fixed-denominator bootstrap covariance}\label{subsub:I_bootstrap_covariance}

For $a=(I,h)$ and $b=(J,g)$, set $K:=I\cap J$.
Recall from \eqref{eq:dta_def} and \eqref{eq:muta_def} that
$D_t^{(h)}-\bar D_a=d_{t,a}+\mu_{t,a}$ for $t\in I$.
By direct calculation,
\[
\widehat\Sigma_{ab}
=
\frac{\widehat\Omega_{ab}}
{|I_a|^{1/2-\gamma}|I_b|^{1/2-\gamma}Q_a^\gamma
Q_b^\gamma}.
\]
Here
\[
\begin{aligned}
\widehat\Omega_{ab}
&:=\sum_{t\in K}
\left(D_t^{(h)}-\bar D_a\right)
\left(D_t^{(g)}-\bar D_b\right).
\end{aligned}
\]
The target covariance entry is
\[
(\Sigma_{\circ})_{ab}
=
\frac{\Omega_{ab}}
{|I_a|^{1/2-\gamma}|I_b|^{1/2-\gamma}\nu_a^\gamma \nu_b^\gamma},
\qquad
\Omega_{ab}:=\sum_{t\in K}\E\bigl[\d_t^{(h)}\d_t^{(g)}\bigr].
\]
Hence
\begin{align}
\widehat\Sigma_{ab}-(\Sigma_{\circ})_{ab}
&=
\frac{\widehat\Omega_{ab}-\Omega_{ab}}
{|I_a|^{1/2-\gamma}|I_b|^{1/2-\gamma}Q_a^\gamma
Q_b^\gamma}
\label{I:eq:covdec1}
\\
&\quad+
(\Sigma_{\circ})_{ab}
\left\{
\left(\frac{Q_a}{\nu_a}\right)^{-\gamma}
\left(\frac{Q_b}{\nu_b}\right)^{-\gamma}-1
\right\}.
\label{I:eq:covdec2}
\end{align}
We bound the two terms uniformly.

\medskip
\noindent\emph{Step 1: bounding \eqref{I:eq:covdec2}.}
On $\{\Delta_Q\le1/2\}$, for every $a,b\in A$,
\[
  \frac{Q_a}{\nu_a},\frac{Q_b}{\nu_b}\in[1/2,3/2],
  \qquad
  \left|
  \left(\frac{Q_a}{\nu_a}\right)^{-\gamma}
  \left(\frac{Q_b}{\nu_b}\right)^{-\gamma}-1
  \right|
  \le C\left(
  \left|\frac{Q_a}{\nu_a}-1\right|
  +\left|\frac{Q_b}{\nu_b}-1\right|
  \right)
  \le C\Delta_Q .
\]
Cauchy--Schwarz and Assumption~\ref{ass:bounded_second_moment} give
\[
\begin{aligned}
|\Omega_{ab}|
&\le
\left(\sum_{t\in K}\E[\d_t^{(h)2}]\right)^{1/2}
\left(\sum_{t\in K}\E[\d_t^{(g)2}]\right)^{1/2}
\le \sqrt{\nu_a\nu_b},\\
|(\Sigma_{\circ})_{ab}|
&\le
\left(\frac{\nu_a}{|I|}\right)^{1/2-\gamma}
\left(\frac{\nu_b}{|J|}\right)^{1/2-\gamma}
\le C .
\end{aligned}
\]
Therefore
\begin{equation}\label{eq:I_cov_denominator_perturb}
\max_{a,b}
\left|
(\Sigma_{\circ})_{ab}
\left\{
\left(\frac{Q_a}{\nu_a}\right)^{-\gamma}
\left(\frac{Q_b}{\nu_b}\right)^{-\gamma}-1
\right\}
\right|
\le C\Delta_Q
=
\Op\!\left(\sqrt{\frac{\log n}{L_{\min}}}+r_n\right),
\end{equation}
by Lemma \ref{I:lem:Delta}.

\medskip
\noindent\emph{Step 2: bounding \eqref{I:eq:covdec1}.}
Expand
\begin{equation}\label{eq:Omegahat_ab}
\widehat\Omega_{ab}
=
\sum_{t\in K} d_{t,a}d_{t,b}
+
\sum_{t\in K} d_{t,a}\mu_{t,b}
+
\sum_{t\in K} d_{t,b}\mu_{t,a}
+
\sum_{t\in K} \mu_{t,a}\mu_{t,b}.
\end{equation}
We analyze the four pieces.

\medskip
\noindent\underline{Bounding the first term of \eqref{eq:Omegahat_ab}.}
A direct expansion gives
\begin{align*}
\sum_{t\in K} d_{t,a}d_{t,b}-\Omega_{ab}
&=
\hat{\omega}_{ab}^{\circ}
-
\bar\d_b S_{(K,h)}^{\circ}
-
\bar\d_a S_{(K,g)}^{\circ}
+
|K|\bar\d_a\bar\d_b,
\end{align*}
where
\[
\hat{\omega}_{ab}^{\circ}:=
\sum_{t\in K}\Bigl(\d_t^{(h)}\d_t^{(g)}-
\E[\d_t^{(h)}\d_t^{(g)}]\Bigr).
\]
By Assumption \ref{ass:subgaussian}, $\d_t^{(h)}$ and $\d_t^{(g)}$ are both
sub-Gaussian.  By Proposition \ref{prop:product_subgaussian},
$\d_t^{(h)}\d_t^{(g)}$ is sub-exponential; Bernstein's inequality and a
union bound over the $p^2$ pairs $(a,b)$ yield
\[
\max_{a,b}
\frac{|\hat{\omega}_{ab}^{\circ}|}{\sqrt{\nu_a \nu_b}}
=
\Op\!\left(
\sqrt{\frac{\log n}{L_{\min}}}
+
\frac{\log n}{L_{\min}}
\right).
\]
Also, Assumption~\ref{ass:bounded_second_moment} gives
$\omega_b^{-1}\le C$, and
$S_b^\circ/\sqrt{\nu_b}=\omega_b^{-1}\mathbb T_b^\circ$.  Hence
Lemma~\ref{I:lem:Uw} gives
\[
\max_b\frac{|\bar\d_b|}{\sqrt{\nu_b}}
=
\max_b\frac{|S_b^{\circ}|}{|I_b|\sqrt{\nu_b}}
\le
\frac{C\bar{\mathbb{T}}_{\max}^{\circ}}{L_{\min}}
=
\Op\!\left(\frac{\sqrt{\log n}}{L_{\min}}\right),
\]
and
\[
\max_{a,b}\frac{|S_{(K,h)}^{\circ}|}{\sqrt{\nu_a}}
=
\Op\bigl(\sqrt{\log n}\bigr),
\quad
\max_{a,b}\frac{|S_{(K,g)}^{\circ}|}{\sqrt{\nu_b}}
=
\Op\bigl(\sqrt{\log n}\bigr),
\]
so the two sample-mean terms are both
\[
\Op\!\left(\frac{\log n}{L_{\min}}\right).
\]
Finally,
\[
\max_{a,b}
\frac{|K|\,|\bar\d_a\bar\d_b|}{\sqrt{\nu_a \nu_b}}
\le
\max_{a,b}
\frac{|K|}{|I_a||I_b|}
\frac{|S_a^{\circ}|}{\sqrt{\nu_a}}
\frac{|S_b^{\circ}|}{\sqrt{\nu_b}}
\le
\frac{C(\bar{\mathbb{T}}_{\max}^{\circ})^2}{L_{\min}}
=
\Op\!\left(\frac{\log n}{L_{\min}}\right),
\]
where the last line follows from Lemma \ref{I:lem:Uw}.
Therefore
\begin{equation}
\max_{a,b}
\frac{\left|\sum_{t\in K} d_{t,a}d_{t,b}-\Omega_{ab}\right|}{\sqrt{\nu_a \nu_b}}
=
\Op\!\left(
\sqrt{\frac{\log n}{L_{\min}}}
+
\frac{\log n}{L_{\min}}
\right).
\label{I:eq:centerednum}
\end{equation}

\medskip
\noindent\underline{Bounding the second and third terms of \eqref{eq:Omegahat_ab}.}
Consider $\sum_{t\in K}d_{t,a}\mu_{t,b}$; the other term is symmetric.
Then
\begin{align*}
\sum_{t\in K}d_{t,a}\mu_{t,b}
&=
\sum_{t\in K}\d_t^{(h)}\mu_{t,b}
-
\bar\d_a\sum_{t\in K}\mu_{t,b}.
\end{align*}
The first term is a weighted sum of the independent centered sub-Gaussian
variables $\{\d_t^{(h)}:t\in K\}$ with coefficient vector
$(\mu_{t,b})_{t\in K}$, hence is
sub-Gaussian with scale at most
$CB_0(\sum_{t\in K}\mu_{t,b}^2)^{1/2}\le CB_0\nu_{b,\mu}^{1/2}$.
For fixed $(a,b)$ and $u\ge1$, \eqref{eq:rn} and
$\nu_a\ge\sigma_-^2L_{\min}$ give
\[
\Pbb\!\left(
\frac{\left|\sum_{t\in K}\d_t^{(h)}\mu_{t,b}\right|}
{\sqrt{\nu_a\nu_b}}
>
C\sqrt{\frac{r_nu}{L_{\min}}}
\right)
\le 2e^{-u}.
\]
Since there are at most $p^2\le n^4$ pairs $(a,b)$, taking
$u=x+4\log n$ gives
\[
\max_{a,b}
\frac{\left|\sum_{t\in K}\d_t^{(h)}\mu_{t,b}\right|}{\sqrt{\nu_a \nu_b}}
=
\Op\!\left(\sqrt{\frac{r_n\log n}{L_{\min}}}\right).
\]
For the second term, Cauchy--Schwarz yields
\[
\left|\sum_{t\in K}\mu_{t,b}\right|
\le \sqrt{|K|\nu_{b,\mu}}
\le \sqrt{|I_a|\nu_{b,\mu}}.
\]
Assumption~\ref{ass:bounded_second_moment} gives $\omega_a^{-1}\le C$ and
$S_a^\circ/\sqrt{\nu_a}=\omega_a^{-1}\mathbb T_a^\circ$.  Thus
Lemma~\ref{I:lem:Uw} gives
\[
\frac{|\bar\d_a|\,\left|\sum_{t\in K}\mu_{t,b}\right|}{\sqrt{\nu_a \nu_b}}
\le
\frac{|S_a^{\circ}|}{\sqrt{\nu_a}}
\frac{1}{\sqrt{|I_a|}}
\sqrt{\frac{\nu_{b,\mu}}{\nu_b}}
\le
C\bar{\mathbb{T}}_{\max}^{\circ}\sqrt{\frac{r_n}{L_{\min}}}
=
\Op\!\left(\sqrt{\frac{r_n\log n}{L_{\min}}}\right).
\]
Hence
\begin{equation}
\max_{a,b}
\frac{\left|\sum_{t\in K}d_{t,a}\mu_{t,b}\right|}{\sqrt{\nu_a \nu_b}}
=
\Op\!\left(\sqrt{\frac{r_n\log n}{L_{\min}}}\right),
\label{I:eq:mixed1}
\end{equation}
and the same bound holds for $\sum_{t\in K}d_{t,b}\mu_{t,a}$.

\medskip
\noindent\underline{Bounding the last term of \eqref{eq:Omegahat_ab}.}
By Cauchy--Schwarz and the definition of $r_n$,
\begin{equation}\label{eq:Omegahat_last_term}
\max_{a,b}
\frac{\left|\sum_{t\in K}\mu_{t,a}\mu_{t,b}\right|}{\sqrt{\nu_a \nu_b}}
\le
\max_{a,b}
\sqrt{\frac{\sum_{t\in K}\mu_{t,a}^2}{\nu_a}}
\sqrt{\frac{\sum_{t\in K}\mu_{t,b}^2}{\nu_b}}
\le r_n.
\end{equation}

Combining \eqref{I:eq:centerednum}, \eqref{I:eq:mixed1}, and \eqref{eq:Omegahat_last_term} yields
\[
\max_{a,b}
\frac{|\widehat\Omega_{ab}-\Omega_{ab}|}{\sqrt{\nu_a \nu_b}}
=
\Op\!\left(\sqrt{\frac{\log n}{L_{\min}}}+r_n\right).
\]
On $\{\Delta_Q\le1/2\}$, Assumption~\ref{ass:length_ratio} gives
\[
\max_{a,b}
\frac{|\widehat\Omega_{ab}-\Omega_{ab}|}
{|I_a|^{1/2-\gamma}|I_b|^{1/2-\gamma}Q_a^\gamma
Q_b^\gamma}
\le
C\max_{a,b}\frac{|\widehat\Omega_{ab}-\Omega_{ab}|}{\sqrt{\nu_a \nu_b}}
=
\Op\!\left(\sqrt{\frac{\log n}{L_{\min}}}+r_n\right).
\]
Together with \eqref{eq:I_cov_denominator_perturb} and the decomposition
\eqref{I:eq:covdec1}--\eqref{I:eq:covdec2}, this proves
\[
\|\widehat\Sigma-\Sigma_{\circ}\|_{\max}
=
\Op\!\left(\sqrt{\frac{\log n}{L_{\min}}}+r_n\right).
\]
Finally, Assumptions \ref{ass:intervalwise_mean} and \ref{ass:Lmin} imply
\[
  \left(\sqrt{\frac{\log n}{L_{\min}}}+r_n\right)(\log n)^2
  =
  \frac{(\log n)^{5/2}}{\sqrt{L_{\min}}}
  +
  r_n(\log n)^2
  \to0.
\]
Thus,
\[
\|\widehat\Sigma-\Sigma_{\circ}\|_{\max}
=o_p\bigl((\log n)^{-2}\bigr).
\]

\subsubsection{Verifying conditions of the reduction lemmas}\label{subsub:I_verify_reduction}

We verify the hypotheses of the three reduction lemmas used in the roadmap
for Regime I.
We first verify Lemma~\ref{lem:centered_oracle_perturbation}.  For
$\gamma=0$, the two denominators are identical.  Uniformly over
$\gamma\in[0,1/2]$, Lemmas~\ref{I:lem:Delta} and \ref{I:lem:Uw} give
\[
C\gamma\Delta_Q\bar{\mathbb{T}}_{\max}^{\circ}
=
\Op\!\left(
  \left(\sqrt{\frac{\log n}{L_{\min}}}+r_n\right)\sqrt{\log n}
\right).
\]
Assumptions~\ref{ass:intervalwise_mean} and \ref{ass:Lmin} imply
\[
\left(\sqrt{\frac{\log n}{L_{\min}}}+r_n\right)\log n
=
\frac{(\log n)^{3/2}}{\sqrt{L_{\min}}}
+r_n\log n
\to 0.
\]
Choose a deterministic sequence $M_n\uparrow\infty$ sufficiently slowly that, with
\[
  \eta_n:=M_n\left(\sqrt{\frac{\log n}{L_{\min}}}+r_n\right)\sqrt{\log n},
\]
\[
  \eta_n\to0,
  \qquad
  \eta_n\sqrt{1\vee\log(p/\eta_n)}\to0.
\]
Then
\begin{equation}\label{eq:I_oracle_perturb_prob}
  \Pbb(\Delta_Q>1/2)
  +
  \Pbb(C\gamma\Delta_Q\bar{\mathbb{T}}_{\max}^{\circ}>\eta_n)\to0.
\end{equation}
Since $\Sigma_{\circ}=\Cov(\mathbb{T}^{\circ})$ and
$\mathbb{T}^{\circ}$ is defined in \eqref{eq:mathbbT_def}, Regime I
independence gives, for $a=(I,h)$,
\[
  (\Sigma_{\circ})_{aa}
  =
  \Var(\mathbb{T}_a^{\circ})
  =
  \left(\frac{\nu_a}{|I|}\right)^{1-2\gamma}.
\]
Assumption~\ref{ass:bounded_second_moment} therefore yields
\[
  (\sigma_-^2)^{1-2\gamma}
  \le
  (\Sigma_{\circ})_{aa}
  \le
  (\sigma_+^2)^{1-2\gamma}
  \qquad\forall a\in A.
\]
Proposition~\ref{prop:gaussian_anti_concentration} therefore implies
\begin{equation}\label{eq:I_oracle_anti_conc}
\sup_{x\in\R}\Pbb\bigl(|G_{\circ,\max}-x|\le \eta_n\bigr)\to 0.
\end{equation}
For every $x\in\R$, \eqref{eq:roadmap_centered_gaussian} gives
\[
  \Pbb\bigl(|\mathbb{T}_{\max}^{\circ}-x|\le\eta_n\bigr)
  \le
  \Pbb\bigl(|G_{\circ,\max}-x|\le\eta_n\bigr)
  +
  2\sup_{z\in\R}
  \left|
  \Pbb\bigl(\mathbb{T}_{\max}^{\circ}\le z\bigr)
  -
  \Pbb\bigl(G_{\circ,\max}\le z\bigr)
  \right|.
\]
Thus \eqref{eq:roadmap_centered_gaussian} and
\eqref{eq:I_oracle_anti_conc} imply
\[
  \sup_{x\in\R}
  \Pbb\bigl(|\mathbb{T}_{\max}^{\circ}-x|\le\eta_n\bigr)
  \to0.
\]
Together with \eqref{eq:I_oracle_perturb_prob}, this verifies the
conditions of Lemma~\ref{lem:centered_oracle_perturbation}.

We next verify Lemma~\ref{lem:random_denominator} with $k_n=1$.  Lemma
\ref{I:lem:Delta} and $\nu_a\ge\sigma_-^2|I|$ give
$Q_a\ge c|I|$ uniformly with probability $1-o(1)$ over data.
Lemma~\ref{I:lem:fourth} verifies \eqref{eq:RD_fourth}.  Since
$\Theta_n=I_n$,
\[
  \Var^*\!\left(\frac{S_a^*}{\sqrt{Q_a}}\right)
  =
  \frac{\sum_{t\in I}\left(D_t^{(h)}-\bar D_a\right)^2}
  {Q_a}=1,
\]
which verifies \eqref{eq:RD_var}, and \eqref{eq:RD_rows} holds with
$k_n=1$.  Finally, Assumptions~\ref{ass:intervalwise_mean} and
\ref{ass:Lmin} give
\[
  \frac{(\log n)^3}{L_{\min}}\to0,
  \qquad
  r_n(\log n)^2\to0.
\]
Thus \eqref{eq:RD_growth} holds.

It remains to verify Lemma~\ref{lem:fixed_bootstrap_covariance}.  Under
Regime I independence,
$(\Sigma_{\circ})_{aa}=(\nu_a/|I|)^{1-2\gamma}$, so Assumption
\ref{ass:bounded_second_moment} gives diagonal entries bounded above and
away from zero.  Equation~\eqref{eq:roadmap_bootstrap_covariance} gives
$\|\widehat\Sigma-\Sigma_{\circ}\|_{\max}=o_p((\log n)^{-2})$, and hence the
same diagonal bounds hold for $\widehat\Sigma$ with probability $1-o(1)$
over data.  This verifies all conditions of
Lemma~\ref{lem:fixed_bootstrap_covariance}.

We now complete the proof of Theorem~\ref{I:thm:main}.
Sections~\ref{subsub:I_centered_gaussian} and
\ref{subsub:I_bootstrap_covariance} prove
\eqref{eq:roadmap_centered_gaussian} and
\eqref{eq:roadmap_bootstrap_covariance}, respectively, and the preceding
paragraphs verify the conditions of
Lemmas~\ref{lem:centered_oracle_perturbation}--\ref{lem:fixed_bootstrap_covariance}.
Therefore the conditions of Lemma~\ref{lem:size_transfer} hold.  Applying
Lemma~\ref{lem:size_transfer} proves the theorem.

\subsection{Regime D Validity}

\subsubsection{Additional assumptions for Regime D}\label{subsubapp:regimeD_main}

We impose the following assumptions.

\begin{appassumption}\label{ass:D_mixing}
The centered vector sequence $\d_t=(\d_t^{(1)},\ldots,\d_t^{(H)})$,
$t=1,\ldots,n$, is geometrically $\alpha$-mixing: there exist constants
$K_1>1$, $K_2>0$, and $\gamma_{\mathrm{mix}}>0$ such that
\[
\alpha(\ell)
:=
\sup_{1\le k\le n-\ell}
\sup_{A\in\sigma(\d_1,\dots,\d_k),\ B\in\sigma(\d_{k+\ell},\dots,\d_n)}
\bigl|\Pbb(A\cap B)-\Pbb(A)\Pbb(B)\bigr|
\le K_1e^{-K_2\ell^{\gamma_{\mathrm{mix}}}}
\qquad\forall \ell\ge 1.
\]
\end{appassumption}

\begin{appassumption}\label{ass:D_nondegenerate}
The centered oracle coordinates are nondegenerate: there exist constants $0<c_-\le c_+<\infty$ such that
\[
c_-\le \Var\!\left(\frac{S_a^{\circ}}{\sqrt{\nu_a}}\right)\le c_+
\qquad\forall a\in A.
\]
\end{appassumption}

\begin{appassumption}\label{ass:D_kernel}
The kernel is symmetric, continuously differentiable, satisfies $K(0)=1$,
\[
  \sup_{x\in\R}|K'(x)|<\infty,
\]
and obeys
\[
|K(x)|\le C_K(1+|x|)^{-\vartheta}
\quad\forall x\in\R
\]
for some constants $C_K<\infty$ and $\vartheta>1$.
The bandwidth satisfies
\[
b_n\asymp n^{\rho},
\quad
0<\rho<\frac{\vartheta-1}{3\vartheta-2}.
\]
The matrices
\begin{equation}\label{eq:D_Theta}
  \Theta_n(t,u):=K\!\left(\frac{t-u}{b_n}\right),
  \qquad 1\le t,u\le n,
\end{equation}
are positive semidefinite.
\end{appassumption}

\begin{remark}
  We prove in Section~\ref{subsec:QS_kernel} that the quadratic spectral
  kernel introduced by \citet{Andrews1991} satisfies
  Assumption~\ref{ass:D_kernel}.
  Section~\ref{subsec:D_data_dependent_bandwidth} further shows that the test
  remains valid in Regime~D with the data-driven bandwidth used in the main
  text.
\end{remark}

\begin{appassumption}\label{ass:D_growth}
Let
\[
B_n:=1\vee\frac{B_0\sqrt n}{\sigma_-\sqrt{L_{\min}}}.
\]
Assume
\[
(\log n)^{3-\gamma_{\mathrm{mix}}}=o\bigl(n^{\gamma_{\mathrm{mix}}/3}\bigr),
\]
and
\[
(B_n)^{2/3}\frac{(\log n)^{(1+2\gamma_{\mathrm{mix}})/(3\gamma_{\mathrm{mix}})}}{n^{1/9}}
+
B_n\frac{(\log n)^{7/6}}{n^{1/9}}
\to 0.
\]
Let $c_1=c_1(\rho,\vartheta)>0$ and
$c_2=c_2(2,\gamma_{\mathrm{mix}},\vartheta)>0$ denote the constants
delivered by Proposition~\ref{prop:consistency_covariance_matrix} with its
kernel parameters specialized to $\gamma_1=2$ and
$\gamma_2=\gamma_{\mathrm{mix}}$.  Assume
\[
(B_n)^2n^{-c_1}(\log n)^{c_2+2}\to 0,
\qquad
(B_n)^2n^{-\rho}(\log n)^2\to 0.
\]
In addition, $b_n(\log n)^2/n\to0$.
\end{appassumption}

\begin{appassumption}\label{ass:D_mean_flatness}
The interval-wise mean is asymptotically flat:
\[b_nr_n(\log n)^2\to 0.\]
\end{appassumption}

\begin{appassumption}\label{ass:D_Lmin}
The bandwidth, dimension, mixing exponent, and minimum interval length
satisfy
\[
  \frac{(\log n)^5}{L_{\min}}\to0,
  \qquad
  \frac{b_n(\log n)^3}{L_{\min}}\to0,
\]
\[
  \frac{(\log n)^{1+2/\gamma_{\mathrm{mix}}}}{L_{\min}}\to0,
  \qquad
  \frac{(\log n)^{2+1/\gamma_{\mathrm{mix}}}}{L_{\min}}\to0.
\]
When $\gamma_{\mathrm{mix}}=1$, they are implied by the first condition.
\end{appassumption}

\begin{remark}
When $\gamma_{\mathrm{mix}}=1$ and the quadratic spectral kernel is used,
Section~\ref{subsec:QS_kernel} gives $\vartheta=2$ and
$b_n\asymp n^\rho$ with $0<\rho<1/4$.  Take
$\rho=1/4-\epsilon$ for a fixed $0<\epsilon<1/4$.
Proposition~\ref{prop:chang_chen_wu_explicit_exponents} gives
\[
  c_1(1/4-\epsilon,2)=\frac{4\epsilon}{3},
  \qquad
  c_2(2,1,2)=\frac{10}{3},
  \qquad
  c_2^\sharp(2,1,2)=\frac23.
\]
Since $p\le n^2$, we use the improved exponent $c_2^\sharp=2/3$ in
Assumption~\ref{ass:D_growth}.  The definition in
Assumption~\ref{ass:D_growth} satisfies
\[
  1\vee\frac{B_0\sqrt n}{\sigma_-\sqrt{L_{\min}}}
  \asymp \sqrt{\frac{n}{L_{\min}}},
\]
and Assumption~\ref{ass:D_growth} reduces to
\begin{equation}\label{eq:QS_D_growth_simplified}
  \begin{gathered}
  \frac{n^{7/18}(\log n)^{7/6}}{\sqrt{L_{\min}}}\to0,
  \qquad
  \frac{n^{1-4\epsilon/3}(\log n)^{8/3}}{L_{\min}}\to0,
  \qquad
  \frac{n^{3/4+\epsilon}(\log n)^2}{L_{\min}}\to0.
  \end{gathered}
\end{equation}
Indeed, $(\log n)^2=o(n^{1/3})$ and
$n^{-3/4-\epsilon}(\log n)^2\to0$ hold automatically.  Also, $L_{\min}\le n$
gives the omitted high-dimensional CLT term
\[
  \frac{n^{2/9}(\log n)/L_{\min}^{1/3}}
  {n^{7/18}(\log n)^{7/6}/L_{\min}^{1/2}}
  =
  \left(\frac{L_{\min}}{n\log n}\right)^{1/6}
  \to0.
\]
The first condition in \eqref{eq:QS_D_growth_simplified} is implied by the
other two.  Indeed, those two conditions require $L_{\min}$ to dominate
polynomial powers with exponents $1-4\epsilon/3$ and $3/4+\epsilon$,
respectively.  Since
\[
  \max\{1-4\epsilon/3,\,3/4+\epsilon\}\ge 6/7,
\]
the resulting polynomial slack is stronger than the requirement
$L_{\min}\gg n^{7/9}(\log n)^{7/3}$ imposed by the first condition.
Assumption~\ref{ass:D_mean_flatness} becomes
\begin{equation}\label{eq:QS_D_mean_flatness_simplified}
  n^{1/4-\epsilon}r_n(\log n)^2\to0,
\end{equation}
and Assumption~\ref{ass:D_Lmin} reduces to
\[
  \frac{n^{1/4-\epsilon}(\log n)^3}{L_{\min}}\to0,
\]
which is implied by the first condition in
\eqref{eq:QS_D_growth_simplified} and also implies the other three length
conditions in Assumption~\ref{ass:D_Lmin}.
Combining \eqref{eq:QS_D_growth_simplified} and
\eqref{eq:QS_D_mean_flatness_simplified}, Assumptions
\ref{ass:D_growth}--\ref{ass:D_Lmin} hold if
\[
  L_{\min}
  =
  \omega\!\left(
    \max\left\{
      n^{1-4\epsilon/3}(\log n)^{8/3},
      n^{3/4+\epsilon}(\log n)^2
    \right\}
  \right),
  \qquad
  r_n=o\!\left(\frac{1}{n^{1/4-\epsilon}(\log n)^2}\right).
\]
In Section~\ref{sec:scan_method}, we choose $b_n\asymp n^{1/5}$, giving $\epsilon=1/20$.
In this case, the condition reduces to
\[
  L_{\min}=\omega\!\left(n^{14/15}(\log n)^{8/3}\right),
  \qquad
  r_n=o\!\left(\frac{1}{n^{1/5}(\log n)^2}\right).
\]
\end{remark}

\subsubsection{Main theorem}

\begin{theorem}\label{D:thm:main}
Fix the exponent $\gamma\in[0,1/2]$.  Assume Assumptions
\ref{ass:subgaussian}, \ref{ass:bounded_second_moment},
\ref{ass:length_ratio}, \ref{ass:D_mixing}--\ref{ass:D_kernel},
and \ref{ass:D_growth}--\ref{ass:D_Lmin}.
Let $c_{n,1-\alpha}^{*}$ be the conditional $(1-\alpha)$-quantile
in \eqref{eq:implemented_bootstrap}.  Then the Gaussian multiplier
bootstrap with covariance
$\Theta_n(t,u)=K((t-u)/b_n)$ is asymptotically conservative: under
\eqref{eq:null_standalone},
\[
\limsup_{n\to\infty}
\Pbb\bigl(T_{\max}>c_{n,1-\alpha}^{*}\bigr)
\le \alpha
\qquad\forall \alpha\in(0,1).
\]
\end{theorem}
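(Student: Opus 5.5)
The plan mirrors the Regime~I proof and follows the roadmap in Section~\ref{subsub:roadmap_proofs}. By Lemma~\ref{lem:size_transfer} it suffices to establish four things: the centered Gaussian approximation \eqref{eq:roadmap_centered_gaussian}, the covariance consistency \eqref{eq:roadmap_bootstrap_covariance}, and the hypotheses of Lemmas~\ref{lem:centered_oracle_perturbation} and \ref{lem:random_denominator}. For the latter we take $k_n=b_n$. Assumption~\ref{ass:D_kernel} gives $\sum_u|K((t-u)/b_n)|\le C\sum_{\ell}(1+|\ell|/b_n)^{-\vartheta}\le Cb_n$, and likewise for squares, since $\vartheta>1$. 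This is \eqref{eq:RD_rows}. The first step is a Regime~D analogue of Lemmas~\ref{I:lem:Uw}, \ref{I:lem:Delta}, and \ref{I:lem:fourth}.
\begin{itemize}
\item For $\max_a|S_a^\circ|/\sqrt{\nu_a}$ and its intersection versions, I would use Bernstein-type inequalities for geometrically $\alpha$-mixing sub-Gaussian and sub-exponential sequences. These are Proposition~\ref{prop:alpha_mixing_concentration}, already invoked in Lemma~\ref{I:lem:fourth}, with variance proxy $\Var(S_a^\circ)\le c_+\nu_a$ from Assumption~\ref{ass:D_nondegenerate}.
\item For $\Delta_Q$, the same decomposition as in Lemma~\ref{lem:denominator_decomposition} applies. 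The sums $\sum_{t\in I}(\d_t^{(h)2}-\E\d_t^{(h)2})$ and $C_a=\sum\mu_{t,a}\d_t^{(h)}$ are controlled by the mixing Bernstein inequality, with the extra logarithmic powers absorbed by Assumption~\ref{ass:D_Lmin} (e.g.\ $(\log n)^{1+2/\gamma_{\mathrm{mix}}}/L_{\min}\to0$).
\end{itemize}
Together these give $\Delta_Q=O_p(\sqrt{(\log n)^{c}/L_{\min}}+r_n)$. From this, $Q_a\ge c|I|$ and \eqref{eq:RD_fourth} follow as in Regime~I.

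For \eqref{eq:roadmap_centered_gaussian}, I would write $\mathbb T^\circ=n^{-1/2}\sum_t Z_t$ with $Z_{t,a}$ as in \eqref{eq:Zta}. I would then apply the high-dimensional CLT for nonstationary mixing sequences of \citet{CCW24}, presumably the dependent-data analogue of Proposition~\ref{prop:max_gaussian_CLT} stated in Section~\ref{app:miscellany}, with $\psi_2$ bound $B_n$. The rate it requires is exactly the first display of Assumption~\ref{ass:D_growth}. The diagonal of $\Sigma_\circ$ is bounded above and below by Assumption~\ref{ass:D_nondegenerate} times $\omega_a^2$.

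The hypotheses of Lemma~\ref{lem:centered_oracle_perturbation} then follow exactly as in Section~\ref{subsub:I_verify_reduction}, via Gaussian anti-concentration and $\Delta_Q\sqrt{\log n}\,\sqrt{\log n}\to0$. For Lemma~\ref{lem:random_denominator}, condition \eqref{eq:RD_growth} reads $b_n(\log n)^3/L_{\min}\to0$ and $b_nr_n(\log n)^2\to0$, which are Assumptions~\ref{ass:D_Lmin} and \ref{ass:D_mean_flatness}. Condition \eqref{eq:RD_var} will come out of the covariance step below, once its diagonal is known to be close to that of $\Sigma_\circ$.

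I expect the main obstacle to be the covariance consistency \eqref{eq:roadmap_bootstrap_covariance}. Now
\[
\widehat\Omega_{ab}=\sum_{t\in I_a}\sum_{u\in I_b}K\!\left(\tfrac{t-u}{b_n}\right)(D_t^{(h)}-\bar D_a)(D_u^{(g)}-\bar D_b)
\]
must approximate $\Omega_{ab}=\sum_{t\in I_a,u\in I_b}\E[\d_t^{(h)}\d_u^{(g)}]$ uniformly over $p^2$ pairs, relative to $\sqrt{\nu_a\nu_b}$. I would split $\widehat\Omega_{ab}-\Omega_{ab}$ into four parts.
\begin{itemize}
\item A pure kernel-HAC term $\sum K(\cdot)(\d_t\d_u-\E\d_t\d_u)$. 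This should be handled by the kernel covariance consistency result Proposition~\ref{prop:consistency_covariance_matrix} with $(\gamma_1,\gamma_2)=(2,\gamma_{\mathrm{mix}})$, yielding rate $n^{-c_1}(\log n)^{c_2}$ after normalization by $n/L_{\min}\lesssim B_n^2$.
\item A kernel bias term $\sum(K(\cdot)-1)\E\d_t\d_u$. This is bounded using $|K(x)-1|\le C|x|$ near zero and the summable geometric mixing covariances, giving $O(b_n^{-1})=O(n^{-\rho})$.
\item Centering terms involving $\bar\d_a$. These are bounded by row sums $Cb_n$ times $\bar{\mathbb T}^{\circ2}_{\max}/L_{\min}$, i.e.\ $O_p(b_n\log n/L_{\min})$.
\item Mean-shape terms in $\mu_{t,a}$. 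By Cauchy--Schwarz with the $\ell_2$ operator norm of $\Theta_n$ being $\le Cb_n$, these are $O_p(b_nr_n+\sqrt{b_nr_n\log n/L_{\min}})$.
\end{itemize}
The denominator perturbation is handled as in \eqref{I:eq:covdec2}. Multiplying everything by $(\log n)^2$ and using Assumptions~\ref{ass:D_growth}--\ref{ass:D_Lmin} gives $o_p(1)$. The delicate part is matching the normalization in Proposition~\ref{prop:consistency_covariance_matrix} to our interval-dependent scaling, and the union over $p^2\le n^4$ pairs. If that proposition is stated for a fixed full-sample average, I would apply it to the rescaled $Z_t$ vectors, whose kernel covariance is $n^{-1}\sum_{t,u}K Z_tZ_u^\top=\widehat\Sigma$ up to the centering and mean terms.
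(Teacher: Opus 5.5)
Your outline follows the paper's proof step for step. It uses the reduction through Lemma~\ref{lem:size_transfer}, takes $k_n=b_n$ in Lemma~\ref{lem:random_denominator} with \eqref{eq:RD_rows} from the kernel decay, applies Proposition~\ref{prop:dependent_CLT} to $\mathbb{T}^{\circ}$, and applies Proposition~\ref{prop:consistency_covariance_matrix} to the rescaled $Z_t$, followed by a transfer to the locally centered, mean-contaminated $\widehat\Sigma$ and a denominator replacement. Two local differences are harmless:
\begin{itemize}
\item You get $\bar{\mathbb{T}}^{\circ}_{\max}=O_p(\sqrt{\log n})$ from Proposition~\ref{prop:alpha_mixing_concentration} plus a union bound, which works under Assumption~\ref{ass:D_Lmin}; the paper instead reads it off the CLT in Lemma~\ref{D:lem:Uw}.
\item For $C_a$ you need no concentration at all. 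Pathwise $|C_a|\le\sqrt{r_n\,q_a\,\nu_a}$, which gives $\Delta_Q=O_p(\sqrt{\log n/L_{\min}}+\sqrt{r_n})$, and that suffices. A Bernstein bound for the weighted sum $\sum_t\mu_{t,a}(D_t^{(h)}-\mu_t^{(h)})$ would run into the problem described below.
\end{itemize}

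Two steps do not go through as written. First, Propositions~\ref{prop:dependent_CLT} and \ref{prop:consistency_covariance_matrix} require $p\ge n^{\kappa}$, and this can fail here. Assumption~\ref{ass:D_growth} forces $L_{\min}\gg n^{1-\rho}(\log n)^2$, and for example $L_{\min}=L_{\max}=n$ gives $p=H$. The paper handles this in Section~\ref{subsub:D_centered_gaussian} by replicating each coordinate $\lceil n^{1/2}/p\rceil$ times. This leaves unchanged the maximum, the $\psi_2$ bounds, the variances, the max-norm covariance errors and the mixing coefficients. You need the same device.

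Second, consider the cross term $\sum_{t\in I}\sum_{u\in J}\Theta_n(t,u)\mu_{t,a}d_{u,b}$, divided by $|I|^{1/2-\gamma}|J|^{1/2-\gamma}Q_a^{\gamma}Q_b^{\gamma}\asymp\sqrt{|I||J|}$.
\begin{itemize}
\item Cauchy--Schwarz with $\|\Theta_n\|_{\mathrm{op}}\le Cb_n$ gives only $O_p(b_n\sqrt{r_n})$. The requirement $b_n\sqrt{r_n}(\log n)^2\to0$ does not follow from Assumption~\ref{ass:D_mean_flatness}; take $r_n=1/\{b_n(\log n)^3\}$.
\item A rate with $L_{\min}^{-1/2}$ can only come from the fluctuation of the linear form $\sum_{u\in J}c_u(D_u^{(g)}-\mu_u^{(g)})$ with $c=\Theta_n(\mathbf{1}_I\mu_a)$. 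Its standard deviation is of order $\|c\|_2\le Cb_n\sqrt{\nu_{a,\mu}}$, giving $b_n\sqrt{r_n\log n/L_{\min}}$ (so $b_n$, not $\sqrt{b_n}$). That rate does close, by Assumptions~\ref{ass:D_mean_flatness} and \ref{ass:D_Lmin} jointly.
\item Making it uniform over $p^2$ pairs needs an exponential inequality for weighted sums of a mixing sequence whose weights are controlled only in $\ell_2$. Proposition~\ref{prop:alpha_mixing_concentration} has a sup-type variance term. With $\max_u|c_u|\le C\sqrt{b_n\nu_{a,\mu}}$ it yields only $\sqrt{b_nr_n\log n}$, which again is not enough.
\end{itemize}
You must supply such an inequality (for example by blocking), or strengthen flatness to $b_n^2r_n(\log n)^4\to0$.

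Do not expect to borrow this step from the paper. The display \eqref{eq:D_cov_transfer_bound} carries a prefactor $b_n/\sqrt{|I||J|}$, but its vectors are already normalized by $\sqrt{\nu_a}$ and $\sqrt{\nu_b}$, so the correct prefactor is of order $b_n$. Compare \eqref{eq:Omegahat_last_term}: with $\Theta_n=I_n$ the $\mu\mu$ term is of order $r_n$ (attained when $a=b$), not $r_n/L_{\min}$. After that correction, and keeping $\|w_{\cdot,I}\|_2\le|I|^{-1/2}$, every other transfer term still closes pathwise:
\begin{itemize}
\item the $\mu\mu$ term is $O(b_nr_n)$, which vanishes after multiplying by $(\log n)^2$ by Assumption~\ref{ass:D_mean_flatness};
\item the centering terms are of order $b_n\sqrt{\log n/L_{\min}}$ rather than the $b_n\log n/L_{\min}$ you state, and they still vanish after multiplying by $(\log n)^2$ because $\rho<1/3$ and $L_{\min}\gg n^{1-\rho}(\log n)^2$.
\end{itemize}
So the $\mu$--$d$ cross term is the one place where a genuinely stochastic bound is required.
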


Following the roadmap in Section~\ref{subsub:roadmap_proofs}, we are left to prove
\eqref{eq:roadmap_centered_gaussian} and
\eqref{eq:roadmap_bootstrap_covariance} and verify the conditions in
Lemmas~\ref{lem:centered_oracle_perturbation}--\ref{lem:fixed_bootstrap_covariance}.
In the following, Section~\ref{subsub:D_preliminary_bounds} provides
preliminary results, Section~\ref{subsub:D_centered_gaussian} proves
\eqref{eq:roadmap_centered_gaussian},
Section~\ref{subsub:D_bootstrap_covariance} proves
\eqref{eq:roadmap_bootstrap_covariance}, and
Section~\ref{subsub:D_verify_reduction} completes the proof of
Theorem~\ref{D:thm:main} by
verifying the conditions in Lemmas~\ref{lem:centered_oracle_perturbation}--\ref{lem:fixed_bootstrap_covariance}.
For Sections~\ref{subsub:D_centered_gaussian}--\ref{subsub:D_verify_reduction},
we assume Assumptions
\ref{ass:subgaussian}, \ref{ass:bounded_second_moment},
\ref{ass:length_ratio}, \ref{ass:D_mixing}--\ref{ass:D_kernel},
and \ref{ass:D_growth}--\ref{ass:D_Lmin} all hold.

\subsubsection{Preliminary bounds}\label{subsub:D_preliminary_bounds}

\begin{lemma}\label{D:lem:rowsum}
Under Assumption \ref{ass:D_kernel}, there exists $C_\Theta<\infty$ such that
\[
\max_{1\le t\le n}\sum_{u=1}^n |\Theta_n(t,u)|\le C_\Theta b_n,
\qquad
\max_{1\le t\le n}\sum_{u=1}^n \Theta_n(t,u)^2\le C_\Theta b_n
\qquad\text{for all }n.
\]
Consequently,
\[
|x^\top\Theta_n y|\le C_\Theta b_n\norm{x}_2\norm{y}_2
\qquad\forall x,y\in\R^n.
\]
\end{lemma}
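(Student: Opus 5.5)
The argument uses only the decay bound on $K$ in Assumption~\ref{ass:D_kernel} and the elementary fact that a symmetric matrix with bounded row sums has bounded operator norm.

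\emph{Row sums.} Fix $t$. Using $|K(x)|\le C_K(1+|x|)^{-\vartheta}$, we get
\[
  \sum_{u=1}^n|\Theta_n(t,u)|
  \le C_K\sum_{k\in\mathbb Z}\Bigl(1+\tfrac{|k|}{b_n}\Bigr)^{-\vartheta}.
\]
Split the sum into $|k|\le b_n$ and $|k|>b_n$. The first block has at most $2b_n+1$ terms, each bounded by one, so it contributes $O(b_n)$. On the second block $(1+|k|/b_n)^{-\vartheta}\le (|k|/b_n)^{-\vartheta}$. Comparing with an integral, and using $\vartheta>1$, this block is bounded by
\[
  2b_n^{\vartheta}\int_{b_n-1}^\infty x^{-\vartheta}\,dx\asymp b_n .
\]
Hence the row sum is at most $C b_n$. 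Since $b_n\asymp n^\rho\to\infty$, we may assume $b_n\ge1$ for large $n$; small $n$ are absorbed into the constant.

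\emph{Squared row sums.} The same argument applies with exponent $2\vartheta>1$. Alternatively, and more simply, $|K|\le C_K$ gives $\Theta_n(t,u)^2\le C_K|\Theta_n(t,u)|$, so the squared row sum is at most $C_K\cdot C b_n$. Take $C_\Theta$ to be the larger of the two constants.

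\emph{Bilinear bound.} $\Theta_n$ is symmetric because $K$ is symmetric. By the Schur test, or equivalently by interpolating between $\|\Theta_n\|_{1\to1}$ and $\|\Theta_n\|_{\infty\to\infty}$, both of which equal the maximal absolute row (column) sum,
\[
  \|\Theta_n\|_{\mathrm{op}}\le \max_t\sum_u|\Theta_n(t,u)|\le C_\Theta b_n .
\]
Directly: by Cauchy--Schwarz with weights $|\Theta_n(t,u)|$,
\[
  |x^\top\Theta_n y|
  \le \Bigl(\sum_{t,u}|\Theta_n(t,u)|x_t^2\Bigr)^{1/2}
      \Bigl(\sum_{t,u}|\Theta_n(t,u)|y_u^2\Bigr)^{1/2},
\]
and each factor is bounded by $\sqrt{C_\Theta b_n}\,\|x\|_2$ and $\sqrt{C_\Theta b_n}\,\|y\|_2$ respectively, using the row-sum and column-sum bounds.

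There is no real obstacle here. The only care needed is in the tail sum, where $\vartheta>1$ is what makes the tail converge with the correct $b_n$ scaling, and in handling the case $b_n<1$ through the constant.
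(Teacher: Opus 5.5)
Your proposal is correct and follows essentially the same route as the paper. You split the kernel sum at $|k|\le b_n$ and use $\vartheta>1$ to get an $O(b_n)$ tail. For the squared row sums you either repeat that argument or take the shortcut $|K|\le C_K$. You then pass to the bilinear form through the row-sum bound on $\|\Theta_n\|_{\mathrm{op}}$, and your explicit appeal to the Schur test (equivalently, weighted Cauchy--Schwarz with symmetry) justifies the step $\|\Theta_n y\|_2\le(\max_t\sum_u|\Theta_n(t,u)|)\|y\|_2$, which the paper uses without comment.
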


\begin{proof}
Fix $t$. Since $|K(x)|\le C_K(1+|x|)^{-\vartheta}$ with $\vartheta>1$,
\[
\sum_{u=1}^n |\Theta_n(t,u)|
\le
\sum_{u=1}^n C_K\left(1+\frac{|t-u|}{b_n}\right)^{-\vartheta}
\le C_K\sum_{m\in\mathbb Z}\left(1+\frac{|m|}{b_n}\right)^{-\vartheta}.
\]
Split the last sum into $|m|\le b_n$ and $|m|>b_n$.
The first part contributes $O(b_n)$.
For the second part,
\[
\sum_{|m|>b_n}\left(1+\frac{|m|}{b_n}\right)^{-\vartheta}
\le Cb_n\sum_{k\ge 1}(1+k)^{-\vartheta}=O(b_n)
\]
because $\vartheta>1$.
This proves the row-sum bound.
For the squared row-sum bound,
\[
  \sum_{u=1}^n \Theta_n(t,u)^2
  \le
  C\sum_{m\in\mathbb Z}\left(1+\frac{|m|}{b_n}\right)^{-2\vartheta}.
\]
The terms with $|m|\le b_n$ contribute $O(b_n)$, while the terms with
$|m|>b_n$ are bounded by
$Cb_n\sum_{k\ge1}(1+k)^{-2\vartheta}=O(b_n)$ because
$2\vartheta>1$.
The quadratic-form bound follows from
\[
|x^\top\Theta_n y|
\le
\norm{x}_2\,\norm{\Theta_n y}_2
\le
\norm{x}_2\,\left(\max_t\sum_u |\Theta_n(t,u)|\right)\norm{y}_2.
\]
\end{proof}

\begin{lemma}\label{D:lem:Uw}
Under Assumptions \ref{ass:subgaussian},
\ref{ass:bounded_second_moment}, \ref{ass:length_ratio},
\ref{ass:D_mixing}--\ref{ass:D_kernel}, and \ref{ass:D_growth},
\[
  \bar{\mathbb{T}}_{\max}^{\circ}=O_p(\sqrt{\log n}).
\]
\end{lemma}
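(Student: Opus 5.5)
The plan follows the proof of Lemma~\ref{I:lem:Uw}: obtain a uniform sub-Gaussian-type tail bound for each coordinate $\mathbb{T}_a^\circ=\omega_a S_a^\circ/\sqrt{\nu_a}$ and then take a union bound over the $p\le n^2$ coordinates. The only change is that independence, and with it Proposition~\ref{prop:sum_subgaussian}, is no longer available. In its place I would use the Bernstein-type concentration inequality for $\alpha$-mixing sequences, Proposition~\ref{prop:alpha_mixing_concentration}, which the paper has already invoked in the proof of Lemma~\ref{I:lem:fourth}.

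\textbf{Step 1: reduce to $S_a^\circ$.} Assumption~\ref{ass:bounded_second_moment} gives $\omega_a\le(\sigma_+^2)^{1/2-\gamma}\le C$ and $\nu_a\ge\sigma_-^2|I|$. It therefore suffices to show
\[
\max_{a\in A}\frac{|S_a^\circ|}{\sqrt{|I|}}=O_p(\sqrt{\log n}).
\]

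\textbf{Step 2: fixed-coordinate tail bound.} Fix $a=(I,h)$. The sequence $(\d_t^{(h)})_{t\in I}$ is a measurable function of the vector sequence $\d_t$, so it inherits the geometric mixing rate of Assumption~\ref{ass:D_mixing}. By Assumption~\ref{ass:subgaussian} its tails satisfy $\Pbb(|\d_t^{(h)}|>u)\le 2e^{-cu^2}$. I would apply Proposition~\ref{prop:alpha_mixing_concentration} with tail exponent $\gamma_1=2$ and mixing exponent $\gamma_2=\gamma_{\mathrm{mix}}$, so that the effective exponent is $\gamma'=(1/\gamma_1+1/\gamma_2)^{-1}<1$. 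For the variance proxy $V$, geometric mixing together with Davydov's covariance inequality gives $\sum_{s}|\Cov(\d_s^{(h)},\d_t^{(h)})|\le C$. This yields $V\le C$ per unit length, and in fact Assumption~\ref{ass:D_nondegenerate} already pins $\Var(S_a^\circ)$ at order $|I|$. With $m=|I|$ and $y=M\sqrt{|I|\log n}$, the inequality has the form
\[
\Pbb(|S_a^\circ|>y)\le m\exp\!\Big(-\frac{y^{\gamma'}}{C}\Big)+\exp\!\Big(-\frac{y^2}{C(1+mV)}\Big)+\exp\!\Big(-\frac{y^2}{Cm}\exp\!\Big(\frac{y^{\gamma'(1-\gamma')}}{C(\log y)^{\gamma'}}\Big)\Big).
\]
The second term is $n^{-cM^2}$. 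The first term is at most $n\exp(-c(L_{\min}\log n)^{\gamma'/2})$.

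\textbf{Step 3: union bound.} Summing over the $p\le n^2$ coordinates and using $\log(pn)\asymp\log n$, the second term contributes $n^{2-cM^2}$, which vanishes for $M$ large. The first and third terms are negligible once $(L_{\min}\log n)^{\gamma'/2}\gg\log n$. This is where the growth conditions enter: the length conditions in Assumption~\ref{ass:D_Lmin}, namely $(\log n)^{1+2/\gamma_{\mathrm{mix}}}/L_{\min}\to0$ and $(\log n)^{2+1/\gamma_{\mathrm{mix}}}/L_{\min}\to0$, are exactly of the form needed to make $L_{\min}^{\gamma'/2}$ dominate a power of $\log n$.

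\textbf{Main obstacle.} I expect the bookkeeping in Step~3 to be the hardest part. I need to check that the exponents in Proposition~\ref{prop:alpha_mixing_concentration} with $\gamma_1=2$ match the conditions in Assumption~\ref{ass:D_Lmin}, and in particular that the first, non-Gaussian term is summable over $n^2$ intervals.

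If that matching fails, the fallback is a blocking argument. I would split $I$ into alternating blocks of length $\asymp(\log n)^{1/\gamma_{\mathrm{mix}}}$ and couple the odd and even blocks to independent copies via Berbee's lemma, at a cost of $n\,\alpha(\text{block length})=o(n^{-2})$. I would then apply Proposition~\ref{prop:sum_subgaussian} to the independent block sums, whose $\psi_2$ norms are bounded by $C\sqrt{\text{block length}}$. This gives $\|S_a^\circ\|_{\psi_2}\lesssim\sqrt{|I|}$ up to the coupling error, and the union bound then concludes as in Lemma~\ref{I:lem:Uw}.
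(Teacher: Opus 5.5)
Your proof is correct, and it takes a different route from the paper's. The paper does not bound coordinates one at a time. It writes $\mathbb T^\circ=n^{-1/2}\sum_t Z_t^\circ$ and applies the dependent high-dimensional CLT (Proposition~\ref{prop:dependent_CLT}) to the rectangles $[-x,x]^p$, which transfers the law of $\bar{\mathbb T}^\circ_{\max}$ to that of $\max_a|G_{\circ,a}|$. It then finishes with the Gaussian maximal inequality (Proposition~\ref{prop:subgaussian_max}).

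That route is economical inside the paper, since the same CLT is needed for the main theorem anyway. It does, however, rely on three things:
\begin{itemize}
\item the heavy rate conditions of Assumption~\ref{ass:D_growth};
\item the variance lower bound of Assumption~\ref{ass:D_nondegenerate};
\item the polynomial-dimension hypothesis $p=\Omega(n^\kappa)$ of Proposition~\ref{prop:dependent_CLT}, which the proof of this lemma does not check (the replication device only appears later, in Section~\ref{subsub:D_centered_gaussian}).
\end{itemize}
Your argument, a coordinatewise Merlev\`ede--Peligrad--Rio bound plus a union bound over $p\le n^2$ coordinates, is the direct dependent analogue of Lemma~\ref{I:lem:Uw} and mirrors the paper's own Lemma~\ref{lem:centered-square}. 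It needs no nondegeneracy, no kernel assumption and no lower bound on $p$, and only a polylogarithmic lower bound on $L_{\min}$. The price is the exponent bookkeeping and the bound on the variance proxy $V$, and you handle the latter correctly via Davydov's inequality.

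Three small repairs are needed.

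\textbf{Length condition.} Assumption~\ref{ass:D_Lmin} is not among the lemma's hypotheses, so obtain the length condition from Assumption~\ref{ass:D_growth} instead. Since $B_n\ge B_0\sqrt n/(\sigma_-\sqrt{L_{\min}})$ and $B_n(\log n)^{7/6}n^{-1/9}\to0$, you get $L_{\min}\gg n^{7/9}(\log n)^{7/3}$. This dominates every power of $\log n$, which is all your union bound requires.

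\textbf{Exponent below one.} Proposition~\ref{prop:alpha_mixing_concentration} requires the combined exponent to be below one. With tail exponent $2$ this fails when $\gamma_{\mathrm{mix}}\ge2$. Fix it either by using the sub-exponential tail (exponent $1$) or by replacing $\gamma_{\mathrm{mix}}$ with $\min\{\gamma_{\mathrm{mix}},1\}$, which only weakens the mixing bound. You also need to absorb the prefactor $K_1$ of Assumption~\ref{ass:D_mixing} into the rate constant, as in the proof of Lemma~\ref{lem:centered-square}. In the paper's labeling of that proposition, $\gamma_1$ is the mixing exponent and $\gamma_2$ the tail exponent; your swap is harmless because $\gamma$ is symmetric in them.

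\textbf{Blocking fallback.} Drop it. As written it overstates the block bound: for a dependent block sum of length $b$, the triangle inequality only gives a $\psi_2$ norm of $bB_0$, not $C\sqrt b$. Using $bB_0$ would lose a factor $(\log n)^{1/(2\gamma_{\mathrm{mix}})}$ in the final rate.
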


\begin{proof}
Define
\[
Z_{t,a}^{\circ}
:=\frac{\sqrt n\,\1\{t\in I\}\d_t^{(h)}}
{|I|^{1/2-\gamma}\nu_a^\gamma},
\qquad a=(I,h)\in A.
\]
Then
\[
\mathbb{T}^{\circ}=\frac{1}{\sqrt n}\sum_{t=1}^n
Z_t^{\circ}.
\]
As in Section~\ref{subsub:I_centered_gaussian},
\[
\norm{Z_{t,a}^{\circ}}_{\psi_2}
\le B_0(1\vee\sigma_-^{-1})\sqrt{\frac n{L_{\min}}}
\lesssim B_n
\qquad\forall t,a.
\]
The sequence inherits the same geometric mixing rate.  Furthermore,
\[
\Var(\mathbb{T}_a^{\circ})
=\left(\frac{\nu_a}{|I|}\right)^{1-2\gamma}
\Var\!\left(\frac{S_a^\circ}{\sqrt{\nu_a}}\right),
\]
so Assumptions~\ref{ass:D_nondegenerate} and
\ref{ass:bounded_second_moment} bound every coordinate variance above and
away from zero.
Hence Proposition \ref{prop:dependent_CLT} applies under Assumption \ref{ass:D_growth} and gives
\[
\sup_{A\in\mathcal R^p}
\Bigl|\Pbb(\mathbb{T}^{\circ}\in A)
-\Pbb(G_{\circ}\in A)\Bigr|\to 0.
\]
Applying this with $A=[-x,x]^p$ yields
\[
\sup_{x\in\R}
\Bigl|\Pbb(\bar{\mathbb{T}}_{\max}^{\circ}\le x)
-\Pbb\left(\max_{a\in A}|G_{\circ,a}|\le x\right)\Bigr|
\to0.
\]
Proposition~\ref{prop:subgaussian_max} applied to
$(G_{\circ,a})_{a\in A}$ gives
\[
\max_{a\in A}|G_{\circ,a}|=\Op\bigl(\sqrt{\log n}\bigr),
\]
and hence $\bar{\mathbb{T}}_{\max}^{\circ}=\Op(\sqrt{\log n})$.
\end{proof}

\begin{lemma}\label{lem:centered-square}
Assume Assumptions \ref{ass:subgaussian}, \ref{ass:bounded_second_moment}, and \ref{ass:D_mixing}--\ref{ass:D_Lmin}. Then there exist constants $C_A,c_A,\eta_0\in(0,\infty)$, depending only on $(B_0,K_1,K_2,\gamma_{\mathrm{mix}})$, such that for every interval $I\subset[n]$ and every $x\ge 1$ satisfying
\[
 x \le c_A |I|^{\gamma_{\mathrm{mix}}/(2+\gamma_{\mathrm{mix}})},
\]
one has
\[
\Pbb\!\left(
  \left|\sum_{t\in I}
  \left\{(\d_t^{(h)})^2-\E[(\d_t^{(h)})^2]\right\}\right|
  > C_A\sqrt{|I|x}
\right)
\le 2e^{-x}.
\]
\end{lemma}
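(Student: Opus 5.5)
The plan is to apply the Bernstein-type concentration inequality for geometrically $\alpha$-mixing sequences, Proposition~\ref{prop:alpha_mixing_concentration}, already used in Lemma~\ref{I:lem:fourth}. We apply it to the centered squares $X_t:=(\d_t^{(h)})^2-\E[(\d_t^{(h)})^2]$, $t\in I$.

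\textbf{Step 1: verify the hypotheses.} Assumption~\ref{ass:subgaussian} together with Proposition~\ref{prop:product_subgaussian} makes $X_t$ uniformly sub-exponential, with $\sup_t\|X_t\|_{\psi_1}\le CB_0^2$. So the tail exponent is $\gamma_1=1$. The sequence $(X_t)$ is a measurable function of $\d_t$, so it inherits the mixing coefficients of Assumption~\ref{ass:D_mixing}. Hence the mixing exponent is $\gamma_2=\gamma_{\mathrm{mix}}$.

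\textbf{Step 2: bound the variance proxy.} We need the term $V$ in \eqref{eq:mpr-V} to be at most $C|I|$ uniformly. Write $\varphi_M$ for the truncation at level $M$. The covariance $\Cov(\varphi_M(X_s),\varphi_M(X_t))$ is controlled by Davydov's inequality, giving a bound $C\alpha(|s-t|)^{1/2}\sup\|X\|_{L^4}^2$. Uniform sub-exponentiality bounds $\|X\|_{L^4}$, and $\sum_{\ell}e^{-K_2\ell^{\gamma_{\mathrm{mix}}}/2}<\infty$. Therefore $V\le C$ per unit length, uniformly in $M$, $t$ and $h$.

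\textbf{Step 3: apply the inequality with the right deviation level.} With $1/\gamma=1/\gamma_1+1/\gamma_2$, i.e.\ $\gamma=\gamma_{\mathrm{mix}}/(1+\gamma_{\mathrm{mix}})$, the inequality gives, for $m=|I|$ and $y>0$,
\[
\Pbb\Bigl(\Bigl|\sum_{t\in I}X_t\Bigr|>y\Bigr)\le m\,e^{-c y^{\gamma}}+e^{-c y^2/(mV)}+e^{-\frac{c y^2}{m}\exp(c y^{\gamma(1-\gamma)}/\log^{\gamma}y)}.
\]
Set $y=C_A\sqrt{mx}$. The Gaussian term is then at most $e^{-x}$ once $C_A$ is large. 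The third term is dominated by the second.

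\textbf{Step 4: handle the heavy-tail term (main obstacle).} We need $m\,e^{-c(mx)^{\gamma/2}}\le e^{-x}$. Since $\log m\le (mx)^{\gamma/2}$ up to constants when $m$ is large, this reduces to $x\lesssim (mx)^{\gamma/2}$. That condition is equivalent to $x^{1-\gamma/2}\lesssim m^{\gamma/2}$, i.e.\ $x\lesssim m^{\gamma/(2-\gamma)}$. Substituting $\gamma=\gamma_{\mathrm{mix}}/(1+\gamma_{\mathrm{mix}})$ gives $\gamma/(2-\gamma)=\gamma_{\mathrm{mix}}/(2+\gamma_{\mathrm{mix}})$, which is exactly the exponent in the statement.

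The factor $m$ in front of the exponential is absorbed into $e^{-cy^{\gamma}}$ because $x\ge1$, after halving the constant $c$. Short intervals, where $m$ is bounded, fall under the same estimate by enlarging $C_A$ and shrinking $c_A$. This exponent bookkeeping is where I expect the care to be needed. Collecting the terms gives total probability at most $2e^{-x}$.
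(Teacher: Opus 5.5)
Your proof is correct and follows the paper's route. You apply the Merlevède--Peligrad--Rio inequality to the centered squares with $\gamma=\gamma_{\mathrm{mix}}/(1+\gamma_{\mathrm{mix}})$ and $y=C_A\sqrt{|I|x}$, kill the Gaussian term by taking $C_A$ large, and absorb the $|I|\,e^{-cy^{\gamma}}$ term using $x\lesssim |I|^{\gamma/(2-\gamma)}$. Two differences are cosmetic: you bound $V$ with Davydov's $L^4$ covariance inequality instead of the quantile bound of Proposition~\ref{prop:V_bound}, and your labels $\gamma_1,\gamma_2$ are swapped relative to Proposition~\ref{prop:alpha_mixing_concentration}, which is harmless because $\gamma$ is symmetric in them.
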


\begin{proof}
Fix an interval $I$ and let $s$ be its left endpoint.  Define
\[
  Z_r:=(\d_{s+r-1}^{(h)})^2-\E[(\d_{s+r-1}^{(h)})^2],
  \qquad r=1,\dots,|I|.
\]
Since each $Z_r$ is a measurable function of $D_{s+r-1}$, the sequence
$(Z_r)_{r=1}^{|I|}$ inherits geometric strong mixing condition
(Assumption \ref{ass:D_mixing}). After possibly decreasing the exponent
constant, the prefactor $K_1$ can be absorbed into the exponential rate,
so there exists $\widetilde K_2>0$, depending only on
$(K_1,K_2,\gamma_{\mathrm{mix}})$, such that
\[
\alpha_Z(\ell) \le \exp\!\bigl(-\widetilde K_2\ell^{\gamma_{\mathrm{mix}}}\bigr)
\qquad\text{for all }\ell\ge 1.
\]
By Assumption \ref{ass:subgaussian} and Proposition
\ref{prop:product_subgaussian},
\[
  \| (\d_t^{(h)})^2 \|_{\psi_1}\le B_0^2
  \quad\Longrightarrow\quad
  \|(\d_t^{(h)})^2-\E[(\d_t^{(h)})^2]\|_{\psi_1}\le 2B_0^2.
\]
Therefore there exists $b_0<\infty$, depending only on $B_0$, such that
\[
\sup_{1\le r\le |I|} \Pbb(|Z_r|>u) \le \exp\!\bigl(1-u/b_0\bigr)
\qquad\text{for all }u>0.
\]
Thus Proposition~\ref{prop:alpha_mixing_concentration} applies with
\[
\gamma_1=\gamma_{\mathrm{mix}},
\qquad
\gamma_2=1,
\qquad
\gamma=
\left(\frac{1}{\gamma_{\mathrm{mix}}}+1\right)^{-1}
=\frac{\gamma_{\mathrm{mix}}}{1+\gamma_{\mathrm{mix}}}.
\]
Consequently,
\[
\frac{\gamma}{2-\gamma}
=
\frac{\gamma_{\mathrm{mix}}}{2+\gamma_{\mathrm{mix}}}.
\]
Hence there exist constants $C_0,C_1,C_2,\eta_0\in(0,\infty)$, depending only on $(B_0,K_1,K_2,\gamma_{\mathrm{mix}})$, such that for every $y\ge C_0(\log |I|)^{\eta_0}$,
\[
\Pbb\!\left(\max_{1\le j\le |I|}\left|\sum_{r=1}^j Z_r\right|\ge y\right)
\le
(|I|+1)\exp\!\left(-\frac{y^{\gamma}}{C_1}\right)
+
\exp\!\left(-\frac{y^2}{C_2+C_2 |I|V}\right).
\]
Let $q(u)$ denote the tail quantile in
Proposition~\ref{prop:alpha_mixing_concentration},
\[
  q(u):=\inf\{x\ge0:\sup_r\Pbb(|Z_r|>x)\le u\}.
\]
Since $\sup_r\Pbb(|Z_r|>x)\le \exp(1-x/b_0)$, for $0<u\le1$,
\[
  x\ge b_0\log(e/u)
  \quad\Longrightarrow\quad
  \sup_r\Pbb(|Z_r|>x)\le u.
\]
Thus $q(u)\le b_0\log(e/u)$.  Proposition~\ref{prop:V_bound} therefore gives
\[
  \int_0^{2\alpha_Z(k)}q(u)^2\,du
  \lesssim
  \alpha_Z(k)\log^2\!\bigl(e/\alpha_Z(k)\bigr).
\]
Because $\alpha_Z(k)\le\exp(-\widetilde K_2k^{\gamma_{\mathrm{mix}}})$,
\[
  \sum_{k=0}^\infty
  \int_0^{2\alpha_Z(k)}q(u)^2\,du
  \lesssim
  \sum_{k=0}^\infty
  \alpha_Z(k)\log^2\!\bigl(e/\alpha_Z(k)\bigr)
  \lesssim
  \sum_{k=0}^\infty
  e^{-\widetilde K_2k^{\gamma_{\mathrm{mix}}}}
  (1+k^{\gamma_{\mathrm{mix}}})^2
  <\infty.
\]
Hence the variance term $V$ in
Proposition~\ref{prop:alpha_mixing_concentration} is bounded uniformly in
$|I|$, so $V=O(1)$.  Since the full interval sum is one of the partial sums,
there exists $C<\infty$ such that, for every
$y\ge C_0(\log |I|)^{\eta_0}$,
\begin{equation}\label{eq:D_centered_square_full_tail}
\begin{split}
\Pbb\!\left(
  \left|\sum_{t\in I}
  \left\{(\d_t^{(h)})^2-\E[(\d_t^{(h)})^2]\right\}\right|
  \ge y
\right)
&\le
(|I|+1)\exp\!\left(-\frac{y^{\gamma}}{C_1}\right)
+
\exp\!\left(-\frac{y^2}{C |I|}\right).
\end{split}
\end{equation}

Now set $y=A\sqrt{|I|x}$ with $A>0$ to be chosen. Because $x\ge 1$, one has $y\ge A\sqrt{|I|}$, which dominates $C_0(\log |I|)^{\eta_0}$ for all sufficiently large $|I|$; the finitely many small values of $|I|$ are absorbed into the final constant $C_A$. Also, from
\[
 x \le c_A |I|^{\gamma/(2-\gamma)}
 = c_A |I|^{\gamma_{\mathrm{mix}}/(2+\gamma_{\mathrm{mix}})},
\]
it follows that $x\lesssim |I|^{\gamma/2}x^{\gamma/2}$.  Thus the first
term in \eqref{eq:D_centered_square_full_tail} satisfies, by choosing $A$
large enough and then $c_A$ small enough,
\[
(|I|+1)\exp\!\left(-\frac{A^{\gamma}|I|^{\gamma/2}x^{\gamma/2}}{C_1}\right)
\le e^{-x}.
\]
For the second term in \eqref{eq:D_centered_square_full_tail},
\[
\exp\!\left(-\frac{A^2x}{C}\right)
\le e^{-x}
\]
for $A$ chosen large enough. Combining the two bounds gives the claimed inequality after renaming $A$ as $C_A$.
\end{proof}

\begin{lemma}\label{D:lem:Delta}
Under Assumptions \ref{ass:subgaussian},
\ref{ass:bounded_second_moment}, and
\ref{ass:D_mixing}--\ref{ass:D_Lmin},
\[
  \Delta_Q
  =O_p\!\left(
    \sqrt{\frac{\log n}{L_{\min}}}
    +\sqrt{r_n}
  \right).
\]
In particular, $\Delta_Q=o_p(1)$ and
$\Pbb(\Delta_Q>1/2)\to0$.
\end{lemma}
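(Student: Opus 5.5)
I would mirror the proof of Lemma~\ref{I:lem:Delta}: by Lemma~\ref{lem:denominator_decomposition} it suffices to bound, uniformly over $a=(I,h)\in A$, the three terms $q_a^\circ/\nu_a$, $2C_a/\nu_a$ and $\nu_{a,\mu}/\nu_a$. The third equals $r_n$ at most by \eqref{eq:rn}, and $r_n\le\sqrt{r_n}$ once $r_n\le1$; Assumption~\ref{ass:D_mean_flatness} gives $r_n\to0$ because $b_n\gtrsim1$. What changes relative to Regime~I is that independence is replaced by mixing. This forces different tools for the quadratic fluctuation and the cross term. It is also why the rate contains $\sqrt{r_n}$ rather than $\sqrt{r_n\log n/L_{\min}}$.

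\emph{Quadratic term.} Write
\[
q_a^\circ=\sum_{t\in I}\{\d_t^{(h)2}-\E\d_t^{(h)2}\}-|I|\bar\d_a^2 .
\]
For the first piece I apply Lemma~\ref{lem:centered-square} with $x=x_0+2\log n$. It is admissible because $x\le c_A L_{\min}^{\gamma_{\mathrm{mix}}/(2+\gamma_{\mathrm{mix}})}$ whenever $(\log n)^{1+2/\gamma_{\mathrm{mix}}}/L_{\min}\to0$, which is Assumption~\ref{ass:D_Lmin}. A union bound over $p\le n^2$ coordinates, together with $\nu_a\ge\sigma_-^2|I|$, then gives $O_p(\sqrt{\log n/L_{\min}})$. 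For the second piece I use $|I|\bar\d_a^2/\nu_a=(S_a^\circ/\sqrt{\nu_a})^2/|I|$ and $S_a^\circ/\sqrt{\nu_a}=\omega_a^{-1}\mathbb T_a^\circ$ with $\omega_a^{-1}\le C$. Lemma~\ref{D:lem:Uw} then bounds this piece by $O_p(\log n/L_{\min})$, which is $o(\sqrt{\log n/L_{\min}})$.

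\emph{Cross term.} Under mixing, $C_a=\sum_{t\in I}\mu_{t,a}\d_t^{(h)}$ is no longer a sub-Gaussian weighted sum with scale $\nu_{a,\mu}^{1/2}$, so I would not attempt a sharp tail bound. Instead I use Cauchy--Schwarz deterministically:
\[
|C_a|\le\Bigl(\sum_{t\in I}d_{t,a}^2\Bigr)^{1/2}\nu_{a,\mu}^{1/2}=q_a^{1/2}\nu_{a,\mu}^{1/2}.
\]
Hence
\[
\frac{|C_a|}{\nu_a}\le\Bigl(\frac{q_a}{\nu_a}\Bigr)^{1/2}\Bigl(\frac{\nu_{a,\mu}}{\nu_a}\Bigr)^{1/2}\le\bigl(1+\max_a|q_a^\circ|/\nu_a\bigr)^{1/2}\sqrt{r_n}.
\]
By the quadratic-term step, the first factor is $1+o_p(1)$, so this term is $O_p(\sqrt{r_n})$. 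This is exactly the source of the $\sqrt{r_n}$ in the stated rate.

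The step I expect to be the main obstacle is the uniform control of $\sum_{t\in I}\{\d_t^2-\E\d_t^2\}$ under mixing. The difficulty is checking that the range restriction in Lemma~\ref{lem:centered-square} accommodates $x\asymp\log n$ uniformly over all interval lengths in $\mathcal I$, including the handling of small $|I|$ absorbed into constants. Once that holds, the three bounds combine into $\Delta_Q=O_p(\sqrt{\log n/L_{\min}}+\sqrt{r_n})$. Because $\log n/L_{\min}\to0$ and $r_n\to0$, this gives $\Delta_Q=o_p(1)$ and $\Pbb(\Delta_Q>1/2)\to0$.
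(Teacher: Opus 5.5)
Your proposal is correct and follows the paper's proof essentially step for step: the same three-term decomposition via Lemma~\ref{lem:denominator_decomposition}; Lemma~\ref{lem:centered-square} with $x+2\log n$, a union bound and Lemma~\ref{D:lem:Uw} for $q_a^\circ$; and the pathwise Cauchy--Schwarz bound $|C_a|/\nu_a\le\sqrt{r_n}\,(q_a/\nu_a)^{1/2}$ for the cross term. The range restriction you flag as the main obstacle is in fact automatic: every $I\in\mathcal I$ has $|I|\ge L_{\min}$, and the third condition of Assumption~\ref{ass:D_Lmin} gives $\log n=o\bigl(L_{\min}^{\gamma_{\mathrm{mix}}/(2+\gamma_{\mathrm{mix}})}\bigr)$, so the condition holds uniformly over $\mathcal I$.
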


\begin{proof}
Lemma~\ref{lem:denominator_decomposition} expresses both
$q_a-\nu_a$ and $Q_a-q_a$ in terms of
$q_a^\circ$, $C_a$, and $\nu_{a,\mu}$.  We control these three terms
uniformly.

\medskip
\noindent\emph{Step 1: control of $\frac{q_a^\circ}{\nu_a}$.}
By \eqref{eq:denom_Adef},
\[
  q_a^\circ
  =
  \sum_{t\in I}\Bigl(\d_t^{(h)2}-\E[\d_t^{(h)2}]\Bigr)
  -
  |I|\bar\d_a^2.
\]
By Lemma \ref{lem:centered-square}, there exist constants $C_A,c_A>0$ such that, for every fixed $a=(I,h)$ and every $x\ge 1$ with
\[
x+2\log n\le c_A |I|^{\gamma_{\mathrm{mix}}/(2+\gamma_{\mathrm{mix}})},
\]
one has
\[
\Pbb\!\left(
  \left|\sum_{t\in I}
    \Bigl(\d_t^{(h)2}-\E[\d_t^{(h)2}]\Bigr)\right|
  >C_A\sqrt{|I|(x+2\log n)}\right)
\le 2e^{-(x+2\log n)}.
\]
Assumption \ref{ass:D_Lmin} implies
\[
\log n=o\bigl(L_{\min}^{\gamma_{\mathrm{mix}}/(2+\gamma_{\mathrm{mix}})}\bigr),
\]
so the restriction is eventually satisfied uniformly over all intervals for
any constant $x$.
A union bound over the $p$ coordinates, retaining the coordinate-specific
normalization, gives
\[
  \max_{a=(I,h)\in A}
  \frac{
    \left|\sum_{t\in I}
      \bigl(\d_t^{(h)2}-\E[\d_t^{(h)2}]\bigr)\right|
  }{\nu_a}
  =\Op\!\left(
    \max_{a\in A}\sqrt{\frac{\log n}{|I_a|}}
  \right)
  =\Op\!\left(\sqrt{\frac{\log n}{L_{\min}}}\right).
\]

Assumption~\ref{ass:bounded_second_moment} gives
\[
  \frac{\nu_a}{|I|}\ge \sigma_-^2,
  \qquad
  \omega_a
  =
  \left(\frac{\nu_a}{|I|}\right)^{1/2-\gamma}
  \ge(\sigma_-^2)^{1/2-\gamma},
  \qquad
  \omega_a^{-1}\le C.
\]
Since $S_a^\circ/\sqrt{\nu_a}=\omega_a^{-1}\mathbb T_a^\circ$,
Lemma~\ref{D:lem:Uw} gives
\[
\max_{a\in A}\frac{(S_a^{\circ})^2}{|I|\nu_a}
\le
\frac{C(\bar{\mathbb{T}}_{\max}^{\circ})^2}{L_{\min}}
=
\Op\!\left(\frac{\log n}{L_{\min}}\right).
\]
Combining the last two displays with the decomposition of $q_a^\circ$ at
the start of this step gives
\begin{equation}\label{eq:D_Delta_q_bound}
  \max_{a\in A}\frac{|q_a^\circ|}{\nu_a}
  =
  \Op\!\left(
    \sqrt{\frac{\log n}{L_{\min}}}
    +\frac{\log n}{L_{\min}}
  \right).
\end{equation}

\medskip
\noindent\emph{Step 2: control of $\frac{C_a}{\nu_a}$.}
The equality $\sum_{t\in I}\mu_{t,a}=0$ gives
$C_a=\sum_{t\in I}d_{t,a}\mu_{t,a}$.  Hence Cauchy--Schwarz yields the
pathwise bound
\[
  \frac{|C_a|}{\nu_a}
  \le
  \left(\frac{q_a}{\nu_a}\right)^{1/2}
  \left(\frac{\nu_{a,\mu}}{\nu_a}\right)^{1/2}
  \le
  \sqrt{r_n}
  \left(\frac{q_a}{\nu_a}\right)^{1/2}.
\]
Equation \eqref{eq:D_Delta_q_bound} shows that
$\max_a q_a/\nu_a=1+o_p(1)$.  Therefore
\begin{equation}\label{eq:D_Delta_C_bound}
  \max_{a\in A}\frac{|C_a|}{\nu_a}=O_p(\sqrt{r_n}).
\end{equation}

\medskip
\noindent\emph{Step 3: control of $\frac{\nu_{a,\mu}}{\nu_a}$.}
By definition \eqref{eq:rn} of $r_n$,
\begin{equation}\label{eq:D_Delta_R_bound}
\max_{a\in A}\frac{\nu_{a,\mu}}{\nu_a}=r_n.
\end{equation}

Combining \eqref{eq:D_Delta_q_bound}, \eqref{eq:D_Delta_C_bound}, and
\eqref{eq:D_Delta_R_bound} gives
\[
  \max_{a\in A}
  \left|\frac{Q_a-\nu_a}{\nu_a}\right|
  =
  O_p\!\left(
    \sqrt{\frac{\log n}{L_{\min}}}
    +\frac{\log n}{L_{\min}}
    +\sqrt{r_n}+r_n
  \right),
\]
Assumption~\ref{ass:D_Lmin} implies
$\log n/L_{\min}=o(\sqrt{\log n/L_{\min}})$, and
Assumption~\ref{ass:D_mean_flatness}, together with $b_n\to\infty$, implies
$r_n=o(1)$ and hence $r_n=O(\sqrt{r_n})$.  Thus the displayed bound on
$\max_{a\in A}|(Q_a-\nu_a)/\nu_a|$ gives the asserted rate for $\Delta_Q$.
Assumptions
\ref{ass:D_mean_flatness} and \ref{ass:D_Lmin} make
\[
  \sqrt{\frac{\log n}{L_{\min}}}
  +\sqrt{r_n}
  \to0,
\]
proving the stated probability conclusions.
\end{proof}

The next lemma gives uniform maximum and fourth-moment bounds for the
centered variables $\d_t^{(h)}=D_t^{(h)}-\mu_t^{(h)}$ in Regime~D.
\begin{lemma}
\label{D:lem:fourth}
Under Assumptions~\ref{ass:subgaussian}, \ref{ass:D_mixing}, and
\ref{ass:D_Lmin},
\[
  \max_{a=(I,h)\in A}\max_{t\in I}(\d_t^{(h)})^2=O_p(\log n),
  \qquad
  \max_{a=(I,h)\in A}\frac1{|I|}\sum_{t\in I}(\d_t^{(h)})^4=O_p(1).
\]
\end{lemma}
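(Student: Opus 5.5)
The two claims can be proved separately, mirroring Lemma~\ref{I:lem:fourth} but replacing independence by the geometric $\alpha$-mixing of Assumption~\ref{ass:D_mixing}.

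\emph{Maximum bound.} This part does not use dependence at all. Assumption~\ref{ass:subgaussian} gives $\Pbb(|\d_t^{(h)}|>u)\le 2e^{-cu^2}$ uniformly in $(t,h)$. A union bound over the $nH\le np$ pairs, combined with \eqref{eq:logpn_logn} and Proposition~\ref{prop:subgaussian_max}, then yields $\max_{t,h}|\d_t^{(h)}|=O_p(\sqrt{\log n})$. This dominates the maximum over $t\in I$ for every $a\in A$.

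\emph{Fourth-moment bound.} Set $X_t^{(h)}:=(\d_t^{(h)})^4-\E[(\d_t^{(h)})^4]$. Uniform sub-Gaussianity gives $\sup_{t,h}\E[(\d_t^{(h)})^4]\le C$, so it suffices to show
\[
  \max_{a}\Bigl|\sum_{t\in I}X_t^{(h)}\Bigr|\le C|I|
\]
with probability $1-o(1)$. Since each $X_t^{(h)}$ is a measurable function of $D_t$, the sequence $(X_t^{(h)})_t$ inherits the geometric mixing rate $\exp(-\widetilde K_2\ell^{\gamma_{\mathrm{mix}}})$, exactly as in the proof of Lemma~\ref{lem:centered-square}. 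Its tails satisfy $\Pbb(|X_t^{(h)}|>u)\le 2e^{-c\sqrt u}$, so Proposition~\ref{prop:alpha_mixing_concentration} applies with $\gamma_1=\gamma_{\mathrm{mix}}$, $\gamma_2=1/2$ and effective exponent $\bar\gamma=(1/\gamma_{\mathrm{mix}}+2)^{-1}=\gamma_{\mathrm{mix}}/(1+2\gamma_{\mathrm{mix}})$.

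The variance term $V$ is bounded through Proposition~\ref{prop:V_bound}. The tail quantile satisfies $q(u)\lesssim \log^2(e/u)$, so
\[
  \sum_k\int_0^{2\alpha(k)}q(u)^2\,du\lesssim\sum_k \alpha(k)\log^4\!\bigl(e/\alpha(k)\bigr)<\infty
\]
by geometric mixing, and hence $V=O(1)$. Taking $y=C|I|$, which exceeds the threshold $C_0(\log|I|)^{\eta_0}$, gives
\[
  \Pbb\Bigl(\Bigl|\sum_{t\in I}X_t^{(h)}\Bigr|>C|I|\Bigr)\le (|I|+1)\exp\!\bigl(-c|I|^{\bar\gamma}\bigr)+\exp(-c|I|).
\]

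\emph{Union bound.} Summing over the at most $p\le n^2$ pairs $a\in A$ gives a total failure probability of at most
\[
  n^3\exp\!\bigl(-cL_{\min}^{\bar\gamma}\bigr)+n^2\exp(-cL_{\min}).
\]
This tends to zero provided $\log n=o\bigl(L_{\min}^{\gamma_{\mathrm{mix}}/(1+2\gamma_{\mathrm{mix}})}\bigr)$, i.e.\ $(\log n)^{2+1/\gamma_{\mathrm{mix}}}=o(L_{\min})$. That is one of the length conditions in Assumption~\ref{ass:D_Lmin}, and it is precisely why that condition was included.

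\emph{Main obstacle.} The delicate step is bookkeeping the exponent $\bar\gamma$ in Proposition~\ref{prop:alpha_mixing_concentration} when the summands are only Weibull-tailed with parameter $1/2$. I will verify that the proposition's requirement $\bar\gamma<1$ holds, which is automatic here since $\bar\gamma<1/2$. I will also verify that the finite-$V$ computation remains valid with the extra logarithmic powers coming from the heavier-tailed quantile function.
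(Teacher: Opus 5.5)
Your proposal is correct and matches the paper's proof essentially step for step. The paper also proves the first claim by the same sub-Gaussian maximal bound, and the second by applying Proposition~\ref{prop:alpha_mixing_concentration} to the centered fourth powers with $\gamma_1=\gamma_{\mathrm{mix}}$ and $\gamma_2=1/2$, obtaining $V=O(1)$ from the $\log^2(e/u)$ quantile envelope, taking $y=C|I|$, and closing the union bound with the condition $(\log n)^{2+1/\gamma_{\mathrm{mix}}}=o(L_{\min})$ from Assumption~\ref{ass:D_Lmin}. The only item you leave implicit is the $\sup_i\mathbb{E}[X_i^2]$ term in \eqref{eq:remark3-strong-mixing}; it is bounded by uniform sub-Gaussianity, as the paper notes.
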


\begin{proof}
The first claim follows from Assumption~\ref{ass:subgaussian},
\eqref{eq:logpn_logn}, $nH\le np$, and
Proposition~\ref{prop:subgaussian_max}, exactly as in
Lemma~\ref{I:lem:fourth}.  It remains to prove the fourth-moment bound.

Set $X_t^{(h)}=(\d_t^{(h)})^4-\E[(\d_t^{(h)})^4]$.  Uniform
sub-Gaussianity gives a semiexponential tail with exponent $1/2$:
\[
  \sup_{t,h}\Pbb(|X_t^{(h)}|>u)
  \le C\exp(-c\sqrt u),
  \qquad u>0.
\]
The sequence inherits the mixing coefficients in Assumption
\ref{ass:D_mixing}.  Apply Proposition
\ref{prop:alpha_mixing_concentration} with
\[
  \gamma_1=\gamma_{\mathrm{mix}},
  \qquad
  \gamma_2=\frac12,
  \qquad
  \gamma_4
  :=\left(\frac1{\gamma_{\mathrm{mix}}}+2\right)^{-1}.
\]
The fourth powers have uniformly bounded second moments and their quantile
envelope obeys $q(u)\lesssim\log^2(e/u)$.  Consequently,
Proposition~\ref{prop:V_bound} gives
\[
  \int_0^{2\alpha_X(k)}q(u)^2\,du
  \lesssim
  \alpha_X(k)\log^4\!\bigl(e/\alpha_X(k)\bigr).
\]
By Assumption~\ref{ass:D_mixing},
$\alpha_X(k)\le\exp(-\widetilde K_2 k^{\gamma_{\mathrm{mix}}})$ for a
constant $\widetilde K_2>0$.  Hence
\[
  \sum_{k>0}\int_0^{2\alpha_X(k)}q(u)^2\,du
  \lesssim
  \sum_{k>0}
  e^{-\widetilde K_2 k^{\gamma_{\mathrm{mix}}}}
  (1+k^{\gamma_{\mathrm{mix}}})^4
  <\infty.
\]
Together with the uniform boundedness of
$\E[(X_t^{(h)})^2]$, equation~\eqref{eq:remark3-strong-mixing} gives
$V=O(1)$ uniformly over $I$ and $h$.  Taking $y=Cm$ in
\eqref{eq:mpr-strong-mixing} gives
\begin{equation}\label{eq:D_fourth_tail}
  \Pbb\left(\left|\sum_{t\in I}X_t^{(h)}\right|>Cm\right)
  \le C(m+1)e^{-cm^{\gamma_4}}+Ce^{-cm}.
\end{equation}
Because $1/\gamma_4=2+1/\gamma_{\mathrm{mix}}$, Assumption
\ref{ass:D_Lmin} implies $\log n=o(L_{\min}^{\gamma_4})$.  A union bound
over the $p$ interval-coordinate pairs in \eqref{eq:D_fourth_tail} proves
the second claim.
\end{proof}

\subsubsection{Proof of the centered Gaussian approximation}\label{subsub:D_centered_gaussian}

As in Lemma~\ref{D:lem:Uw}, set
\begin{equation}\label{eq:D_Zcirc}
Z_{t,a}^{\circ}
:=\frac{\sqrt n\,\1\{t\in I\}\d_t^{(h)}}
{|I|^{1/2-\gamma}\nu_a^\gamma}.
\end{equation}
Then
\[
\mathbb{T}^{\circ}=\frac{1}{\sqrt n}\sum_{t=1}^n Z_t^\circ.
\]
The sequence inherits the same geometric mixing rate, and Assumptions
\ref{ass:D_nondegenerate} and \ref{ass:length_ratio} yield
\[
c\le \Var\!\left(\frac{1}{\sqrt n}\sum_{t=1}^n
Z_{t,a}^\circ\right)\le C
\qquad\forall a\in A.
\]
By Assumptions \ref{ass:subgaussian} and
\ref{ass:bounded_second_moment},
\begin{equation*}\tag{\theequation a}\label{eq:D_Zcirc_psi2_bound}
\|Z_{t,a}^\circ\|_{\psi_2}
\le B_0(1\vee\sigma_-^{-1})\sqrt{\frac n{L_{\min}}}
\le B_n,
\qquad\forall t,a.
\end{equation*}
First, we assume $p=\Omega(n^\kappa)$ for some $\kappa>0$.
Equation~\eqref{eq:D_Zcirc_psi2_bound} verifies the sub-Gaussian bound in
Proposition~\ref{prop:dependent_CLT}; the variance bounds above verify its
nondegeneracy condition, and Assumption~\ref{ass:D_mixing} gives the required
geometric mixing.  The first two rate conditions in
Assumption~\ref{ass:D_growth} are exactly the CLT rate conditions in
Proposition~\ref{prop:dependent_CLT}.  Hence, with
$S_n=\mathbb T^\circ$ and $G=G_\circ$, the one-sided rectangle
$(-\infty,x]^p$ gives the event
$\{\mathbb T_{\max}^\circ\le x\}$ for each $x$.  Then
Proposition~\ref{prop:dependent_CLT} implies
\begin{equation}\label{eq:D_oracle_U_cdf}
\sup_{x\in\R}
\Bigl|
\Pbb\bigl(\mathbb{T}_{\max}^{\circ}\le x\bigr)
-
\Pbb\bigl(G_{\circ,\max}\le x\bigr)
\Bigr|
\to 0.
\end{equation}
This is \eqref{eq:roadmap_centered_gaussian}.

It remains to cover the case in which $p=o(n^\kappa)$ for every
$\kappa>0$.  Proposition~\ref{prop:dependent_CLT} requires a polynomial
dimension.  Set
\[
  \widetilde A:=A\times[m_n],
  m_n:=\left\lceil\frac{n^{1/2}}{p}\right\rceil,
  \qquad
  \widetilde p:=|\widetilde A|=m_np,
\]
and let $R_n:\R^p\to\R^{\widetilde p}$ be the coordinate-replication map
\[
  (R_n z)_{(a,j)}:=z_a,\qquad (a,j)\in A\times[m_n].
\]
The duplicated sequence is
\[
  \widetilde Z_t^\circ:=R_nZ_t^\circ,\qquad t=1,\ldots,n,
\]
so only the coordinate index is enlarged; the time index is not replicated or
reordered.  Equivalently, for $\widetilde a=(a,j)\in\widetilde A$,
\[
  \widetilde Z_{t,\widetilde a}^{\circ}:=Z_{t,a}^{\circ},
  \qquad
  \widetilde{\mathbb T}_{\widetilde a}^{\circ}:=\mathbb T_a^\circ.
\]
Thus
\[
  \widetilde{\mathbb T}_{\max}^{\circ}
  :=\max_{\widetilde a\in\widetilde A}
  \widetilde{\mathbb T}_{\widetilde a}^{\circ}
  =
  \max_{a\in A}\mathbb T_a^\circ
  =
  \mathbb T_{\max}^{\circ}.
\]
Since $p=o(n^{1/2})$, we have
$n^{1/2}\le \widetilde p<n^{1/2}+p\asymp n^{1/2}$, so the duplicated array
satisfies $\widetilde p=\Omega(n^{1/3})$, the dimension requirement in
Proposition~\ref{prop:dependent_CLT}.  For $\widetilde a=(a,j)$,
\[
  \max_{t,\widetilde a}\|\widetilde Z_{t,\widetilde a}^{\circ}\|_{\psi_2}
  =
  \max_{t,a}\|Z_{t,a}^{\circ}\|_{\psi_2},
  \qquad
  \Var\!\left(\frac1{\sqrt n}\sum_{t=1}^n
  \widetilde Z_{t,\widetilde a}^{\circ}\right)
  =
  \Var\!\left(\frac1{\sqrt n}\sum_{t=1}^n Z_{t,a}^{\circ}\right).
\]
Moreover, for any $B\subset[n]$,
\[
  \sigma(\widetilde Z_t^\circ:t\in B)
  =
  \sigma(R_nZ_t^\circ:t\in B)
  =
  \sigma(Z_t^\circ:t\in B),
\]
because $Z_{t,a}^\circ=\widetilde Z_{t,(a,1)}^\circ$.  Hence, with the usual
definition
\[
  \alpha_Z(\ell)
  :=
  \sup_t
  \alpha\!\left(\sigma(Z_s^\circ:s\le t),
  \sigma(Z_s^\circ:s\ge t+\ell)\right),
\]
we have $\alpha_{\widetilde Z}(\ell)=\alpha_Z(\ell)$ for every $\ell$.
Moreover $\log \widetilde p\asymp\log n$.  Hence the first two displays in
Assumption~\ref{ass:D_growth} imply the two rate conditions in
Proposition~\ref{prop:dependent_CLT} with dimension $\widetilde p$:
\[
  (\log \widetilde p)^{3-\gamma_{\mathrm{mix}}}
  =
  o\bigl(n^{\gamma_{\mathrm{mix}}/3}\bigr),
  \qquad
  B_n^{2/3}
  \frac{(\log \widetilde p)^{(1+2\gamma_{\mathrm{mix}})/(3\gamma_{\mathrm{mix}})}}{n^{1/9}}
  +
  B_n
  \frac{(\log \widetilde p)^{7/6}}{n^{1/9}}
  \to0.
\]

Let $\widetilde G_\circ$ be Gaussian with the covariance matrix of
$\widetilde{\mathbb T}^{\circ}$.  Proposition~\ref{prop:dependent_CLT} applied
to the duplicated array gives
\[
\sup_{x\in\R}
\Bigl|
\Pbb\bigl(\widetilde{\mathbb T}_{\max}^{\circ}\le x\bigr)
-
\Pbb\bigl(\widetilde G_{\circ,\max}\le x\bigr)
\Bigr|
\to0.
\]
Construct $\widetilde G_\circ$ by duplicating the coordinates of $G_\circ$;
then $\widetilde G_{\circ,\max}=G_{\circ,\max}$ pathwise.  Therefore
\eqref{eq:D_oracle_U_cdf}, and hence
\eqref{eq:roadmap_centered_gaussian}, also holds in the sub-polynomial case.

\subsubsection{Proof of the fixed-denominator bootstrap covariance}\label{subsub:D_bootstrap_covariance}

For $a=(I,h)$ and $b=(J,g)$, the conditional covariance matrix of the
fixed-denominator bootstrap satisfies
\begin{equation}\label{eq:D_boot_cov_entry}
\widehat\Sigma_{ab}
=
\frac{\sum_{t\in I}\sum_{u\in J}\Theta_n(t,u)
\left(D_t^{(h)}-\bar D_a\right)
\left(D_u^{(g)}-\bar D_b\right)}
{|I|^{1/2-\gamma}|J|^{1/2-\gamma}Q_a^\gamma Q_b^\gamma}.
\end{equation}
Use $Z_t^\circ$ from \eqref{eq:D_Zcirc} and set
$\bar Z^\circ:=n^{-1}\sum_{t=1}^n Z_t^\circ$.  Let
\[
  \widehat{\Sigma}^\circ
  :=\frac1n\sum_{t=1}^n\sum_{u=1}^n
  \Theta_n(t,u)(Z_t^{\circ}-\bar Z^\circ)
  (Z_u^{\circ}-\bar Z^\circ)^\top .
\]
The array $(Z_t^\circ)_{t\le n}$ is centered, inherits the geometric mixing
rate from Assumption~\ref{ass:D_mixing}, and satisfies
$\max_{t,a}\|Z_{t,a}^\circ\|_{\psi_2}\le B_n$ by
\eqref{eq:D_Zcirc_psi2_bound}.  Recall that
\[
  \Sigma_{\circ}
  =
  \Cov(\mathbb T^\circ)
  =
  \Cov\!\left(\frac1{\sqrt n}\sum_{t=1}^n Z_t^\circ\right).
\]
Therefore
Proposition~\ref{prop:consistency_covariance_matrix}, applied to
$(Z_t^\circ)_{t\le n}$, yields
\[
  \|\widehat{\Sigma}^\circ-\Sigma_{\circ}\|_{\max}
  =
  O_p\!\left(B_n^2n^{-c_1}(\log p)^{c_2}\right)
  +O\!\left(B_n^2n^{-\rho}\right).
\]
Because $\log p\le\log n$, the condition
$B_n^2n^{-c_1}(\log n)^{c_2+2}\to0$ makes the first term
$o_p((\log n)^{-2})$, while
$B_n^2n^{-\rho}(\log n)^2\to0$ makes the second term
$o((\log n)^{-2})$.  Hence
\begin{equation}\label{eq:D_centered_kernel_cov}
  \|\widehat{\Sigma}^\circ-\Sigma_{\circ}\|_{\max}
  =o_p((\log n)^{-2}).
\end{equation}

It remains to compare the feasible covariance in
\eqref{eq:D_boot_cov_entry} with the globally centered oracle covariance
$\widehat{\Sigma}^\circ$.  For $a=(I,h)$, equations
\eqref{eq:dta_def}--\eqref{eq:muta_def} give
\begin{equation}\label{eq:D_feasible_decomposition}
  D_t^{(h)}-\bar D_a=d_{t,a}+\mu_{t,a},
  \qquad t\in I,
\end{equation}
Define
\[
  w_{t,I}:=\frac{\1\{t\in I\}}{|I|}-\frac1n,
  \qquad t\in[n].
\]
Then \eqref{eq:D_Zcirc} gives
\begin{equation}\label{eq:D_oracle_centering_decomposition}
  Z_{t,a}^\circ-\bar Z_a^\circ
  =
  \frac{\sqrt n}{|I|^{1/2-\gamma}\nu_a^\gamma}
  \left\{
    \1\{t\in I\}d_{t,a}
    +S_a^\circ w_{t,I}
  \right\}.
\end{equation}
Also,
\begin{equation}\label{eq:D_cov_transfer_norms}
  \sum_{t\in I}(d_{t,a}+\mu_{t,a})^2=Q_a,
  \qquad
  \sum_{t\in I}\mu_{t,a}^2=\nu_{a,\mu},
  \qquad
  \sum_{t=1}^n
  (S_a^\circ)^2 w_{t,I}^2
  \le \frac{(S_a^\circ)^2}{|I|}.
\end{equation}
To see the denominator bound, write $x_c:=Q_c/\nu_c$.  On
$\{\Delta_Q\le1/2\}$, the definition of $\Delta_Q$ gives
$x_c\in[1/2,3/2]$ for every $c\in A$.  Since
$f(x):=x^{-\gamma}$ has
\[
  \sup_{1/2\le x\le 3/2}|f'(x)|
  =
  \sup_{1/2\le x\le 3/2}\gamma x^{-\gamma-1}
  \le C,
\]
the mean value theorem yields, uniformly in $c$,
\begin{equation}\label{eq:D_cov_transfer_denominator}
  \max_{c\in A}
  \left|
  \left(\frac{\nu_c}{Q_c}\right)^\gamma-1
  \right|
  =
  \max_{c\in A}|x_c^{-\gamma}-1|
  \le
  C\max_{c\in A}|x_c-1|
  \le C\Delta_Q.
\end{equation}
For any $x,y\in\R^n$, Lemma~\ref{D:lem:rowsum} gives
\begin{equation}\label{eq:D_cov_transfer_rowsum}
  \left|\sum_{t=1}^n\sum_{u=1}^n\Theta_n(t,u)x_ty_u\right|
  \le Cb_n\norm{x}_2\norm{y}_2.
\end{equation}
To obtain an entrywise bound for
$\widehat\Sigma_{ab}-\widehat\Sigma_{ab}^{\circ}$,
\eqref{eq:D_boot_cov_entry} and
\eqref{eq:D_oracle_centering_decomposition} give
\[
\begin{aligned}
  \widehat\Sigma_{ab}
  &=
  \frac{1}{|I|^{1/2-\gamma}|J|^{1/2-\gamma}Q_a^\gamma Q_b^\gamma}
  \sum_{t=1}^n\sum_{s=1}^n
  \Theta_n(t,s)\1\{t\in I\}(d_{t,a}+\mu_{t,a})
  \\
  &\hspace{7em}\times
    \1\{s\in J\}(d_{s,b}+\mu_{s,b}),
    \\
  \widehat\Sigma_{ab}^\circ
  &=
  \frac{1}{|I|^{1/2-\gamma}|J|^{1/2-\gamma}\nu_a^\gamma\nu_b^\gamma}
  \sum_{t=1}^n\sum_{s=1}^n\Theta_n(t,s)
  \left\{\1\{t\in I\}d_{t,a}
    +S_a^\circ w_{t,I}\right\}
  \\
  &\hspace{7em}\times
    \left\{\1\{s\in J\}d_{s,b}
      +S_b^\circ w_{s,J}\right\}.
\end{aligned}
\]
Define the same oracle-centered covariance with the feasible denominator:
\[
\begin{aligned}
  \widetilde{\widehat{\Sigma}}_{ab}^\circ
  &:=
  \frac{1}{|I|^{1/2-\gamma}|J|^{1/2-\gamma}Q_a^\gamma Q_b^\gamma}
  \sum_{t=1}^n\sum_{s=1}^n\Theta_n(t,s)
  \left\{\1\{t\in I\}d_{t,a}
    +S_a^\circ w_{t,I}\right\}
  \\
  &\hspace{7em}\times
    \left\{\1\{s\in J\}d_{s,b}
      +S_b^\circ w_{s,J}\right\}.
\end{aligned}
\]
Then
\[
  \widetilde{\widehat{\Sigma}}_{ab}^\circ
  =
  (Q_a/\nu_a)^{-\gamma}(Q_b/\nu_b)^{-\gamma}
  \widehat\Sigma_{ab}^{\circ}.
\]
On $\{\Delta_Q\le1/2\}$, \eqref{eq:D_cov_transfer_denominator} gives
\[
  \left|
  (Q_a/\nu_a)^{-\gamma}(Q_b/\nu_b)^{-\gamma}-1
  \right|
  \le C\Delta_Q .
\]
Also, \eqref{eq:D_centered_kernel_cov} and $\|\Sigma_{\circ}\|_{\max}\le C$
give $\|\widehat{\Sigma}^{\circ}\|_{\max}=O_p(1)$.  Hence, with probability
$1-o(1)$,
\[
  \max_{a,b\in A}
  \left|\widetilde{\widehat{\Sigma}}_{ab}^\circ
  -\widehat\Sigma_{ab}^{\circ}\right|
  \le
  C\Delta_Q,
\]
which controls the denominator replacement.

For the numerator difference
$\widehat\Sigma_{ab}-\widetilde{\widehat{\Sigma}}_{ab}^{\circ}$, the
$\1\{t\in I\}d_{t,a}\1\{s\in J\}d_{s,b}$ term cancels:
\[
\begin{aligned}
&\1\{t\in I\}(d_{t,a}+\mu_{t,a})
  \1\{s\in J\}(d_{s,b}+\mu_{s,b})
\\
&\quad -
\left\{\1\{t\in I\}d_{t,a}
  +S_a^\circ w_{t,I}\right\}
\left\{\1\{s\in J\}d_{s,b}
  +S_b^\circ w_{s,J}\right\}
\\
&=
\1\{t\in I\}\1\{s\in J\}
\left(\mu_{t,a}d_{s,b}+d_{t,a}\mu_{s,b}
  +\mu_{t,a}\mu_{s,b}\right)
\\
&\quad
-\1\{t\in I\}d_{t,a}
  S_b^\circ w_{s,J}
-S_a^\circ w_{t,I}\1\{s\in J\}d_{s,b}
\\
&\quad
-S_a^\circ S_b^\circ w_{t,I}w_{s,J}.
\end{aligned}
\]
Equations \eqref{eq:D_cov_transfer_norms} and
\eqref{eq:D_cov_transfer_denominator}, and the definition \eqref{eq:rn},
give, on $\{\Delta_Q\le1/2\}$,
\[
  \frac{\left\|\bigl(\1\{t\in I\}d_{t,a}\bigr)_{t=1}^n\right\|_2}
       {\sqrt{\nu_a}}
  \le
  \left\{2\frac{Q_a}{\nu_a}
  +2\frac{\nu_{a,\mu}}{\nu_a}\right\}^{1/2}
  \le C,
  \qquad
  \frac{\left\|\bigl(\1\{t\in I\}\mu_{t,a}\bigr)_{t=1}^n\right\|_2}
       {\sqrt{\nu_a}}
  =
  \sqrt{\frac{\nu_{a,\mu}}{\nu_a}},
\]
\[
  \frac{
  \left\|\bigl(S_a^\circ w_{t,I}\bigr)_{t=1}^n\right\|_2}
  {\sqrt{\nu_a}}
  \le
  \frac{|S_a^\circ|}{\sqrt{|I|\nu_a}}
  \le
  \frac{|S_a^\circ|}{\sqrt{\nu_a}}.
\]
By \eqref{eq:D_cov_transfer_rowsum}, uniformly in $a=(I,h)$ and $b=(J,g)$,
\begin{equation}\label{eq:D_cov_transfer_bound}
\begin{split}
\left|\widehat\Sigma_{ab}
-\widetilde{\widehat{\Sigma}}^\circ_{ab}\right|
\le{}& C\frac{b_n}{\sqrt{|I||J|}}
\Biggl[
\frac{|S_a^\circ|}{\sqrt{\nu_a}}
+\frac{|S_b^\circ|}{\sqrt{\nu_b}}
+\sqrt{\frac{\nu_{a,\mu}}{\nu_a}}
+\sqrt{\frac{\nu_{b,\mu}}{\nu_b}}
\\
&\hspace{9em}
+\left(\frac{|S_a^\circ|}{\sqrt{\nu_a}}
  +\sqrt{\frac{\nu_{a,\mu}}{\nu_a}}\right)
 \left(\frac{|S_b^\circ|}{\sqrt{\nu_b}}
  +\sqrt{\frac{\nu_{b,\mu}}{\nu_b}}\right)
\Biggr]
\end{split}
\end{equation}
By Assumption~\ref{ass:bounded_second_moment}, \eqref{eq:mathbbT_def}, and
\eqref{eq:rn},
\[
  \max_{a\in A}\frac{|S_a^\circ|}{\sqrt{\nu_a}}
  =
  \max_{a=(I,h)\in A}
  \left(\frac{|I|}{\nu_a}\right)^{1/2-\gamma}
  |\mathbb T_a^\circ|
  \le C\bar{\mathbb T}_{\max}^\circ,
  \qquad
  \max_{a\in A}\sqrt{\frac{\nu_{a,\mu}}{\nu_a}}\le\sqrt{r_n}.
\]
Since $|I|,|J|\ge L_{\min}$, \eqref{eq:D_cov_transfer_bound} yields
\[
  \max_{a,b\in A}
  \left|\widehat\Sigma_{ab}
  -\widetilde{\widehat{\Sigma}}^\circ_{ab}\right|
  \le
  C\frac{b_n}{L_{\min}}
  \left\{
    \bar{\mathbb T}_{\max}^\circ+\sqrt{r_n}
    +(\bar{\mathbb T}_{\max}^\circ+\sqrt{r_n})^2
  \right\}.
\]
The triangle inequality and the denominator-replacement bound give
\[
\begin{aligned}
  \|\widehat\Sigma-\widehat{\Sigma}^{\circ}\|_{\max}
  &\le
  \max_{a,b\in A}
  \left|\widehat\Sigma_{ab}
    -\widetilde{\widehat{\Sigma}}_{ab}^{\circ}\right|
  +
  \max_{a,b\in A}
  \left|\widetilde{\widehat{\Sigma}}_{ab}^{\circ}
    -\widehat\Sigma_{ab}^{\circ}\right|
  \\
  &\le
  C\Delta_Q
  +
  C\frac{b_n}{L_{\min}}
  \left\{
    \bar{\mathbb T}_{\max}^\circ+\sqrt{r_n}
    +(\bar{\mathbb T}_{\max}^\circ+\sqrt{r_n})^2
  \right\}.
\end{aligned}
\]
Lemma~\ref{D:lem:Delta}, Lemma~\ref{D:lem:Uw}, and
Assumptions~\ref{ass:D_mean_flatness} and \ref{ass:D_Lmin} imply
\[
\Delta_Q=o_p((\log n)^{-2}),
\qquad
\frac{b_n}{L_{\min}}
\left\{
\bar{\mathbb T}_{\max}^\circ+\sqrt{r_n}
+(\bar{\mathbb T}_{\max}^\circ+\sqrt{r_n})^2
\right\}
=o_p((\log n)^{-2}).
\]
Therefore
\begin{equation}\label{eq:D_cov_transfer}
  \|\widehat\Sigma-\widehat{\Sigma}^\circ\|_{\max}
  =o_p((\log n)^{-2}).
\end{equation}
Combining \eqref{eq:D_centered_kernel_cov} and
\eqref{eq:D_cov_transfer} proves \eqref{eq:roadmap_bootstrap_covariance}.

\subsubsection{Verifying conditions of the reduction lemmas}\label{subsub:D_verify_reduction}

We first verify Lemma~\ref{lem:centered_oracle_perturbation}.  For
$\gamma=0$, $T_{\max}^\circ=\mathbb T_{\max}^\circ$.  For $\gamma>0$,
Lemmas~\ref{D:lem:Delta} and \ref{D:lem:Uw} imply
\[
  C\gamma\Delta_Q\bar{\mathbb{T}}_{\max}^{\circ}
  =
  O_p\!\left(
  \left\{
    \sqrt{\frac{\log n}{L_{\min}}}
    +\sqrt{r_n}
  \right\}\sqrt{\log n}
  \right).
\]
Moreover,
\[
  \left\{
    \sqrt{\frac{\log n}{L_{\min}}}
    +\sqrt{r_n}
  \right\}\log n
  \lesssim
  \frac{(\log n)^{3/2}}{\sqrt{L_{\min}}}
  +\sqrt{r_n}\log n
  \to0,
\]
where Assumption~\ref{ass:D_Lmin} controls the first term, while
$b_nr_n(\log n)^2\to0$ and $b_n\to\infty$ control the second term.  Choose
$M_n\uparrow\infty$ sufficiently slowly that
\[
  \eta_n:=
  M_n
  \left\{
    \sqrt{\frac{\log n}{L_{\min}}}
    +\sqrt{r_n}
  \right\}\sqrt{\log n}
  \to0,
  \qquad
  \eta_n\sqrt{1\vee\log(p/\eta_n)}\to0.
\]
Then
\begin{equation}\label{eq:D_oracle_perturb_prob}
  \Pbb(\Delta_Q>1/2)
  +
  \Pbb\bigl(C\gamma\Delta_Q\bar{\mathbb{T}}_{\max}^{\circ}>\eta_n\bigr)
  \to0.
\end{equation}
Since $\Sigma_{\circ}=\Cov(\mathbb{T}^{\circ})$ and
$\mathbb{T}^{\circ}$ is defined in \eqref{eq:mathbbT_def}, for $a=(I,h)$,
\[
  (\Sigma_{\circ})_{aa}
  =
  \Var(\mathbb{T}_a^{\circ})
  =
  \left(\frac{\nu_a}{|I|}\right)^{1-2\gamma}
  \Var\!\left(\frac{S_a^\circ}{\sqrt{\nu_a}}\right).
\]
Assumptions~\ref{ass:bounded_second_moment} and
\ref{ass:D_nondegenerate} therefore give constants $0<c<C<\infty$ such
that
\[
  c\le(\Sigma_{\circ})_{aa}\le C
  \qquad\forall a\in A.
\]
Proposition~\ref{prop:gaussian_anti_concentration} gives
\begin{equation}\label{eq:D_oracle_anti_conc}
  \sup_{x\in\R}\Pbb\bigl(|G_{\circ,\max}-x|\le\eta_n\bigr)
  \to0.
\end{equation}
For every $x\in\R$, \eqref{eq:D_oracle_U_cdf} gives
\[
  \Pbb\bigl(|\mathbb T_{\max}^{\circ}-x|\le\eta_n\bigr)
  \le
  \Pbb\bigl(|G_{\circ,\max}-x|\le\eta_n\bigr)
  +2\sup_{z\in\R}
  \left|
  \Pbb\bigl(\mathbb T_{\max}^{\circ}\le z\bigr)
  -\Pbb\bigl(G_{\circ,\max}\le z\bigr)
  \right|.
\]
Equations~\eqref{eq:D_oracle_U_cdf} and \eqref{eq:D_oracle_anti_conc}
therefore imply
\[
  \sup_{x\in\R}
  \Pbb\bigl(|\mathbb T_{\max}^{\circ}-x|\le\eta_n\bigr)
  \to0.
\]
Together with \eqref{eq:D_oracle_perturb_prob}, this verifies the
conditions of Lemma~\ref{lem:centered_oracle_perturbation}.

We next verify Lemma~\ref{lem:random_denominator} with $k_n=b_n$.
Lemma~\ref{D:lem:Delta} and
$\nu_a\ge\sigma_-^2|I|$ verify \eqref{eq:RD_Qlower}, and
Lemma~\ref{D:lem:fourth} verifies \eqref{eq:RD_fourth}.  The row conditions
\eqref{eq:RD_rows} follow from Lemma~\ref{D:lem:rowsum}.  The conditional
variance condition \eqref{eq:RD_var} follows from
\eqref{eq:roadmap_bootstrap_covariance}.  The diagonal bound established
above gives constants $0<c_0<C_0<\infty$ such that
\[
  c_0\le \min_{a\in A}(\Sigma_\circ)_{aa}
  \le \max_{a\in A}(\Sigma_\circ)_{aa}\le C_0.
\]
Also, by definition of $\widehat\Sigma$ in
Lemma~\ref{lem:fixed_bootstrap_covariance},
\[
  \widehat\Sigma_{aa}
  =
  \Cov^*(T_{0,a}^*,T_{0,a}^*)
  =
  \Var^*(T_{0,a}^*).
\]
Equation~\eqref{eq:roadmap_bootstrap_covariance} implies
\[
  \Pbb\!\left(
    \|\widehat\Sigma-\Sigma_\circ\|_{\max}\le c_0/2
  \right)\to1.
\]
On this event,
\[
  \min_{a\in A}\Var^*(T_{0,a}^*)
  =
  \min_{a\in A}\widehat\Sigma_{aa}
  \ge
  \min_{a\in A}(\Sigma_\circ)_{aa}
  -\|\widehat\Sigma-\Sigma_\circ\|_{\max}
  \ge c_0/2,
\]
and
\[
  \max_{a\in A}\Var^*(T_{0,a}^*)
  =
  \max_{a\in A}\widehat\Sigma_{aa}
  \le
  \max_{a\in A}(\Sigma_\circ)_{aa}
  +\|\widehat\Sigma-\Sigma_\circ\|_{\max}
  \le C_0+c_0/2.
\]
Renaming constants gives, with probability $1-o(1)$ over data,
\[
  c\le \min_{a\in A}\Var^*(T_{0,a}^*)
  \le \max_{a\in A}\Var^*(T_{0,a}^*)\le C.
\]
Since Lemma~\ref{D:lem:Delta} and Assumption~\ref{ass:bounded_second_moment}
make $Q_a/|I|$ uniformly bounded above and away from zero with probability
$1-o(1)$ over data,
\[
  \Var^*\!\left(\frac{S_a^*}{\sqrt{Q_a}}\right)
  =
  \left(\frac{|I|}{Q_a}\right)^{1-2\gamma}
  \Var^*(T_{0,a}^*)
\]
is also uniformly bounded above and away from zero.  Thus
\eqref{eq:RD_var} holds.  Assumptions~\ref{ass:D_mean_flatness} and
\ref{ass:D_Lmin} imply
\[
  \frac{b_n(\log n)^3}{L_{\min}}\to0,
  \qquad
  b_nr_n(\log n)^2\to0.
\]
Thus \eqref{eq:RD_growth} holds.

It remains to verify Lemma~\ref{lem:fixed_bootstrap_covariance}.  By
the diagonal-bound argument used to verify \eqref{eq:RD_var}, the diagonal
entries of $\Sigma_\circ$ and $\widehat\Sigma$ satisfy the two
nondegeneracy conditions in Lemma~\ref{lem:fixed_bootstrap_covariance}.
Together with \eqref{eq:roadmap_bootstrap_covariance}, this verifies all
conditions of
Lemma~\ref{lem:fixed_bootstrap_covariance}.

We now complete the proof of Theorem~\ref{D:thm:main}.
Section~\ref{subsub:D_centered_gaussian} proves
\eqref{eq:roadmap_centered_gaussian},
Section~\ref{subsub:D_bootstrap_covariance} proves
\eqref{eq:roadmap_bootstrap_covariance}, and the preceding paragraphs
verify the conditions of Lemmas~\ref{lem:centered_oracle_perturbation}--\ref{lem:fixed_bootstrap_covariance}.
Therefore the conditions of Lemma~\ref{lem:size_transfer} hold.  Applying
Lemma~\ref{lem:size_transfer} proves the theorem.

\subsubsection{Quadratic spectral kernel satisfies Assumption \ref{ass:D_kernel}}\label{subsec:QS_kernel}
\begin{lemma}\label{D:lem:QS_kernel}
Let
\[
K_{\mathrm{QS}}(x)
:=
\begin{cases}
\dfrac{25}{12\pi^2x^2}\left\{\dfrac{\sin(6\pi x/5)}{6\pi x/5}-\cos(6\pi x/5)\right\}, & x\neq 0,\\[1ex]
1, & x=0.
\end{cases}
\]
Then $K_{\mathrm{QS}}$ is symmetric and continuously differentiable on $\R$, satisfies $K_{\mathrm{QS}}(0)=1$, and obeys
\[
\sup_{x\in\R}|K_{\mathrm{QS}}'(x)|<\infty,
\qquad
|K_{\mathrm{QS}}(x)|\le C_{\mathrm{QS}}(1+|x|)^{-2}
\qquad\forall x\in\R
\]
for some constant $C_{\mathrm{QS}}<\infty$. Consequently, Assumption \ref{ass:D_kernel} holds for the quadratic spectral kernel with $\vartheta=2$ whenever
\[
b_n\asymp n^{\rho},
\qquad
0<\rho<\frac14.
\]
\end{lemma}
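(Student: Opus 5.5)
Write $\theta=6\pi x/5$. The plan is to work with the equivalent representation $K_{\mathrm{QS}}(x)=\frac{3}{\theta^2}\bigl(\frac{\sin\theta}{\theta}-\cos\theta\bigr)$. This is immediate from $\frac{25}{12\pi^2x^2}=\frac{3}{\theta^2}$.

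\emph{Symmetry and smoothness.} Symmetry is clear, since $\theta\mapsto\sin\theta/\theta$ and $\cos\theta$ are even. For regularity, I will note that $g(\theta):=\sin\theta/\theta-\cos\theta$ is entire, with Taylor series
\[
g(\theta)=\sum_{k\ge1}(-1)^{k+1}\theta^{2k}\Bigl(\tfrac{1}{(2k)!}-\tfrac{1}{(2k+1)!}\Bigr)=\tfrac{\theta^2}{3}-\tfrac{\theta^4}{30}+\cdots.
\]
Hence $3g(\theta)/\theta^2$ is an entire function of $\theta$ whose value at $0$ is $1$. This gives $K_{\mathrm{QS}}(0)=1$ with no separate case, and shows that $K_{\mathrm{QS}}$ is $C^\infty$, in particular continuously differentiable.

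\emph{Bounds.} For $\sup|K'|<\infty$, I split into two regions.
\begin{itemize}
\item On $|x|\le1$ the derivative is continuous on a compact set, so it is bounded.
\item On $|x|\ge1$, differentiate directly:
\[
\frac{d}{d\theta}\Bigl[\frac{3}{\theta^2}g(\theta)\Bigr]=-\frac{6g(\theta)}{\theta^3}+\frac{3g'(\theta)}{\theta^2},
\qquad g'(\theta)=\frac{\cos\theta}{\theta}-\frac{\sin\theta}{\theta^2}+\sin\theta .
\]
Since $|g|\le 1+1/|\theta|$ and $|g'|\le 1+2/|\theta|$, both terms are $O(|\theta|^{-2})$, which is bounded.
\end{itemize}
The decay bound uses the same split. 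For $|x|\ge1$, $|K_{\mathrm{QS}}(x)|\le \frac{3}{\theta^2}(1+|\theta|^{-1})\le C|x|^{-2}\le C'(1+|x|)^{-2}$. For $|x|\le1$, continuity gives $|K|\le C\le 4C(1+|x|)^{-2}$. So the bound holds with $\vartheta=2$.

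\emph{Positive semidefiniteness and bandwidth.} The remaining requirement of Assumption~\ref{ass:D_kernel} is that $\Theta_n(t,u)=K((t-u)/b_n)$ be positive semidefinite. I will obtain this from Bochner's theorem. The QS kernel is the Fourier transform of a nonnegative spectral density, namely Andrews' (1991) density $\frac{5}{8\pi}\cdot\frac{6}{5}\bigl(1-(\omega/(6\pi/5))^2\bigr)_+$ up to normalization. This can be checked by computing $\int_{-a}^{a}(1-\omega^2/a^2)\cos(\omega x)\,d\omega$ with $a=6\pi/5$, which yields $\frac{4}{a^2x^3}(\sin(ax)-ax\cos(ax))$, matching the formula. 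Hence for any $c\in\R^n$,
\[
c^\top\Theta_n c=\int f(\omega)\Bigl|\sum_t c_te^{i\omega t/b_n}\Bigr|^2d\omega\ge0.
\]
Finally, the bandwidth condition follows by substituting $\vartheta=2$ into $\rho<(\vartheta-1)/(3\vartheta-2)$, which gives $\rho<1/4$.

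The main obstacle is this Fourier computation. It is routine integration by parts, but getting the constant right is the delicate part. Only the nonnegativity of the density matters, so a constant error would be harmless.
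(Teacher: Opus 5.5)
Your proposal is correct and follows essentially the paper's route: the same two-region tail bound, the same substitution $\vartheta=2$ into $\rho<(\vartheta-1)/(3\vartheta-2)$, and positive semidefiniteness from the same nonnegative spectral density. The paper writes that density as $f_c(u)=\frac{3}{4c}(1-u^2/c^2)\mathbf{1}\{|u|\le c\}$ with $c=6\pi/5$, so your constant should be $\frac{3}{4c}=\frac{5}{8\pi}$ without the extra factor $\frac65$, which, as you note, is immaterial. The only real difference is the derivative bound: the paper gets it, and $C^1$ regularity, from that Fourier representation by differentiating under the integral, $\sup_x|K_{\mathrm{QS}}'(x)|\le\int_{-c}^{c}|u|f_c(u)\,du<\infty$, whereas your power-series argument and direct two-region computation reach the same conclusion with slightly more work.
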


\begin{proof}
Symmetry is immediate because $K_{\mathrm{QS}}$ depends on $x$ only through $x^2$ and the even functions $\sin(z)/z$ and $\cos(z)$. For $x\neq 0$, the formula is smooth.
Set $c:=6\pi/5$. Using the Taylor expansions
\[
\frac{\sin(cx)}{cx}=1-\frac{c^2x^2}{6}+\frac{c^4x^4}{120}+O(x^6),
\qquad
\cos(cx)=1-\frac{c^2x^2}{2}+\frac{c^4x^4}{24}+O(x^6),
\]
we obtain
\[
\frac{\sin(cx)}{cx}-\cos(cx)
=
\frac{c^2x^2}{3}-\frac{c^4x^4}{30}+O(x^6).
\]
Therefore,
\[
K_{\mathrm{QS}}(x)
=
1-\frac{18\pi^2}{125}x^2+O(x^4)
\qquad (x\to 0).
\]
Hence $K_{\mathrm{QS}}$ extends continuously at $0$ with value $1$, and the expansion also gives $K_{\mathrm{QS}}'(0)=0$; thus $K_{\mathrm{QS}}\in C^1(\R)$.

The positive-semidefiniteness condition follows from an explicit Fourier
representation.  With $c=6\pi/5$, define
\[
  f_c(u):=\frac{3}{4c}\left(1-\frac{u^2}{c^2}\right)
  \1\{|u|\le c\}.
\]
This is a nonnegative density, and direct integration gives
\begin{equation}\label{eq:QS_fourier}
  K_{\mathrm{QS}}(x)
  =\int_{-c}^c e^{iux}f_c(u)\,du
  =\frac{3}{c^2x^2}
  \left\{\frac{\sin(cx)}{cx}-\cos(cx)\right\}.
\end{equation}
Because $f_c$ has compact support, differentiation under the integral is
valid and gives
\[
  K_{\mathrm{QS}}'(x)
  =\int_{-c}^c iu e^{iux}f_c(u)\,du,
  \qquad
  \sup_{x\in\R}|K_{\mathrm{QS}}'(x)|
  \le\int_{-c}^c|u|f_c(u)\,du<\infty.
\]
Therefore, for arbitrary real numbers $v_1,\ldots,v_n$,
\[
  \sum_{t,u=1}^n v_tv_u
  K_{\mathrm{QS}}\!\left(\frac{t-u}{b_n}\right)
  =\int_{-c}^c
  \left|\sum_{t=1}^n v_te^{iut/b_n}\right|^2f_c(u)\,du
  \ge0.
\]
Thus every matrix in \eqref{eq:D_Theta} is positive semidefinite.

For the tail bound, again with $c=6\pi/5$, for $x\neq 0$ we have
\[
|K_{\mathrm{QS}}(x)|
\le
\frac{25}{12\pi^2x^2}\left(\left|\frac{\sin(cx)}{cx}\right|+|\cos(cx)|\right)
\le
\frac{25}{12\pi^2x^2}\left(1+\frac{1}{c|x|}\right).
\]
In particular, for $|x|\ge 1$,
\[
|K_{\mathrm{QS}}(x)|
\le
\frac{25}{12\pi^2}\left(1+\frac{1}{c}\right)|x|^{-2}
\le
4\frac{25}{12\pi^2}\left(1+\frac{1}{c}\right)(1+|x|)^{-2}.
\]
On the compact set $[-1,1]$, continuity implies $\sup_{|x|\le 1}|K_{\mathrm{QS}}(x)|<\infty$. Enlarging the constant if necessary yields
\[
|K_{\mathrm{QS}}(x)|\le C_{\mathrm{QS}}(1+|x|)^{-2}
\qquad\forall x\in\R.
\]
Thus Assumption \ref{ass:D_kernel} is satisfied with $\vartheta=2$. The bandwidth restriction in that assumption then becomes
\[
0<\rho<\frac{\vartheta-1}{3\vartheta-2}=\frac{1}{4}.
\]
\end{proof}

\subsubsection{Extension of Theorem~\ref{D:thm:main} to data-dependent bandwidths}
\label{subsec:D_data_dependent_bandwidth}

This subsection shows that the test remains valid in Regime D with the
data-driven bandwidth used in the main text.
The implementation uses the quadratic spectral kernel of
\citet{Andrews1991}.  For each horizon $h$, it
fits
\[
  D_t^{(h)}=c_h+\rho_hD_{t-1}^{(h)}+\epsilon_{h,t},
\]
forms
\begin{equation}\label{eq:D_data_driven_stepsize}
  \widehat a
  :=
  \frac{
    \sum_{h=1}^H
    4\widehat\rho_h^2\widehat\sigma_h^4(1-\widehat\rho_h)^{-8}
  }{
    \sum_{h=1}^H
    \widehat\sigma_h^4(1-\widehat\rho_h)^{-4}
  },
  \qquad
  \widehat b_n:=1.3221(\widehat a n)^{1/5}.
\end{equation}
The resulting multiplier covariance is
\[
  \widehat\Theta_n(t,u)
  :=
  K_{\mathrm{QS}}\!\left(\frac{t-u}{\widehat b_n}\right).
\]

The next corollary shows that Theorem~\ref{D:thm:main} continues to work for
the data-driven stepsize in \eqref{eq:D_data_driven_stepsize} under mild
convergence of $\widehat a$.
\begin{corollary}\label{cor:D_data_dependent_bandwidth_hat_a}
Suppose that Assumptions~\ref{ass:subgaussian},
\ref{ass:bounded_second_moment}, \ref{ass:length_ratio},
\ref{ass:D_mixing}--\ref{ass:D_nondegenerate},
and \ref{ass:D_growth}--\ref{ass:D_Lmin} hold with $b_n$ replaced by
$b_n^\circ=1.3221(a_0n)^{1/5}$ for some constant $a_0>0$.  If, for some
constant $\epsilon>0$,
\[
  \widehat a-a_0=o_p\bigl(n^{-1/5-\epsilon}\bigr),
\]
then the conclusion of Theorem~\ref{D:thm:main} remains valid when the
bootstrap statistic is computed using the Andrews bandwidth
$\widehat b_n=1.3221(\widehat a n)^{1/5}$.
\end{corollary}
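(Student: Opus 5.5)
The plan is to rerun the proof of Theorem~\ref{D:thm:main} and note where the bandwidth actually enters. It does not touch the observed statistic $T_{\max}$, the centered oracle $\mathbb T^\circ_{\max}$, the Gaussian approximation \eqref{eq:roadmap_centered_gaussian}, or Lemma~\ref{lem:centered_oracle_perturbation}; those steps carry over verbatim with $b_n^\circ$ in the assumptions. The bandwidth appears only through the multiplier covariance $\Theta_n$. So it suffices to re-establish the two bootstrap-side inputs with $\widehat\Theta_n$ in place of $\Theta_n^\circ(t,u):=K_{\mathrm{QS}}((t-u)/b_n^\circ)$: the random-denominator comparison of Lemma~\ref{lem:random_denominator}, and the covariance consistency \eqref{eq:roadmap_bootstrap_covariance}. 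Lemma~\ref{lem:size_transfer} then gives the conclusion. A small technical point is that $\widehat\Theta_n$ is data dependent, so ``conditional on the data'' must include $\widehat b_n$. This is harmless because all bootstrap bounds are proved conditionally on the data and only use properties of $\Theta_n$ holding on an event of probability $1-o(1)$.

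First, I would control $\widehat b_n/b_n^\circ$. Since $\widehat b_n/b_n^\circ=(\widehat a/a_0)^{1/5}$ and $\widehat a-a_0=o_p(n^{-1/5-\epsilon})$, we get
\[
\bigl|\widehat b_n/b_n^\circ-1\bigr|=o_p(n^{-1/5-\epsilon}),
\]
and therefore $|\widehat b_n-b_n^\circ|=o_p(n^{-\epsilon})$. On this event, $\widehat b_n\asymp b_n^\circ$. Lemma~\ref{D:lem:rowsum} then holds for $\widehat\Theta_n$ with $k_n\asymp b_n^\circ$, uniformly, because the constants depend only on $K_{\mathrm{QS}}$ and the ratio of bandwidths. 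Positive semidefiniteness and unit diagonal hold for every bandwidth by the Fourier representation \eqref{eq:QS_fourier}. Hence every hypothesis of Lemma~\ref{lem:random_denominator} that involves $\Theta_n$ is met, and \eqref{eq:RD_growth} holds with $k_n\asymp b_n^\circ$, exactly as before.

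The main step is covariance consistency. I would write $\widehat\Sigma(\widehat\Theta_n)-\Sigma_\circ=[\widehat\Sigma(\widehat\Theta_n)-\widehat\Sigma(\Theta_n^\circ)]+[\widehat\Sigma(\Theta_n^\circ)-\Sigma_\circ]$. The second bracket is $o_p((\log n)^{-2})$ by Section~\ref{subsub:D_bootstrap_covariance}. The first bracket is linear in $\Theta$: each entry equals $\sum_{t,u}(\widehat\Theta_n-\Theta_n^\circ)(t,u)x_ty_u$, with normalized vectors $x,y$ satisfying $\|x\|_2,\|y\|_2\le C$ on the high-probability event. The bound there is established through the decomposition \eqref{eq:D_cov_transfer_norms} together with Lemma~\ref{D:lem:Delta}. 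The mean value theorem gives
\[
|(\widehat\Theta_n-\Theta_n^\circ)(t,u)|\le \sup_x|x K_{\mathrm{QS}}'(x)|\;\bigl|\widehat b_n/b_n^\circ-1\bigr|\,\sup_{\lambda}\ldots
\]
More carefully, $\bigl|K((t-u)/\widehat b)-K((t-u)/b^\circ)\bigr|\le |\widehat b/b^\circ-1|\,\sup_{s}|s K'(s)|$, where $s$ ranges between the two arguments. One must verify that $|sK_{\mathrm{QS}}'(s)|\le C(1+|s|)^{-2}$, which follows from differentiating \eqref{eq:QS_fourier} away from zero using the closed form. The difference matrix then has row sums $O(b_n^\circ)\cdot o_p(n^{-1/5-\epsilon})$. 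Since $b_n^\circ\asymp n^{1/5}$, Lemma~\ref{D:lem:rowsum}'s quadratic-form argument bounds the first bracket by $o_p(n^{-\epsilon})=o_p((\log n)^{-2})$.

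I expect the obstacle to be this kernel-perturbation bound, specifically the decay of $sK'_{\mathrm{QS}}(s)$. If it fails, I would fall back on the Lipschitz bound $\sup|K'|\cdot|t-u|\,|1/\widehat b-1/b^\circ|$ for $|t-u|\le n^{\delta}b^\circ$, combined with the tail bound $(1+|x|)^{-2}$ beyond that range. This gives row sums of order $b^\circ n^{2\delta}n^{-1/5-\epsilon}+b^\circ n^{-\delta}$, which is still $o((\log n)^{-2})$ for small $\delta<\epsilon/2$.
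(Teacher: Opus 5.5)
Your overall route is the paper's. You perturb the kernel matrix and bound $\max_t\sum_u|\widehat\Theta_n(t,u)-\Theta_n^\circ(t,u)|$ by the mean value theorem. You transfer this to $\widehat\Sigma$ via Schur's bound and $\|v_a\|_2=(Q_a/|I_a|)^{1/2-\gamma}=O_p(1)$, add the deterministic-bandwidth result of Section~\ref{subsub:D_bootstrap_covariance}, and rerun Lemma~\ref{lem:random_denominator} with $k_n=b_n^\circ$.

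\textbf{The false claim.} One intermediate claim in your main step does not hold: $|sK_{\mathrm{QS}}'(s)|$ is not $O((1+|s|)^{-2})$. Differentiating the closed form with $c=6\pi/5$ gives
\[
sK_{\mathrm{QS}}'(s)=\frac{3\sin(cs)}{cs}+O(s^{-2}).
\]
So only $|K_{\mathrm{QS}}'(s)|\le C(1+|s|)^{-2}$ is available, which means $|sK_{\mathrm{QS}}'(s)|\le C(1+|s|)^{-1}$.

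\textbf{Why the argument survives.} On the event $cb_n^\circ\le\widehat b_n\le Cb_n^\circ$, your row sums become
\[
|\widehat b_n/b_n^\circ-1|\sum_{m\le n}(1+m/b_n^\circ)^{-1}\lesssim o_p(n^{-1/5-\epsilon})\,b_n^\circ\log n=o_p(n^{-\epsilon}\log n)=o_p((\log n)^{-2}).
\]
This is the paper's bound $C|\widehat b_n-b_n^\circ|\log n$. The paper obtains it from $\sum_m m(1+m/b_n^\circ)^{-2}\lesssim(b_n^\circ)^2\log n$, so the extra logarithm is unavoidable but harmless.

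\textbf{The fallback does not work as stated.} The tail contribution $b_n^\circ n^{-\delta}\asymp n^{1/5-\delta}$ diverges for small $\delta$. Making it vanish requires $\delta>1/5$, which is incompatible with $2\delta<\epsilon$ unless $\epsilon>2/5$. The tail must be controlled through the decay of $K_{\mathrm{QS}}'$, not merely that of $K_{\mathrm{QS}}$.

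\textbf{An ordering point.} Condition \eqref{eq:RD_var} is not a row-sum, PSD or unit-diagonal property of $\widehat\Theta_n$. As in the paper, it follows from the covariance consistency \eqref{eq:D_data_bw_covariance} together with $Q_a/|I|\asymp1$. It must therefore be verified after your covariance step, not asserted alongside the row-sum bounds.
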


\begin{proof}
Let
\[
  \Theta_n^\circ(t,u)
  :=
  K_{\mathrm{QS}}\!\left(\frac{t-u}{b_n^\circ}\right).
\]
The condition on $\widehat a$ gives
\[
  \frac{\widehat b_n}{b_n^\circ}
  =
  \left(\frac{\widehat a}{a_0}\right)^{1/5}
  =
  1+o_p\bigl(n^{-1/5-\epsilon}\bigr),
  \qquad
  |\widehat b_n-b_n^\circ|=o_p(n^{-\epsilon}),
\]
and therefore, with probability $1-o(1)$ over data,
\[
  c b_n^\circ\le\widehat b_n\le C b_n^\circ.
\]
The explicit quadratic spectral formula and the expansion at zero in the
proof of Lemma~\ref{D:lem:QS_kernel} give
\[
  |K_{\mathrm{QS}}'(x)|\le C(1+|x|)^{-2}
  \qquad\forall x\in\R.
\]
On the event $c b_n^\circ\le\widehat b_n\le Cb_n^\circ$, the mean value
theorem gives
\[
\begin{aligned}
  &\max_{1\le t\le n}
  \sum_{u=1}^n
  \left|
    K_{\mathrm{QS}}\!\left(\frac{t-u}{\widehat b_n}\right)
    -
    K_{\mathrm{QS}}\!\left(\frac{t-u}{b_n^\circ}\right)
  \right|
  \\
  &\le
  C\left|\frac1{\widehat b_n}-\frac1{b_n^\circ}\right|
  \sum_{m=1}^{n-1}m\left(1+\frac{m}{b_n^\circ}\right)^{-2}
  \\
  &\le
  C|\widehat b_n-b_n^\circ|\log n
  =
  o_p\bigl((\log n)^{-2}\bigr),
\end{aligned}
\]
because $|\widehat b_n-b_n^\circ|\log n=o_p((\log n)^{-2})$.

For any positive bandwidth $b$, the Fourier representation
\eqref{eq:QS_fourier} gives positive semidefiniteness of the matrix
$K_{\mathrm{QS}}((t-u)/b)$, and $K_{\mathrm{QS}}(0)=1$ gives unit diagonal.
Thus $\widehat\Theta_n$ is a valid covariance matrix for the multipliers.

The row-sum argument in Lemma~\ref{D:lem:rowsum} gives, with probability
$1-o(1)$ over data,
\begin{equation}\label{eq:D_hatTheta_row_bounds}
  \max_t\sum_u|\widehat\Theta_n(t,u)|\le C\widehat b_n\le Cb_n^\circ,
  \qquad
  \max_t\sum_u\widehat\Theta_n(t,u)^2\le C\widehat b_n\le Cb_n^\circ.
\end{equation}

It remains to verify the fixed-denominator covariance condition
\eqref{eq:fixed_denominator_covariance_condition} in
Lemma~\ref{lem:fixed_bootstrap_covariance}; by \eqref{eq:logpn_logn}, in
Regime~D it is enough to show
$\|\widehat\Sigma(\widehat\Theta_n)-\Sigma_\circ\|_{\max}
=o_p((\log n)^{-2})$.
For any covariance matrix $\Theta$, define
\[
  \widehat\Sigma_{ab}(\Theta)
  :=
  \frac{\sum_{t\in I_a}\sum_{u\in I_b}
    \Theta(t,u)
    \left(D_t^{(h_a)}-\bar D_a\right)
    \left(D_u^{(h_b)}-\bar D_b\right)}
  {|I_a|^{1/2-\gamma}|I_b|^{1/2-\gamma}Q_a^\gamma Q_b^\gamma}.
\]
Let
\[
  v_{a,t}
  :=
  \frac{\1\{t\in I_a\}\left(D_t^{(h_a)}-\bar D_a\right)}
  {|I_a|^{1/2-\gamma}Q_a^\gamma}.
\]
Then
\[
  \widehat\Sigma_{ab}(\widehat\Theta_n)
  -
  \widehat\Sigma_{ab}(\Theta_n^\circ)
  =
  v_a^\top(\widehat\Theta_n-\Theta_n^\circ)v_b.
\]
Moreover,
\[
  \norm{v_a}_2
  =
  \left(\frac{Q_a}{|I_a|}\right)^{1/2-\gamma}.
\]
Lemma~\ref{D:lem:Delta} and Assumption~\ref{ass:bounded_second_moment}
imply $\max_a\norm{v_a}_2=O_p(1)$.  Since
$\widehat\Theta_n-\Theta_n^\circ$ is symmetric, Schur's bound gives
\[
  \norm{\widehat\Theta_n-\Theta_n^\circ}_{\mathrm{op}}
  \le
  \max_{1\le t\le n}\sum_{u=1}^n
  \left|\widehat\Theta_n(t,u)-\Theta_n^\circ(t,u)\right|
  =
  o_p\bigl((\log n)^{-2}\bigr).
\]
Hence
\[
\begin{aligned}
  \max_{a,b\in A}
  \left|
    \widehat\Sigma_{ab}(\widehat\Theta_n)
    -
    \widehat\Sigma_{ab}(\Theta_n^\circ)
  \right|
  &=
  \max_{a,b\in A}
  \left|v_a^\top(\widehat\Theta_n-\Theta_n^\circ)v_b\right|
  \\
  &\le
  \left(\max_{a\in A}\norm{v_a}_2\right)^2
  \norm{\widehat\Theta_n-\Theta_n^\circ}_{\mathrm{op}}
  \\
  &=
  O_p(1)o_p\bigl((\log n)^{-2}\bigr)
  =
  o_p\bigl((\log n)^{-2}\bigr).
\end{aligned}
\]
The proof in Section~\ref{subsub:D_bootstrap_covariance} applied with the deterministic benchmark
$\Theta_n^\circ$ gives
\[
  \norm{\widehat\Sigma(\Theta_n^\circ)-\Sigma_\circ}_{\max}
  =
  o_p\bigl((\log n)^{-2}\bigr).
\]
Combining the bounds for
$\norm{\widehat\Sigma(\widehat\Theta_n)-\widehat\Sigma(\Theta_n^\circ)}_{\max}$
and
$\norm{\widehat\Sigma(\Theta_n^\circ)-\Sigma_\circ}_{\max}$ yields
\begin{equation}\label{eq:D_data_bw_covariance}
  \norm{\widehat\Sigma(\widehat\Theta_n)-\Sigma_\circ}_{\max}
  =
  o_p\bigl((\log n)^{-2}\bigr).
\end{equation}
The centered Gaussian approximation requirement
\eqref{eq:roadmap_centered_gaussian} is unchanged because the statistic
defined in \eqref{eq:mathbbT_def} has coordinates
\[
  \mathbb T_a^\circ
  =
  \frac{S_a^\circ}{|I|^{1/2-\gamma}\nu_a^\gamma},
  \qquad
  \Sigma_\circ=\Cov(\mathbb T^\circ),
\]
contains no $\Theta_n$.  Hence the proof of
\eqref{eq:roadmap_centered_gaussian} in
Section~\ref{subsub:D_centered_gaussian} applies without
change.

For Lemma~\ref{lem:random_denominator} with $k_n=b_n^\circ$,
\eqref{eq:RD_Qlower}, \eqref{eq:RD_fourth}, and \eqref{eq:RD_growth}
are the Section~\ref{subsub:D_verify_reduction} bounds with
$b_n=b_n^\circ$, and
\eqref{eq:RD_rows} follows from \eqref{eq:D_hatTheta_row_bounds}.
For \eqref{eq:RD_var}, the diagonal entries of $\Sigma_\circ$ are bounded
above and away from zero, and \eqref{eq:D_data_bw_covariance} gives, with
probability $1-o(1)$ over data,
\[
  c\le
  \min_{a\in A}\widehat\Sigma_{aa}(\widehat\Theta_n)
  \le
  \max_{a\in A}\widehat\Sigma_{aa}(\widehat\Theta_n)
  \le C.
\]
Since $\widehat\Sigma_{aa}(\widehat\Theta_n)=\Var^*(T_{0,a}^*)$,
$Q_a/|I|\asymp1$ with probability $1-o(1)$ over data, and
\[
  \Var^*\!\left(\frac{S_a^*}{\sqrt{Q_a}}\right)
  =
  \left(\frac{|I|}{Q_a}\right)^{1-2\gamma}
  \Var^*(T_{0,a}^*),
\]
\eqref{eq:RD_var} follows.  Finally, \eqref{eq:D_data_bw_covariance}
verifies \eqref{eq:fixed_denominator_covariance_condition} with
$\widehat\Sigma=\widehat\Sigma(\widehat\Theta_n)$.
Therefore the verification in Section~\ref{subsub:D_verify_reduction}
and Lemma~\ref{lem:size_transfer}
apply with $\widehat\Theta_n$ in place of $\Theta_n^\circ$, proving the
same size conclusion as Theorem~\ref{D:thm:main}.
\end{proof}

\subsection{Miscellany}\label{app:miscellany}

This section states the key technical results our proof relies on for completeness.

The next proposition is the independent high-dimensional Gaussian
approximation of
\citet[Theorem~2.1 and Proposition~2.1]{ChernozhukovChetverikovKato2017}.
\begin{proposition}\label{prop:max_gaussian_CLT}
Let $X_1,\dots,X_n$ be independent centered random vectors in $\R^p$, and let
\[
S_n:=\frac{1}{\sqrt n}\sum_{i=1}^n X_i,
\qquad
\Sigma:=\Cov(S_n),
\qquad
G\sim N(0,\Sigma).
\]
Assume that, for some constants $b>0$ and $B_n\ge 1$,
\[
\min_{1\le j\le p}\Sigma_{jj}\ge b,
\qquad
\max_{1\le i\le n,\,1\le j\le p}\|X_{ij}\|_{\psi_1}\le B_n.
\]
Then there exists a constant $C=C(b)<\infty$ such that
\[
\sup_{A\in\mathcal R^p}
\bigl|\Pbb(S_n\in A)-\Pbb(G\in A)\bigr|
\le
C\left(\frac{B_n^2\log^7(pn)}{n}\right)^{1/6},
\]
where $\mathcal R^p$ denotes the class of coordinatewise hyperrectangles in $\R^p$. In particular, if $B_n^2\log^7(pn)=o(n)$, then the left-hand side converges to $0$.
\end{proposition}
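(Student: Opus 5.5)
This is \citet[Theorem~2.1 and Proposition~2.1]{ChernozhukovChetverikovKato2017}, and in the paper we would simply cite it. If a self-contained argument were wanted, I would follow their strategy: a smooth approximation of rectangle indicators, a Slepian--Stein (or Lindeberg) interpolation between $S_n$ and $G$, and Gaussian anti-concentration. The $\psi_1$ bound enters only through truncation and moment bounds.

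\textbf{Reduction to one-sided maxima.} A hyperrectangle $A=\prod_j[a_j,b_j]$ is the event $\max_{j\le 2p}(\pm S_{n,j}-c_j)\le 0$. After doubling the coordinates (replace $X_i$ by $(X_i,-X_i)$) and shifting, it suffices to bound
\[
\sup_{y\in\R^p}\bigl|\Pbb(S_n\le y)-\Pbb(G\le y)\bigr|,
\]
where $\le$ is coordinatewise. The variance floor $\Sigma_{jj}\ge b$ survives the doubling.

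\textbf{Smoothing and anti-concentration.} Replace $\1\{\max_j(x_j-y_j)\le0\}$ by $g\circ F_\beta(x-y)$, where
\[
F_\beta(z)=\beta^{-1}\log\sum_j e^{\beta z_j}
\]
is the soft-max and $g$ is a $C^3$ step of width $\phi^{-1}$. The soft-max satisfies $0\le F_\beta-\max\le\beta^{-1}\log p$, and its derivatives of order $k$ have $\ell_1$-type bounds of order $\beta^{k-1}$. Replacing indicators by this smooth function costs an anti-concentration term for $G$. Nazarov's inequality, which needs only the lower variance bound $b$, gives
\[
\Pbb\Bigl(\bigl|\max_j(G_j-y_j)\bigr|\le\epsilon\Bigr)\le C\epsilon\sqrt{\log p}.
\]
This is what produces a constant depending only on $b$.

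\textbf{Comparing the smooth functionals.} Bound $\bigl|\E f(S_n)-\E f(G)\bigr|$ for $f=g\circ F_\beta$. First truncate: set $\tilde X_{ij}=X_{ij}\1\{|X_{ij}|\le u\}$, recentred, with $u\asymp B_n\log(pn)$. The $\psi_1$ bound makes the truncation error and the perturbation of the covariance negligible after a union bound over $np$ entries. Then apply Stein's leave-one-out interpolation, or a Lindeberg swap, to the truncated sum. The third-order remainder involves $n^{-1/2}\E\max_j|\tilde X_{ij}|^3$ weighted by $\phi\beta^2+\phi^2\beta+\phi^3$. With $\beta\asymp\phi\log p$, this yields a bound of order
\[
\phi^3(\log p)^2\,\frac{B_n\log(pn)}{\sqrt n}\;+\;\text{(truncation terms)}.
\]
Adding the anti-concentration cost $\phi^{-1}\sqrt{\log p}$ and optimising over $\phi$ gives a rate of roughly $(B_n^2\log^7(pn)/n)^{1/8}$.

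\textbf{Main obstacle.} The expected obstacle is upgrading this crude rate to the stated exponent $1/6$. Here I would follow CCK's refined argument rather than the plain Lindeberg bound. That argument first proves a preliminary bound on $\sup_y\bigl|\Pbb(S_n\le y)-\Pbb(G\le y)\bigr|$. It then runs an iterative, self-bootstrapping inequality in which the third-derivative terms are controlled on the event where $S_n$ lies in a thin shell around $y$. The probability of that shell is bounded by the current Kolmogorov distance plus Nazarov's bound, so the distance satisfies a recursive inequality that closes at the $1/6$ exponent. The final clause of the proposition then follows immediately, since $B_n^2\log^7(pn)=o(n)$ drives the right-hand side to zero.
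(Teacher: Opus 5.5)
Your plan is the paper's: the paper gives no argument beyond citing \citet[Theorem~2.1 and Proposition~2.1]{ChernozhukovChetverikovKato2017}. However, that citation does not apply verbatim to the statement as written, and your sketch passes over the same point.

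\textbf{The missing hypothesis.} Their Proposition~2.1 assumes more than $\min_j\Sigma_{jj}\ge b$ and $\mathbb{E}\exp(|X_{ij}|/B_n)\le 2$. It also requires the averaged moment condition
\[
  \max_{1\le j\le p}\frac1n\sum_{i=1}^n\mathbb{E}|X_{ij}|^{2+k}\le B_n^k,\qquad k=1,2.
\]
Their Theorem~2.1 is likewise stated through a bound $\bar L_n\ge L_n:=\max_j n^{-1}\sum_i\mathbb{E}|X_{ij}|^3$, and $\bar L_n$ replaces $B_n$ in the rate.

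\textbf{Why the $\psi_1$ bound does not supply it.} The $\psi_1$ bound alone gives only $\mathbb{E}|X_{ij}|^3\le 6B_n^3$. You may rescale each coordinate so that $\Sigma_{jj}=b$; this preserves rectangles and only shrinks $\psi_1$ norms. Even then, the best available bound is $L_n\lesssim B_n\log(eB_n)$, and it is sharp. Take the symmetric law $\pm M$ with probability $q/2$ each, where $qM^2=b$ and $M=B_n\log(1/q)$. It has $\psi_1$ norm at most $B_n$ and $\mathbb{E}|X|^3=bM\gtrsim bB_n\log(B_n^2/b)$. So the citation actually delivers $\bigl(B_n^2\log^2(eB_n)\log^7(pn)/n\bigr)^{1/6}$, which is weaker than the displayed bound by a factor $\log^{1/3}(eB_n)$.

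\textbf{Where this shows in your sketch.} In your crude step, $\mathbb{E}\max_j|\tilde X_{ij}|^3$ is of order $(B_n\log(pn))^3$, not $B_n\log(pn)$. The refined CCK argument controls the third-order term through $L_n$ rather than $\mathbb{E}\max_j|X_{ij}|^3$. That substitution is exactly where the averaged third-moment condition is consumed.

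\textbf{Two ways to close the gap.}
\begin{enumerate}
\item Add the averaged moment condition to the hypotheses. This costs the paper nothing. In its Regime~I application, $Z_{t,a}$ vanishes for $t\notin I$ and has $\psi_2$ norm $O(\sqrt{n/|I|})$ on $I$. Hence $n^{-1}\sum_t\mathbb{E}|Z_{t,a}|^{2+k}\lesssim (n/|I|)^{k/2}\lesssim (B_n^I)^k$ for $k=1,2$, and the CCK bound then holds with $B_n^I$ as used there.
\item Derive the condition from the $\psi_1$ bound and accept the log loss. For $\sigma^2=\mathbb{E}X^2\le 2B^2$, truncating at $T\asymp B\log(3eB^2/\sigma^2)$ gives
\[
  \mathbb{E}|X|^3\lesssim\sigma^2B\log(3eB^2/\sigma^2),\qquad
  \mathbb{E}X^4\lesssim\sigma^2B^2\log^2(3eB^2/\sigma^2).
\]
Both right-hand sides are concave in $\sigma^2$. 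Jensen over $i$, with average variance $b$, then gives the condition with $B_n'=C(b)B_n\log(eB_n)$, at the cost of the extra factor above.
\end{enumerate}
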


The next proposition is the Gaussian anti-concentration bound of
\citet[Theorem~3 and Corollary~1]{CCK15}.
\begin{proposition}\label{prop:gaussian_anti_concentration}
Let $G=(G_1,\dots,G_p)^\top$ be a centered Gaussian vector such that
\[
0<\underline\sigma\le \sqrt{\Var(G_j)}\le \bar\sigma<\infty
\qquad\forall j\in[p].
\]
Then there exists a constant $C<\infty$, depending only on $\underline\sigma$ and $\bar\sigma$, such that for every $\varepsilon>0$,
\[
\sup_{x\in\R}\Pbb\left(|\max_{1\le j\le p}G_j-x|\le \varepsilon\right)
\le
C\varepsilon\left\{1\vee \sqrt{\log(p/\varepsilon)}\right\}.
\]
In particular, if $\varepsilon_n\downarrow 0$ and $\varepsilon_n\sqrt{1\vee\log(p/\varepsilon_n)}\to 0$, then
\[
\sup_{x\in\R}\Pbb\left(|\max_{1\le j\le p}G_j-x|\le \varepsilon_n\right)\to 0.
\]
The same conclusion holds with $\max_{1\le j\le p}|G_j|$ in place of $\max_{1\le j\le p}G_j$ after replacing $p$ by $2p$.
\end{proposition}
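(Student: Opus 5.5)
The plan is to reduce to the case of unit variances with arbitrary means, and then invoke (or reprove) Nazarov's inequality. Nazarov's inequality states that if $Y=(Y_1,\dots,Y_p)^\top$ is Gaussian with $\Var(Y_j)=1$ for all $j$ and arbitrary means, then for every $y\in\R$ and $\delta>0$,
\[
  \Pbb\Bigl(y-\delta<\max_{j\le p}Y_j\le y+\delta\Bigr)\le 2\delta\bigl(\sqrt{2\log p}+2\bigr).
\]
The heterogeneity of variances is absorbed by a rescaling that turns it into heterogeneity of means, to which Nazarov's bound is insensitive.

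\emph{Reduction.} Write $\sigma_j:=\sqrt{\Var(G_j)}\in[\underline\sigma,\bar\sigma]$, and fix $x\in\R$ and $\varepsilon>0$. Set $Y_j:=(G_j-x)/\sigma_j$, which has unit variance and mean $-x/\sigma_j$.
\begin{itemize}
\item If $\max_jG_j\le x+\varepsilon$, then $Y_j\le\varepsilon/\sigma_j\le\varepsilon/\underline\sigma$ for every $j$.
\item If $\max_jG_j\ge x-\varepsilon$, then some $j$ has $Y_j\ge-\varepsilon/\sigma_j\ge-\varepsilon/\underline\sigma$.
\end{itemize}
Hence
\[
  \{|\max_jG_j-x|\le\varepsilon\}\subset\{|\max_jY_j|\le\varepsilon/\underline\sigma\}.
\]
Applying Nazarov's inequality with $y=0$ and $\delta=\varepsilon/\underline\sigma$ gives the bound $C\varepsilon(\sqrt{\log p}+1)$. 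This is at most $C\varepsilon\{1\vee\sqrt{\log(p/\varepsilon)}\}$ after adjusting constants: if $\varepsilon\le1$ then $\log p\le\log(p/\varepsilon)$, and if $\varepsilon>1$ the claim is trivial once $C\ge1$. The bound is uniform in $x$. In this route $\bar\sigma$ plays no role; it would enter only in the alternative route below.

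\emph{Consequences.} The ``in particular'' statement is immediate from the displayed bound. For $\max_j|G_j|$, apply the result to the $2p$-dimensional Gaussian vector $(G_1,\dots,G_p,-G_1,\dots,-G_p)$. Its coordinate variances satisfy the same bounds, and its maximum equals $\max_j|G_j|$.

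\emph{Main obstacle.} The hardest step is Nazarov's inequality itself, if it is not simply cited from \citet{CCK15}. To prove it, I would write $F(t)=\Pbb(\max_jY_j\le t)$. This function is absolutely continuous with derivative
\[
  F'(t)=\sum_j\phi(t-m_j)\,\Pbb\bigl(Y_k\le t\ \forall k\ne j\bigm| Y_j=t\bigr),
\]
where $m_j$ is the mean of $Y_j$. One then shows $F'(t)\le\sqrt{2\log p}+2$ uniformly in $t$. This amounts to a Gaussian surface-area bound for the polyhedron $\{y:\max_jy_j\le t\}$, obtained by splitting according to whether $t-m_j$ exceeds $\sqrt{2\log p}$. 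The delicate part is keeping the bound free of the correlation structure and of the means.

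An alternative route follows \citet[Theorem~3]{CCK15} directly. One bounds the anti-concentration by $C\varepsilon\{\E\max_jG_j/\sigma_j+\sqrt{1\vee\log(\underline\sigma/\varepsilon)}\}$ and then uses $\E\max_j(G_j/\sigma_j)\le\sqrt{2\log p}$. The constants in that bound depend on both $\underline\sigma$ and $\bar\sigma$, consistent with the statement.
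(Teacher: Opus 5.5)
Your proof is correct, but it does not follow the paper's route. The paper gives no proof of this proposition. It imports it from Chernozhukov, Chetverikov and Kato (2015), Theorem~3 and Corollary~1. That is, it uses an anti-concentration bound of order $\varepsilon\{\mathbb{E}\max_j G_j/\sigma_j+\sqrt{1\vee\log(\underline\sigma/\varepsilon)}\}$, with constants depending on $\underline\sigma$ and $\bar\sigma$, combined with $\mathbb{E}\max_j G_j/\sigma_j\le\sqrt{2\log p}$. This is exactly your ``alternative route''.

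Your main route instead rescales $Y_j=(G_j-x)/\sigma_j$, which turns unequal variances into unequal means, and then applies Nazarov's inequality. The inclusion $\{|\max_j G_j-x|\le\varepsilon\}\subset\{|\max_j Y_j|\le\varepsilon/\underline\sigma\}$, the case split at $\varepsilon=1$, and the embedding $(G,-G)$ for $\max_j|G_j|$ are all fine. The closed-versus-half-open endpoint is a null event, since each $Y_j$ is continuous.

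What your route buys is the sharper bound $2\varepsilon\,\underline\sigma^{-1}(\sqrt{2\log p}+2)$, free of $\bar\sigma$ and of the $\log(1/\varepsilon)$ term. For the paper's asymptotics this sharpening is immaterial: upper variance bounds are always available there, and $\eta_n\sqrt{\log(1/\eta_n)}\to0$ whenever $\eta_n\to0$. The paper's choice therefore buys only a one-line citation in exactly the stated form.

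Two corrections. First, Nazarov's inequality is not in the 2015 paper. It is Lemma~A.1 of Chernozhukov, Chetverikov and Kato (2017, Annals of Probability), the paper already cited for Proposition~\ref{prop:max_gaussian_CLT}, and it goes back to Nazarov (2003).

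Second, if you reprove it, your sketch omits the step that actually controls the coordinates with $t-m_j\le\sqrt{2\log p}$.
\begin{enumerate}
\item Because the variances equal one, $\mathrm{Cov}(Y_k,Y_j)\le1$. Hence $h_j(t):=\mathbb{P}(Y_k\le t\ \forall k\ne j\mid Y_j=t)$ is nondecreasing in $t$.
\item Integrating $F'$ over $(t,\infty)$ then gives $\sum_j h_j(t)\bar\Phi(t-m_j)\le1$, where $\bar\Phi=1-\Phi$.
\item The inverse Mills ratio is increasing, so $\phi(u)/\bar\Phi(u)\le\sqrt{2\log p}+1$ for $u\le\sqrt{2\log p}$. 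Together with the previous step, this bounds those terms of $F'(t)$ by $\sqrt{2\log p}+1$.
\item The remaining terms contribute at most $p\,\phi(\sqrt{2\log p})<1$.
\end{enumerate}
Before this, discard any coordinate that is a.s.\ dominated by another coordinate with correlation one. This makes ties null, so your formula for $F'$ is exact rather than an overcount. The monotonicity in the first step is exactly where the unit-variance normalization, and hence your rescaling, is indispensable.
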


The next proposition is the Gaussian comparison inequality from
\citet[Theorem~2]{CCK15}.
\begin{proposition}\label{prop:gaussian_comparison}
Let $X=(X_1,\dots,X_p)^\top$ and $Y=(Y_1,\dots,Y_p)^\top$ be centered Gaussian vectors with covariance matrices $\Sigma^X$ and $\Sigma^Y$. Assume that
\[
0<\underline\sigma
\le \sqrt{\Var(X_j)},\sqrt{\Var(Y_j)}
\le \bar\sigma<\infty
\qquad\forall j\in[p],
\]
and define
\[
\Delta:=\|\Sigma^X-\Sigma^Y\|_{\max}.
\]
Then there exists a constant $C<\infty$, depending only on $\underline\sigma$ and $\bar\sigma$, such that
\[
\sup_{x\in\R}
\left|\Pbb\left(\max_{1\le j\le p}X_j\le x\right)-\Pbb\left(\max_{1\le j\le p}Y_j\le x\right)\right|
\le
C\Delta^{1/3}\bigl\{1\vee \log(p/\Delta)\bigr\}^{2/3},
\]
with the convention that the right-hand side equals $0$ when $\Delta=0$.
In particular, if $\Delta_n=o((\log p)^{-2})$, then the Kolmogorov distance between the two Gaussian maxima converges to $0$.
\end{proposition}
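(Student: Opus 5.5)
The plan is the classical Slepian--Stein interpolation argument with a smooth approximation of the maximum, followed by a smoothing-to-indicator step closed off by Proposition~\ref{prop:gaussian_anti_concentration}. Take $X$ and $Y$ independent and, for $t\in[0,1]$, set $Z(t):=\sqrt t\,X+\sqrt{1-t}\,Y$. Replace the maximum by the soft-max
\[
  F_\beta(z):=\beta^{-1}\log\sum_{j=1}^p e^{\beta z_j},
  \qquad
  0\le F_\beta(z)-\max_j z_j\le \beta^{-1}\log p .
\]
Let $g_0\in C^2(\R)$ be a fixed function with $g_0=1$ on $(-\infty,0]$, $g_0=0$ on $[1,\infty)$, and $0\le g_0\le1$. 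For $\varepsilon>0$ set $g(s):=g_0((s-x)/\varepsilon)$, so that $\|g'\|_\infty\le C/\varepsilon$ and $\|g''\|_\infty\le C/\varepsilon^2$. The smooth test function is $m:=g\circ F_\beta$.

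\textbf{Step 1 (interpolation).} Gaussian integration by parts gives
\[
  \frac{d}{dt}\E[m(Z(t))]
  =\frac12\sum_{j,k}(\Sigma^X-\Sigma^Y)_{jk}\,\E[\partial_j\partial_k m(Z(t))].
\]
Hence
\[
  |\E m(X)-\E m(Y)|
  \le \frac{\Delta}{2}\sup_z\sum_{j,k}|\partial_j\partial_k m(z)| .
\]

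\textbf{Step 2 (second-derivative bound).} I expect this to be the main technical point. Write $\pi_j(z):=\partial_jF_\beta(z)=e^{\beta z_j}/\sum_l e^{\beta z_l}$. These weights are nonnegative and sum to one. Moreover $\partial_k\pi_j=\beta(\pi_j\1\{j=k\}-\pi_j\pi_k)$. Then
\[
  \partial_j\partial_k m
  = g''(F_\beta)\,\pi_j\pi_k + g'(F_\beta)\,\beta(\pi_j\1\{j=k\}-\pi_j\pi_k),
\]
and summing absolute values gives
\[
  \sum_{j,k}|\partial_j\partial_k m|\le \|g''\|_\infty+2\beta\|g'\|_\infty\le C\bigl(\varepsilon^{-2}+\beta\varepsilon^{-1}\bigr).
\]
The key feature is that this bound is free of $p$.

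\textbf{Step 3 (smoothing error and anti-concentration).} Put $x':=x+\beta^{-1}\log p$ and shift $g$ so that it equals $1$ below $x'$ and $0$ above $x'+\varepsilon$. Then
\begin{align*}
  \Pbb\Bigl(\max_j X_j\le x\Bigr)
  &\le \E\, g(F_\beta(X))\\
  &\le \E\, g(F_\beta(Y))+C\Delta\bigl(\varepsilon^{-2}+\beta\varepsilon^{-1}\bigr)\\
  &\le \Pbb\Bigl(\max_j Y_j\le x+\beta^{-1}\log p+\varepsilon\Bigr)+C\Delta\bigl(\varepsilon^{-2}+\beta\varepsilon^{-1}\bigr).
\end{align*}
Proposition~\ref{prop:gaussian_anti_concentration} bounds the extra mass of $\max_j Y_j$ in a window of width $\delta:=\beta^{-1}\log p+\varepsilon$ by $C\delta\sqrt{1\vee\log(p/\delta)}$. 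The reverse inequality follows symmetrically by swapping the roles of $X$ and $Y$.

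\textbf{Step 4 (optimization).} Choose $\beta=\varepsilon^{-1}\log p$. The bound becomes
\[
  C\Bigl\{\Delta\,\varepsilon^{-2}\log p+\varepsilon\sqrt{\log(p/\varepsilon)}\Bigr\}.
\]
Set $\varepsilon=\Delta^{1/3}(\log(p/\Delta))^{1/6}$. Both terms are then of order $\Delta^{1/3}\{1\vee\log(p/\Delta)\}^{2/3}$, using $\log p\le\log(p/\Delta)$ when $\Delta\le1$. The regime $\Delta>1$ is trivial because the left-hand side is at most one. The case $\Delta=0$ is immediate, since the two laws then coincide. Finally, if $\Delta_n=o((\log p)^{-2})$, then $\Delta_n^{1/3}(\log p)^{2/3}\to0$ and $\Delta_n^{1/3}\log^{2/3}(1/\Delta_n)\to0$, which gives the last claim.
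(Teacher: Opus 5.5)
Your argument is correct and is the standard Chernozhukov--Chetverikov--Kato proof behind \citet[Theorem~2]{CCK15}, which the paper cites without giving a proof of its own: Slepian--Stein interpolation applied to the soft-max $F_\beta$ with a $p$-free Hessian bound, then smoothing the indicator and closing with Proposition~\ref{prop:gaussian_anti_concentration}, using $\beta\asymp\varepsilon^{-1}\log p$ and $\varepsilon\asymp\Delta^{1/3}\log^{1/6}(p/\Delta)$. The only loose ends are cosmetic: take $\beta=\varepsilon^{-1}(1\vee\log p)$ and $\varepsilon=\Delta^{1/3}\{1\vee\log(p/\Delta)\}^{1/6}$, so that the choices are well defined when $p=1$ and when $\log(p/\Delta)<1$ (in the latter case, with $\Delta\le1$, one has $\Delta>1/e$, so the claim is trivial for large $C$).
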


The next proposition gives the Gaussian quadratic-form concentration bound
from \citet[Theorem~6.2.1]{vershynin2018high}.
\begin{proposition}
\label{prop:gaussian_quadratic}
Let $Z\sim N(0,I_m)$ and let $B$ be a deterministic symmetric $m\times m$
matrix.  There is a universal constant $C<\infty$ such that, for every
$u\ge1$,
\[
  \Pbb\left(
    |Z^\top BZ-\operatorname{tr}(B)|
    >C\{\|B\|_F\sqrt u+\|B\|_{\mathrm{op}}u\}
  \right)
  \le2e^{-u}.
\]
\end{proposition}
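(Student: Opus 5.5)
The plan is to reduce the quadratic form to a weighted sum of independent centered $\chi^2_1$ variables by rotation invariance, and then apply a Bernstein-type Chernoff bound.

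\textbf{Step 1: rotation.} Since $B$ is deterministic and symmetric, write $B=U\Lambda U^\top$ with $U$ orthogonal and $\Lambda=\operatorname{diag}(\lambda_1,\ldots,\lambda_m)$. Set $g:=U^\top Z$; then $g\sim N(0,I_m)$ by rotation invariance of the standard Gaussian, and
\[
  Z^\top BZ-\operatorname{tr}(B)=\sum_{i=1}^m\lambda_i\,(g_i^2-1).
\]
The spectral quantities that appear are $\sum_i\lambda_i^2=\|B\|_F^2$ and $\max_i|\lambda_i|=\|B\|_{\mathrm{op}}$. If $B=0$, the claim is trivial, so assume $B\neq0$.

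\textbf{Step 2: moment generating function.} For $g\sim N(0,1)$ and $|s|\le1/4$,
\[
  \E\exp\{s(g^2-1)\}=e^{-s}(1-2s)^{-1/2}.
\]
Expanding $-s-\tfrac12\log(1-2s)=s^2+\tfrac{4}{3}s^3+\cdots$ and bounding the tail of the series geometrically gives $\E\exp\{s(g^2-1)\}\le\exp(2s^2)$ on this range. By independence, for all $|\theta|\le 1/(4\|B\|_{\mathrm{op}})$,
\[
  \E\exp\Bigl\{\theta\sum_i\lambda_i(g_i^2-1)\Bigr\}\le\exp\bigl(2\theta^2\|B\|_F^2\bigr).
\]
Equivalently, one could invoke Bernstein's inequality for independent sub-exponential variables, using that $\|g_i^2-1\|_{\psi_1}$ is a universal constant (Proposition~\ref{prop:product_subgaussian}).

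\textbf{Step 3: Chernoff bound and choice of threshold.} Markov's inequality with $\theta=\min\{t/(4\|B\|_F^2),\,1/(4\|B\|_{\mathrm{op}})\}$ gives
\[
  \Pbb\Bigl(\Bigl|\sum_i\lambda_i(g_i^2-1)\Bigr|>t\Bigr)\le 2\exp\Bigl\{-c\min\Bigl(\frac{t^2}{\|B\|_F^2},\frac{t}{\|B\|_{\mathrm{op}}}\Bigr)\Bigr\}
\]
for a universal $c>0$; the factor $2$ comes from applying the argument to both tails, i.e.\ with $B$ and $-B$. Now take $t=C\{\|B\|_F\sqrt u+\|B\|_{\mathrm{op}}u\}$. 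Then $t^2/\|B\|_F^2\ge C^2u$ and $t/\|B\|_{\mathrm{op}}\ge Cu$, so the exponent is at least $c\,C\,u\ge u$ once $C\ge\max(1/c,1/\sqrt c)$. This yields the bound $2e^{-u}$ for every $u\ge1$.

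\textbf{Main obstacle.} The only delicate point is the elementary inequality $e^{-s}(1-2s)^{-1/2}\le e^{2s^2}$ for $|s|\le1/4$, together with tracking universal constants in Step~3. Everything else is the standard Bernstein mechanism, so the result can also simply be cited as Hanson--Wright, \citet[Theorem~6.2.1]{vershynin2018high}, specialized to Gaussian entries.
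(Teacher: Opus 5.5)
Your argument is correct, but it is not the route the paper takes. The paper gives no proof of its own. It imports the general Hanson--Wright inequality (Vershynin, Theorem~6.2.1) for vectors with independent sub-Gaussian coordinates, specialized to $N(0,I_m)$ and rewritten with the threshold $t=C\{\|B\|_F\sqrt u+\|B\|_{\mathrm{op}}u\}$; that rewriting is exactly your Step~3.

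You instead prove the Gaussian case from scratch. Rotation invariance turns $Z^\top BZ-\operatorname{tr}(B)$ into $\sum_i\lambda_i(g_i^2-1)$, so the off-diagonal terms disappear. Those are the terms that force decoupling and Gaussian comparison in the general proof, and without them a $\chi^2_1$ moment-generating-function bound plus Chernoff finishes the argument.

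Your constants check out:
\begin{itemize}
\item the series bound gives $s^2/(1-2|s|)\le 2s^2$ for $|s|\le 1/4$;
\item the Chernoff optimization gives $c=1/8$, so $C=8$ works;
\item the argument actually holds for all $u>0$, not just $u\ge 1$.
\end{itemize}

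Your route buys a self-contained proof with explicit constants. The citation buys generality: non-Gaussian independent sub-Gaussian entries and non-symmetric $B$. The paper never uses that generality. In Lemma~\ref{lem:random_denominator} the multipliers are Gaussian, $\xi_I=\Theta_I^{1/2}Z_I$, and $B_a=\Theta_I^{1/2}\operatorname{diag}(w_{t,a})\Theta_I^{1/2}$ is symmetric and deterministic given the data. So your Gaussian-specific proof covers everything the paper needs.
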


The next proposition is the dependent high-dimensional CLT from
\citet[Theorem~1]{CCW24}.
\begin{proposition}\label{prop:dependent_CLT}
Let $X_1,\dots,X_n$ be centered random vectors in $\R^p$, and let
\[
S_n:=\frac{1}{\sqrt n}\sum_{t=1}^n X_t,
\qquad
\Xi:=\Cov(S_n),
\qquad
G\sim N(0,\Xi).
\]
Assume that $\{X_t\}$ is geometrically $\alpha$-mixing with
\[
\alpha(\ell)\le K_1e^{-K_2\ell^{\gamma_{\mathrm{mix}}}}
\qquad(\ell\ge 1)
\]
for some constants $K_1>1$, $K_2>0$, and $\gamma_{\mathrm{mix}}>0$, that
$p=\Omega(n^\kappa)$ for some $\kappa>0$, and that
\[
\max_{t\in[n],\,j\in[p]}\|X_{t,j}\|_{\psi_2}\le B_n,
\qquad
\min_{1\le j\le p}\Xi_{jj}\ge b>0.
\]
If, in addition,
\[
(\log p)^{3-\gamma_{\mathrm{mix}}}=o\bigl(n^{\gamma_{\mathrm{mix}}/3}\bigr)
\]
and
\[
B_n^{2/3}\frac{(\log p)^{(1+2\gamma_{\mathrm{mix}})/(3\gamma_{\mathrm{mix}})}}{n^{1/9}}
+
B_n\frac{(\log p)^{7/6}}{n^{1/9}}
\to 0,
\]
then
\[
\sup_{B\in\mathcal R^p}
\bigl|\Pbb(S_n\in B)-\Pbb(G\in B)\bigr|\to 0.
\]
In particular,
\[
\sup_{x\in\R}\left|\Pbb\left(\max_{1\le j\le p}S_{n,j}\le x\right)-\Pbb\left(\max_{1\le j\le p}G_j\le x\right)\right|\to 0.
\]
\end{proposition}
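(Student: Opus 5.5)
Since this proposition is imported from \citet[Theorem~1]{CCW24}, the plan is to reconstruct its proof along standard lines. We reduce the dependent sum to a sum of nearly independent block sums, apply the independent Gaussian approximation of Proposition~\ref{prop:max_gaussian_CLT}, and then undo each approximation using Gaussian comparison (Proposition~\ref{prop:gaussian_comparison}) and anti-concentration (Proposition~\ref{prop:gaussian_anti_concentration}).

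\textbf{Step 1: blocking and truncation.} Fix a big-block length $b$ and a small-block length $\ell$, with $\ell\asymp(\log p)^{1/\gamma_{\mathrm{mix}}}$, so that $\alpha(\ell)\le K_1e^{-K_2\ell^{\gamma_{\mathrm{mix}}}}$ is polynomially small in $pn$. Split $[n]$ into alternating big blocks $H_1,\dots,H_k$ and small blocks $L_1,\dots,L_k$, with $k\asymp n/(b+\ell)$. Write $S_n=n^{-1/2}\sum_jU_j+n^{-1/2}\sum_jV_j$, where $U_j$ and $V_j$ are the sums over $H_j$ and $L_j$. Beforehand, truncate each coordinate at level $B_n\sqrt{\log(pn)}$. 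The sub-Gaussian bound $\|X_{t,j}\|_{\psi_2}\le B_n$ and a union bound make the truncation error negligible.

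\textbf{Step 2: small blocks and decoupling.} The small-block remainder $\max_j|n^{-1/2}\sum_iV_{i,j}|$ is controlled by a Bernstein-type inequality for mixing sequences, namely Proposition~\ref{prop:alpha_mixing_concentration} applied coordinatewise with a union bound. This gives order $\sqrt{k\ell\log p/n}\,B_n$, which is then absorbed through anti-concentration of the Gaussian maximum at scale $\eta_n=o((\log p)^{-1/2})$.

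For the big blocks, I would compare $\Pbb(n^{-1/2}\sum_jU_j\in A)$ with the same probability for independent copies $\tilde U_j$ having the marginal laws of $U_j$. Because only $\alpha$-mixing is assumed, Berbee's coupling is unavailable. I would therefore not couple the blocks. Instead, I would replace the rectangle indicator by a smooth soft-max approximation $F_\beta$ with $\beta\asymp\eta_n^{-1}\log p$. The block-by-block telescoping difference $\E F_\beta(\cdot)$ is then bounded by $k\,\alpha(\ell)\,\|F_\beta\|_\infty$, using the covariance inequality for bounded functions of $\alpha$-mixing $\sigma$-fields. This is polynomially small by the choice of $\ell$.

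\textbf{Step 3: Gaussian approximation and covariance comparison.} The independent block sums $\tilde U_j/\sqrt{b}$ have $\psi_1$ norms of order $B_n$ after truncation. Proposition~\ref{prop:max_gaussian_CLT} applied with $k$ summands gives the error $(B_n^2\log^7(pk)/k)^{1/6}$. The resulting Gaussian has covariance $\Xi_b=n^{-1}\sum_j\Cov(U_j)$. Its distance from $\Xi$ in $\|\cdot\|_{\max}$ is $O(\ell/(b+\ell))$ from the discarded small blocks, plus cross-block covariances that are summable by Davydov's inequality. Proposition~\ref{prop:gaussian_comparison} converts this into a Kolmogorov error, and nondegeneracy $\Xi_{jj}\ge b>0$ supplies the variance lower bounds that Propositions~\ref{prop:gaussian_comparison} and~\ref{prop:gaussian_anti_concentration} require.

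\textbf{Step 4: optimization, and the main obstacle.} Optimizing $b$ and $\ell$ should produce the rate conditions in the statement: roughly $b\asymp n^{1/3}$ gives the $n^{-1/9}$ factors. The condition $p=\Omega(n^\kappa)$ is used so that $\log(pn)\asymp\log p$. The hardest step is Step 2's decoupling under mere $\alpha$-mixing. The smoothing parameter $\beta$ inflates the derivative bounds of $F_\beta$, and these must be balanced against the anti-concentration scale and the truncation level. This balance is what produces the unusual exponents $(1+2\gamma_{\mathrm{mix}})/(3\gamma_{\mathrm{mix}})$ and $7/6$. If the direct telescoping is too lossy, I would instead follow the Stein/Slepian interpolation route with the dependency-neighborhood bookkeeping of CCW24.
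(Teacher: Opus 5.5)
The paper does not prove this proposition; it imports it verbatim from the cited high-dimensional CLT. Your reconstruction therefore has to stand on its own, and it breaks at the decoupling step in Step~2.

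Strong mixing controls $|\mathbb{P}(A\cap B)-\mathbb{P}(A)\mathbb{P}(B)|$ only for $A$, $B$ in the two separated $\sigma$-fields. It therefore controls only covariances of \emph{products} $f(Y)h(U)$ (Davydov), or expectations $\mathbb{E}\prod_j V_j$ with each $V_j$ measurable with respect to a single block (Volkonskii--Rozanov). Your telescoping step instead needs $|\mathbb{E}\,g(Y,U_j)-\mathbb{E}\,g(Y,\tilde U_j)|$ with $Y=U_1+\dots+U_{j-1}$ and $g(y,u)=\mathbb{E}_Z F(y+u+Z)$, and $g$ is not of product form. The supremum of this difference over $\|g\|_\infty\le1$ is twice the total-variation distance between $\mathcal{L}(Y,U_j)$ and $\mathcal{L}(Y)\otimes\mathcal{L}(U_j)$, which is an absolute-regularity ($\beta$-mixing) quantity.

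Your bound $k\,\alpha(\ell)\|F\|_\infty$ does not involve the smoothing at all. It is the $\beta$-mixing block-decoupling inequality with $\beta(\ell)$ replaced by $\alpha(\ell)$, i.e.\ exactly the Berbee-type decoupling you correctly said is unavailable. As a general inequality it is false. Take jointly Gaussian $U,V\in\mathbb{R}^p$ made of $p$ independent coordinate pairs with correlation $r=(\log p)^{-1/2}$. Then $\alpha(\sigma(U),\sigma(V))$ is at most the maximal correlation, which is $r\to0$. Yet $\max_j(U_j+V_j)$ and $\max_j(U_j+\tilde V_j)$ differ in location by about $r\sqrt{\log p}$. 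That is $r\log p\to\infty$ times their fluctuation scale $(\log p)^{-1/2}$, so their Kolmogorov distance tends to one.

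The step can be repaired with a bound that uses derivatives of $F$. Because $\tilde U_j$ is independent of everything and has the law of $U_j$, expand $F(W_j+n^{-1/2}U_j)$ and $F(W_j+n^{-1/2}\tilde U_j)$ to a fixed order $m$ around $W_j=n^{-1/2}(U_1+\dots+U_{j-1}+\tilde U_{j+1}+\dots+\tilde U_k)$.
\begin{itemize}
\item For each $r<m$, the two order-$r$ terms differ by $n^{-r/2}\sum_{i_1,\dots,i_r}\mathrm{Cov}\bigl(\partial_{i_1\cdots i_r}F(W_j),\,U_{j,i_1}\cdots U_{j,i_r}\bigr)$. Each summand pairs a bounded function of the past (separated by the small block of length $\ell$) and of independent copies with a function of the current block. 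This is precisely what Davydov's inequality controls.
\item The cost is $p^r$ times polynomial factors times $\alpha(\ell)^{1-1/q}$. This is negligible once $\ell\asymp(C\log p)^{1/\gamma_{\mathrm{mix}}}$ with $C$ large, and $p=\Omega(n^\kappa)$ turns polynomial factors in $n$ into polynomial factors in $p$.
\item The order-$m$ remainder is of order $(n/b)\,D_m\,(b/n)^{m/2}\,\mathbb{E}\|U_j/\sqrt b\|_\infty^m$, where $D_m$ bounds the $m$-th derivatives of $F$. It grows with the smoothing, so $m$ has to be chosen together with the smoothing scale so that it vanishes under the stated rate conditions.
\end{itemize}
Alternatively, skip decoupling and run the Slepian--Stein or Lindeberg argument directly on the dependent blocks. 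Your fallback points the right way, provided every covariance you invoke is taken across a past/future split.

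Your small-block and Gaussian-comparison steps survive. With $\|\Xi-\Xi_b\|_{\max}\lesssim B_n^2\ell/b$ (you dropped the $B_n^2$), $\ell\asymp(\log p)^{1/\gamma_{\mathrm{mix}}}$ and $b\asymp n^{1/3}$, the comparison bound gives exactly the $B_n^{2/3}(\log p)^{(1+2\gamma_{\mathrm{mix}})/(3\gamma_{\mathrm{mix}})}n^{-1/9}$ term.

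Finally, both the remainder and your Step~3 need moment or $\psi_1$ control of $U_j/\sqrt b$. Truncation alone does not supply it, since it only gives the bound $\sqrt b\,B_n\sqrt{\log(pn)}$. Use a mixing Bernstein inequality with a union bound, and recentre after truncating.
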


The next proposition gives the kernel covariance consistency bound from
\citet[Theorem~11(i)]{CCW24}, specialized to $\gamma_1=2$.
\begin{proposition}\label{prop:consistency_covariance_matrix}
Let $X_1,\ldots,X_n$ be centered random vectors in $\R^p$.  Suppose
\[
  \max_{t,j}\|X_{t,j}\|_{\psi_{\gamma_1}}\le B_n
\]
for some $\gamma_1\ge1$, and suppose their strong-mixing coefficients obey
\[
  \alpha_X(\ell)\le K_1e^{-K_2\ell^{\gamma_2}},
  \qquad \ell\ge1,
\]
for constants $K_1,K_2,\gamma_2>0$.  Let $K$ satisfy the smoothness and
decay conditions in Assumption~\ref{ass:D_kernel}, and define
\[
\bar X:=\frac1n\sum_{t=1}^n X_t,
\quad
\widehat\Xi_n:=\sum_{j=-n+1}^{n-1}K\!\left(\frac{j}{b_n}\right)\widehat H_j,\]
where
\[
\widehat H_j:=\frac1n\sum_{t=j+1}^n (X_t-\bar X)(X_{t-j}-\bar X)^\top
\quad\text{if } j\ge 0,
\]
\[
\widehat H_j:=\frac1n\sum_{t=-j+1}^n (X_{t+j}-\bar X)(X_t-\bar X)^\top
\quad\text{if } j<0,
\]
and $\Xi:=\Cov\!\bigl(n^{-1/2}\sum_{t=1}^n X_t\bigr)$. Assume
$p>n^\kappa$ for some constant $\kappa>0$ and
$b_n\asymp n^{\rho}$. Then there exist constants $c_1>0$ depending only
on $(\rho,\vartheta)$ and $c_2>0$ depending only on
$(\gamma_1,\gamma_2,\vartheta)$ such that
\[
\|\widehat\Xi_n-\Xi\|_{\max}
=\Op\!\left\{B_n^2 n^{-c_1}(\log p)^{c_2}\right\}+O\!\left(B_n^2 n^{-\rho}\right).
\]
\end{proposition}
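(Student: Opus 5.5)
The plan is a bias--variance decomposition of $\widehat\Xi_n-\Xi$, entry by entry, followed by a union bound over the $p^2$ coordinate pairs. Write the target as $\Xi=\sum_{|j|<n}H_j$, where $H_j:=n^{-1}\sum_t\E[X_tX_{t-j}^\top]$ is the time-averaged population autocovariance; no stationarity is needed because only averages enter. Let $\widetilde H_j$ be the sample autocovariance built from the uncentered $X_t$ rather than $X_t-\bar X$. I would then split
\[
\widehat\Xi_n-\Xi=\underbrace{\sum_j K(j/b_n)(\widehat H_j-\widetilde H_j)}_{\text{centering}}+\underbrace{\sum_j K(j/b_n)(\widetilde H_j-H_j)}_{\text{stochastic}}+\underbrace{\sum_j\{K(j/b_n)-1\}H_j}_{\text{bias}},
\]
and bound each piece in $\|\cdot\|_{\max}$.

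\emph{Bias.} The covariance inequality for strongly mixing variables with $\psi_{\gamma_1}$ tails gives $\|H_j\|_{\max}\le CB_n^2\alpha_X(j)^{1/2}\le CB_n^2e^{-cj^{\gamma_2}}$, up to logarithmic tail factors. Since $K(0)=1$ and $\sup|K'|<\infty$, we have $|1-K(x)|\le C|x|$. Hence the bias is at most $CB_n^2b_n^{-1}\sum_j je^{-cj^{\gamma_2}}=O(B_n^2n^{-\rho})$, which is the deterministic term in the statement.

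\emph{Centering.} Apply a mixing Bernstein inequality (the Merlevède--Peligrad--Rio type bound, Proposition~\ref{prop:alpha_mixing_concentration}) to each coordinate and union bound. This gives $\|\bar X\|_\infty=O_p(B_n\sqrt{\log p/n})$ up to polylogarithmic factors. The centering correction in each $\widehat H_j$ is of order $\|\bar X\|_\infty\cdot\max_j\|n^{-1}\sum_tX_t\|_\infty+\|\bar X\|_\infty^2$. Summed against $|K(j/b_n)|$, which has total mass $O(b_n)$ by the row-sum argument of Lemma~\ref{D:lem:rowsum}, this yields $O_p(B_n^2b_n(\log p)^{c}/n)$, and that is of the form $B_n^2n^{-c_1}(\log p)^{c_2}$ because $\rho<1$.

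\emph{Stochastic term.} This is the main obstacle. First truncate the lag sum at $|j|\le L_n:=n^{\lambda}$, with $\lambda>\rho$ to be chosen. For $|j|>L_n$, the decay bound $|K(x)|\le C_K(1+|x|)^{-\vartheta}$ together with a crude bound $\max_j\|\widetilde H_j-H_j\|_{\max}=O_p(B_n^2(\log p)^{c})$ makes the tail at most $B_n^2(\log p)^{c}b_n^{\vartheta}L_n^{1-\vartheta}$. For each fixed $j\le L_n$ and each coordinate pair $(k,l)$, the products $X_{t,k}X_{t-j,l}-\E[\cdot]$ have $\psi_{\gamma_1/2}$ tails and are $\alpha$-mixing with coefficients shifted by $j$. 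The mixing Bernstein inequality then gives deviations of order $B_n^2\sqrt{j\log p/n}$ plus polylogarithmic terms, where the factor $j$ is the cost of blocking at lag $j$. A union bound over $p^2L_n$ events keeps the logarithm of order $\log p$, using $p>n^\kappa$. Summing against $|K(j/b_n)|$ gives $B_n^2(\log p)^{c_2}\sqrt{L_nb_n^{\,\cdot}/n}$-type rates. Finally I would balance $\lambda$ between the truncation tail $b_n^{\vartheta}L_n^{1-\vartheta}$ and this variance rate. The condition $\rho<(\vartheta-1)/(3\vartheta-2)$ should be exactly what leaves a positive exponent $c_1(\rho,\vartheta)$. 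The delicate points are getting the lag-dependent Bernstein constants uniformly over $j\le L_n$ and tracking the polylogarithmic exponent $c_2(\gamma_1,\gamma_2,\vartheta)$.
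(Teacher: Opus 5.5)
Your proposal is correct and follows essentially the route the paper relies on, reconstructed in the proof of Proposition~\ref{prop:chang_chen_wu_explicit_exponents}. That route bounds the kernel bias via $|K(x)-1|\lesssim|x|$ and summable mixing covariances, treats the centering terms separately, and splits the lags at $M$, using a mixing Bernstein bound for short lags and the decay $|K(x)|\lesssim(1+|x|)^{-\vartheta}$ for long lags. Your bound $\sum_{j\le L_n}|K(j/b_n)|\sqrt{j}\le Cb_n\sqrt{L_n}$ reproduces exactly the terms $B_n^2b_n\sqrt{L_n\log p/n}$ and $B_n^2b_n^{\vartheta}L_n^{1-\vartheta}(\log p)^{2/\gamma_1}$. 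Balancing them at $L_n\asymp n^{(1-2\rho+2\vartheta\rho)/(2\vartheta-1)}$ gives $c_1=\{\vartheta-1-(3\vartheta-2)\rho\}/(2\vartheta-1)$.

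You leave one piece implicit: the sub-Weibull part of the Bernstein bound. After the same summation it contributes $B_n^2n^{\rho-1}L_n(\log p)^{c}$. Its polynomial decay exponent exceeds $c_1$ by $(\vartheta-1)(1-\rho)/(2\vartheta-1)>0$, so it is absorbed into the envelope $B_n^2n^{-c_1}(\log p)^{c_2}$ once $c_2$ is taken as the larger of the two logarithmic exponents.
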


The next proposition makes admissible exponents in the covariance
consistency bound of \citet{CCW24} explicit for Regime~D.
\begin{proposition}
\label{prop:chang_chen_wu_explicit_exponents}
Under the assumptions of
Proposition~\ref{prop:consistency_covariance_matrix}, specialize
$\gamma_1=2$ and $\gamma_2=\gamma_{\mathrm{mix}}$.  Then an admissible
choice of the exponents in that proposition is
\[
  c_1(\rho,\vartheta)
  =\frac{\vartheta-1-(3\vartheta-2)\rho}{2\vartheta-1},
  \qquad
  c_2(2,\gamma_{\mathrm{mix}},\vartheta)
  =2+\frac{1}{\gamma_{\mathrm{mix}}}
   +\frac{1}{2\vartheta-1}.
\]
If, in addition, $p\le n^2$, $c_2$ can be improved into
\[
  c_2^\sharp(2,\gamma_{\mathrm{mix}},\vartheta)
  =\frac{\vartheta}{2\vartheta-1},
\]
in the sense that
\[
  \|\widehat\Xi_n-\Xi\|_{\max}
  =\Op\!\left\{
    B_n^2n^{-c_1(\rho,\vartheta)}
    (\log p)^{\vartheta/(2\vartheta-1)}
  \right\}
  +O(B_n^2n^{-\rho}).
\]
For the quadratic spectral kernel, $\vartheta=2$, this becomes
\[
  c_1(\rho,2)=\frac{1-4\rho}{3},
  \qquad
  c_2^\sharp(2,\gamma_{\mathrm{mix}},2)=\frac23,
\]
whereas the generic single-envelope choice is
$c_2(2,\gamma_{\mathrm{mix}},2)=7/3+1/\gamma_{\mathrm{mix}}$.
\end{proposition}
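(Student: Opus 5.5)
The proof goes back through the argument behind \citet[Theorem~11(i)]{CCW24}, i.e.\ Proposition~\ref{prop:consistency_covariance_matrix}. I will track every power of $n$ and $\log p$ explicitly rather than quoting the theorem. Write
\[
  \widehat\Xi_n-\Xi
  =\sum_{|j|<n}K(j/b_n)\bigl(\widehat H_j-H_j\bigr)
  -\sum_{|j|<n}\bigl(1-K(j/b_n)\bigr)H_j
  +R_n ,
\]
where $H_j$ is the population lag-$j$ cross-covariance and $R_n$ collects the $O(j/n)$ edge corrections and the centering by $\bar X$. The second sum is the kernel bias.

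\textbf{Step 1: the deterministic terms.} Davydov's covariance inequality for $\alpha$-mixing, combined with the $\psi_2$ envelope $B_n$, gives $\|H_j\|_{\max}\le CB_n^2e^{-cj^{\gamma_{\mathrm{mix}}}}$. By $K(0)=1$ and $\sup|K'|<\infty$ we have $|1-K(x)|\le C|x|$. Hence the bias is at most $CB_n^2b_n^{-1}\sum_j je^{-cj^{\gamma_{\mathrm{mix}}}}=O(B_n^2n^{-\rho})$, and the edge terms are $O(B_n^2/n)$. This is the second summand of the bound and is unaffected by the choice of $c_1,c_2$.

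\textbf{Step 2: the stochastic term.} Split the lags at a truncation point $L_n=b_nn^{\delta}$, with $\delta>0$ to be chosen.

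For $|j|>L_n$, use $|K(x)|\le C_K(1+|x|)^{-\vartheta}$ and the crude bound $\max_j\|\widehat H_j-H_j\|_{\max}=O_p(B_n^2\log p)$, which follows from the $\psi_1$ norm of products of $\psi_2$ variables and a union bound. This part contributes
\[
  O_p\bigl(B_n^2(\log p)\,b_n(L_n/b_n)^{1-\vartheta}\bigr).
\]

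For $|j|\le L_n$, each entry of $\widehat H_j-H_j$ is a normalised sum of the $\psi_1$, geometrically mixing variables $X_{t,k}X_{t-j,l}-\E[\cdot]$. The Merlevède--Peligrad--Rio inequality (Proposition~\ref{prop:alpha_mixing_concentration}, with tail exponent $1$ and mixing exponent $\gamma_{\mathrm{mix}}$) gives deviations of order $B_n^2\{\sqrt{x/n}+x^{1+1/\gamma_{\mathrm{mix}}}/n\}$. A union bound over the $p^2(2L_n+1)$ entries with $x\asymp\log(pn)$, weighted by $|K|\le C$ and summed over $j$, gives
\[
  O_p\bigl(B_n^2L_n\{n^{-1/2}(\log p)^{1/2}+n^{-1}(\log p)^{1+1/\gamma_{\mathrm{mix}}}\}\bigr).
\]

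\textbf{Step 3: balancing.} Equating the polynomial rates of the two pieces of Step~2 and optimising $\delta$ should produce the exponent $c_1=\{\vartheta-1-(3\vartheta-2)\rho\}/(2\vartheta-1)$. The log exponent from the dominant piece, inflated by the $(\log p)^{1/(2\vartheta-1)}$ factor that the balancing step attaches, gives $c_2=2+1/\gamma_{\mathrm{mix}}+1/(2\vartheta-1)$. This is the step I expect to be the main obstacle. The naive balancing above gives $\rho+1/(2\vartheta)-1/2$ rather than the stated $c_1$, so the stated $c_1$ must come from a finer treatment. I would try first a blocking argument for the short lags in which the block length scales like a power of $b_n$ (as in \citealt{CCW24}), re-solving the balance with that extra polynomial factor. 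I would then check directly that the resulting $c_1$ is positive exactly when $\rho<(\vartheta-1)/(3\vartheta-2)$, which matches Assumption~\ref{ass:D_kernel} and serves as a consistency check.

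\textbf{Step 4: the refinement $c_2^\sharp$ and the QS case.} When $p\le n^2$, $\log p\asymp\log n$, so every $\log(pn)$ from the union bounds is a single $\log p$. The Bernstein tail pieces $x^{1+1/\gamma_{\mathrm{mix}}}/n$ are then polynomially smaller than the Gaussian pieces $\sqrt{x/n}$ and can be absorbed into $n^{-c_1}$ by shrinking nothing. Likewise the crude $\log p$ envelope on long lags can be replaced by $(\log p)^{1/2}$ via the same concentration bound. Only the square-root factor survives, combined through the balancing with the kernel-tail factor, which yields $(\log p)^{1/2+1/(2(2\vartheta-1))}=(\log p)^{\vartheta/(2\vartheta-1)}$. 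Finally, substituting $\vartheta=2$ (Lemma~\ref{D:lem:QS_kernel}) gives $c_1=(1-4\rho)/3$, $c_2^\sharp=2/3$ and $c_2=7/3+1/\gamma_{\mathrm{mix}}$.
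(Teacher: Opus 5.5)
There is a genuine gap, and it sits where you suspected: the short-lag half of Step~2 and the balancing in Step~3. Your Step~1 and your long-lag bound $B_n^2b_n^{\vartheta}L_n^{1-\vartheta}\log p$ do match the paper's bias and long-lag terms.

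The first problem is the exponent. As you note, your two stochastic bounds balance to $(\vartheta-1)/(2\vartheta)-\rho$. For the quadratic spectral kernel this is $1/4-\rho$, which falls short of the claimed $(1-4\rho)/3$ by $(1-4\rho)/12>0$ on the whole range $\rho<1/4$. So the argument does not deliver the stated $c_1$ in the case the paper uses.

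The deeper problem is the per-lag step itself. For lag $j$, the product sequence $Z_t=X_{t,k}X_{t-j,l}$ is controlled only through $\alpha_Z(\ell)\le\alpha_X(\ell-j)$ for $\ell>j$. The constants in Proposition~\ref{prop:alpha_mixing_concentration} therefore degrade with $j$, and $j$ runs up to $L_n=b_nn^{\delta}\to\infty$. Note that if a $\sqrt{\log(pn)/n}$ bound did hold uniformly in $j$, you would not need truncation at all. Since $\sum_j|K(j/b_n)|\lesssim b_n$ (as in Lemma~\ref{D:lem:rowsum}), you would get $B_n^2b_n\sqrt{\log(pn)/n}$, which beats the claimed rate. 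Uniformity over lags is exactly the missing ingredient.

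The paper does not re-derive this piece. It imports the aggregate bound from the proof of \citet[Theorem~11(i)]{CCW24}. At $\gamma_1=2$ and with $\ell_p=\log p$, that bound reads
\[
  \|\widehat\Xi_n-\Xi\|_{\max}
  =O(B_n^2n^{-\rho})
  +O_p\bigl(B_n^2n^{\vartheta\rho}M^{1-\vartheta}\ell_p\bigr)
  +O_p\bigl(B_n^2n^{(2\rho-1)/2}M^{1/2}\ell_p^{1/2}\bigr)
  +O_p\bigl(B_n^2n^{\rho-1}M\ell_p^{2+1/\gamma_{\mathrm{mix}}}\bigr).
\]
The short-lag cost is thus $B_n^2b_n\sqrt{M\ell_p/n}$, not your $B_n^2L_n\sqrt{\ell_p/n}$, and that is what moves the balance. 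Take CCW24's choice $M\asymp n^{(1-2\rho+2\vartheta\rho)/(2\vartheta-1)}\ell_p^{1/(2\vartheta-1)}$. Then the second and third terms both equal $B_n^2n^{-c_1}\ell_p^{\vartheta/(2\vartheta-1)}$. The fourth equals $B_n^2n^{-c_B}\ell_p^{2+1/\gamma_{\mathrm{mix}}+1/(2\vartheta-1)}$, with $c_B-c_1=(\vartheta-1)(1-\rho)/(2\vartheta-1)>0$. The exponent $c_2$ is the larger of the two log exponents, and the rest is arithmetic.

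Your log bookkeeping in Steps~3 and~4 would also fail even with the right polynomial rates. The leading $2$ in $c_2$ comes from CCW24's $\gamma_*^{-1}=2+(2/\gamma_1-1)_{+}+1/\gamma_2$ in the tail term. Your Merlev\`ede--Peligrad--Rio tail instead gives $(\log p)^{1+1/\gamma_{\mathrm{mix}}}$, so the stated $c_2$ is asserted rather than derived.

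In Step~4, the exponent $\vartheta/(2\vartheta-1)=\tfrac12+\tfrac1{2(2\vartheta-1)}$ arises precisely because the long-lag envelope keeps the full factor $\ell_p$ while the short-lag term carries only $\ell_p^{1/2}$. The optimal $M$ then picks up $\ell_p^{1/(2\vartheta-1)}$. If you replaced the long-lag envelope by $(\log p)^{1/2}$, as you propose, the balancing would return $(\log p)^{1/2}$ instead.

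The correct mechanism for $c_2^\sharp$ is the one you mention in passing. When $n^{\kappa}<p\le n^2$, we have $\ell_p\asymp\log n$. The polynomial gap $n^{-(c_B-c_1)}$ then beats any power of $\log n$, so the fourth term is absorbed into the second.
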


\begin{proof}
\citet[Theorem~11(i)]{CCW24} state only the existence of $c_1$ and
$c_2$, so these exponents are not uniquely defined.  We derive the
admissible choices above from its proof.  Write
$\Delta_{n,r}:=\|\widehat\Xi_n-\Xi\|_{\max}$,
$\ell_p:=\log p$, $x_+:=\max\{x,0\}$, and
\[
  \gamma_*^{-1}
  :=2+\left(\frac{2}{\gamma_1}-1\right)_+
  +\frac{1}{\gamma_2}.
\]
To make \citet[equation~(S.52)]{CCW24} explicit, define the population lag
matrices
\[
  H_j:=
  \begin{cases}
    \displaystyle
    \frac1n\sum_{t=j+1}^n \E(X_tX_{t-j}^{\top}), & j\ge 0,\\[1.2ex]
    \displaystyle
    \frac1n\sum_{t=-j+1}^n \E(X_{t+j}X_t^{\top}), & j<0,
  \end{cases}
  \qquad
  \Xi^\star:=\sum_{j=-n+1}^{n-1}K(j/b_n)H_j.
\]
Since the $X_t$ are centered, grouping the double covariance sum by its lag
gives
\[
  \Xi
  =\frac1n\sum_{s=1}^n\sum_{t=1}^n \E(X_sX_t^\top)
  =\sum_{j=-n+1}^{n-1}H_j.
\]
Consequently, adding and subtracting the deterministic oracle target
$\Xi^\star$ yields the exact decomposition
\[
\begin{aligned}
  \widehat\Xi_n-\Xi
  &=(\widehat\Xi_n-\Xi^\star)+(\Xi^\star-\Xi)\\
  &=\sum_{j=-n+1}^{n-1}K(j/b_n)(\widehat H_j-H_j)
    +\sum_{j=-n+1}^{n-1}\{K(j/b_n)-1\}H_j.
\end{aligned}
\]
Thus \citet[equation~(S.52)]{CCW24} is the max-norm triangle inequality
\[
  \Delta_{n,r}
  \le
  \underbrace{\left\|
    \sum_{j=-n+1}^{n-1}K(j/b_n)(\widehat H_j-H_j)
  \right\|_{\max}}_{\mathcal S_n\ \text{(stochastic estimation error)}}
  +
  \underbrace{\left\|
    \sum_{j=-n+1}^{n-1}\{K(j/b_n)-1\}H_j
  \right\|_{\max}}_{\mathcal B_{n,K}\ \text{(kernel bias)}}.
\]
The bias term is deterministic.  Writing
$H_j=(H_{j,a,b})_{a,b\in[p]}$, symmetry of $K$ and $K(0)=1$ give
\[
  \mathcal B_{n,K}
  \le
  \max_{a,b\in[p]}\sum_{j=1}^{n-1}
  |K(j/b_n)-1|
  \bigl(|H_{j,a,b}|+|H_{-j,a,b}|\bigr).
\]
In the unified notation used in
\citet[proof of Theorem~11(i)--(ii)]{CCW24}, Davydov's
inequality and the mixing condition imply
\[
  |H_{\pm j,a,b}|
  \lesssim B_n^2\exp\{-C L_n^{-r_2}j^{r_2}\},
  \qquad j\ge1,
\]
while the bounded derivative of $K$ implies, by the mean-value theorem,
$|K(j/b_n)-1|\lesssim j/b_n$.  Therefore
\[
\begin{aligned}
  \mathcal B_{n,K}
  &\lesssim
  \frac{B_n^2}{b_n}
  \sum_{j=1}^{n-1}j\exp\{-C(j/L_n)^{r_2}\}\\
  &\lesssim B_n^2b_n^{-1}L_n^2.
\end{aligned}
\]
This is \citet[equation~(S.53)]{CCW24}.  For the $\alpha$-mixing case,
their specialization is $L_n=1$ and $r_2=\gamma_2$; hence
$\mathcal B_{n,K}=O(B_n^2b_n^{-1})=O(B_n^2n^{-\rho})$.

The stochastic part is also explicit.  For $j\ge0$,
\[
\begin{aligned}
  \widehat H_j-H_j
  ={}&\frac1n\sum_{t=j+1}^n
  \{X_tX_{t-j}^\top-\E(X_tX_{t-j}^\top)\}\\
  &-\left(\frac1n\sum_{t=j+1}^nX_t\right)\bar X^\top
  -\bar X\left(\frac1n\sum_{t=j+1}^nX_{t-j}\right)^\top
  +\frac{n-j}{n}\bar X\bar X^\top.
\end{aligned}
\]
After multiplication by $K(j/b_n)$ and summation over $j$,
\citet[equation~(S.54)]{CCW24} labels the max norms of these four
contributions as
$I_1,I_2,I_3,I_4$.  \citet[equation~(S.55)]{CCW24} then splits the leading centered
cross-product term $I_1$ at a lag $M$, with lags $j\le M$ controlled by a
mixing concentration inequality and lags $j>M$ controlled by the polynomial
decay $|K(j/b_n)|\lesssim (j/b_n)^{-\vartheta}$.  These steps give the
aggregate bound
\[
\begin{aligned}
\Delta_{n,r}
={}&O(B_n^2n^{-\rho})
+\Op\!\left(
  B_n^2n^{\vartheta\rho}M^{1-\vartheta}
  \ell_p^{2/\gamma_1}
\right)\\
&+\Op\!\left(
  B_n^2n^{(2\rho-1)/2}M^{1/2}\ell_p^{1/2}
\right)
+\Op\!\left(
  B_n^2n^{\rho-1}M\ell_p^{1/\gamma_*}
\right),
\end{aligned}
\]
where $M$ is the lag-truncation point.  In the proof of
\citet[Theorem~11(i), immediately after equation~(S.55)]{CCW24}, they choose
\[
  M\asymp
  n^{\{1-2\rho+2\vartheta\rho\}/(2\vartheta-1)}
  \ell_p^{(4-\gamma_1)/\{\gamma_1(2\vartheta-1)\}}.
\]
Substitution shows that the first two stochastic terms in the preceding
display have common rate
\[
  B_n^2n^{-c_A}\ell_p^{d_A},
  \qquad
  c_A
  :=\frac{\vartheta-1-(3\vartheta-2)\rho}
          {2\vartheta-1},
  \qquad
  d_A
  :=\frac{\gamma_1(\vartheta-1)+2}
          {\gamma_1(2\vartheta-1)},
\]
whereas the last stochastic term has rate
\[
  B_n^2n^{-c_B}\ell_p^{d_B},
  \qquad
  c_B
  :=\frac{2(\vartheta-1)-(4\vartheta-3)\rho}
          {2\vartheta-1},
  \qquad
  d_B
  :=\gamma_*^{-1}
    +\frac{4-\gamma_1}{\gamma_1(2\vartheta-1)}.
\]
Since
\[
  c_B-c_A
  =\frac{(\vartheta-1)(1-\rho)}{2\vartheta-1}>0,
\]
one admissible single-envelope choice in
Proposition~\ref{prop:consistency_covariance_matrix} is
\[
  c_1=c_A,
  \qquad
  c_2=\max\{d_A,d_B\}.
\]
The bandwidth restriction
$\rho<(\vartheta-1)/(3\vartheta-2)$ is exactly the condition $c_1>0$.

For the Regime~D specialization
$\gamma_1=2$ and $\gamma_2=\gamma_{\mathrm{mix}}$, one has
$d_B>d_A$ and
\[
  d_A=\frac{\vartheta}{2\vartheta-1},
  \qquad
  d_B=2+\frac{1}{\gamma_{\mathrm{mix}}}
      +\frac{1}{2\vartheta-1}.
\]
It follows that the single-envelope choice is
\[
  c_1=c_A
  =\frac{\vartheta-1-(3\vartheta-2)\rho}{2\vartheta-1},
  \qquad
  c_2=d_B
  =2+\frac{1}{\gamma_{\mathrm{mix}}}
      +\frac{1}{2\vartheta-1},
\]
which proves the first assertion.

For the sharper assertion, retain the two stochastic terms separately:
\[
  \Delta_{n,r}
  =\Op(B_n^2n^{-c_A}\ell_p^{d_A})
  +\Op(B_n^2n^{-c_B}\ell_p^{d_B})
  +O(B_n^2n^{-\rho}).
\]
Under the assumptions of
Proposition~\ref{prop:consistency_covariance_matrix},
$p=\Omega(n^\kappa)$; if also $p\le n^2$, then $\ell_p\asymp\log n$.
Since $c_B-c_A>0$,
\[
  \frac{n^{-c_B}\ell_p^{d_B}}
       {n^{-c_A}\ell_p^{d_A}}
  =n^{-(c_B-c_A)}\ell_p^{d_B-d_A}
  \longrightarrow0.
\]
The second stochastic term can therefore be absorbed into the first, so
the sharper leading exponent is
$c_2^\sharp=d_A=\vartheta/(2\vartheta-1)$.  Finally, setting
$\vartheta=2$ gives
\[
  c_1=\frac{1-4\rho}{3},
  \qquad
  c_2^\sharp=\frac23,
  \qquad
  c_2=\frac73+\frac1{\gamma_{\mathrm{mix}}},
\]
as claimed.
\end{proof}

The next proposition gives the bootstrap high-dimensional CLT comparison
from \citet[Theorem~10(i)]{CCW24}.
\begin{proposition}\label{prop:dependent_highd_CLT}
Let $X_1,\dots,X_n$ be centered random vectors in $\R^p$, and define
\[
S_{n,x}:=\frac{1}{\sqrt n}\sum_{t=1}^n X_t,
\qquad
\Xi:=\Cov(S_{n,x}).
\]
Let $G\sim N(0,\Xi)$. Given a random covariance estimator $\widehat\Xi_n$, let $\widehat G$ be a conditionally Gaussian random vector satisfying
\[
\widehat G\mid X_1, \ldots, X_n \sim N(0,\widehat\Xi_n).
\]
For a class $\mathcal A$ of Borel subsets of $\R^p$, define
\[
\rho_n(\mathcal A)
:=
\sup_{A\in\mathcal A}
\bigl|\Pbb(S_{n,x}\in A)-\Pbb(G\in A)\bigr|,
\quad
\widehat\rho_n(\mathcal A)
:=
\sup_{A\in\mathcal A}
\bigl|\Pbb(S_{n,x}\in A)-\Pbb(\widehat G\in A\mid X_1,\ldots,X_n)\bigr|.
\]
Let $\mathcal{A}^{\mathrm{re}}$ denote the class of all hyper-rectangles in $\R^p$, namely all sets of the form
\[
A=\prod_{j=1}^p [a_j,b_j],
\qquad
-\infty\le a_j\le b_j\le \infty.
\]
These sets include one-sided rectangles by allowing infinite endpoints.
For each $j\in[p]$, let
\[
V_{n,j}:=\Var\!\left(\frac{1}{\sqrt n}\sum_{t=1}^n X_{t,j}\right).
\]
Assume that there exists a constant $K_3>0$ such that
$\min_{j\in[p]}V_{n,j}\ge K_3$.
Further, assume that $p\ge n^\kappa$ for some constant $\kappa>0$. Then
\[
\widehat\rho_n(\mathcal{A}^{\mathrm{re}})
\lesssim
\rho_n(\mathcal{A}^{\mathrm{re}})+\norm{\widehat\Xi_n-\Xi}_{\max}^{1/3}(\log p)^{2/3},
\]
where $\lesssim$ hides a universal positive constant.
\end{proposition}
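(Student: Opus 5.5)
The proof is a triangle inequality around the oracle Gaussian $G\sim N(0,\Xi)$. For every $A\in\mathcal A^{\mathrm{re}}$, write
\[
  \bigl|\Pbb(S_{n,x}\in A)-\Pbb(\widehat G\in A\mid X_1,\ldots,X_n)\bigr|
  \le
  \bigl|\Pbb(S_{n,x}\in A)-\Pbb(G\in A)\bigr|
  +\bigl|\Pbb(G\in A)-\Pbb(\widehat G\in A\mid X_1,\ldots,X_n)\bigr|.
\]
The first term is at most $\rho_n(\mathcal A^{\mathrm{re}})$ by definition. Conditionally on the data, $\widehat G$ is a fixed centered Gaussian vector with covariance $\widehat\Xi_n$. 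So the second term is a purely Gaussian-to-Gaussian comparison over hyper-rectangles, with covariance discrepancy
\[
  \Delta:=\|\widehat\Xi_n-\Xi\|_{\max}.
\]
The whole task is therefore to show, for deterministic covariances,
\[
  \sup_{A\in\mathcal A^{\mathrm{re}}}
  \bigl|\Pbb(G\in A)-\Pbb(\widehat G\in A)\bigr|
  \lesssim \Delta^{1/3}(\log p)^{2/3}.
\]
Taking suprema over $A$ then gives the claim.

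\emph{Step 1: reduce rectangles to shifted maxima.} Write $A=\prod_j[a_j,b_j]$ and consider the $2p$-vector $(G,-G)$ with thresholds $(b,-a)$. Then
\[
  \{G\in A\}=\Bigl\{\max_{k\le 2p}\bigl(W_k-c_k\bigr)\le 0\Bigr\},
\]
where $W=(G,-G)$, $c=(b,-a)$, and infinite endpoints simply drop the corresponding coordinate. The doubled vectors $(G,-G)$ and $(\widehat G,-\widehat G)$ have max-norm covariance difference exactly $\Delta$, and their diagonal entries are those of $\Xi$ and $\widehat\Xi_n$. This is the form needed for the next two steps.

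\emph{Step 2: two easy cases, then the Slepian--Stein interpolation.} If $\Delta\ge K_3/2$, the bound is trivial, since the left side is at most $1$ and the right side is at least a constant once $p\ge n^\kappa$ makes $\log p\ge1$. If instead $\Delta\le K_3/2$, the diagonal of $\widehat\Xi_n$ is also bounded below by $K_3/2$. In that regime I would run the Chernozhukov--Chetverikov--Kato interpolation on the smoothed maximum
\[
  F_\beta(w)=\beta^{-1}\log\sum_k e^{\beta(w_k-c_k)},
\]
composed with a smooth approximation $g$ of $\1\{\cdot\le0\}$ of width $\phi^{-1}$. The interpolation $Z(s)=\sqrt s\,W+\sqrt{1-s}\,\widehat W$ and Gaussian integration by parts give
\[
  \bigl|\E g(F_\beta(W))-\E g(F_\beta(\widehat W))\bigr|\lesssim(\phi^2+\phi\beta)\Delta.
\]
The smoothing error is the anti-concentration of $\max_k(W_k-c_k)$ over a window of width $\phi^{-1}+\beta^{-1}\log(2p)$. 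I would control it with the Nazarov-type anti-concentration for Gaussian maxima with \emph{arbitrary} shifts $c_k$. This is the rectangle analogue of Proposition~\ref{prop:gaussian_anti_concentration}, and it needs only the lower variance bound $K_3$ on $\Xi$. It gives a bound of order $(\phi^{-1}+\beta^{-1}\log p)\sqrt{\log p}$. Choosing $\beta\asymp\phi\log p$ and balancing $\phi^2\Delta\log p$ against $\phi^{-1}\sqrt{\log p}$ gives the $\Delta^{1/3}(\log p)^{2/3}$ rate.

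\emph{Main obstacle.} The delicate step is that Proposition~\ref{prop:gaussian_comparison} is stated only for maxima against a common threshold $x$, whereas rectangles need coordinate-specific thresholds $c_k$. I cannot rescale coordinates to equalise thresholds without destroying the variance bounds. So I would redo the interpolation argument directly with the shifted smooth maximum, as above, which goes through unchanged because $c$ is deterministic. The extra hypothesis $p\ge n^\kappa$ is used only to make $\log(2p)\asymp\log p\gtrsim\log n$, so that logarithmic factors coming from $\log(1/\Delta)$ in small-$\Delta$ regimes can be absorbed into $(\log p)^{2/3}$ after truncating at $\Delta\le p^{-1}$, where the comparison error is negligible.
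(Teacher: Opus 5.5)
Your argument is correct, and it is the standard proof behind this statement. The paper does not prove it itself but imports it from \citet[Theorem~10(i)]{CCW24}, whose argument is the one you give: the triangle inequality through the oracle $G$, followed by the Chernozhukov--Chetverikov--Kato Gaussian-to-Gaussian comparison over hyperrectangles, which you reprove via Slepian--Stein interpolation of a shifted soft-max together with Nazarov's inequality.

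Two small points. First, Nazarov's bound has no $\log(1/\varepsilon)$ factor, so your Step~2 already yields $\Delta^{1/3}(\log p)^{2/3}$ outright; drop the closing remark about truncating at $\Delta\le p^{-1}$, which as stated would not give $\lesssim\Delta^{1/3}(\log p)^{2/3}$ when $\Delta\ll p^{-1}$. Second, the hidden constant in your proof depends on $K_3$, and this is unavoidable: rescaling $X_t$ by $c$ and $\widehat\Xi_n$ by $c^2$ leaves $\rho_n$ and $\widehat\rho_n$ unchanged while the covariance term scales like $c^{2/3}$, so the statement's ``universal'' constant must be read as depending only on $K_3$.
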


The next proposition is the strong-mixing Bernstein inequality from
\citet[Theorem~1 and Remark~1]{merlevede2011bernstein}.
\begin{proposition}\label{prop:alpha_mixing_concentration}
Let $(X_j)_{j\ge 1}$ be centered real-valued random variables, and define their strong-mixing coefficients by
\[
\alpha_X(\ell)
:=
\sup_{k\ge 1}
\sup_{\substack{A\in\sigma(X_1,\dots,X_k),\\ B\in\sigma(X_{k+\ell},X_{k+\ell+1},\dots)}}
\big|\Pbb(A\cap B)-\Pbb(A)\Pbb(B)\big|,
\qquad \ell\ge 1.
\]
Assume that there exist constants $c>0$, $b>0$, $\gamma_1>0$, and $\gamma_2\in(0,\infty]$ such that
\[
\alpha_X(n) \le \exp\!\bigl(-c n^{\gamma_1}\bigr)
\qquad\text{for all }n\ge 1,
\]
and
\[
\sup_{j\ge 1} \Pbb(|X_j|>t)
\le
\exp\!\Bigl(1-(t/b)^{\gamma_2}\Bigr)
\qquad\text{for all }t>0.
\]
Let
\[
\gamma := \left(\frac{1}{\gamma_1}+\frac{1}{\gamma_2}\right)^{-1},
\qquad
\varphi_M(x):=(x\wedge M)\vee(-M),
\]
and assume $\gamma<1$. Define
\begin{equation}\label{eq:mpr-V}
V
:=
\sup_{M>0}\sup_{i\ge 1}
\left(
\Var\bigl(\varphi_M(X_i)\bigr)
+2\sum_{j>i}\bigl|\Cov\bigl(\varphi_M(X_i),\varphi_M(X_j)\bigr)\bigr|
\right).
\end{equation}
Then there exist constants $C_0,C_1,C_2,\eta_0\in(0,\infty)$, depending only on $(b,c,\gamma_1,\gamma_2)$, such that for every $n\ge 4$ and every $y\ge C_0(\log n)^{\eta_0}$,
\begin{equation}\label{eq:mpr-strong-mixing}
\Pbb\!\left(\max_{1\le j\le n}\left|\sum_{r=1}^j X_r\right|\ge y\right)
\le
(n+1)\exp\!\left(-\frac{y^{\gamma}}{C_1}\right)
+
\exp\!\left(-\frac{y^2}{C_2+C_2 nV}\right).
\end{equation}
\end{proposition}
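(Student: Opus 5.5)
Since this is the Bernstein inequality of \citet{merlevede2011bernstein}, the plan is to invoke their Theorem~1 together with Remark~1, and to check that our hypotheses match theirs. The checks are the following. The mixing condition $\alpha_X(n)\le\exp(-cn^{\gamma_1})$ and the tail condition $\sup_j\Pbb(|X_j|>t)\le\exp(1-(t/b)^{\gamma_2})$ are exactly their assumptions. Remark~1 there allows the variance proxy $V$ to be defined through the truncations $\varphi_M$, as in \eqref{eq:mpr-V}. The condition $\gamma<1$ is the regime they treat; the case $\gamma\ge1$ reduces to it by lowering $\gamma_1$. For completeness, I would also sketch why the bound holds, along the following lines.

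\emph{Truncation.} Fix $y$ and split $X_r=\varphi_M(X_r)-\E\varphi_M(X_r)+R_r$, with $M\asymp y^{\gamma/\gamma_2}$. By the tail assumption and a union bound over $r\le n$, the event $\{\max_r|X_r|>M\}$ has probability at most $n\exp(1-(M/b)^{\gamma_2})\lesssim n\exp(-y^{\gamma}/C_1)$. The centering terms $\E\varphi_M(X_r)$ are exponentially small in $M^{\gamma_2}$. Together these produce the first term in \eqref{eq:mpr-strong-mixing} and reduce the problem to bounded, centered, mixing summands.

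\emph{Blocking and coupling.} For the bounded variables, partition $[n]$ into big blocks of length $A\asymp y^{\gamma/\gamma_1}$ separated by gaps. Using a coupling for $\alpha$-mixing sequences (Rio's quantile coupling, or Berbee-type after the Cantor-set construction of Merlevède--Peligrad--Rio), replace the block sums by independent copies. The coupling cost is of order $(n/A)\,\alpha_X(A)\,M$, which is again $\lesssim n\exp(-y^\gamma/C_1)$ by the choice of $A$ and $M$. The independent block sums are bounded by $AM\asymp y^{\gamma/\gamma_1+\gamma/\gamma_2}=y$ up to constants. Classical Bernstein then gives $\exp(-y^2/(C+CnV))$, where the block variances are controlled by the covariance series in \eqref{eq:mpr-V}. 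The maximal version over partial sums $j\le n$ follows because the construction is applied to dyadic sub-ranges, or from an Ottaviani-type argument, at the price of the $(n+1)$ factor.

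\emph{Main obstacle.} The hard step is to obtain the sharp variance factor $nV$ rather than a block-inflated variance, while keeping the exponent $\gamma$ in the first term. A naive single blocking loses either in the variance or in the exponent. Merlevède--Peligrad--Rio resolve this with a recursive Cantor-like decomposition, in which the gaps are themselves re-blocked at geometrically smaller scales. I would not reproduce this part and would cite their proof. Their restriction $y\ge C_0(\log n)^{\eta_0}$ is what absorbs the polylogarithmic losses from the recursion.
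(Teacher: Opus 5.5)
Your proposal matches the paper, which gives no argument of its own and simply imports the bound from \citet[Theorem~1 and Remark~1]{merlevede2011bernstein}; checking the mixing, tail, $\gamma<1$ and truncated-$V$ hypotheses, as you do, is all that is required. Your heuristic sketch does not carry the proof, and its blocking step fails as written: with independent block sums of size $AM\asymp y$, Bernstein's range term is of order $y^2$, so once $y^2\gg nV$ the exponent stays bounded instead of decaying like $e^{-y^{\gamma}/C_1}$---this, not a mild loss, is why a single blocking cannot work and their Cantor-type recursion is needed.
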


The next proposition bounds the term $V$ in the strong-mixing Bernstein
inequality \eqref{eq:mpr-strong-mixing} by marginal quantiles and mixing
coefficients.
\begin{proposition}\label{prop:V_bound}
Consider the setting of Proposition \ref{prop:alpha_mixing_concentration}. For $u\in(0,1]$, define the marginal quantiles
\[
q_{|X_i|}(u):=\inf\{t>0:\Pbb(|X_i|>t)\le u\},
\qquad
q(u):=\sup_{i\ge 1} q_{|X_i|}(u),
\]
Then the strong-mixing covariance inequality
\citep[see, e.g., Remark~3]{merlevede2011bernstein} implies
\begin{equation}\label{eq:remark3-strong-mixing}
V
\le
\sup_{i\ge 1}\E[X_i^2]
+
4\sum_{k>0}\int_0^{2\alpha_X(k)}q(u)^2\,du.
\end{equation}
\end{proposition}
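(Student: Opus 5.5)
The plan is to bound the two pieces in the definition \eqref{eq:mpr-V} of $V$ separately, uniformly in the truncation level $M>0$ and the index $i$. Then take the supremum.

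\emph{Variance term.} The truncation $\varphi_M$ satisfies $|\varphi_M(x)|\le|x|$. Hence
\[
  \Var\bigl(\varphi_M(X_i)\bigr)\le\E\bigl[\varphi_M(X_i)^2\bigr]\le\E[X_i^2]\le\sup_{i'\ge1}\E[X_{i'}^2]
\]
for every $M$ and $i$. This gives the first term on the right of \eqref{eq:remark3-strong-mixing}.

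\emph{Covariance terms.} Fix $j=i+k$ with $k\ge1$, and set $Y=\varphi_M(X_i)$ and $Z=\varphi_M(X_j)$. I would invoke Rio's quantile covariance inequality, which is the strong-mixing covariance inequality behind Remark~3 of \citet{merlevede2011bernstein}, in the form
\[
  |\Cov(Y,Z)|\le 2\int_0^{2\alpha(\sigma(Y),\sigma(Z))}Q_{|Y|}(u)\,Q_{|Z|}(u)\,du,
\]
where $Q_{|Y|}$ denotes the tail quantile function of $|Y|$. Two monotonicity facts then reduce this to the quantities in the statement.
\begin{enumerate}
\item We have $\sigma(Y)\subset\sigma(X_1,\dots,X_i)$ and $\sigma(Z)\subset\sigma(X_{i+k},X_{i+k+1},\dots)$. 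So the mixing coefficient of the pair is at most $\alpha_X(k)$, and the integration range only grows when it is replaced by $[0,2\alpha_X(k)]$, because the integrand is nonnegative.
\item Since $|\varphi_M(x)|\le|x|$, we have $\Pbb(|Y|>t)\le\Pbb(|X_i|>t)$. Hence $Q_{|Y|}(u)\le q_{|X_i|}(u)\le q(u)$, and likewise for $Z$.
\end{enumerate}
Together these give $|\Cov(Y,Z)|\le2\int_0^{2\alpha_X(k)}q(u)^2\,du$, uniformly in $M$ and $i$. Summing over $j>i$, which is the same as summing over $k\ge1$, and multiplying by the factor $2$ in \eqref{eq:mpr-V}, yields
\[
  2\sum_{j>i}\bigl|\Cov\bigl(\varphi_M(X_i),\varphi_M(X_j)\bigr)\bigr|\le4\sum_{k>0}\int_0^{2\alpha_X(k)}q(u)^2\,du.
\]
Adding the variance bound and taking $\sup_M\sup_i$ proves \eqref{eq:remark3-strong-mixing}.

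The main point needing care is the exact constant in the form of Rio's inequality being cited. Versions exist with $4\int_0^{\alpha}$ and with $2\int_0^{2\alpha}$, and these agree up to a change of variables. I would state the $2\int_0^{2\alpha}$ version explicitly so that the constant $4$ matches. A second check is that the quantile comparison in step 2 is valid for the truncated variables, which is exactly the monotonicity of tail probabilities under $|\varphi_M(x)|\le|x|$.
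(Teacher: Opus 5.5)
Your proof is correct and follows the same route as the paper. The paper gives no separate argument; it only points to Rio's covariance inequality $|\mathrm{Cov}(Y,Z)|\le 2\int_0^{2\alpha}Q_{|Y|}(u)Q_{|Z|}(u)\,du$, where $\alpha$ is Rosenblatt's coefficient (matching $\alpha_X$), as packaged in Remark~3 of Merlev\`ede--Peligrad--Rio, together with exactly the truncation and $\sigma$-field monotonicity steps you spell out.

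One small correction: the $4\int_0^{\alpha}$ and $2\int_0^{2\alpha}$ forms are not equivalent by a change of variables. Substituting $u=2v$ and using that quantile functions are nonincreasing shows the $2\int_0^{2\alpha}$ form is strictly sharper. Only that form yields the constant $4$ with upper limit $2\alpha_X(k)$, so you are right to commit to it.
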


\paragraph{Orlicz $\psi_2$ and $\psi_1$ norms.}
Following \citet[Definitions~2.5.6 and 2.7.5]{vershynin2018high}, define for a real-valued random variable $X$
\[
\|X\|_{\psi_2}:=\inf\bigl\{t>0:\ \E\exp(X^2/t^2)\le 2\bigr\},
\qquad
\|X\|_{\psi_1}:=\inf\bigl\{t>0:\ \E\exp(|X|/t)\le 2\bigr\}.
\]
We call $X$ \emph{sub-Gaussian} if $\|X\|_{\psi_2}<\infty$ and \emph{sub-exponential} if $\|X\|_{\psi_1}<\infty$.
By \citet[Proposition~2.5.2]{vershynin2018high}, the $\psi_2$ norm is equivalent to the standard sub-Gaussian tail and moment conditions; by \citet[Proposition~2.7.1]{vershynin2018high}, the $\psi_1$ norm is equivalent to the standard sub-exponential tail and moment conditions.

The next proposition gives maximal bounds for finitely many sub-Gaussian
variables, following
\citet[Proposition~2.5.2 and Exercise~2.5.10]{vershynin2018high}.
\begin{proposition}\label{prop:subgaussian_max}
Let $X_1,\dots,X_m$ be real-valued random variables satisfying
\[
\max_{1\le j\le m}\|X_j\|_{\psi_2}\le K.
\]
Then there exist universal constants $C,c\in(0,\infty)$ such that
\[
\E\Bigl[\max_{1\le j\le m}|X_j|\Bigr]
\le CK\sqrt{1\vee \log m}
\]
and, for every $u\ge 0$,
\[
\Pbb\Bigl(
\max_{1\le j\le m}|X_j|
>
CK\bigl(\sqrt{\log(2m)}+u\bigr)
\Bigr)
\le 2e^{-c u^2}.
\]
In particular,
\[
\max_{1\le j\le m}|X_j|=\Op\bigl(K\sqrt{1\vee\log m}\bigr).
\]
No independence is required.
\end{proposition}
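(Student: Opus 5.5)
The plan is to derive everything from the one-variable sub-Gaussian tail bound and then apply a union bound; since only the marginal laws enter, no independence is needed. First, by the equivalence of sub-Gaussian characterizations \citep[Proposition~2.5.2]{vershynin2018high} (or directly by Markov's inequality applied to $\exp(X_j^2/K^2)$ together with $\E\exp(X_j^2/\|X_j\|_{\psi_2}^2)\le2$), each coordinate satisfies
\[
  \Pbb(|X_j|>t)\le 2\exp(-t^2/K^2),\qquad t\ge0,
\]
using $\|X_j\|_{\psi_2}\le K$.

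For the tail statement, a union bound over $j\in[m]$ gives
\[
  \Pbb\Bigl(\max_j|X_j|>t\Bigr)\le 2m\exp(-t^2/K^2).
\]
I will take $t=CK(\sqrt{\log(2m)}+u)$ with $C\ge1$ and use $(a+b)^2\ge a^2+b^2$ for $a,b\ge0$. This yields
\[
  2m\exp\bigl(-C^2\log(2m)-C^2u^2\bigr)\le 2m(2m)^{-C^2}e^{-C^2u^2}\le 2e^{-cu^2}
\]
once $C^2\ge1$, which is the displayed probability bound.

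For the expectation, I will integrate the tail:
\[
  \E\max_j|X_j|=\int_0^\infty\Pbb\Bigl(\max_j|X_j|>s\Bigr)\,ds.
\]
Split at $s_0=CK\sqrt{\log(2m)}$, bound the integrand by $1$ below $s_0$, and use the previous display with $s=s_0+CKu$ above $s_0$. The remaining integral is $CK\int_0^\infty 2e^{-cu^2}\,du=O(K)$. Combining the two pieces with $\sqrt{\log(2m)}\lesssim\sqrt{1\vee\log m}$ gives $\E\max_j|X_j|\le CK\sqrt{1\vee\log m}$.

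The stochastic-order claim follows from the tail bound alone: for any $\varepsilon>0$, choose $u$ so that $2e^{-cu^2}<\varepsilon$; then $\max_j|X_j|\le CK(\sqrt{\log(2m)}+u)\lesssim K\sqrt{1\vee\log m}$ with probability at least $1-\varepsilon$, uniformly in $m$. (Markov's inequality applied to the expectation bound gives the same conclusion.)

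There is no real obstacle. The only step needing care is the constant bookkeeping, namely checking that $C$ is chosen large enough that the factor $m$ is absorbed by $(2m)^{-C^2}$ uniformly in $m\ge1$, including the degenerate case $m=1$, where $\log m=0$ and the $1\vee$ in the statement is what keeps the bound nontrivial.
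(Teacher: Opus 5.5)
Your proof is correct and follows the same route as the paper. The paper gives no argument of its own: it cites Vershynin's Proposition 2.5.2, which supplies the one-variable tail bound $\mathbb{P}(|X_j|>t)\le 2e^{-t^2/K^2}$ via Markov's inequality from the $\psi_2$ definition, and Exercise 2.5.10, whose standard solution is exactly your union bound followed by tail integration. The only bookkeeping you leave implicit is that a single constant $C$ serves both displays; this is immediate because the tail bound stays true when $C$ is enlarged.
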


The next proposition gives the standard sub-Gaussian bound for weighted sums
of independent centered variables $X_1,\dots,X_m$, following
\citet[Proposition~2.6.1]{vershynin2018high}.
\begin{proposition}\label{prop:sum_subgaussian}
Let $X_1,\dots,X_m$ be independent centered real-valued random variables, and set
\[
K_j:=\norm{X_j}_{\psi_2},
\qquad
V_2:=\left(\sum_{j=1}^m K_j^2\right)^{1/2}.
\]
Then there exists a universal constant $C<\infty$ such that
\[
\Bigl\|\sum_{j=1}^m X_j\Bigr\|_{\psi_2}
\le C V_2.
\]
More generally, for deterministic weights $a_1,\dots,a_m\in\R$,
\[
\Bigl\|\sum_{j=1}^m a_j X_j\Bigr\|_{\psi_2}
\le
C\left(\sum_{j=1}^m a_j^2 K_j^2\right)^{1/2}.
\]
\end{proposition}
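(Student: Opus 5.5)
The plan is the classical Hoeffding-type argument: pass from the Orlicz $\psi_2$ norm to a moment generating function bound, multiply the moment generating functions using independence, and then convert back. It suffices to prove the weighted statement, since the unweighted one is the case $a_j\equiv1$. Conversely, the weighted statement follows from the unweighted one applied to $Y_j:=a_jX_j$. These are still independent and centered, and by homogeneity of the Orlicz norm, $\|Y_j\|_{\psi_2}=|a_j|K_j$. So I would prove only $\|\sum_j X_j\|_{\psi_2}\le CV_2$.

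\emph{Step 1 (from $\psi_2$ to an MGF bound).} Using the equivalence of sub-Gaussian characterizations recalled above (\citet[Proposition~2.5.2]{vershynin2018high}), and crucially the centering $\E X_j=0$, I would show that there is a universal $c_0$ with
\[
  \E\exp(\lambda X_j)\le\exp(c_0\lambda^2K_j^2)\qquad\forall\lambda\in\R .
\]
Concretely, $\E\exp(X_j^2/K_j^2)\le2$ gives the moment bounds $\E|X_j|^k\le C^k K_j^k k^{k/2}$. Expanding the exponential in its Taylor series, the linear term vanishes by centering, and the remaining series is bounded by $\exp(c_0\lambda^2K_j^2)$. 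The small-$|\lambda|$ and large-$|\lambda|$ regimes are treated separately; in the latter one uses $\lambda x\le \lambda^2K_j^2+x^2/(4K_j^2)$.

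\emph{Step 2 (tensorization).} By independence,
\[
  \E\exp\Bigl(\lambda\sum_j X_j\Bigr)=\prod_j\E\exp(\lambda X_j)\le\exp\bigl(c_0\lambda^2V_2^2\bigr)
\]
for every real $\lambda$.

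\emph{Step 3 (back to $\psi_2$).} The MGF bound with variance proxy $c_0V_2^2$ yields the two-sided tail $\Pbb(|\sum_jX_j|>u)\le2\exp(-u^2/(4c_0V_2^2))$ by Chernoff's bound with optimal $\lambda$. The same cited equivalence then turns this tail bound into $\|\sum_jX_j\|_{\psi_2}\le CV_2$; alternatively, one can integrate the tail to bound $\E\exp(S^2/t^2)\le2$ for $t=CV_2$.

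\emph{Main obstacle.} The only delicate step is Step 1. That is where centering is essential, since without it the linear term in $\lambda$ survives and the MGF bound fails. It is also where one must make sure that all constants are universal rather than depending on the $K_j$. Steps 2 and 3 are routine once Step 1 is in place. Homogeneity of the norm takes care of the degenerate cases $K_j=0$ or $a_j=0$, in which $Y_j=0$ almost surely.
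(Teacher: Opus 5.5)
Your proposal is correct and follows essentially the same route as the source the paper relies on. The paper gives no proof of its own and cites Vershynin (2018), Proposition~2.6.1, whose argument is the one you outline: an MGF bound $\E\exp(\lambda X_j)\le\exp(c_0\lambda^2K_j^2)$ for centered sub-Gaussian variables, tensorization by independence, and conversion back to the $\psi_2$ norm through the equivalence of sub-Gaussian characterizations. Your reduction of the weighted statement to the unweighted one via $\|a_jX_j\|_{\psi_2}=|a_j|K_j$ is also fine.
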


The next proposition gives the standard product rule turning sub-Gaussian
products into sub-exponential variables, following
\citet[Lemma~2.7.7]{vershynin2018high}.
\begin{proposition}\label{prop:product_subgaussian}
  Let $X$ and $Y$ be sub-gaussian random variables. Then $XY$ is sub-exponential. Moreover,
  \[\|XY\|_{\psi_1}\le \|X\|_{\psi_2}\|Y\|_{\psi_2}.\]
\end{proposition}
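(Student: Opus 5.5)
The plan is to reduce to the case of unit norms by homogeneity, and then bound $|XY|$ pointwise by an elementary inequality that converts the $\psi_1$ moment condition into the two $\psi_2$ moment conditions.

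\emph{Degenerate case.} If $\|X\|_{\psi_2}=0$, then $\E\exp(X^2/t^2)\le2$ for every $t>0$. Letting $t\downarrow0$ forces $X=0$ almost surely, so $XY=0$ and both sides vanish. The same holds if $\|Y\|_{\psi_2}=0$.

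\emph{Homogeneity.} Otherwise, put $K_X:=\|X\|_{\psi_2}$ and $K_Y:=\|Y\|_{\psi_2}$. Both norms are positively homogeneous: $\|cZ\|_{\psi_i}=|c|\,\|Z\|_{\psi_i}$, directly from the definitions. Applying this to $X/K_X$ and $Y/K_Y$, it suffices to show $\|XY\|_{\psi_1}\le1$ when $\|X\|_{\psi_2}=\|Y\|_{\psi_2}=1$.

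\emph{Attainment of the infimum.} Next I check that the infimum in the $\psi_2$ definition is attained, so that $\E\exp(X^2)\le2$ and $\E\exp(Y^2)\le2$. Take $t_k\downarrow1$ with $\E\exp(X^2/t_k^2)\le2$. The integrands $\exp(X^2/t_k^2)$ increase to $\exp(X^2)$, so monotone convergence gives $\E\exp(X^2)\le2$. The same argument applies to $Y$.

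\emph{Main step.} I use the pointwise bound $|XY|\le (X^2+Y^2)/2$, together with the convexity inequality $e^{(a+b)/2}\le (e^a+e^b)/2$. Combining them gives
\[
\E\exp(|XY|)\le \E\exp\!\left(\frac{X^2+Y^2}{2}\right)\le \frac{\E\exp(X^2)+\E\exp(Y^2)}{2}\le 2.
\]
By the definition of the $\psi_1$ norm with $t=1$, this yields $\|XY\|_{\psi_1}\le1$. Rescaling then gives $\|XY\|_{\psi_1}\le\|X\|_{\psi_2}\|Y\|_{\psi_2}$. In particular $XY$ is sub-exponential.

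The only point needing care is the attainment of the infimum, which is handled by the monotone convergence step. Without it, one would instead run the main step at $t=(1+\epsilon)^2$ and let $\epsilon\downarrow0$.
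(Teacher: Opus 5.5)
Your proof is correct and follows essentially the same route as the source the paper relies on: the paper gives no proof of its own and cites Lemma~2.7.7 of Vershynin's textbook, whose argument is this normalization to unit $\psi_2$ norms followed by $|XY|\le (X^2+Y^2)/2$ and Young's (equivalently, your convexity) inequality, giving $\mathbb{E}\exp(|XY|)\le 2$. Your handling of the degenerate case and of attainment of the infimum by monotone convergence supplies details that the textbook argument leaves implicit.
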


The next proposition gives maximal bounds for finitely many sub-exponential
variables, following \citet[Proposition~2.7.1]{vershynin2018high}.
\begin{proposition}\label{prop:max_subexponential}
Let $Y_1,\dots,Y_m$ be real-valued random variables satisfying
\[
\max_{1\le j\le m}\|Y_j\|_{\psi_1}\le K.
\]
Then there exist universal constants $C,c\in(0,\infty)$ such that
\[
\E\Bigl[\max_{1\le j\le m}|Y_j|\Bigr]
\le CK(1\vee \log m)
\]
and, for every $u\ge 0$,
\[
\Pbb\Bigl(
\max_{1\le j\le m}|Y_j|
>
CK\bigl(\log(2m)+u\bigr)
\Bigr)
\le 2e^{-c u}.
\]
In particular,
\[
\max_{1\le j\le m}|Y_j|=\Op\bigl(K(1\vee\log m)\bigr).
\]
Again, no independence is required.
\end{proposition}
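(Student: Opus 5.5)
The plan is to use only the definition of the $\psi_1$ norm, Markov's inequality, and a union bound; no independence enters anywhere, which matches the final remark in Proposition~\ref{prop:max_subexponential}.

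\textbf{Step 1: a uniform tail bound.} Fix any $t>K$ (and then let $t\downarrow K$). By the definition of $\|\cdot\|_{\psi_1}$, $\E\exp(|Y_j|/t)\le2$ for every $j$. Markov's inequality applied to $\exp(|Y_j|/t)$ gives
\[
  \Pbb(|Y_j|>s)\le 2e^{-s/K}\qquad\text{for all } s\ge0,\ j\in[m].
\]

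\textbf{Step 2: the tail of the maximum.} A union bound over the $m$ coordinates gives
\[
  \Pbb\Bigl(\max_j|Y_j|>s\Bigr)\le 2m\,e^{-s/K}.
\]
Take $s=CK(\log(2m)+u)$ with $C\ge1$. Then
\[
  2m\,e^{-s/K}\le 2m\,(2m)^{-C}e^{-Cu}\le 2e^{-cu},
\]
since $(2m)^{1-C}\le1$. This holds with, for instance, $C=c=1$, and proves the displayed tail inequality.

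\textbf{Step 3: the expectation bound.} Integrate the tail:
\[
  \E\max_j|Y_j|=\int_0^\infty\Pbb\Bigl(\max_j|Y_j|>s\Bigr)\,ds
  \le K\log(2m)+\int_0^\infty 2e^{-u}K\,du .
\]
Here the first part of the range, $s\le K\log(2m)$, is bounded trivially by probability one, and on the remaining range we substitute $s=K(\log(2m)+u)$. This yields $\E\max_j|Y_j|\le K(\log(2m)+2)\le C'K(1\vee\log m)$, because $\log(2m)\le 1+\log m$.

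\textbf{Step 4: the stochastic-order statement.} The $\Op$ conclusion follows from Step 3 and Markov's inequality. Alternatively, it follows from Step 2 by choosing $u$ large.

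None of these steps is a real obstacle. The only point needing care is the bookkeeping of the constants $C$ and $c$ so that they remain universal: this requires $\log(2m)\ge\log2>0$, which lets the factor $2m$ be absorbed into the exponential.
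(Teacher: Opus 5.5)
Your proof is correct and is essentially the standard argument behind the paper's statement, which gives no proof of its own and simply cites Vershynin's Proposition~2.7.1. As in that route, you get a $2e^{-s/K}$ tail bound from the $\psi_1$ definition via Markov's inequality, take a union bound over the $m$ coordinates, and integrate the tail. Deriving the tail bound directly from the definition also makes the constants explicit: after taking the larger constant for both displays, $C=4$ and $c=1$ work, since the tail bound is monotone in $C\ge 1$.
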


\clearpage 
\section{Simulation Results for Partially Studentized Scan Tests}
\label{app:simulation_results}

Section~\ref{sec:scantest} of the main draft develops the scan test as a one-sided procedure for
detecting forecast outperformance that may occur only over unknown target-month
intervals and unknown horizons.  The theory justifies the bootstrap scan test
asymptotically, but the empirical application has only $n=215$ target months
and uses short interval collections for expert selection.  This appendix
therefore documents the finite-sample calibration exercises behind the
implementation choices in Section~\ref{sec:scantest}.  The first part calibrates the
partial-studentization exponent $\gamma$ for the horizon-by-horizon scan tests.
The second part repeats the calibration for the horizon-pooled pre/post-COVID
use of the scan test.  The final part benchmarks these choices against the
closely related Giacomini--Rossi fluctuation test of
\citet{GiacominiRossi2010}.

\subsection{Simulation and the choice of $\gamma$}\label{app:sim_gamma}

The scan statistic in Section~\ref{sec:scantest} maximizes a partially studentized statistic
over many intervals and horizons.  This maximization is what lets the test find
localized outperformance without specifying its timing or forecast horizon in
advance, but it also creates the main finite-sample concern: aggressive
studentization can make short, noisy intervals look too extreme.  We therefore
choose $\gamma$ by simulating the null in a way that preserves the empirical
scale, cross-horizon dependence, and date-specific heterogeneity of the observed
loss-differential panels.

The simulation setting is under the null.  It fixes an empirical
loss-differential panel $D_t^{(h)}$ and draws
\[
D_{t,k}^{(h)}=\xi_{t,k}D_t^{(h)},
\qquad
\xi_{t,k}=\operatorname{sign}(Z_{t,k}),
\]
where $Z_{t,k}$ is a stationary Gaussian AR(1) process with autocorrelation
$\rho$.  To choose $\rho$, we fit an AR(1) model to each comparison
(sarima--micro, macro--micro, and sarima--macro) and each horizon $h=1,\ldots,24$.  The box
plots of the estimates are given below.  Thus, we choose
$\rho\in\{-0.25,0,0.25,0.5\}$ in our simulations.

\begin{figure}[H]
\centering
\includegraphics[width=0.70\textwidth]{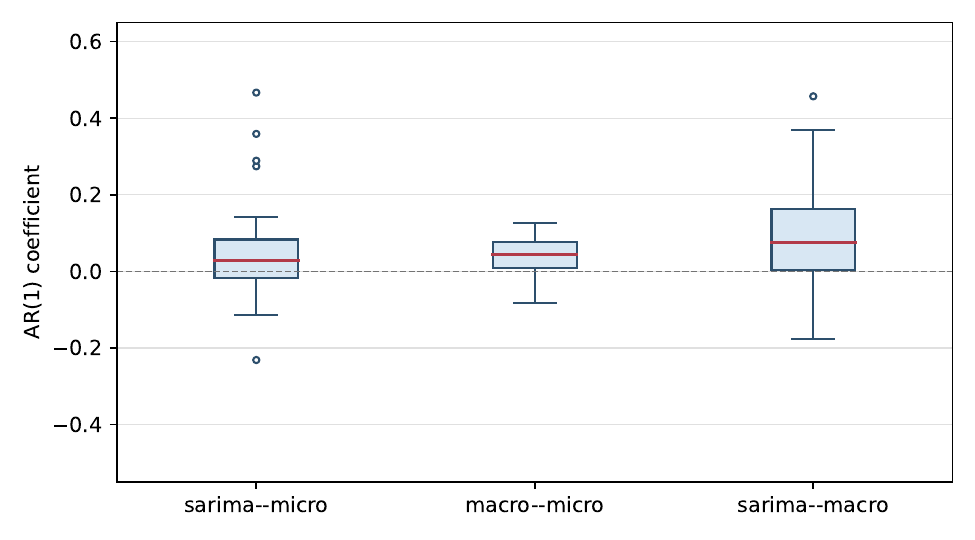}
\caption{AR(1) coefficient estimates across horizons for sarima--micro, macro--micro, and sarima--macro.}
\label{fig:ar1_coefficients_boxplot}
\end{figure}

\subsubsection{Simulation results for occasional outperformance}

Occasional outperformance is the expert-selection use emphasized in Section~\ref{sec:scantest}:
the candidate forecast should enter the pool if it improves on the benchmark
over some unknown short subperiod and horizon.  We use the interval lengths
$\{6,7,\ldots,12\}$ to target half-year to one-year episodes, where a forecast
can be useful during a localized inflation episode even if it is not uniformly
better over the full sample.  This is also the most demanding case for size,
because the scan searches over many short windows.

For each setting in Tables~\ref{tab:sim_sarima_micro}
--\ref{tab:sim_sarima_macro}, we use 1000 Monte Carlo samples and 999
bootstrap draws.  The scan intervals are $\{6,7,\ldots,12\}$ and the target
level is denoted by $\alpha_0$ to avoid conflict with the tuning parameter
$\gamma$.

\begin{table}[H]
\centering
\caption{Simulation for sarima--micro. Entries are rejection probabilities at nominal level $\alpha_0=0.05$.}
\label{tab:sim_sarima_micro}
\begin{subtable}[t]{0.47\textwidth}
\centering
\caption{Regime I}
\label{tab:sim_sarima_micro_regime_i}
\small
\setlength{\tabcolsep}{3pt}
\begin{tabular}{lcccc}
\toprule
$\gamma$ & $\rho=-.25$ & $\rho=0$ & $\rho=.25$ & $\rho=.5$ \\
\midrule
0.00 & 0.000 & 0.000 & 0.003 & 0.005 \\
0.05 & 0.000 & 0.002 & 0.019 & 0.038 \\
0.10 & 0.003 & 0.011 & 0.035 & 0.075 \\
0.15 & 0.005 & 0.015 & 0.038 & 0.089 \\
0.20 & 0.006 & 0.016 & 0.043 & 0.090 \\
0.25 & 0.005 & 0.014 & 0.044 & 0.101 \\
0.30 & 0.049 & 0.064 & 0.098 & 0.123 \\
0.35 & 0.161 & 0.159 & 0.165 & 0.185 \\
0.40 & 0.235 & 0.224 & 0.203 & 0.188 \\
0.45 & 0.213 & 0.217 & 0.180 & 0.178 \\
0.50 & 0.182 & 0.194 & 0.151 & 0.152 \\
\bottomrule
\end{tabular}
\end{subtable}
\hfill
\begin{subtable}[t]{0.47\textwidth}
\centering
\caption{Regime D}
\label{tab:sim_sarima_micro_regime_d}
\small
\setlength{\tabcolsep}{3pt}
\begin{tabular}{lcccc}
\toprule
$\gamma$ & $\rho=-.25$ & $\rho=0$ & $\rho=.25$ & $\rho=.5$ \\
\midrule
0.00 & 0.003 & 0.003 & 0.027 & 0.056 \\
0.05 & 0.004 & 0.002 & 0.026 & 0.058 \\
0.10 & 0.002 & 0.002 & 0.023 & 0.045 \\
0.15 & 0.000 & 0.000 & 0.006 & 0.011 \\
0.20 & 0.000 & 0.000 & 0.000 & 0.000 \\
0.25 & 0.000 & 0.000 & 0.000 & 0.000 \\
0.30 & 0.002 & 0.004 & 0.006 & 0.002 \\
0.35 & 0.053 & 0.043 & 0.044 & 0.038 \\
0.40 & 0.108 & 0.097 & 0.093 & 0.075 \\
0.45 & 0.185 & 0.200 & 0.175 & 0.171 \\
0.50 & 0.193 & 0.214 & 0.185 & 0.189 \\
\bottomrule
\end{tabular}
\end{subtable}
\end{table}

\begin{table}[H]
\centering
\caption{Simulation for macro--micro. Entries are rejection probabilities at nominal level $\alpha_0=0.05$.}
\label{tab:sim_macro_micro}
\begin{subtable}[t]{0.47\textwidth}
\centering
\caption{Regime I}
\label{tab:sim_macro_micro_regime_i}
\small
\setlength{\tabcolsep}{3pt}
\begin{tabular}{lcccc}
\toprule
$\gamma$ & $\rho=-.25$ & $\rho=0$ & $\rho=.25$ & $\rho=.5$ \\
\midrule
0.00 & 0.002 & 0.003 & 0.001 & 0.003 \\
0.05 & 0.006 & 0.010 & 0.010 & 0.010 \\
0.10 & 0.014 & 0.026 & 0.030 & 0.045 \\
0.15 & 0.068 & 0.075 & 0.076 & 0.089 \\
0.20 & 0.205 & 0.196 & 0.210 & 0.211 \\
0.25 & 0.319 & 0.310 & 0.353 & 0.353 \\
0.30 & 0.394 & 0.420 & 0.476 & 0.484 \\
0.35 & 0.400 & 0.409 & 0.452 & 0.447 \\
0.40 & 0.382 & 0.382 & 0.404 & 0.377 \\
0.45 & 0.319 & 0.320 & 0.320 & 0.302 \\
0.50 & 0.227 & 0.232 & 0.209 & 0.221 \\
\bottomrule
\end{tabular}
\end{subtable}
\hfill
\begin{subtable}[t]{0.47\textwidth}
\centering
\caption{Regime D}
\label{tab:sim_macro_micro_regime_d}
\small
\setlength{\tabcolsep}{3pt}
\begin{tabular}{lcccc}
\toprule
$\gamma$ & $\rho=-.25$ & $\rho=0$ & $\rho=.25$ & $\rho=.5$ \\
\midrule
0.00 & 0.003 & 0.003 & 0.000 & 0.001 \\
0.05 & 0.005 & 0.010 & 0.008 & 0.008 \\
0.10 & 0.013 & 0.020 & 0.024 & 0.026 \\
0.15 & 0.074 & 0.067 & 0.071 & 0.068 \\
0.20 & 0.199 & 0.190 & 0.194 & 0.191 \\
0.25 & 0.331 & 0.308 & 0.340 & 0.329 \\
0.30 & 0.418 & 0.422 & 0.474 & 0.493 \\
0.35 & 0.424 & 0.419 & 0.455 & 0.457 \\
0.40 & 0.399 & 0.393 & 0.416 & 0.395 \\
0.45 & 0.346 & 0.337 & 0.328 & 0.312 \\
0.50 & 0.241 & 0.237 & 0.218 & 0.224 \\
\bottomrule
\end{tabular}
\end{subtable}
\end{table}

\begin{table}[H]
\centering
\caption{Simulation for sarima--macro. Entries are rejection probabilities at nominal level $\alpha_0=0.05$.}
\label{tab:sim_sarima_macro}
\begin{subtable}[t]{0.47\textwidth}
\centering
\caption{Regime I}
\label{tab:sim_sarima_macro_regime_i}
\small
\setlength{\tabcolsep}{3pt}
\begin{tabular}{lcccc}
\toprule
$\gamma$ & $\rho=-.25$ & $\rho=0$ & $\rho=.25$ & $\rho=.5$ \\
\midrule
0.00 & 0.000 & 0.000 & 0.000 & 0.000 \\
0.05 & 0.000 & 0.001 & 0.003 & 0.005 \\
0.10 & 0.000 & 0.004 & 0.022 & 0.044 \\
0.15 & 0.004 & 0.010 & 0.035 & 0.075 \\
0.20 & 0.005 & 0.012 & 0.036 & 0.085 \\
0.25 & 0.013 & 0.027 & 0.064 & 0.121 \\
0.30 & 0.049 & 0.085 & 0.128 & 0.183 \\
0.35 & 0.227 & 0.246 & 0.279 & 0.322 \\
0.40 & 0.308 & 0.316 & 0.351 & 0.385 \\
0.45 & 0.309 & 0.293 & 0.320 & 0.322 \\
0.50 & 0.260 & 0.221 & 0.258 & 0.256 \\
\bottomrule
\end{tabular}
\end{subtable}
\hfill
\begin{subtable}[t]{0.47\textwidth}
\centering
\caption{Regime D}
\label{tab:sim_sarima_macro_regime_d}
\small
\setlength{\tabcolsep}{3pt}
\begin{tabular}{lcccc}
\toprule
$\gamma$ & $\rho=-.25$ & $\rho=0$ & $\rho=.25$ & $\rho=.5$ \\
\midrule
0.00 & 0.000 & 0.000 & 0.018 & 0.036 \\
0.05 & 0.003 & 0.000 & 0.016 & 0.039 \\
0.10 & 0.000 & 0.000 & 0.012 & 0.027 \\
0.15 & 0.000 & 0.000 & 0.003 & 0.006 \\
0.20 & 0.000 & 0.000 & 0.000 & 0.000 \\
0.25 & 0.001 & 0.005 & 0.003 & 0.002 \\
0.30 & 0.013 & 0.025 & 0.030 & 0.034 \\
0.35 & 0.086 & 0.089 & 0.104 & 0.085 \\
0.40 & 0.209 & 0.216 & 0.244 & 0.266 \\
0.45 & 0.314 & 0.323 & 0.356 & 0.385 \\
0.50 & 0.300 & 0.291 & 0.331 & 0.343 \\
\bottomrule
\end{tabular}
\end{subtable}
\end{table}

\subsubsection{Simulation results for overall outperformance}

Overall outperformance is the companion expert-selection diagnostic from
Section~\ref{sec:scantest}.  Instead of searching for a localized episode, it fixes the interval
to the full balanced sample and asks whether one forecast improves on another
over the sample as a whole at some horizon.  This removes the interval-search
noise from the occasional scan and can be more stable when predictive gains are
sustained rather than concentrated in a short spell.

For each setting in Tables~\ref{tab:sim_overall_sarima_micro}
--\ref{tab:sim_overall_sarima_macro}, we use 1000 Monte Carlo samples and 999
bootstrap draws.  The scan interval length is $\{215\}$ and the target level
is again denoted by $\alpha_0$.

\begin{table}[H]
\centering
\caption{Simulation for sarima--micro with length $\{215\}$. Entries are rejection probabilities at nominal level $\alpha_0=0.05$.}
\label{tab:sim_overall_sarima_micro}
\begin{subtable}[t]{0.47\textwidth}
\centering
\caption{Regime I}
\label{tab:sim_overall_sarima_micro_regime_i}
\small
\setlength{\tabcolsep}{3pt}
\begin{tabular}{lcccc}
\toprule
$\gamma$ & $\rho=-.25$ & $\rho=0$ & $\rho=.25$ & $\rho=.5$ \\
\midrule
0.00 & 0.008 & 0.012 & 0.026 & 0.042 \\
0.05 & 0.012 & 0.019 & 0.036 & 0.061 \\
0.10 & 0.018 & 0.025 & 0.047 & 0.075 \\
0.15 & 0.025 & 0.038 & 0.054 & 0.094 \\
0.20 & 0.035 & 0.046 & 0.063 & 0.107 \\
0.25 & 0.046 & 0.052 & 0.075 & 0.115 \\
0.30 & 0.054 & 0.055 & 0.071 & 0.105 \\
0.35 & 0.057 & 0.051 & 0.077 & 0.087 \\
0.40 & 0.056 & 0.049 & 0.062 & 0.073 \\
0.45 & 0.055 & 0.049 & 0.061 & 0.063 \\
0.50 & 0.055 & 0.051 & 0.057 & 0.068 \\
\bottomrule
\end{tabular}
\end{subtable}
\hfill
\begin{subtable}[t]{0.47\textwidth}
\centering
\caption{Regime D}
\label{tab:sim_overall_sarima_micro_regime_d}
\small
\setlength{\tabcolsep}{3pt}
\begin{tabular}{lcccc}
\toprule
$\gamma$ & $\rho=-.25$ & $\rho=0$ & $\rho=.25$ & $\rho=.5$ \\
\midrule
0.00 & 0.007 & 0.006 & 0.008 & 0.010 \\
0.05 & 0.011 & 0.010 & 0.011 & 0.019 \\
0.10 & 0.017 & 0.016 & 0.017 & 0.025 \\
0.15 & 0.022 & 0.023 & 0.031 & 0.035 \\
0.20 & 0.037 & 0.031 & 0.037 & 0.047 \\
0.25 & 0.043 & 0.039 & 0.047 & 0.061 \\
0.30 & 0.056 & 0.040 & 0.054 & 0.060 \\
0.35 & 0.058 & 0.047 & 0.057 & 0.058 \\
0.40 & 0.059 & 0.051 & 0.057 & 0.060 \\
0.45 & 0.058 & 0.052 & 0.054 & 0.059 \\
0.50 & 0.056 & 0.053 & 0.055 & 0.059 \\
\bottomrule
\end{tabular}
\end{subtable}
\end{table}

\begin{table}[H]
\centering
\caption{Simulation for macro--micro with length $\{215\}$. Entries are rejection probabilities at nominal level $\alpha_0=0.05$.}
\label{tab:sim_overall_macro_micro}
\begin{subtable}[t]{0.47\textwidth}
\centering
\caption{Regime I}
\label{tab:sim_overall_macro_micro_regime_i}
\small
\setlength{\tabcolsep}{3pt}
\begin{tabular}{lcccc}
\toprule
$\gamma$ & $\rho=-.25$ & $\rho=0$ & $\rho=.25$ & $\rho=.5$ \\
\midrule
0.00 & 0.041 & 0.052 & 0.050 & 0.054 \\
0.05 & 0.040 & 0.057 & 0.055 & 0.059 \\
0.10 & 0.043 & 0.059 & 0.054 & 0.059 \\
0.15 & 0.043 & 0.063 & 0.057 & 0.061 \\
0.20 & 0.048 & 0.070 & 0.059 & 0.068 \\
0.25 & 0.049 & 0.071 & 0.063 & 0.075 \\
0.30 & 0.051 & 0.070 & 0.064 & 0.080 \\
0.35 & 0.050 & 0.074 & 0.066 & 0.079 \\
0.40 & 0.048 & 0.074 & 0.066 & 0.073 \\
0.45 & 0.045 & 0.072 & 0.065 & 0.071 \\
0.50 & 0.044 & 0.072 & 0.064 & 0.072 \\
\bottomrule
\end{tabular}
\end{subtable}
\hfill
\begin{subtable}[t]{0.47\textwidth}
\centering
\caption{Regime D}
\label{tab:sim_overall_macro_micro_regime_d}
\small
\setlength{\tabcolsep}{3pt}
\begin{tabular}{lcccc}
\toprule
$\gamma$ & $\rho=-.25$ & $\rho=0$ & $\rho=.25$ & $\rho=.5$ \\
\midrule
0.00 & 0.039 & 0.052 & 0.046 & 0.053 \\
0.05 & 0.042 & 0.056 & 0.049 & 0.055 \\
0.10 & 0.043 & 0.058 & 0.051 & 0.057 \\
0.15 & 0.048 & 0.063 & 0.054 & 0.060 \\
0.20 & 0.050 & 0.066 & 0.057 & 0.063 \\
0.25 & 0.049 & 0.067 & 0.059 & 0.071 \\
0.30 & 0.052 & 0.067 & 0.062 & 0.074 \\
0.35 & 0.051 & 0.069 & 0.061 & 0.076 \\
0.40 & 0.049 & 0.070 & 0.061 & 0.075 \\
0.45 & 0.047 & 0.070 & 0.060 & 0.076 \\
0.50 & 0.044 & 0.069 & 0.058 & 0.076 \\
\bottomrule
\end{tabular}
\end{subtable}
\end{table}

\begin{table}[H]
\centering
\caption{Simulation for sarima--macro with length $\{215\}$. Entries are rejection probabilities at nominal level $\alpha_0=0.05$.}
\label{tab:sim_overall_sarima_macro}
\begin{subtable}[t]{0.47\textwidth}
\centering
\caption{Regime I}
\label{tab:sim_overall_sarima_macro_regime_i}
\small
\setlength{\tabcolsep}{3pt}
\begin{tabular}{lcccc}
\toprule
$\gamma$ & $\rho=-.25$ & $\rho=0$ & $\rho=.25$ & $\rho=.5$ \\
\midrule
0.00 & 0.003 & 0.012 & 0.016 & 0.046 \\
0.05 & 0.005 & 0.018 & 0.023 & 0.056 \\
0.10 & 0.012 & 0.033 & 0.032 & 0.072 \\
0.15 & 0.019 & 0.044 & 0.042 & 0.087 \\
0.20 & 0.023 & 0.054 & 0.050 & 0.094 \\
0.25 & 0.029 & 0.062 & 0.053 & 0.109 \\
0.30 & 0.038 & 0.057 & 0.067 & 0.109 \\
0.35 & 0.037 & 0.061 & 0.065 & 0.093 \\
0.40 & 0.041 & 0.061 & 0.056 & 0.085 \\
0.45 & 0.038 & 0.062 & 0.056 & 0.080 \\
0.50 & 0.033 & 0.063 & 0.057 & 0.074 \\
\bottomrule
\end{tabular}
\end{subtable}
\hfill
\begin{subtable}[t]{0.47\textwidth}
\centering
\caption{Regime D}
\label{tab:sim_overall_sarima_macro_regime_d}
\small
\setlength{\tabcolsep}{3pt}
\begin{tabular}{lcccc}
\toprule
$\gamma$ & $\rho=-.25$ & $\rho=0$ & $\rho=.25$ & $\rho=.5$ \\
\midrule
0.00 & 0.005 & 0.004 & 0.001 & 0.012 \\
0.05 & 0.006 & 0.007 & 0.004 & 0.017 \\
0.10 & 0.010 & 0.012 & 0.012 & 0.024 \\
0.15 & 0.016 & 0.030 & 0.016 & 0.031 \\
0.20 & 0.027 & 0.037 & 0.025 & 0.045 \\
0.25 & 0.034 & 0.041 & 0.031 & 0.054 \\
0.30 & 0.042 & 0.051 & 0.039 & 0.060 \\
0.35 & 0.041 & 0.056 & 0.046 & 0.064 \\
0.40 & 0.041 & 0.059 & 0.050 & 0.066 \\
0.45 & 0.040 & 0.063 & 0.052 & 0.058 \\
0.50 & 0.041 & 0.063 & 0.052 & 0.058 \\
\bottomrule
\end{tabular}
\end{subtable}
\end{table}

\subsubsection{Choice of $\gamma$}

This subsection converts the two calibration exercises into the tuning rule
used in the empirical scan checks.  Because Section~\ref{sec:scantest} allows either Regime I or
Regime D depending on serial dependence, the decision must control size not only
under each fixed multiplier regime but also after the Ljung--Box pre-test chooses
between them.

The simulation results suggest $\gamma = 0.1$ avoids over-rejections across the
board.  So we choose $\gamma = 0.1$ for the scan test.  A few rejection
probabilities are slightly above the nominal level when $\rho=0.5$, but the
inflation is minor and occurs under the strongest positive serial dependence
considered in the simulation grid.  We will check robustness by also reporting the
empirical results for the adjacent choices $\gamma=0.05$ and $\gamma=0.15$.

To confirm that the pre-test does not inflate size, we run the same simulation
on the scan test with the Ljung--Box pre-test and the above chosen $\gamma$ for
each comparison.  We report the results for both the occasional- and
overall-outperformance interval sets.

\begin{table}[H]
\centering
\caption{Simulation for the Ljung--Box pre-test scan test with $\delta=0.05$ and $\gamma=0.1$ under both multiplier regimes, using occasional-outperformance lengths $\{6,7,\ldots,12\}$. Entries are rejection probabilities at nominal level $\alpha_0=0.05$.}
\label{tab:sim_ljung_box_pretest}
\small
\setlength{\tabcolsep}{6pt}
\begin{tabular}{lcccc}
\toprule
Comparison & $\rho=-.25$ & $\rho=0$ & $\rho=.25$ & $\rho=.5$ \\
\midrule
sarima--micro & 0.003 & 0.012 & 0.022 & 0.063 \\
macro--micro & 0.023 & 0.020 & 0.032 & 0.037 \\
sarima--macro & 0.005 & 0.001 & 0.012 & 0.035 \\
\bottomrule
\end{tabular}
\end{table}

\begin{table}[H]
\centering
\caption{Simulation for the Ljung--Box pre-test scan test with $\delta=0.05$ and $\gamma=0.1$ under both multiplier regimes, using overall-outperformance length $\{215\}$. Entries are rejection probabilities at nominal level $\alpha_0=0.05$.}
\label{tab:sim_ljung_box_pretest_overall}
\small
\setlength{\tabcolsep}{6pt}
\begin{tabular}{lcccc}
\toprule
Comparison & $\rho=-.25$ & $\rho=0$ & $\rho=.25$ & $\rho=.5$ \\
\midrule
sarima--micro & 0.012 & 0.032 & 0.029 & 0.055 \\
macro--micro & 0.045 & 0.051 & 0.056 & 0.061 \\
sarima--macro & 0.016 & 0.027 & 0.035 & 0.066 \\
\bottomrule
\end{tabular}
\end{table}

\subsubsection{Robustness of occasional and overall performance to the choice of $\gamma$}

The simulations above motivate the baseline choice $\gamma=0.1$ by focusing on
size control under null designs calibrated to the observed loss-differential
panels.  The empirical conclusion should not, however, hinge on this single
tuning value.  We therefore report the actual scan-test p-values for the two
expert-selection metrics from Section~\ref{sec:scantest} at $\gamma=0.1$ and at the adjacent
values $\gamma=0.05$ and $\gamma=0.15$.  The occasional-outperformance metric
uses the short interval collection $\{6,7,\ldots,12\}$ and asks whether a
candidate forecast improves on the benchmark during some localized episode and
horizon.  The overall-outperformance metric uses the full balanced sample
length $\{215\}$ and asks whether the improvement is sustained over the sample
as a whole at some horizon.  Reporting both metrics across nearby values of
$\gamma$ checks that the substantive expert-selection evidence is robust to the
degree of partial studentization.

\begin{table}[H]
\centering
\caption{Empirical scan-test p-values with $\gamma=0.05$. The Ljung--Box pre-test uses level $\delta=0.05$. Occasional outperformance uses lengths $\{6,7,\ldots,12\}$; overall outperformance uses length $\{215\}$.}
\label{tab:empirical_outperformance_gamma_05}
\small
\setlength{\tabcolsep}{3pt}
\begin{tabular}{lcccc}
\toprule
Comparison & Regime & LB $p$ & Occ. $p$ & Overall $p$ \\
\midrule
sarima--micro & Regime D & 0.0338 & 0.0352 & 0.1494 \\
macro--micro & Regime I & 0.2260 & 0.2861 & 0.0306 \\
sarima--macro & Regime I & 0.2234 & 0.0572 & 0.0406 \\
\bottomrule
\end{tabular}
\end{table}

\begin{table}[H]
\centering
\caption{Empirical scan-test p-values with $\gamma=0.10$. The Ljung--Box pre-test uses level $\delta=0.05$. Occasional outperformance uses lengths $\{6,7,\ldots,12\}$; overall outperformance uses length $\{215\}$.}
\label{tab:empirical_outperformance_gamma_10}
\small
\setlength{\tabcolsep}{3pt}
\begin{tabular}{lcccc}
\toprule
Comparison & Regime & LB $p$ & Occ. $p$ & Overall $p$ \\
\midrule
sarima--micro & Regime D & 0.0338 & 0.0406 & 0.1524 \\
macro--micro & Regime I & 0.2260 & 0.2575 & 0.0256 \\
sarima--macro & Regime I & 0.2234 & 0.0406 & 0.0324 \\
\bottomrule
\end{tabular}
\end{table}

\begin{table}[H]
\centering
\caption{Empirical scan-test p-values with $\gamma=0.15$. The Ljung--Box pre-test uses level $\delta=0.05$. Occasional outperformance uses lengths $\{6,7,\ldots,12\}$; overall outperformance uses length $\{215\}$.}
\label{tab:empirical_outperformance_gamma_15}
\small
\setlength{\tabcolsep}{3pt}
\begin{tabular}{lcccc}
\toprule
Comparison & Regime & LB $p$ & Occ. $p$ & Overall $p$ \\
\midrule
sarima--micro & Regime D & 0.0338 & 0.0522 & 0.1544 \\
macro--micro & Regime I & 0.2260 & 0.1870 & 0.0210 \\
sarima--macro & Regime I & 0.2234 & 0.0286 & 0.0224 \\
\bottomrule
\end{tabular}
\end{table}

\subsection{Simulations for the horizon-pooled scan test}\label{app:sim_pooled}

Section~\ref{sec:scantest} also describes a complementary use of the scan test for a
pre-specified calendar window, while pooling across horizons.  This question is
different from expert selection: the window, such as the pre-COVID or
post-COVID period, is fixed in advance, and the test asks whether the average
loss differential across horizons is positive within that window.  Because the
statistic is now built from the horizon-pooled series $\bar D_t$ rather than
from the full interval-horizon panel, we run a separate null calibration for
this version of the procedure.

For each empirical panel, we
fit $\bar D_t=c+\rho\bar D_{t-1}+u_t$ to the horizon-pooled series.  The
estimated AR(1) coefficients are reported in Table~\ref{tab:horizon_pooled_ar1}.

\begin{table}[H]
\centering
\caption{AR(1) estimates for the horizon-pooled loss-differential series.}
\label{tab:horizon_pooled_ar1}
\small
\setlength{\tabcolsep}{6pt}
\begin{tabular}{lcc}
\toprule
Comparison & $\hat\rho$ & 95\% confidence interval \\
\midrule
$(\text{sarima}+\text{macro})-\text{three experts}$ & 0.050 & $[-0.086,0.187]$ \\
sarima--micro & 0.185 & $[0.052,0.317]$ \\
sarima--macro & 0.160 & $[0.027,0.293]$ \\
\bottomrule
\end{tabular}
\end{table}

The null DGP is
the same sign-AR(1) construction described above, applied to the horizon-pooled
series.  We run $\gamma\in\{0,0.05,0.1,\ldots,0.5\}$, 1000 Monte Carlo samples,
and 999 bootstrap draws.  Panels (a) and (b) fix Regime I and Regime D,
respectively; panel (c) applies the Ljung--Box pre-test with $\delta=0.05$.
For the horizon-pooled calibration tables below, we simulate only
$\rho\in\{0,0.15,0.3\}$.

\subsubsection{Simulation results for $(\text{sarima}+\text{macro})$ vs three experts}

We show the detailed simulation results for the most important comparison,
$(\text{sarima}+\text{macro})$ vs three experts, in this subsection.
This comparison asks whether adding the micro forecast to the two-expert
combination improves the horizon-pooled pre/post-COVID performance.  We report
the fixed Regime I, fixed Regime D, and pre-test versions here so that the
effect of the regime-selection step is visible in the main horizon-pooled
calibration.  To save space, the next subsection documents the simulation results for
$\gamma\in\{0.05,0.1,0.15\}$ with the Ljung--Box pre-test for the other
comparisons.

\begin{table}[H]
\centering
\caption{Pooled-window simulation for $(\text{sarima}+\text{macro})-\text{three experts}$, pre-COVID window. Entries are rejection probabilities at nominal level $\alpha_0=0.05$.}
\label{tab:window_sim_pre_three_expert}
\begin{subtable}[t]{0.31\textwidth}
\centering
\caption{Regime I}
\label{tab:window_sim_pre_three_expert_regime_i}
\small
\setlength{\tabcolsep}{2pt}
\begin{tabular}{lccc}
\toprule
$\gamma$ & $\rho=0$ & $\rho=.15$ & $\rho=.3$ \\
\midrule
0.00 & 0.053 & 0.060 & 0.046 \\
0.05 & 0.053 & 0.062 & 0.046 \\
0.10 & 0.052 & 0.062 & 0.046 \\
0.15 & 0.051 & 0.060 & 0.045 \\
0.20 & 0.051 & 0.058 & 0.044 \\
0.25 & 0.051 & 0.059 & 0.045 \\
0.30 & 0.050 & 0.056 & 0.044 \\
0.35 & 0.051 & 0.057 & 0.043 \\
0.40 & 0.050 & 0.055 & 0.043 \\
0.45 & 0.050 & 0.055 & 0.042 \\
0.50 & 0.048 & 0.053 & 0.041 \\
\bottomrule
\end{tabular}
\end{subtable}
\hfill
\begin{subtable}[t]{0.31\textwidth}
\centering
\caption{Regime D}
\label{tab:window_sim_pre_three_expert_regime_d}
\small
\setlength{\tabcolsep}{2pt}
\begin{tabular}{lccc}
\toprule
$\gamma$ & $\rho=0$ & $\rho=.15$ & $\rho=.3$ \\
\midrule
0.00 & 0.056 & 0.062 & 0.047 \\
0.05 & 0.057 & 0.061 & 0.046 \\
0.10 & 0.056 & 0.062 & 0.047 \\
0.15 & 0.055 & 0.062 & 0.046 \\
0.20 & 0.055 & 0.061 & 0.048 \\
0.25 & 0.055 & 0.063 & 0.046 \\
0.30 & 0.054 & 0.061 & 0.046 \\
0.35 & 0.054 & 0.061 & 0.047 \\
0.40 & 0.053 & 0.060 & 0.045 \\
0.45 & 0.054 & 0.059 & 0.043 \\
0.50 & 0.052 & 0.058 & 0.043 \\
\bottomrule
\end{tabular}
\end{subtable}
\hfill
\begin{subtable}[t]{0.31\textwidth}
\centering
\caption{Pre-test}
\label{tab:window_sim_pre_three_expert_pretest}
\footnotesize
\setlength{\tabcolsep}{2pt}
\begin{tabular}{lccc}
\toprule
$\gamma$ & $\rho=0$ & $\rho=.15$ & $\rho=.3$ \\
\midrule
0.00 & 0.050 & 0.038 & 0.063 \\
0.05 & 0.051 & 0.037 & 0.060 \\
0.10 & 0.048 & 0.035 & 0.063 \\
0.15 & 0.048 & 0.035 & 0.060 \\
0.20 & 0.048 & 0.035 & 0.060 \\
0.25 & 0.048 & 0.035 & 0.060 \\
0.30 & 0.049 & 0.033 & 0.060 \\
0.35 & 0.048 & 0.032 & 0.058 \\
0.40 & 0.048 & 0.032 & 0.056 \\
0.45 & 0.048 & 0.031 & 0.056 \\
0.50 & 0.048 & 0.031 & 0.056 \\
\bottomrule
\end{tabular}
\end{subtable}
\end{table}

\begin{table}[H]
\centering
\caption{Pooled-window simulation for $(\text{sarima}+\text{macro})-\text{three experts}$, post-COVID window. Entries are rejection probabilities at nominal level $\alpha_0=0.05$.}
\label{tab:window_sim_post_three_expert}
\begin{subtable}[t]{0.31\textwidth}
\centering
\caption{Regime I}
\label{tab:window_sim_post_three_expert_regime_i}
\footnotesize
\setlength{\tabcolsep}{2pt}
\begin{tabular}{lccc}
\toprule
$\gamma$ & $\rho=0$ & $\rho=.15$ & $\rho=.3$ \\
\midrule
0.00 & 0.054 & 0.057 & 0.065 \\
0.05 & 0.054 & 0.058 & 0.065 \\
0.10 & 0.053 & 0.058 & 0.063 \\
0.15 & 0.053 & 0.058 & 0.063 \\
0.20 & 0.053 & 0.059 & 0.065 \\
0.25 & 0.052 & 0.059 & 0.064 \\
0.30 & 0.051 & 0.058 & 0.064 \\
0.35 & 0.051 & 0.058 & 0.064 \\
0.40 & 0.049 & 0.057 & 0.064 \\
0.45 & 0.048 & 0.056 & 0.065 \\
0.50 & 0.049 & 0.056 & 0.065 \\
\bottomrule
\end{tabular}
\end{subtable}
\hfill
\begin{subtable}[t]{0.31\textwidth}
\centering
\caption{Regime D}
\label{tab:window_sim_post_three_expert_regime_d}
\footnotesize
\setlength{\tabcolsep}{2pt}
\begin{tabular}{lccc}
\toprule
$\gamma$ & $\rho=0$ & $\rho=.15$ & $\rho=.3$ \\
\midrule
0.00 & 0.054 & 0.055 & 0.065 \\
0.05 & 0.054 & 0.054 & 0.065 \\
0.10 & 0.054 & 0.057 & 0.064 \\
0.15 & 0.054 & 0.058 & 0.065 \\
0.20 & 0.054 & 0.057 & 0.065 \\
0.25 & 0.053 & 0.055 & 0.065 \\
0.30 & 0.051 & 0.056 & 0.061 \\
0.35 & 0.051 & 0.056 & 0.062 \\
0.40 & 0.052 & 0.054 & 0.062 \\
0.45 & 0.049 & 0.053 & 0.058 \\
0.50 & 0.049 & 0.053 & 0.057 \\
\bottomrule
\end{tabular}
\end{subtable}
\hfill
\begin{subtable}[t]{0.31\textwidth}
\centering
\caption{Pre-test}
\label{tab:window_sim_post_three_expert_pretest}
\footnotesize
\setlength{\tabcolsep}{2pt}
\begin{tabular}{lccc}
\toprule
$\gamma$ & $\rho=0$ & $\rho=.15$ & $\rho=.3$ \\
\midrule
0.00 & 0.068 & 0.055 & 0.070 \\
0.05 & 0.069 & 0.056 & 0.070 \\
0.10 & 0.069 & 0.056 & 0.069 \\
0.15 & 0.070 & 0.054 & 0.066 \\
0.20 & 0.071 & 0.054 & 0.067 \\
0.25 & 0.070 & 0.054 & 0.069 \\
0.30 & 0.068 & 0.051 & 0.067 \\
0.35 & 0.068 & 0.051 & 0.066 \\
0.40 & 0.067 & 0.050 & 0.064 \\
0.45 & 0.066 & 0.050 & 0.062 \\
0.50 & 0.064 & 0.050 & 0.061 \\
\bottomrule
\end{tabular}
\end{subtable}
\end{table}

\subsubsection{Other comparisons}

For the remaining horizon-pooled comparisons in Table~\ref{tab:horizon_pooled_ar1}, we focus on
the pre-test version with $\gamma\in\{0.05,0.1,0.15\}$.  These are the
specifications most relevant for the empirical tables, because the applied
procedure first diagnoses serial dependence and then chooses the corresponding
bootstrap regime.  The neighboring values of $\gamma$ show how sensitive the
size calibration is around the preferred value $\gamma=0.1$.  Each table fixes one
value of $\gamma$ and reports the pre-COVID rows first, followed by the
post-COVID rows, with comparisons ordered as in Table~\ref{tab:horizon_pooled_ar1}.

\begin{table}[H]
\centering
\caption{Horizon-pooled pre-test simulation for the other comparisons with $\gamma=0.05$ and $\delta=0.05$. Entries are rejection probabilities at nominal level $\alpha_0=0.05$.}
\label{tab:window_other_pretest_gamma_05}
\normalsize
\setlength{\tabcolsep}{4pt}
\begin{tabular}{llccc}
\toprule
Window & Comparison & $\rho=0$ & $\rho=.15$ & $\rho=.3$ \\
\midrule
Pre-COVID & sarima--micro & 0.052 & 0.046 & 0.054 \\
Pre-COVID & sarima--macro & 0.055 & 0.059 & 0.047 \\
Post-COVID & sarima--micro & 0.061 & 0.061 & 0.069 \\
Post-COVID & sarima--macro & 0.054 & 0.054 & 0.072 \\
\bottomrule
\end{tabular}
\end{table}

\begin{table}[H]
\centering
\caption{Horizon-pooled pre-test simulation for the other comparisons with $\gamma=0.10$ and $\delta=0.05$. Entries are rejection probabilities at nominal level $\alpha_0=0.05$.}
\label{tab:window_other_pretest_gamma_10}
\normalsize
\setlength{\tabcolsep}{4pt}
\begin{tabular}{llccc}
\toprule
Window & Comparison & $\rho=0$ & $\rho=.15$ & $\rho=.3$ \\
\midrule
Pre-COVID & sarima--micro & 0.053 & 0.047 & 0.052 \\
Pre-COVID & sarima--macro & 0.055 & 0.059 & 0.047 \\
Post-COVID & sarima--micro & 0.061 & 0.061 & 0.070 \\
Post-COVID & sarima--macro & 0.053 & 0.054 & 0.073 \\
\bottomrule
\end{tabular}
\end{table}

\begin{table}[H]
\centering
\caption{Horizon-pooled pre-test simulation for the other comparisons with $\gamma=0.15$ and $\delta=0.05$. Entries are rejection probabilities at nominal level $\alpha_0=0.05$.}
\label{tab:window_other_pretest_gamma_15}
\normalsize
\setlength{\tabcolsep}{4pt}
\begin{tabular}{llccc}
\toprule
Window & Comparison & $\rho=0$ & $\rho=.15$ & $\rho=.3$ \\
\midrule
Pre-COVID & sarima--micro & 0.053 & 0.047 & 0.053 \\
Pre-COVID & sarima--macro & 0.054 & 0.059 & 0.047 \\
Post-COVID & sarima--micro & 0.060 & 0.061 & 0.069 \\
Post-COVID & sarima--macro & 0.055 & 0.054 & 0.072 \\
\bottomrule
\end{tabular}
\end{table}

\subsection{Giacomini--Rossi fluctuation test}\label{app:sim_gr}

Section~\ref{sec:scantest} discusses the \citet{GiacominiRossi2010} fluctuation test as the
closest classical comparison: it also looks for forecast-instability episodes,
but it uses fixed window lengths and fixed horizons rather than a joint scan
over unknown intervals and horizons.  We include it here to make clear what
changes when the localized-outperformance problem is treated with the classical
Brownian-limit calibration instead of the one-sided multiplier-bootstrap scan
used in the main procedure.

For comparison, we implement a one-sided version of this fluctuation test.  For
a fixed horizon $h$ and a fixed window length
$m$, define
\[
G_m^{(h)}=
\max_{1\le s\le n-m+1}
\frac{\sum_{t=s}^{s+m-1}D_t^{(h)}}
{\sqrt{m\widehat\sigma_h^2}},
\]
where $\widehat\sigma_h^2$ is a full-sample Newey--West long-run variance
estimate for the loss differential at horizon $h$.  We use lag $h-1$ in this
estimate.  The alternative is one-sided: large positive values indicate that
the first forecast has larger loss than the second forecast over some
fixed-length window.

Under the local equal-predictive-ability null and a functional central limit
theorem,
\[
\max_s
\frac{\sum_{t=s}^{s+m-1}D_t^{(h)}}{\sqrt{m\widehat\sigma_h^2}}
\Rightarrow
\sup_s \frac{W(s+m/n)-W(s)}{\sqrt{m/n}},
\]
where $W$ is standard Brownian motion.  In the simulations below we approximate
this null distribution on the observed grid by simulating iid standard-normal
increments and taking the maximum rolling sum divided by $\sqrt m$.

We run the same sign-AR(1) null simulation described above with
$m\in\{6,7,\ldots,12\}$, 1000 Monte Carlo samples, and 200000 Brownian-limit
draws for each value of $m$.  The test is horizon-by-horizon, so each table
reports the mean rejection probability across the 24 horizons.  This mirrors
the occasional-outperformance calibration while respecting the fixed-window,
fixed-horizon structure of the Giacomini--Rossi statistic.

\begin{table}[H]
\centering
\caption{Giacomini--Rossi simulation for sarima--micro. Entries are mean rejection probabilities across horizons $h=1,\ldots,24$ at nominal level $\alpha_0=0.05$; brackets give approximate 95\% Monte Carlo intervals.}
\label{tab:gr_section2_sarima_micro}
\footnotesize
\setlength{\tabcolsep}{6pt}
\renewcommand{\arraystretch}{1.24}
\begin{tabular}{lcccc}
\toprule
$m$ & $\rho=-.25$ & $\rho=0$ & $\rho=.25$ & $\rho=.5$ \\
\midrule
6 & 0.41 [0.38, 0.45] & 0.41 [0.38, 0.45] & 0.43 [0.39, 0.46] & 0.42 [0.39, 0.45] \\
7 & 0.39 [0.36, 0.42] & 0.40 [0.36, 0.43] & 0.40 [0.37, 0.43] & 0.40 [0.37, 0.43] \\
8 & 0.37 [0.34, 0.40] & 0.37 [0.34, 0.40] & 0.37 [0.34, 0.40] & 0.37 [0.34, 0.40] \\
9 & 0.34 [0.31, 0.37] & 0.33 [0.30, 0.36] & 0.34 [0.31, 0.36] & 0.33 [0.30, 0.36] \\
10 & 0.31 [0.28, 0.34] & 0.31 [0.28, 0.34] & 0.31 [0.28, 0.34] & 0.31 [0.28, 0.34] \\
11 & 0.29 [0.26, 0.32] & 0.28 [0.25, 0.31] & 0.28 [0.25, 0.31] & 0.28 [0.25, 0.31] \\
12 & 0.27 [0.24, 0.30] & 0.26 [0.23, 0.29] & 0.27 [0.24, 0.29] & 0.26 [0.23, 0.29] \\
\bottomrule
\end{tabular}
\renewcommand{\arraystretch}{1}
\end{table}

\begin{table}[H]
\centering
\caption{Giacomini--Rossi simulation for macro--micro. Entries are mean rejection probabilities across horizons $h=1,\ldots,24$ at nominal level $\alpha_0=0.05$; brackets give approximate 95\% Monte Carlo intervals.}
\label{tab:gr_section2_macro_micro}
\footnotesize
\setlength{\tabcolsep}{6pt}
\renewcommand{\arraystretch}{1.24}
\begin{tabular}{lcccc}
\toprule
$m$ & $\rho=-.25$ & $\rho=0$ & $\rho=.25$ & $\rho=.5$ \\
\midrule
6 & 0.27 [0.24, 0.30] & 0.27 [0.25, 0.30] & 0.27 [0.25, 0.30] & 0.27 [0.25, 0.30] \\
7 & 0.24 [0.21, 0.27] & 0.25 [0.22, 0.27] & 0.25 [0.22, 0.27] & 0.25 [0.22, 0.28] \\
8 & 0.22 [0.20, 0.25] & 0.23 [0.20, 0.25] & 0.23 [0.20, 0.25] & 0.23 [0.20, 0.26] \\
9 & 0.20 [0.18, 0.23] & 0.20 [0.18, 0.23] & 0.21 [0.18, 0.23] & 0.22 [0.19, 0.24] \\
10 & 0.18 [0.16, 0.21] & 0.19 [0.16, 0.21] & 0.19 [0.17, 0.21] & 0.20 [0.17, 0.22] \\
11 & 0.16 [0.14, 0.19] & 0.17 [0.15, 0.19] & 0.18 [0.15, 0.20] & 0.18 [0.16, 0.20] \\
12 & 0.16 [0.14, 0.18] & 0.17 [0.14, 0.19] & 0.17 [0.15, 0.19] & 0.17 [0.15, 0.20] \\
\bottomrule
\end{tabular}
\renewcommand{\arraystretch}{1}
\end{table}

\begin{table}[H]
\centering
\caption{Giacomini--Rossi simulation for sarima--macro. Entries are mean rejection probabilities across horizons $h=1,\ldots,24$ at nominal level $\alpha_0=0.05$; brackets give approximate 95\% Monte Carlo intervals.}
\label{tab:gr_section2_sarima_macro}
\small
\setlength{\tabcolsep}{6pt}
\renewcommand{\arraystretch}{1.24}
\begin{tabular}{lcccc}
\toprule
$m$ & $\rho=-.25$ & $\rho=0$ & $\rho=.25$ & $\rho=.5$ \\
\midrule
6 & 0.51 [0.48, 0.54] & 0.51 [0.48, 0.54] & 0.50 [0.47, 0.53] & 0.50 [0.47, 0.53] \\
7 & 0.48 [0.45, 0.51] & 0.48 [0.45, 0.51] & 0.47 [0.43, 0.50] & 0.47 [0.44, 0.50] \\
8 & 0.44 [0.41, 0.47] & 0.44 [0.41, 0.47] & 0.43 [0.40, 0.47] & 0.45 [0.41, 0.48] \\
9 & 0.41 [0.38, 0.44] & 0.41 [0.38, 0.44] & 0.40 [0.37, 0.43] & 0.41 [0.38, 0.45] \\
10 & 0.38 [0.35, 0.41] & 0.38 [0.35, 0.41] & 0.37 [0.34, 0.40] & 0.38 [0.35, 0.41] \\
11 & 0.35 [0.32, 0.38] & 0.36 [0.33, 0.39] & 0.35 [0.32, 0.38] & 0.35 [0.32, 0.38] \\
12 & 0.33 [0.30, 0.36] & 0.33 [0.30, 0.36] & 0.32 [0.30, 0.35] & 0.33 [0.30, 0.36] \\
\bottomrule
\end{tabular}
\renewcommand{\arraystretch}{1}
\end{table}

We also apply the Giacomini--Rossi statistic to the horizon-pooled sign-AR(1)
simulations used for the pre-COVID and post-COVID horizon-pooled analyses above.  The input series is
$\bar D_t=H^{-1}\sum_{h=1}^{24}D_t^{(h)}$ for
$(\text{sarima}+\text{macro})-\text{three experts}$.  The sign-AR(1)
multiplier is generated on the full balanced sample and then restricted to the
pre-specified calendar window.  Because this series pools horizons up to
$H=24$, the Newey--West long-run variance uses lag 23.

\begin{table}[H]
\centering
\caption{Giacomini--Rossi horizon-pooled simulation for $(\text{sarima}+\text{macro})-\text{three experts}$, pre-COVID window. Entries are rejection probabilities at nominal level $\alpha_0=0.05$; brackets give approximate 95\% Monte Carlo intervals.}
\label{tab:gr_window_pre_covid}
\small
\setlength{\tabcolsep}{6pt}
\renewcommand{\arraystretch}{1.24}
\begin{tabular}{lccc}
\toprule
$m$ & $\rho=-.25$ & $\rho=0$ & $\rho=.25$ \\
\midrule
6 & 0.48 [0.45, 0.51] & 0.51 [0.47, 0.54] & 0.52 [0.49, 0.55] \\
7 & 0.47 [0.44, 0.50] & 0.48 [0.45, 0.51] & 0.50 [0.47, 0.53] \\
8 & 0.46 [0.43, 0.49] & 0.46 [0.43, 0.49] & 0.46 [0.43, 0.49] \\
9 & 0.43 [0.40, 0.46] & 0.45 [0.42, 0.48] & 0.45 [0.42, 0.48] \\
10 & 0.44 [0.41, 0.47] & 0.44 [0.41, 0.47] & 0.44 [0.41, 0.47] \\
11 & 0.39 [0.36, 0.42] & 0.40 [0.37, 0.43] & 0.40 [0.36, 0.43] \\
12 & 0.39 [0.36, 0.42] & 0.42 [0.39, 0.45] & 0.40 [0.36, 0.43] \\
\bottomrule
\end{tabular}
\renewcommand{\arraystretch}{1}
\end{table}

\begin{table}[H]
\centering
\caption{Giacomini--Rossi horizon-pooled simulation for $(\text{sarima}+\text{macro})-\text{three experts}$, post-COVID window. Entries are rejection probabilities at nominal level $\alpha_0=0.05$; brackets give approximate 95\% Monte Carlo intervals.}
\label{tab:gr_window_post_covid}
\small
\setlength{\tabcolsep}{6pt}
\renewcommand{\arraystretch}{1.24}
\begin{tabular}{lccc}
\toprule
$m$ & $\rho=-.25$ & $\rho=0$ & $\rho=.25$ \\
\midrule
6 & 0.42 [0.39, 0.45] & 0.43 [0.40, 0.46] & 0.45 [0.41, 0.48] \\
7 & 0.41 [0.38, 0.44] & 0.42 [0.39, 0.45] & 0.43 [0.40, 0.46] \\
8 & 0.41 [0.38, 0.44] & 0.40 [0.37, 0.43] & 0.41 [0.38, 0.44] \\
9 & 0.36 [0.33, 0.39] & 0.35 [0.32, 0.38] & 0.37 [0.34, 0.40] \\
10 & 0.34 [0.31, 0.37] & 0.35 [0.32, 0.38] & 0.33 [0.30, 0.36] \\
11 & 0.32 [0.29, 0.34] & 0.33 [0.30, 0.36] & 0.33 [0.30, 0.36] \\
12 & 0.29 [0.26, 0.32] & 0.31 [0.28, 0.34] & 0.31 [0.28, 0.33] \\
\bottomrule
\end{tabular}
\renewcommand{\arraystretch}{1}
\end{table}

\clearpage
\section{Appendix Figures}
\label{app:figures}

\vspace*{\fill}
\begin{figure}[H]
  \centering
  \caption{Micro Forecast vs.\ Univariate Forecast: $h=1$}
  \label{fig:xgb_micro_vs_sarima_h1}
  \includegraphics[width=\textwidth]{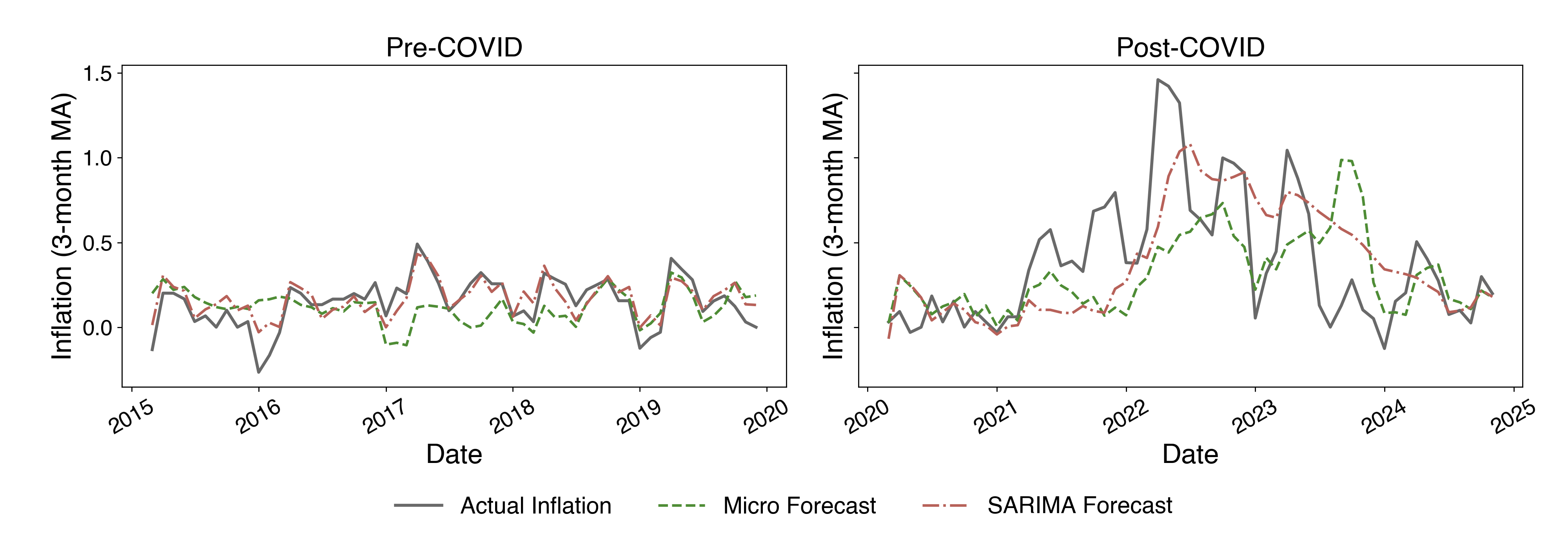}
  \smallskip
  \begin{minipage}{\textwidth}\setstretch{1.0}
    \footnotesize\textit{Notes:} Each panel plots 3-month trailing moving
    averages of realized monthly inflation, the micro 1-month-ahead forecast,
    and the univariate (SARIMA) 1-month-ahead forecast. Dates on the x-axis
    are target months: a forecast dated $t$ was formed at origin $t-1$.
    The panels correspond to the 2015--2019 and 2020--2024 evaluation windows.
  \end{minipage}
\end{figure}
\vspace*{\fill}

\clearpage
\begin{figure}[t!]
  \centering
  \caption{Micro Forecast vs.\ Univariate Forecast: $h=6$}
  \label{fig:xgb_micro_vs_sarima_h6}
  \includegraphics[width=\textwidth]{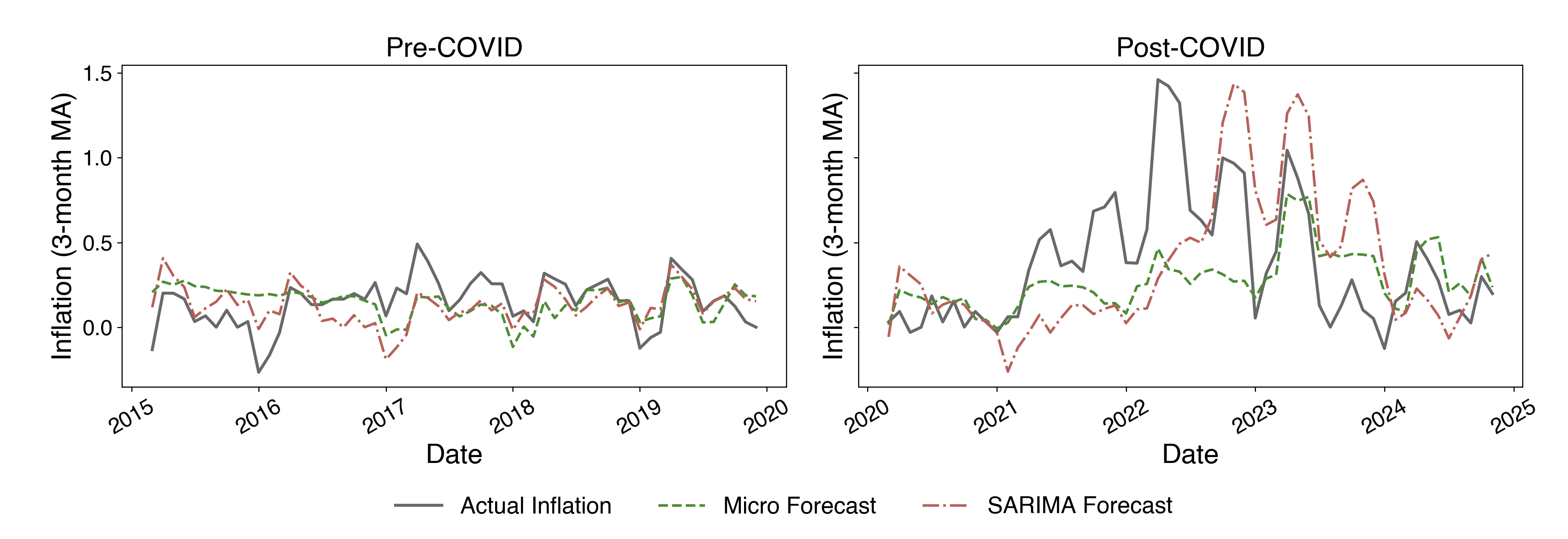}
  \smallskip
  \begin{minipage}{\textwidth}\setstretch{1.0}
    \footnotesize\textit{Notes:} Each panel plots 3-month trailing moving
    averages of realized monthly inflation, the micro 6-month-ahead forecast,
    and the univariate (SARIMA) 6-month-ahead forecast. Dates on the x-axis
    are target months: a forecast dated $t$ was formed at origin $t-6$.
    The panels correspond to the 2015--2019 and 2020--2024 evaluation windows.
  \end{minipage}
\end{figure}

\clearpage
\begin{figure}[t!]
  \centering
  \caption{Micro Forecast vs.\ Univariate Forecast: $h=24$}
  \label{fig:xgb_micro_vs_sarima_h24}
  \includegraphics[width=\textwidth]{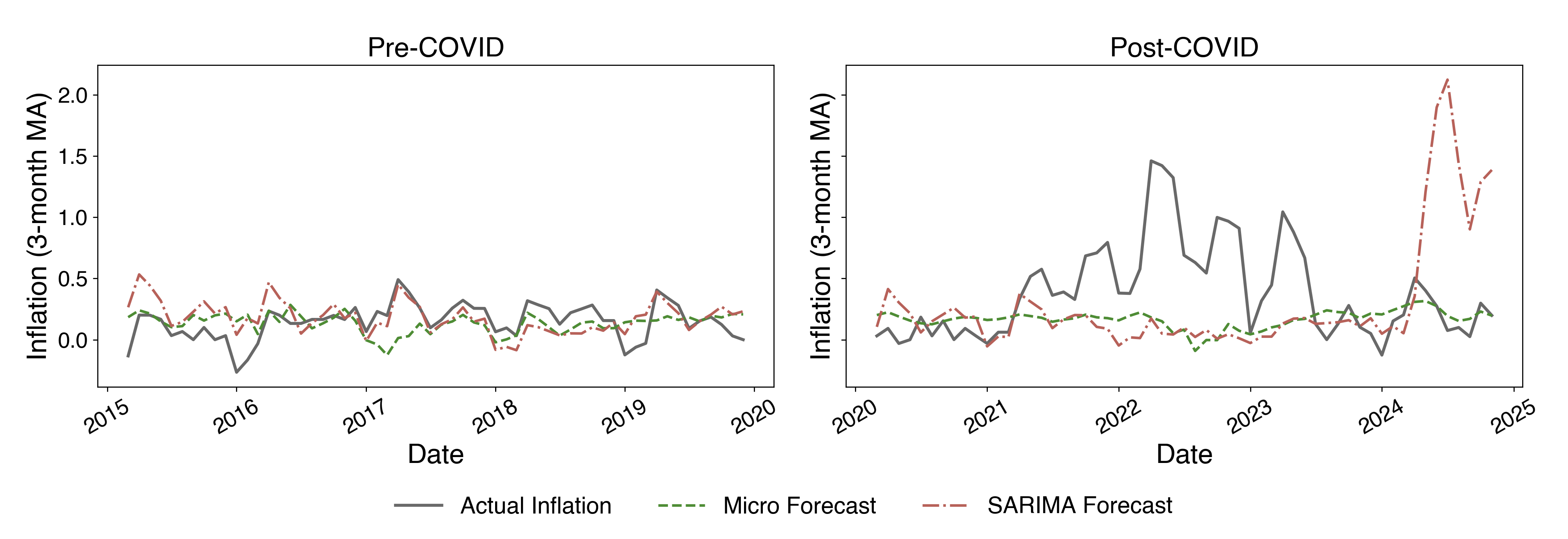}
  \smallskip
  \begin{minipage}{\textwidth}\setstretch{1.0}
    \footnotesize\textit{Notes:} Each panel plots 3-month trailing moving
    averages of realized monthly inflation, the micro 24-month-ahead forecast,
    and the univariate (SARIMA) 24-month-ahead forecast. Dates on the x-axis
    are target months: a forecast dated $t$ was formed at origin $t-24$.
    The panels correspond to the 2015--2019 and 2020--2024 evaluation windows.
  \end{minipage}
\end{figure}

\clearpage
\begin{figure}[t!]
  \centering
  \caption{Combination Weights: $h=1$}
  \label{fig:online_h1}
  \includegraphics[width=\textwidth]{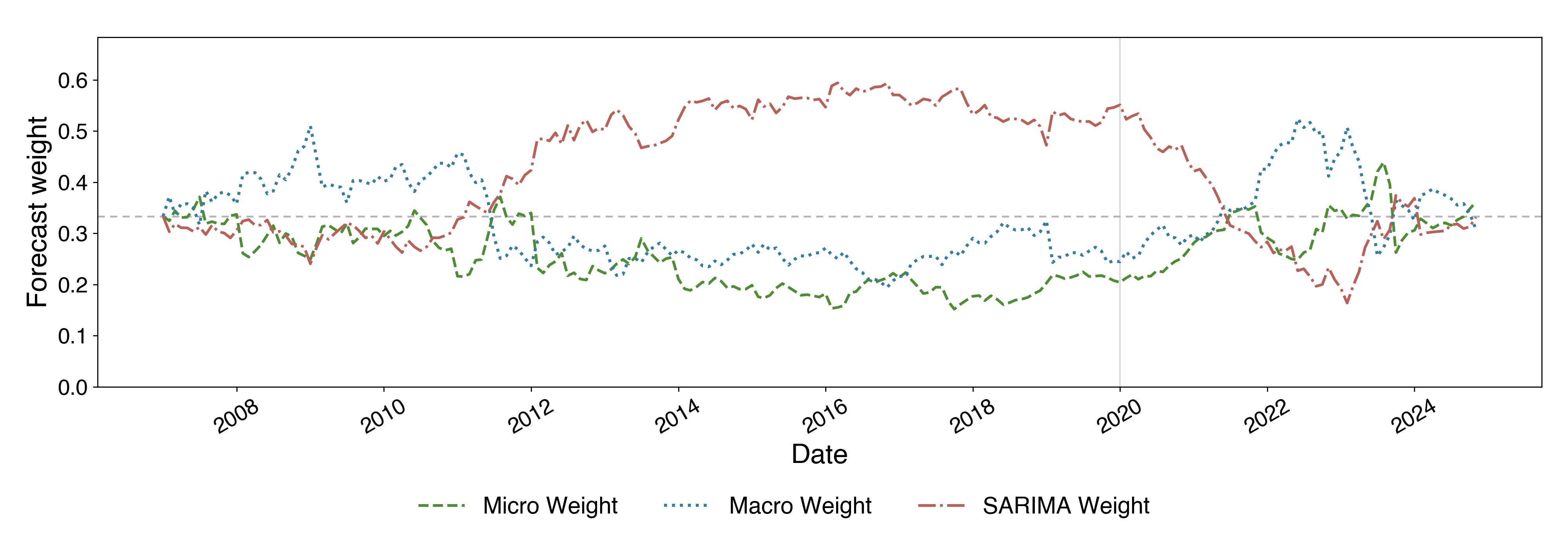}
  \smallskip
  \begin{minipage}{\textwidth}\setstretch{1.0}
    \footnotesize\textit{Notes:} Combination weights of the Fixed Share
    baseline ($\eta=0.5$, $\alpha=0.02$) on the micro, macro, and univariate
    experts for the 1-month-ahead forecast, over the full sample. Dates on
    the x-axis are target months, so date $t$ corresponds to origin $t-1$.
    The dashed horizontal line marks the equal weight $1/3$; the vertical line
    marks 2020.
  \end{minipage}
\end{figure}

\clearpage
\begin{figure}[t!]
  \centering
  \caption{Combination Weights: $h=6$}
  \label{fig:online_h6}
  \includegraphics[width=\textwidth]{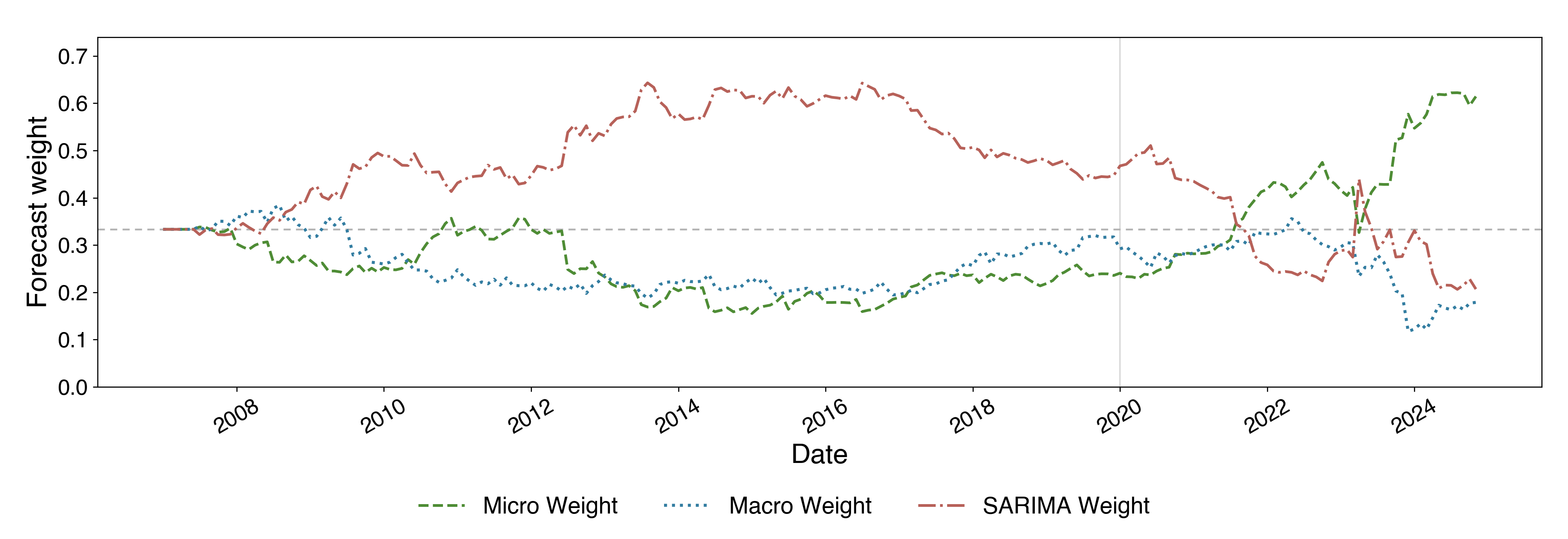}
  \smallskip
  \begin{minipage}{\textwidth}\setstretch{1.0}
    \footnotesize\textit{Notes:} Combination weights of the Fixed Share
    baseline ($\eta=0.5$, $\alpha=0.02$) on the micro, macro, and univariate
    experts for the 6-month-ahead forecast, over the full sample. Dates on
    the x-axis are target months, so date $t$ corresponds to origin $t-6$.
    The dashed horizontal line marks the equal weight $1/3$; the vertical line
    marks 2020.
  \end{minipage}
\end{figure}

\clearpage
\begin{figure}[t!]
  \centering
  \caption{Combination Weights: $h=24$}
  \label{fig:online_h24}
  \includegraphics[width=\textwidth]{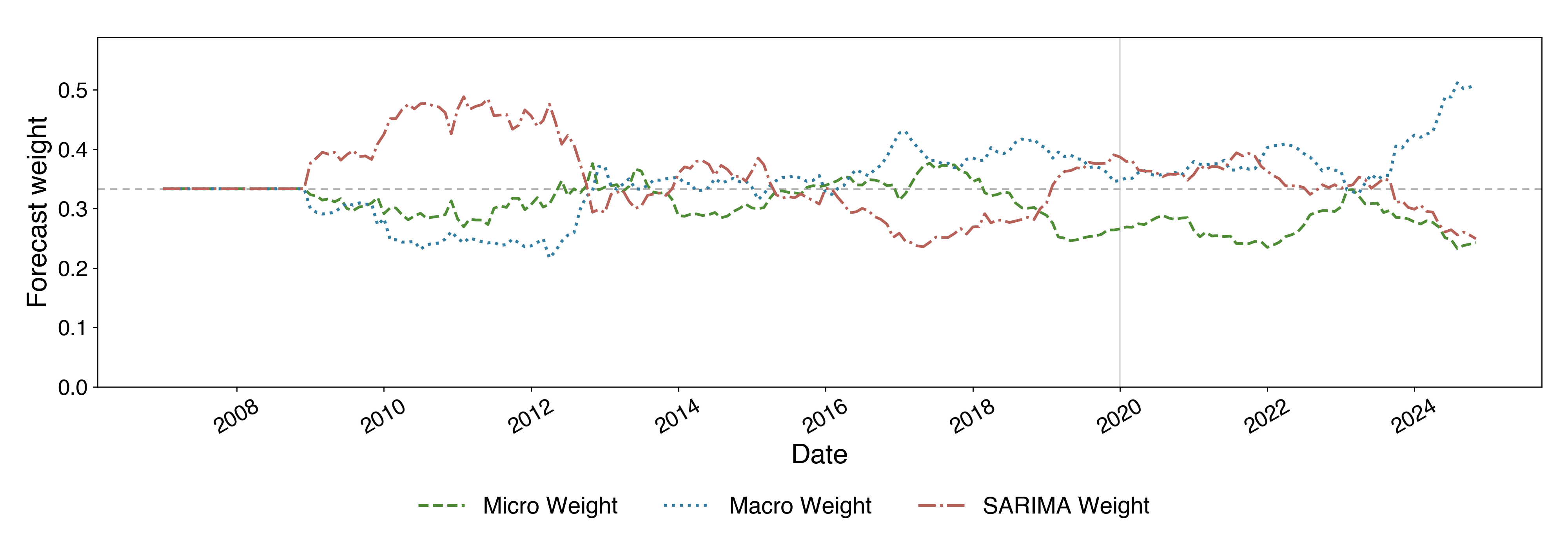}
  \smallskip
  \begin{minipage}{\textwidth}\setstretch{1.0}
    \footnotesize\textit{Notes:} Combination weights of the Fixed Share
    baseline ($\eta=0.5$, $\alpha=0.02$) on the micro, macro, and univariate
    experts for the 24-month-ahead forecast, over the full sample. Dates on
    the x-axis are target months, so date $t$ corresponds to origin $t-24$.
    The dashed horizontal line marks the equal weight $1/3$; the vertical line
    marks 2020.
  \end{minipage}
\end{figure}

\clearpage
\begin{figure}[t!]
  \centering
  \caption{Macro Forecast vs.\ Univariate Forecast: $h=1$}
  \label{fig:xgb_macro_vs_sarima_h1}
  \includegraphics[width=\textwidth]{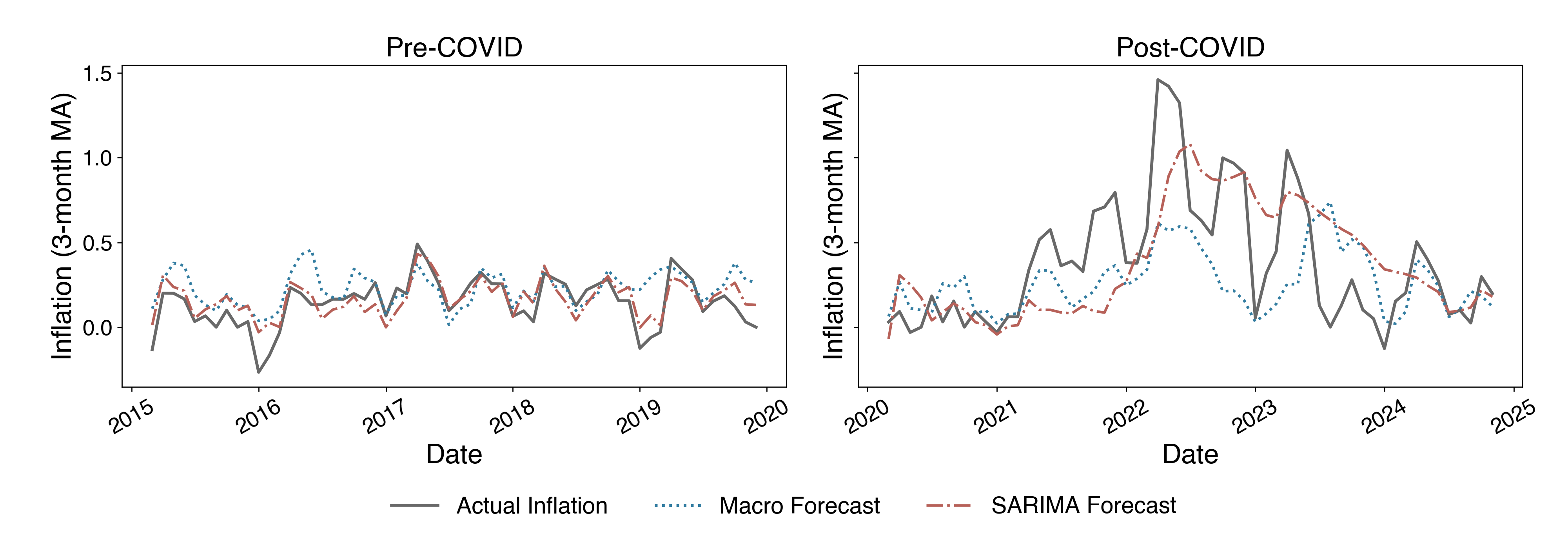}
  \smallskip
  \begin{minipage}{\textwidth}\setstretch{1.0}
    \footnotesize\textit{Notes:} Each panel plots 3-month trailing moving
    averages of realized monthly inflation, the macro 1-month-ahead forecast,
    and the univariate (SARIMA) 1-month-ahead forecast. Dates on the x-axis
    are target months: a forecast dated $t$ was formed at origin $t-1$.
    The panels correspond to the 2015--2019 and 2020--2024 evaluation windows.
  \end{minipage}
\end{figure}

\clearpage
\begin{figure}[t!]
  \centering
  \caption{Macro Forecast vs.\ Univariate Forecast: $h=6$}
  \label{fig:xgb_macro_vs_sarima_h6}
  \includegraphics[width=\textwidth]{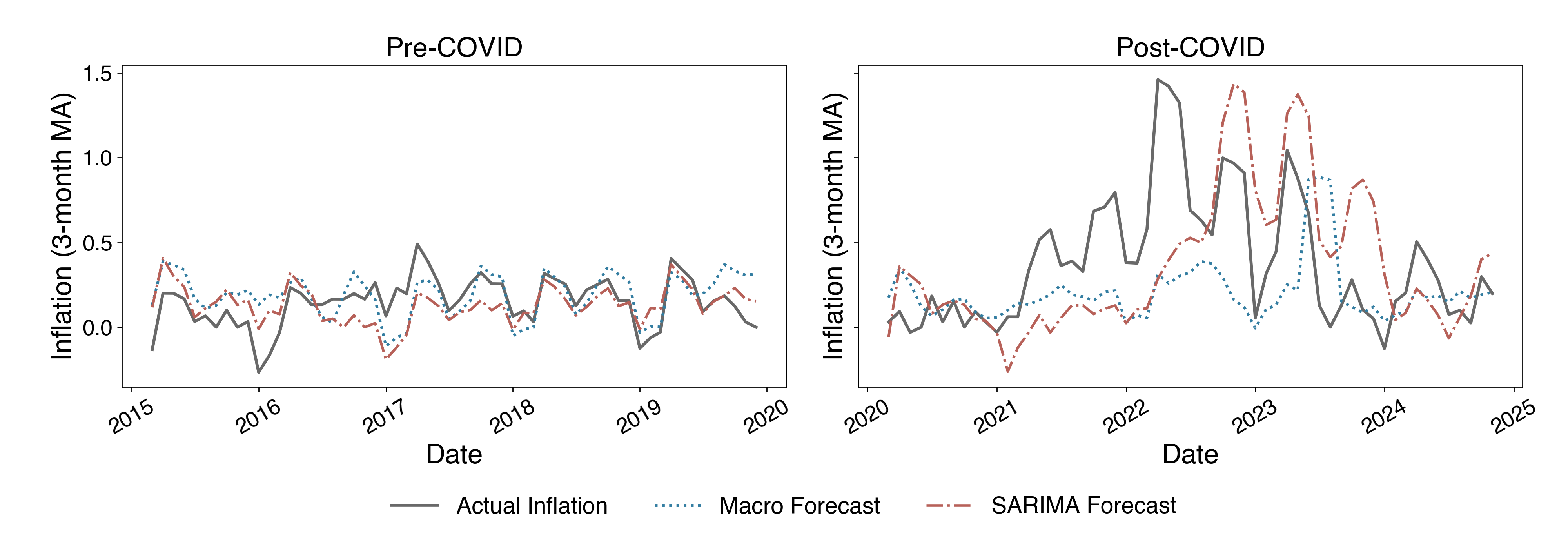}
  \smallskip
  \begin{minipage}{\textwidth}\setstretch{1.0}
    \footnotesize\textit{Notes:} Each panel plots 3-month trailing moving
    averages of realized monthly inflation, the macro 6-month-ahead forecast,
    and the univariate (SARIMA) 6-month-ahead forecast. Dates on the x-axis
    are target months: a forecast dated $t$ was formed at origin $t-6$.
    The panels correspond to the 2015--2019 and 2020--2024 evaluation windows.
  \end{minipage}
\end{figure}

\clearpage
\begin{figure}[t!]
  \centering
  \caption{Macro Forecast vs.\ Univariate Forecast: $h=24$}
  \label{fig:xgb_macro_vs_sarima_h24}
  \includegraphics[width=\textwidth]{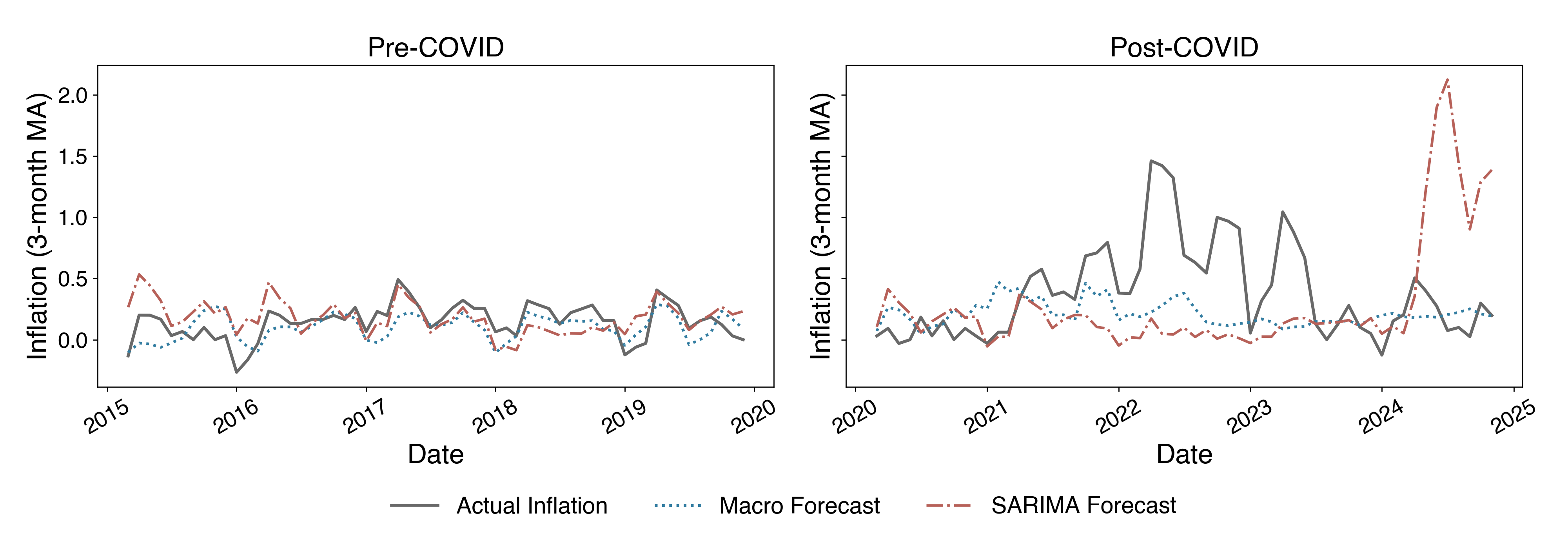}
  \smallskip
  \begin{minipage}{\textwidth}\setstretch{1.0}
    \footnotesize\textit{Notes:} Each panel plots 3-month trailing moving
    averages of realized monthly inflation, the macro 24-month-ahead forecast,
    and the univariate (SARIMA) 24-month-ahead forecast. Dates on the x-axis
    are target months: a forecast dated $t$ was formed at origin $t-24$.
    The panels correspond to the 2015--2019 and 2020--2024 evaluation windows.
  \end{minipage}
\end{figure}

\clearpage
\begin{figure}[t!]
  \centering
  \caption{Two-Expert Combination Weights: $h=1$}
  \label{fig:online_sm_h1}
  \includegraphics[width=\textwidth]{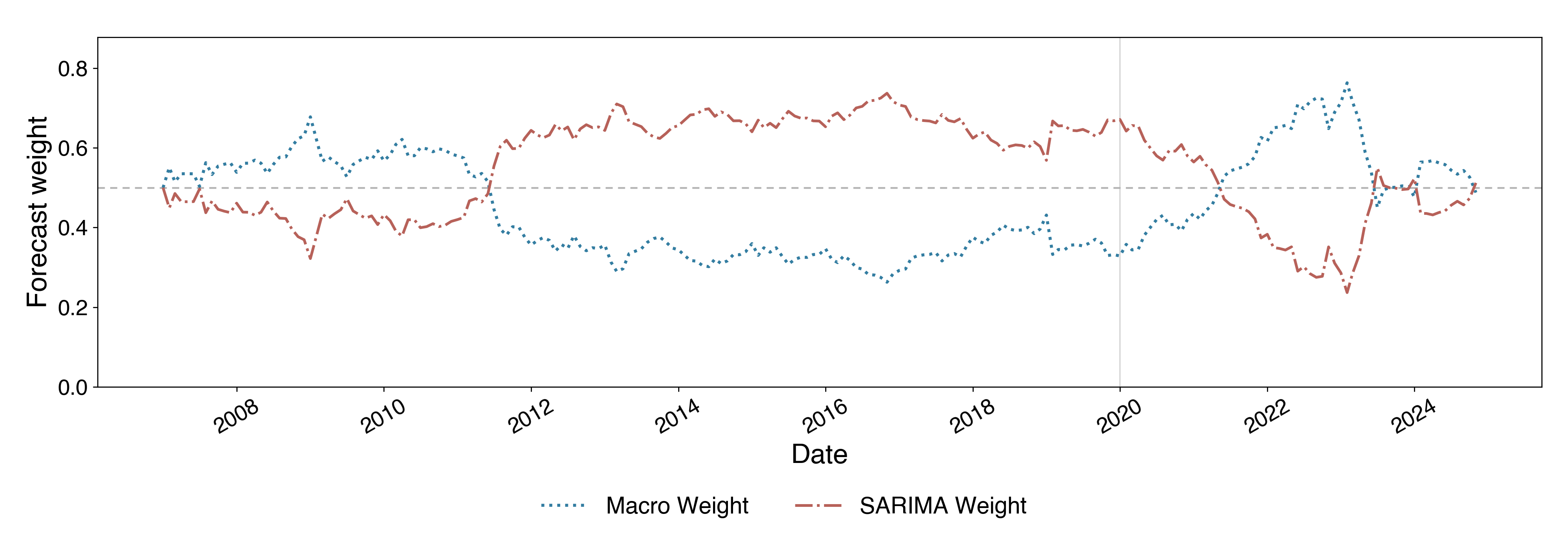}
  \smallskip
  \begin{minipage}{\textwidth}\setstretch{1.0}
    \footnotesize\textit{Notes:} Combination weights of the two-expert Fixed
    Share benchmark ($\eta=0.5$, $\alpha=0.02$) on the macro and univariate
    experts for the 1-month-ahead forecast, over the full sample. Dates on
    the x-axis are target months, so date $t$ corresponds to origin $t-1$.
    The dashed horizontal line marks the equal weight $1/2$; the vertical line
    marks 2020.
  \end{minipage}
\end{figure}

\clearpage
\begin{figure}[t!]
  \centering
  \caption{Two-Expert Combination Weights: $h=6$}
  \label{fig:online_sm_h6}
  \includegraphics[width=\textwidth]{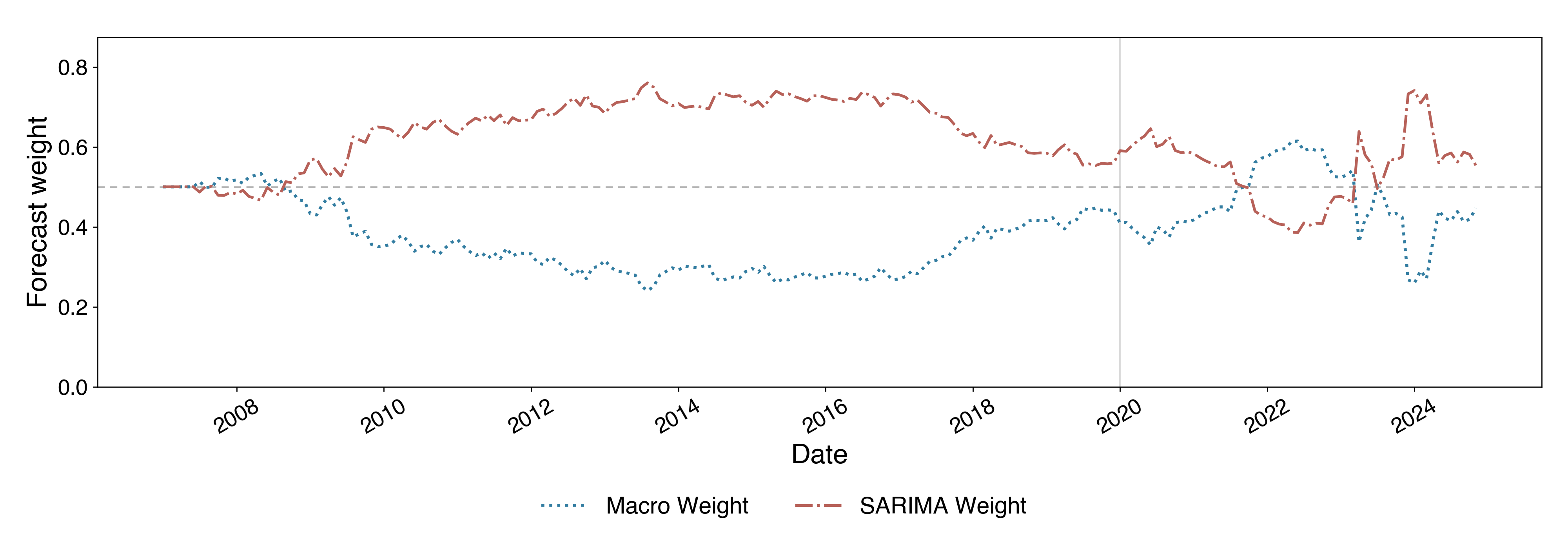}
  \smallskip
  \begin{minipage}{\textwidth}\setstretch{1.0}
    \footnotesize\textit{Notes:} Combination weights of the two-expert Fixed
    Share benchmark ($\eta=0.5$, $\alpha=0.02$) on the macro and univariate
    experts for the 6-month-ahead forecast, over the full sample. Dates on
    the x-axis are target months, so date $t$ corresponds to origin $t-6$.
    The dashed horizontal line marks the equal weight $1/2$; the vertical line
    marks 2020.
  \end{minipage}
\end{figure}

\clearpage
\begin{figure}[t!]
  \centering
  \caption{Two-Expert Combination Weights: $h=24$}
  \label{fig:online_sm_h24}
  \includegraphics[width=\textwidth]{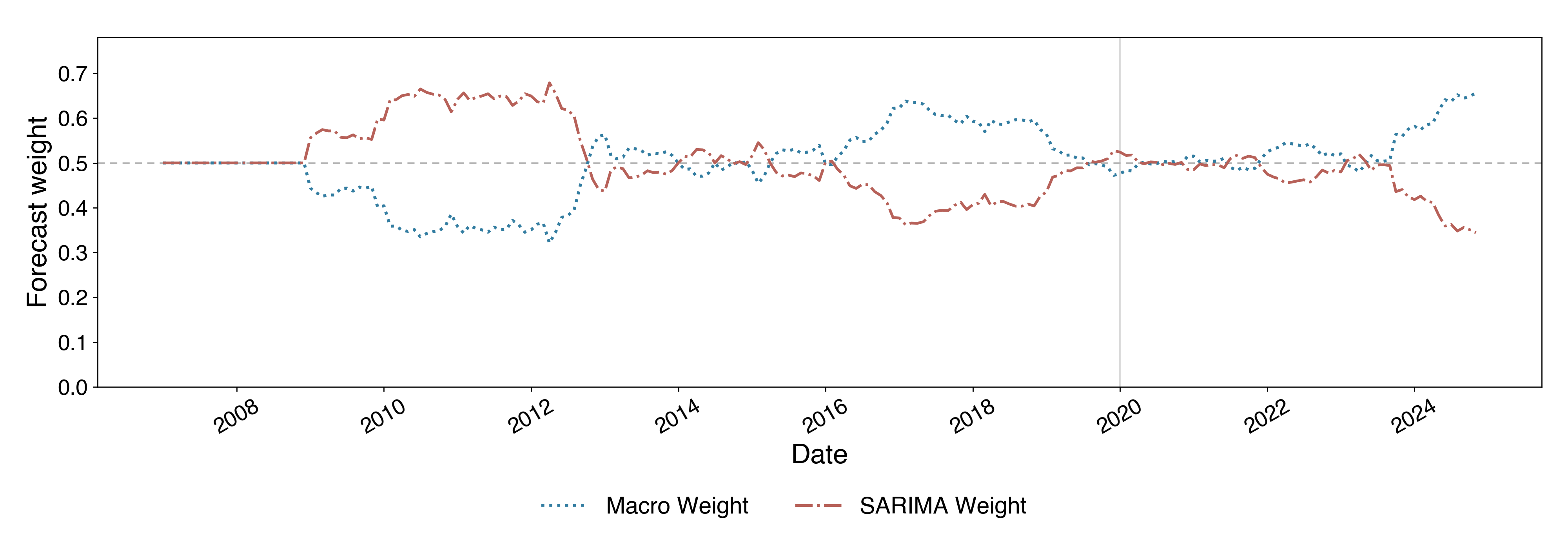}
  \smallskip
  \begin{minipage}{\textwidth}\setstretch{1.0}
    \footnotesize\textit{Notes:} Combination weights of the two-expert Fixed
    Share benchmark ($\eta=0.5$, $\alpha=0.02$) on the macro and univariate
    experts for the 24-month-ahead forecast, over the full sample. Dates on
    the x-axis are target months, so date $t$ corresponds to origin $t-24$.
    The dashed horizontal line marks the equal weight $1/2$; the vertical line
    marks 2020.
  \end{minipage}
\end{figure}

\clearpage
\section{Notation and Pseudocode}
\label{app:pseudocode}

Table~\ref{tab:notation} summarizes the main notation used in the forecasting,
scan-test, online-learning, and Shapley procedures described in
Sections~\ref{sec:methods}, \ref{sec:scantest},
and~\ref{sec:group_shapley}.

\begin{longtable}{p{2.6cm} p{11.6cm}}
  \caption{Main Notation Used in the Methods and Scan-Test Sections}
  \label{tab:notation}                                                                                                                                                                                                                   \\
  \toprule
  Symbol                       & Description                                                                                                                                                                                             \\
  \midrule
  \endfirsthead

  \multicolumn{2}{l}{\textit{Table \ref{tab:notation} continued}}                                                                                                                                                                        \\
  \toprule
  Symbol                       & Description                                                                                                                                                                                             \\
  \midrule
  \endhead

  \midrule
  \multicolumn{2}{r}{\textit{Continued on next page}}                                                                                                                                                                                    \\
  \endfoot

  \bottomrule
  \multicolumn{2}{p{14.2cm}}{\footnotesize\textit{Notes:} This table collects
  the notation used in the machine-learning, SARIMA, Fixed Share, scan-test,
  and Shapley procedures. Forecasts are written in the target-given-origin
  form: $\hat{\pi}_{a\mid b}$ is the forecast of inflation in month $a$ formed
  from information through month $b$. Loss
  differentials are written so that positive values favor the competing method
  $M$ over the benchmark method $B$.} \\
  \endlastfoot

  \multicolumn{2}{l}{\textit{Timing, data, and forecasts}} \\
  \addlinespace
  $t$                          & Calendar month index.                                                                                                                                                                                   \\
  $\tau$                       & Forecast origin: the last month whose information the forecast uses; a forecast formed at origin $\tau$ for horizon $h$ targets month $\tau+h$.                                                        \\
  $h$                          & Forecast horizon in months ahead; $h = 1,\ldots,24$.                                                                                                                                                    \\
  $H$                          & Maximum forecast horizon; $H = 24$.                                                                                                                                                                     \\
  $\pi_t$                      & Realized month-on-month inflation in month $t$.                                                                                                                                                         \\
  $\hat{\pi}_{\tau+h\mid\tau}$ & Forecast of inflation in target month $\tau+h$ formed from information through origin $\tau$; equivalently, $\hat{\pi}_{t\mid t-h}$ is the forecast of target $t$ formed at origin $t-h$.              \\
  $n$                          & Number of evaluated months in the forecast-comparison panel; $n = 215$ at the deployed sample.                                                                                                          \\
  \addlinespace
  \multicolumn{2}{l}{\textit{Features and cached real-time tuning}} \\
  \addlinespace
  $Z_t$                        & Monthly encoding vector observed in month $t$: the micro distribution statistics for the micro expert, or the macro series for the macro expert.                                                        \\
  $m_{t+h}$                    & Target-month indicators, $m_{t+h} \in \{0,1\}^{11}$ (January omitted).                                                                                                                                  \\
  $X_t^h$                      & Origin-$t$ feature vector for horizon $h$: $X_t^h = (Z_{t-11}, \ldots, Z_t, m_{t+h})$; every component is observed at origin $t$.                                                                       \\
  $\Gamma$                     & XGBoost hyperparameter vector (shrinkage $\nu$ and the number of trees; tree depth fixed at 6).                                                                                                        \\
  $\nu$                        & XGBoost learning rate (shrinkage); $\eta$ is reserved for the combination step.                                                                                                                        \\
  $\mathcal{G}$                & Hyperparameter grid; the baseline is the $7 \times 7$ grid $\nu \in \{0.0003,\ldots,0.3\}$, trees $\in \{50,\ldots,1000\}$. (In Section~\ref{sec:group_shapley}, $\mathcal{G}$ instead denotes the set of micro feature groups; see below.) \\
  $L$                          & Rolling training window; $L = 120$ months.                                                                                                                                                              \\
  $W$                          & Trailing validation window: $W = 48$ months for the cached real-time tuning of every expert; $W = 60$ months for the varying-learning-rate rule.                                                        \\
  $\hat{\pi}^{\Gamma}_{t\mid t-h}$ & Cached real-time forecast of target $t$ from candidate $\Gamma$, fitted at origin $t-h$; $e^{\Gamma}_{t\mid t-h}$ is its absolute error, stored once $\pi_t$ is released.                           \\
  $\Gamma^*_{\tau}$            & Hyperparameter selected at origin $\tau$ by the trailing-$W$ MAE rule; reused (frozen) by every Shapley coalition.                                                                                      \\
  \addlinespace
  \multicolumn{2}{l}{\textit{SARIMA benchmark}} \\
  \addlinespace
  $\Upsilon$                   & Candidate SARIMA order $(p,d,q)(P,D,Q,12)$ with $p,d,q,P,D,Q \in \{0,1\}$.                                                                                                                              \\
  $\mathcal{T}$                & SARIMA candidate set: the $2^6$ orders plus the AO-12MA trailing-average candidate.                                                                                                                     \\
  $\Theta^{\Upsilon}_{c}$      & Maximum-likelihood SARIMA coefficients of order $\Upsilon$ fitted on the rolling window ending at origin $c$. (The scan bootstrap's covariance matrix is written $\Theta_n$; see below.)             \\
  $\varepsilon_t$              & SARIMA innovation term.                                                                                                                                                                                 \\
  \addlinespace
  \multicolumn{2}{l}{\textit{Patient Fixed Share combination}} \\
  \addlinespace
  $K$                          & Number of experts in the admitted pool; $K = 3$ in the baseline (micro, macro, univariate).                                                                                                             \\
  $\hat{\pi}^{k}_{t\mid t-h}$  & Expert $k$'s forecast of target $t$ formed at origin $t-h$.                                                                                                                                             \\
  $\ell^{k}_{t\mid t-h}$       & Realized absolute loss of expert $k$ for target $t$: $\lvert\pi_t - \hat{\pi}^{k}_{t\mid t-h}\rvert$.                                                                                                   \\
  $v^{h,k}_t$                  & Internal weight on expert $k$ for horizon $h$, indexed by the target month and updated only with realized $h$-step losses (share step first, then loss step).                                          \\
  $w^{h,k}_t$                  & Forecast weight at origin $t$: $w^{h,k}_t = (1-\alpha)^h v^{h,k}_t + \bigl(1-(1-\alpha)^h\bigr)/K$.                                                                                                     \\
  $\hat{\pi}^{\mathrm{FS}}_{t+h\mid t}$ & Combined (patient Fixed Share) forecast formed at origin $t$ for target $t+h$.                                                                                                                \\
  $\eta$                       & Fixed Share learning rate; baseline $\eta = 0.5$.                                                                                                                                                       \\
  $\alpha$                     & Fixed-share mixing rate; baseline $\alpha = 0.02$, approximately $4/n$.                                                                                                                                 \\
  $t_0$                        & Start of the combination sample: the first combined forecasts target January 2007, each formed $h$ months earlier at its forecast origin.                                                               \\
  \addlinespace
  \multicolumn{2}{l}{\textit{Varying learning rate}} \\
  \addlinespace
  $\mathcal{H}$                & Learning-rate grid $\{0.25, 0.30, \ldots, 1.25\}$ ($G = 21$ values).                                                                                                                                    \\
  $\eta_\tau^\star$            & Learning rate selected at origin $\tau$ by the neighbor-only rolling search over the most recent $W = 60$ realized losses.                                                                              \\
  $\mathcal{I}_\tau$           & Candidate index set at origin $\tau$: the incumbent learning rate and its grid neighbors. (The scan test's interval collection is also written $\mathcal{I}$; see below.)                               \\
  \addlinespace
  \multicolumn{2}{l}{\textit{Scan test and Ljung--Box pre-test}} \\
  \addlinespace
  $r_{t,h}$                    & Absolute-error differential between benchmark method $B$ and competing method $M$ for target month $t$ at horizon $h$, built from the forecasts $\hat{\pi}_{t\mid t-h}$.                                      \\
  $\bar{r}_h$                  & Within-horizon sample mean of the loss-differential series; used in the Ljung--Box pre-test.                                                                                                            \\
  $m$                          & Ljung--Box lag budget; $m = 12$.                                                                                                                                                                        \\
  $Q$                          & Multiplier-bootstrap Ljung--Box statistic, aggregated over horizons.                                                                                                                                    \\
  $D_t^{(h)}$                  & Loss differential for target month $t$ at horizon $h$: the benchmark method $B$ absolute error minus the competing method $M$ absolute error.                                                                 \\
  $I = [s,e]$                  & Candidate contiguous target-month interval with start month $s$ and end month $e$.                                                                                                                    \\
  $\mathcal{I}$                & Collection of candidate intervals considered by the scan test.                                                                                                                                          \\
  $L_{\min}$                   & Minimum admissible interval length in the scan test; $L_{\min} = 6$ for the occasional scan.                                                                                       \\
  $L_{\max}$ & Maximum admissible interval length in the scan test; $L_{\max} = 12$ for the occasional scan, $L_{\min}=L_{\max}=n$ for the overall test.                                        \\
  $p$                          & Total number of (interval, horizon) pairs searched, $p = H\,\lvert\mathcal{I}\rvert$; $34{,}776$ for the deployed occasional scan ($L_{\min}=6$, $L_{\max}=12$).                    \\
  $S_I^{(h)}$                  & Sum of loss differentials over interval $I$ at horizon $h$.                                                                                                                                             \\
  $\bar{D}_I^{(h)}$            & Mean loss differential over interval $I$ at horizon $h$.                                                                                                                                                \\
  $\gamma$                     & Partial-studentization exponent; baseline $\gamma=0.1$.                                                                                                                             \\
  $Q_I^{(h)}$                 & Sum of squared deviations of $D_t^{(h)}$ within interval $I$ at horizon $h$.                                                                                                        \\
  $Q_I^{*(h)}$                & Bootstrap sum of squared deviations of multiplier-weighted local scores within interval $I$ at horizon $h$.                                                                         \\
  $T_I^{(h)}$                 & Partially-studentized scan statistic for interval $I$ and horizon $h$.                                                                                                             \\
  $T^{\max}$                  & Maximum partially-studentized scan statistic across all intervals and horizons.                                                                                                     \\
  $\xi_t^{(b)}$                & Gaussian multiplier in bootstrap draw $b$ for time $t$.                                                                                                                                                 \\
  $\Theta_n$                  & Covariance matrix of the Gaussian multiplier bootstrap: $\Theta_n = I_n$ in Regime~I, kernel-based in Regime~D.                                                                      \\
  $K(\cdot)$                   & Kernel function of the Regime~D covariance matrix.                                                                                                                                                      \\
  $b_n$                        & Bandwidth of the kernel-based covariance estimator.                                                                                                                                                     \\
  $B$                          & Number of bootstrap draws; $B = 4{,}999$.                                                                                                                                          \\
  \addlinespace
  \multicolumn{2}{l}{\textit{Grouped Shapley decomposition}} \\
  \addlinespace
  $\mathcal{G}$                & Set of micro feature groups (the players); $\lvert\mathcal{G}\rvert = G = 5$ in the baseline decomposition.                                                                                             \\
  $S$                          & Coalition of feature groups, $S \subseteq \mathcal{G}$.                                                                                                                                                 \\
  $v_h(S)$                     & Coalition value at horizon $h$: the signed percent MAE reduction of the three-expert forecast built on micro$(S)$ relative to the two-expert benchmark; $v_h(\varnothing) = 0$.                          \\
  $\phi_{g,h}$                 & Shapley value of group $g$ at horizon $h$, in percent of the two-expert benchmark MAE; $\sum_g \phi_{g,h} = v_h(\mathcal{G})$.                                                                          \\
\end{longtable}

This appendix also collects the pseudocode for all algorithms used in
Sections~\ref{sec:methods}, \ref{sec:scantest},
and~\ref{sec:group_shapley}.

\begin{algorithm}[H]
  \caption{\textsc{PrecomputeRollingPredictions}$(\mathcal{G}, h, L)$}
  \label{alg:mlcgi}
  \begin{algorithmic}[1]
    \State \textbf{Input:} Hyperparameter grid $\mathcal{G}$; horizon $h$;
    rolling window $L = 120$; features $\{X_t^h\}$ and inflation $\{\pi_t\}$;
    first usable target $\underline{t}$; origin range
    $[c_{\mathrm{lo}}, c_{\mathrm{hi}}]$.
    \State \textbf{Output:} Cached real-time forecasts and errors
    $\{\hat{\pi}^{\Gamma}_{t\mid t-h},\, e^{\Gamma}_{t\mid t-h}\}$ for all
    $\Gamma \in \mathcal{G}$ and reachable targets $t$.

    \For{$\Gamma \in \mathcal{G}$, $c = c_{\mathrm{lo}}, \ldots, c_{\mathrm{hi}}$}
    \State $s \leftarrow \max\{c - (L-1),\; \underline{t}\}$
    \State Train $\mathrm{ML}(\Gamma)$ on
    $\bigl\{(X_{j-h}^h,\,\pi_j)\bigr\}_{j=s}^{c}$
    \Comment{targets realized by the origin $c$}
    \State $\hat{\pi}^{\Gamma}_{c+h\mid c} \leftarrow
    \mathrm{ML}(\Gamma)\bigl(X_c^h\bigr)$;\quad
    $e^{\Gamma}_{c+h\mid c} \leftarrow
    \bigl|\pi_{c+h} - \hat{\pi}^{\Gamma}_{c+h\mid c}\bigr|$
    \EndFor
  \end{algorithmic}
\end{algorithm}

\begin{algorithm}[H]
  \caption{\textsc{RealTimeTuneAndForecast}$(\tau, h, W)$}
  \label{alg:ml_deploy}
  \begin{algorithmic}[1]
    \State \textbf{Input:} Forecast origin $\tau$; horizon $h$; validation
    window $W = 48$; the cache
    $\{\hat{\pi}^{\Gamma}_{t\mid t-h},\, e^{\Gamma}_{t\mid t-h}\}$ from
    Algorithm~\ref{alg:mlcgi}.
    \State \textbf{Output:} Real-time forecast $\hat{\pi}_{\tau+h\mid\tau}$
    and selected hyperparameter $\Gamma^*$.

    \For{$\Gamma \in \mathcal{G}$}
    \State $\mathrm{MAE}(\Gamma) \leftarrow
    \dfrac{1}{W}\sum_{t'=\tau-W+1}^{\tau} e^{\Gamma}_{t'\mid t'-h}$
    \Comment{trailing errors, all realized by $\tau$}
    \EndFor
    \State $\Gamma^* \leftarrow
    \operatorname*{arg\,min}_{\Gamma\in\mathcal{G}} \mathrm{MAE}(\Gamma)$;\quad
    $\hat{\pi}_{\tau+h\mid\tau} \leftarrow \hat{\pi}^{\Gamma^*}_{\tau+h\mid\tau}$
  \end{algorithmic}
\end{algorithm}

\clearpage

\begin{algorithm}[H]
  \caption{\textsc{PrecomputeRollingSARIMA}$(\mathcal{T}, h, L)$}
  \label{alg:sr1}
  \begin{algorithmic}[1]
    \State \textbf{Input:} Candidate set $\mathcal{T}$; horizon $h$; rolling
    window $L = 120$; inflation $\{\pi_t\}$; first usable month
    $\underline{t}$; origin range $[c_{\mathrm{lo}}, c_{\mathrm{hi}}]$.
    \State \textbf{Output:} Cached real-time forecasts and errors
    $\{\hat{\pi}^{\Upsilon}_{c+h\mid c},\, e^{\Upsilon}_{c+h\mid c}\}$ for all
    $\Upsilon \in \mathcal{T}$ and reachable origins $c$.

    \For{$\Upsilon \in \mathcal{T}$, $c = c_{\mathrm{lo}}, \ldots, c_{\mathrm{hi}}$}
    \State $s \leftarrow \max\{c - (L-1),\; \underline{t}\}$
    \If{$\Upsilon$ is AO-12MA}
    \State $\hat{\pi}^{\Upsilon}_{c+h\mid c} \leftarrow
    \frac{1}{12}\sum_{j=0}^{11}\pi_{c-j}$
    \Comment{degenerate candidate, no fit}
    \Else
    \State Fit $\Theta^{\Upsilon}_{c}$ for $\mathrm{SARIMA}(\Upsilon)$ by
    maximum likelihood on $\{\pi_j\}_{j=s}^{c}$
    \State $\hat{\pi}^{\Upsilon}_{c+h\mid c} \leftarrow$ iterate
    $\mathrm{SARIMA}(\Upsilon)(\Theta^{\Upsilon}_{c})$ $h$ steps forward from $c$
    \EndIf
    \State $e^{\Upsilon}_{c+h\mid c} \leftarrow
    \bigl|\pi_{c+h} - \hat{\pi}^{\Upsilon}_{c+h\mid c}\bigr|$
    \Comment{stored once $\pi_{c+h}$ is released}
    \EndFor
  \end{algorithmic}
\end{algorithm}

\begin{algorithm}[H]
  \caption{\textsc{RealTimeTuneAndForecastSARIMA}$(\tau, h, W)$}
  \label{alg:sr_deploy}
  \begin{algorithmic}[1]
    \State \textbf{Input:} Forecast origin $\tau$; horizon $h$; validation
    window $W = 48$; the cache
    $\{\hat{\pi}^{\Upsilon}_{t\mid t-h},\, e^{\Upsilon}_{t\mid t-h}\}$ from
    Algorithm~\ref{alg:sr1}.
    \State \textbf{Output:} Real-time forecast $\hat{\pi}_{\tau+h\mid\tau}$
    and selected candidate $\Upsilon^*$.

    \For{$\Upsilon \in \mathcal{T}$}
    \State $\mathrm{MAE}(\Upsilon) \leftarrow
    \dfrac{1}{W}\sum_{t'=\tau-W+1}^{\tau} e^{\Upsilon}_{t'\mid t'-h}$
    \Comment{trailing errors, all realized by $\tau$}
    \EndFor
    \State $\Upsilon^* \leftarrow
    \operatorname*{arg\,min}_{\Upsilon\in\mathcal{T}} \mathrm{MAE}(\Upsilon)$;\quad
    $\hat{\pi}_{\tau+h\mid\tau} \leftarrow \hat{\pi}^{\Upsilon^*}_{\tau+h\mid\tau}$
  \end{algorithmic}
\end{algorithm}

\clearpage
\begin{algorithm}[H]
  \caption{Multiplier-Bootstrap Ljung--Box Test (Aggregated over Horizons)}
  \label{alg:mb_ljung_box}
  \begin{algorithmic}[1]

    \State \textbf{Input:} Residual panel $\{r_{t,h}\}$, $t=1,\dots,n$,
    $h=1,\dots,H$,
    where $r_{t,h}=\lvert\pi_{t}-\hat\pi^{B}_{t\mid t-h}\rvert
            -\lvert\pi_{t}-\hat\pi^{M}_{t\mid t-h}\rvert$
    and $t$ indexes target months;
    maximum lag $m$; bootstrap draws $B$.
    \State \textbf{Output:} $p$-value $\hat{p}$.

    \For{$h=1,\dots,H$, $k=1,\dots,m$}
    \State $\hat{\rho}_h(k) \leftarrow
      \dfrac{\sum_{t=k+1}^{n}(r_{t,h}-\bar{r}_h)(r_{t-k,h}-\bar{r}_h)}
      {\sum_{t=1}^{n}(r_{t,h}-\bar{r}_h)^2}$;\quad
    $Q_h \leftarrow n(n+2)\displaystyle\sum_{k=1}^{m}
      \dfrac{\hat{\rho}_h(k)^2}{n-k}$
    \EndFor
    \State $Q \leftarrow \sum_{h=1}^{H} Q_h$

    \For{$b=1,\dots,B$}
    \State Draw $\xi_1^{(b)},\dots,\xi_n^{(b)}\overset{\text{i.i.d.}}{\sim}\mathcal{N}(0,1)$;\quad
    form $r^{*(b)}_{t,h} \leftarrow \xi_t^{(b)}\,r_{t,h}$ for all $t,h$
    \For{$h=1,\dots,H$}
    \State Compute $\hat{\rho}^{*(b)}_h(k)$ from $\{r^{*(b)}_{t,h}\}_{t=1}^n$;\quad
    $Q_h^{*(b)} \leftarrow n(n+2)\displaystyle\sum_{k=1}^{m}
      \dfrac{\hat{\rho}^{*(b)}_h(k)^2}{n-k}$
    \EndFor
    \State $Q^{*(b)} \leftarrow \sum_{h=1}^{H} Q_h^{*(b)}$
    \EndFor
    \State $\hat{p} = \bigl(1 + \sum_{b=1}^{B}\mathbf{1}\{Q^{*(b)}>Q\}\bigr)/(B+1)$
  \end{algorithmic}
\end{algorithm}

\clearpage
\begin{algorithm}[H]
  \caption{Partially-Studentized Scan Test with Gaussian Multiplier Bootstrap (Regime~I: $\Theta_n=I_n$)}
  \label{alg:scan_regime1}
  \begin{algorithmic}[1]

    \State \textbf{Input:} Forecast errors $\{e^{B}_{t\mid t-h},\,e^{M}_{t\mid t-h}\}$, indexed by target month $t$,
    $t=1,\dots,n$, $h=1,\dots,H$; $L_{\min}$; $L_{\max}$; tuning exponent $\gamma$; bootstrap draws $B$.
    \State \textbf{Output:} $p$-value $\hat{p}$.

    \For{$t=1,\dots,n$, $h=1,\dots,H$}
    \State $D_t^{(h)} \leftarrow \lvert e^{B}_{t\mid t-h}\rvert - \lvert e^{M}_{t\mid t-h}\rvert$
    \EndFor
    \State $\mathcal{I} \leftarrow \{[s,e]\subset[n]: L_{\min}\le e-s+1\le L_{\max}\}$;
    \quad $T^{\max} \leftarrow -\infty$
    \For{each $I=[s,e]\in\mathcal{I}$, $h=1,\dots,H$}
    \State $S_I^{(h)} \leftarrow \sum_{t=s}^{e} D_t^{(h)}$;\quad
    $\bar{D}_I^{(h)} \leftarrow S_I^{(h)}/|I|$;\quad
    $Q_I^{(h)} \leftarrow \sum_{t=s}^{e}(D_t^{(h)}-\bar{D}_I^{(h)})^2$
    \State $A_I^{(h)}(\gamma) \leftarrow |I|^{1/2-\gamma}\bigl(Q_I^{(h)}\bigr)^\gamma$
    \State $T_I^{(h)} \leftarrow S_I^{(h)}/A_I^{(h)}(\gamma)$
    if $A_I^{(h)}(\gamma)>10^{-12}$, else $0$;\quad
    $T^{\max} \leftarrow \max(T^{\max}, T_I^{(h)})$
    \EndFor
    \For{$b=1,\dots,B$}
    \State Draw $\xi_1^{(b)},\dots,\xi_n^{(b)}\overset{\text{i.i.d.}}{\sim}\mathcal{N}(0,1)$;\quad
    $T^{*,(b),\max}\leftarrow-\infty$
    \For{each $I=[s,e]\in\mathcal{I}$, $h=1,\dots,H$}
    \State $\bar{D}_{I}^{(h)} \leftarrow |I|^{-1}\sum_{t=s}^e D_t^{(h)}$;\quad
    $e_{t,I}^{(h,b)} \leftarrow \xi_t^{(b)}(D_t^{(h)}-\bar{D}_{I}^{(h)})$
    for $t=s,\dots,e$
    \State $S_I^{*(h)} \leftarrow \sum_{t=s}^{e}e_{t,I}^{(h,b)}$;\quad
    $Q_I^{*(h)} \leftarrow \sum_{t=s}^{e}\bigl(e_{t,I}^{(h,b)}\bigr)^2-\bigl(S_I^{*(h)}\bigr)^2/|I|$
    \State $A_I^{*(h)}(\gamma) \leftarrow |I|^{1/2-\gamma}\bigl(Q_I^{*(h)}\bigr)^\gamma$
    \State $T_I^{*,(h)} \leftarrow S_I^{*(h)}/A_I^{*(h)}(\gamma)$
    if $A_I^{*(h)}(\gamma)>10^{-12}$, else $0$
    \State $T^{*,(b),\max} \leftarrow \max(T^{*,(b),\max}, T_I^{*,(h)})$
    \EndFor
    \EndFor
    \State $\hat{p} = \bigl(1+\sum_{b=1}^{B}\mathbf{1}\{T^{*,(b),\max}>T^{\max}\}\bigr)/(B+1)$
\end{algorithmic}
\end{algorithm}

\clearpage
\begin{algorithm}[H]
  \caption{Partially-Studentized Scan Test with Gaussian Multiplier Bootstrap (Regime~D: Kernel $\Theta_n$)}
  \label{alg:scan_regimeD}
  \begin{algorithmic}[1]

    \State \textbf{Input:} Forecast errors $\{e^{B}_{t\mid t-h},\,e^{M}_{t\mid t-h}\}$, indexed by target month $t$,
    $t=1,\dots,n$, $h=1,\dots,H$; $L_{\min}$; $L_{\max}$; tuning exponent $\gamma$; bootstrap draws $B$;
    kernel $K$.
    \State \textbf{Output:} $p$-value $\hat{p}$.

    \For{$t=1,\dots,n$, $h=1,\dots,H$}
    \State $D_t^{(h)} \leftarrow \lvert e^{B}_{t\mid t-h}\rvert - \lvert e^{M}_{t\mid t-h}\rvert$
    \EndFor
    \For{$h=1,\dots,H$}
    \State Fit $D_t^{(h)} = c_h + \rho_h D_{t-1}^{(h)} + \varepsilon_{h,t}$
    by least squares; obtain $(\hat{\rho}_h, \hat{\sigma}_h^2)$
    \EndFor
    \State $\hat{a} \leftarrow
    \dfrac{\sum_{h=1}^{H} 4\hat{\rho}_h^2\hat{\sigma}_h^4(1-\hat{\rho}_h)^{-8}}
          {\sum_{h=1}^{H} \hat{\sigma}_h^4(1-\hat{\rho}_h)^{-4}}$;\quad
    $b_n \leftarrow 1.3221\,(\hat{a}\,n)^{1/5}$
    \State Compute $\Theta_n \in \mathbb{R}^{n\times n}$ with
    $\Theta_n(t,u) = K\!\left(\dfrac{t-u}{b_n}\right)$ for all $t,u$
    \State $\mathcal{I} \leftarrow \{[s,e]\subset[n]: L_{\min}\le e-s+1\le L_{\max}\}$;
    \quad $T^{\max} \leftarrow -\infty$
    \For{each $I=[s,e]\in\mathcal{I}$, $h=1,\dots,H$}
    \State $S_I^{(h)} \leftarrow \sum_{t=s}^{e} D_t^{(h)}$;\quad
    $\bar{D}_I^{(h)} \leftarrow S_I^{(h)}/|I|$;\quad
    $Q_I^{(h)} \leftarrow \sum_{t=s}^{e}(D_t^{(h)}-\bar{D}_I^{(h)})^2$
    \State $A_I^{(h)}(\gamma) \leftarrow |I|^{1/2-\gamma}\bigl(Q_I^{(h)}\bigr)^\gamma$
    \State $T_I^{(h)} \leftarrow S_I^{(h)}/A_I^{(h)}(\gamma)$
    if $A_I^{(h)}(\gamma)>10^{-12}$, else $0$;\quad
    $T^{\max} \leftarrow \max(T^{\max}, T_I^{(h)})$
    \EndFor
    \For{$b=1,\dots,B$}
    \State Draw $\xi^{(b)} = (\xi_1^{(b)},\dots,\xi_n^{(b)})^\top \sim \mathcal{N}(0,\Theta_n)$;\quad
    $T^{*,(b),\max}\leftarrow-\infty$
    \For{each $I=[s,e]\in\mathcal{I}$, $h=1,\dots,H$}
    \State $\bar{D}_{I}^{(h)} \leftarrow |I|^{-1}\sum_{t=s}^e D_t^{(h)}$;\quad
    $e_{t,I}^{(h,b)} \leftarrow \xi_t^{(b)}(D_t^{(h)}-\bar{D}_{I}^{(h)})$
    for $t=s,\dots,e$
    \State $S_I^{*(h)} \leftarrow \sum_{t=s}^{e}e_{t,I}^{(h,b)}$;\quad
    $Q_I^{*(h)} \leftarrow \sum_{t=s}^{e}\bigl(e_{t,I}^{(h,b)}\bigr)^2-\bigl(S_I^{*(h)}\bigr)^2/|I|$
    \State $A_I^{*(h)}(\gamma) \leftarrow |I|^{1/2-\gamma}\bigl(Q_I^{*(h)}\bigr)^\gamma$
    \State $T_I^{*,(h)} \leftarrow S_I^{*(h)}/A_I^{*(h)}(\gamma)$
    if $A_I^{*(h)}(\gamma)>10^{-12}$, else $0$
    \State $T^{*,(b),\max} \leftarrow \max(T^{*,(b),\max}, T_I^{*,(h)})$
    \EndFor
    \EndFor
    \State $\hat{p} = \bigl(1+\sum_{b=1}^{B}\mathbf{1}\{T^{*,(b),\max}>T^{\max}\}\bigr)/(B+1)$
\end{algorithmic}
\end{algorithm}

\clearpage

\begin{algorithm}[H]
  \caption{\textsc{FixedSharePath}$(\eta, \alpha, h)$ (Patient Fixed Share)}
  \label{alg:fixedshare_path}
  \begin{algorithmic}[1]
    \State \textbf{Input:} $\eta>0$, $\alpha\in[0,1]$, horizon $h$;
    realized $\{\pi_t\}_{t=t_0}^{n}$; expert forecasts
    $\{\hat{\pi}^{k}_{t\mid t-h}\}$ and $\{\hat{\pi}^{k}_{t+h\mid t}\}$,
    $k=1,\dots,K$.
    \State \textbf{Output:} Combined forecasts
    $\{\hat{\pi}^{\mathrm{FS}}_{t+h\mid t}\}$ and forecast weights
    $\{w^{h,k}_t\}$.
    \State Initialize $v^{h,k}_{t_0} \leftarrow 1/K$ for all $k$.
    \Comment{internal weights over experts, indexed by target}
    \For{$t = t_0, \dots, n$}
    \If{$t < t_0 + h$}
    \State $v^{h,k}_{t} \leftarrow 1/K$ for all $k$
    \Comment{no realized $h$-step loss yet}
    \Else
    \State $\ell^{k}_{t\mid t-h} \leftarrow
    \bigl|\pi_t - \hat{\pi}^{k}_{t\mid t-h}\bigr|$ for all $k$
    \Comment{realized loss of target $t$, formed at origin $t-h$}
    \State $v^{h,k}_{t} \leftarrow \Bigl[(1-\alpha)\,v^{h,k}_{t-1}
    + \tfrac{\alpha}{K}\Bigr]\exp\!\bigl(-\eta\,\ell^{k}_{t\mid t-h}\bigr)$,
    then normalize over $k$
    \Comment{share first, then loss}
    \EndIf
    \State $w^{h,k}_{t} \leftarrow (1-\alpha)^h\,v^{h,k}_{t}
    + \dfrac{1-(1-\alpha)^h}{K}$ for all $k$
    \Comment{propagate $h$ loss-free share steps}
    \State $\hat{\pi}^{\mathrm{FS}}_{t+h\mid t} \leftarrow
    \sum_{k=1}^{K} w^{h,k}_{t}\,\hat{\pi}^{k}_{t+h\mid t}$
    \Comment{forecast at origin $t$ for target $t+h$}
    \EndFor
  \end{algorithmic}
\end{algorithm}

\clearpage

\begin{algorithm}[H]
  \caption{Patient Fixed Share with Rolling-Window Neighbor Selection of $\eta$}
  \label{alg:fs_precompute_neighbor_select_eta}
  \begin{algorithmic}[1]
    \State \textbf{Input:} $\alpha\in[0,1]$, horizon $h$; realized
    $\{\pi_t\}$; expert forecasts $\{\hat{\pi}^{k}_{t\mid t-h}\}$,
    $k=1,\dots,K$; grid
    $\mathcal{H}=\{\eta^{(1)}<\cdots<\eta^{(G)}\}$;
    window length $W$; initial value $\eta_0 = \eta^{(g_0)}$.
    \State \textbf{Output:} Combined forecasts
    $\{\hat{\pi}^{\mathrm{FS}}_{\tau+h\mid\tau}\}$.
    \For{each $\eta^{(g)} \in \mathcal{H}$}
    \State Run \textsc{FixedSharePath}$(\eta^{(g)}, \alpha, h)$
    (Algorithm~\ref{alg:fixedshare_path}) to obtain the path
    $\{\hat{\pi}^{\mathrm{FS},(\eta^{(g)})}_{\tau+h\mid\tau}\}$
    \EndFor
    \For{each forecast origin $\tau$, in order}
    \If{fewer than $W$ realized $h$-step forecast losses exist at $\tau$}
    \State $g_\tau \leftarrow g_0$
    \Else
    \State $\mathcal{I}_\tau \leftarrow \{\max(1,g_{\tau-1}-1),\, g_{\tau-1},\,
      \min(G,g_{\tau-1}+1)\}$
    \Comment{incumbent and its grid neighbors}
    \State $g_\tau \leftarrow \arg\min_{g\in\mathcal{I}_\tau}
      \dfrac{1}{W}\sum_{s=\tau-W+1}^{\tau}
      \bigl|\pi_s - \hat{\pi}^{\mathrm{FS},(\eta^{(g)})}_{s\mid s-h}\bigr|$
    \Comment{ties keep the incumbent}
    \EndIf
    \State $\eta_\tau^\star \leftarrow \eta^{(g_\tau)}$;\quad set
    $\hat{\pi}^{\mathrm{FS}}_{\tau+h\mid\tau} \leftarrow
    \hat{\pi}^{\mathrm{FS},(\eta_\tau^\star)}_{\tau+h\mid\tau}$
    \EndFor
  \end{algorithmic}
\end{algorithm}

\clearpage

\begin{algorithm}[H]
  \caption{\textsc{FrozenMicroCoalition}$(S, h)$}
  \label{alg:shapley_micro}
  \begin{algorithmic}[1]
    \State \textbf{Input:} Coalition $S \subseteq \mathcal{G}$ of micro feature
    groups; horizon $h$; rolling window $L = 120$; grouped features
    $\{X_t^h[g]\}$; frozen hyperparameter selections $\{\Gamma^*_{\tau}\}$ from
    Algorithm~\ref{alg:ml_deploy}; inflation $\{\pi_t\}$; first usable month
    $\underline{t}$; origin range $[c_{\mathrm{lo}}, c_{\mathrm{hi}}]$.
    \State \textbf{Output:} Micro coalition forecasts
    $\{\hat{\pi}^{S}_{t+h\mid t}\}$ (none when $S = \varnothing$).
    \If{$S = \varnothing$}
    \State \textbf{return} no forecasts
    \Comment{no micro expert; the combination is macro + univariate only}
    \EndIf
    \For{$c = c_{\mathrm{lo}}, \ldots, c_{\mathrm{hi}}$}
    \State $s \leftarrow \max\{c - (L-1),\; \underline{t}\}$
    \State Train $\mathrm{ML}(\Gamma^*_{c})$ on
    $\bigl\{(X_{j-h}^h[S],\,\pi_j)\bigr\}_{j=s}^{c}$
    \Comment{frozen $\Gamma^*_{c}$, feature subset $S$, no re-tuning}
    \State $\hat{\pi}^{S}_{c+h\mid c} \leftarrow
    \mathrm{ML}(\Gamma^*_{c})\bigl(X_c^h[S]\bigr)$
    \Comment{known at origin $c$}
    \EndFor
  \end{algorithmic}
\end{algorithm}

\begin{algorithm}[H]
  \caption{\textsc{GroupShapley}$(\mathcal{G}, h)$ over the micro feature groups}
  \label{alg:group_shapley}
  \begin{algorithmic}[1]
    \State \textbf{Input:} Groups $\mathcal{G}$ with $|\mathcal{G}| = G$;
    horizon $h$; the macro and univariate expert forecasts; rates
    $(\eta, \alpha)$; evaluation window $\mathcal{E}$ (pre- or post-2020).
    \State \textbf{Output:} Per-horizon Shapley values
    $\{\phi_{g,h}\}_{g \in \mathcal{G}}$.
    \State $\mathrm{MAE}_{\mathrm{baseline},h} \leftarrow
    \dfrac{1}{|\mathcal{E}|}\sum_{t \in \mathcal{E}}
    \bigl|\pi_t - \hat{\pi}^{\mathrm{FS},\varnothing}_{t\mid t-h}\bigr|$,
    where $\hat{\pi}^{\mathrm{FS},\varnothing}$ runs
    Algorithm~\ref{alg:fixedshare_path} on $\{\text{macro}, \text{univariate}\}$
    \Comment{2-expert benchmark}
    \For{each nonempty $S \subseteq \mathcal{G}$}
    \Comment{$2^{G}-1$ coalitions; $S = \mathcal{G}$ is the full 3-expert combination}
    \State $\{\hat{\pi}^{S}_{\cdot\mid\cdot}\} \leftarrow
    \textsc{FrozenMicroCoalition}(S, h)$
    \State $\{\hat{\pi}^{\mathrm{FS},S}_{t+h\mid t}\} \leftarrow$
    Algorithm~\ref{alg:fixedshare_path} on
    $\{\text{micro}(S), \text{macro}, \text{univariate}\}$
    \State $\mathrm{MAE}_{S,h} \leftarrow
    \dfrac{1}{|\mathcal{E}|}\sum_{t \in \mathcal{E}}
    \bigl|\pi_t - \hat{\pi}^{\mathrm{FS},S}_{t\mid t-h}\bigr|$
    \State $v_h(S) \leftarrow 100 \times
    \dfrac{\mathrm{MAE}_{\mathrm{baseline},h} - \mathrm{MAE}_{S,h}}
    {\mathrm{MAE}_{\mathrm{baseline},h}}$
    \Comment{signed \% MAE reduction vs.\ the 2-expert benchmark}
    \EndFor
    \State $v_h(\varnothing) \leftarrow 0$
    \For{each group $g \in \mathcal{G}$}
    \State $\phi_{g,h} \leftarrow
    \displaystyle\sum_{S \subseteq \mathcal{G}\setminus\{g\}}
    \frac{|S|!\,(G-|S|-1)!}{G!}\,
    \bigl[v_h(S \cup \{g\}) - v_h(S)\bigr]$
    \EndFor
    \State $\sum_{g}\phi_{g,h} = v_h(\mathcal{G})$
    \Comment{adding-up: the full 3- vs.\ 2-expert \% gain}
  \end{algorithmic}
\end{algorithm}

\clearpage
\section{Robustness: Alternative Encodings, Aggregations, and Data Filtering}\label{app:robustness_alt}

The baseline micro encoding used in the main text retains 11 features per
COICOP1 category (fraction of zero price changes, nine deciles of
the non-zero price-change distribution, and its mean), aggregates within
each category using CPI weights, and uses the full sales-included price
quotes, for a total of $11\times 11 = 121$ statistics per month. This
appendix documents four robustness checks against this baseline.
Appendix~\ref{app:augmented_encoding} replaces the baseline encoding with
the full augmented encoding (29 features per category, $11\times 29 = 319$
statistics per month) while keeping CPI-weighted aggregation and
sales-included quotes. Appendix~\ref{app:unweighted_encoding} keeps the
11-feature encoding but replaces CPI-weighted aggregation with
equal-weighted (unweighted) aggregation across products within each COICOP1
category. Appendix~\ref{app:nosales_encoding} keeps the 11-feature encoding and
CPI-weighted aggregation but excludes ONS-sale-flagged price quotes from
the underlying micro panel. Appendix~\ref{app:nsa_encoding} likewise keeps the
encoding and aggregation but replaces temporary sale prices with the regular
price inferred by the \citet{NakamuraSteinsson2008} filter (parameterisation
A) before computing the per-category statistics. Each appendix reports the consequences of the
alternative choice at the
standalone micro stage and inside the three-expert online-learning
combination.

\subsection{Augmented Encoding vs Baseline}
\label{app:augmented_encoding}

The full augmented encoding, as described in
Subsection~\ref{sec:encoding_training_tuning}, adds the variance (with Bessel's
correction), the standardised third
and fifth central moments and the excess kurtosis of the non-zero price-change
distribution, the nine deciles of the time-since-last-change distribution, and
the five moments of the same distribution, raising the per-category feature
count from 11 to 29 and the per-month feature count from 121 to 319. This is
the same set of micro variables as the six-group Shapley decomposition
(Appendix~\ref{app:6_group_shapley}), which spans exactly this augmented
feature set; the decomposition's coalition refits reuse the baseline expert's
frozen tuning, whereas the augmented expert here is tuned on its own
cached real-time schedule.

\subsubsection{Standalone Micro Forecast}
\label{app:augmented_standalone}

Table~\ref{tab:augmented_xgb_micro_vs_sarima_two_windows} reports the
percentage MAE improvement of the standalone micro forecast relative to the univariate forecast
under both encodings, for the pre-2020 (2015--2019) and post-2020
(2020--2024) windows and all horizons $h=1,\ldots,24$. Averaging within
four horizon groups ($h=1$ to $6$, $7$ to $12$, $13$ to $18$, and $19$ to
$24$), the augmented encoding's percentage MAE improvements relative to
the univariate forecast are $-35.1\%$, $-25.6\%$, $-20.6\%$, and $-27.1\%$ in the
pre-2020 window and $-0.4\%$, $5.4\%$, $-3.4\%$, and $22.6\%$ in the
post-2020 window. The corresponding gaps relative to the baseline
encoding (augmented minus baseline) are $-5.7$, $-6.8$, $-3.3$, and
$+5.5$ percentage points pre-2020 (averaging a $2.5$ percentage-point
cost) and $-9.0$, $-5.4$, $-11.9$, and $-2.7$ percentage points post-2020
(averaging a $7.3$ percentage-point cost). The augmented encoding
therefore underperforms the baseline at the standalone stage in both
windows, with the cost now concentrated in the post-COVID period, where
the added higher-moment and time-since blocks are noisiest, at short and
medium-long horizons. This is consistent with overfitting risk rising
in the number of predictors when the univariate forecast itself remains competitive, as
discussed in Subsection~\ref{sec:encoding_training_tuning}.

\begin{table}[htbp!]
\centering
\caption{Micro Forecast Percentage MAE Improvement Relative to the Univariate Forecast: Baseline versus Augmented Encoding}
\label{tab:augmented_xgb_micro_vs_sarima_two_windows}
\begin{tabular}{crrrr}
\toprule
 & \multicolumn{2}{c}{Pre-2020} & \multicolumn{2}{c}{Post-2020} \\
\cmidrule(lr){2-3} \cmidrule(lr){4-5}
$h$ & Baseline & Augmented & Baseline & Augmented \\
\midrule
1 & -35.67 & -49.21 & 8.49 & 14.85 \\
2 & -28.92 & -51.74 & 4.46 & -3.26 \\
3 & -29.10 & -37.63 & 2.14 & 1.11 \\
4 & -30.97 & -28.42 & 9.67 & -7.85 \\
5 & -32.52 & -17.42 & 6.73 & -8.23 \\
6 & -19.51 & -26.05 & 20.31 & 0.70 \\
7 & -17.33 & -19.33 & 13.00 & -0.44 \\
8 & -17.05 & -25.52 & 8.63 & 8.03 \\
9 & -27.49 & -39.07 & 14.63 & 8.33 \\
10 & -18.19 & -21.73 & 10.43 & 8.84 \\
11 & -19.84 & -24.42 & 16.04 & 10.14 \\
12 & -12.81 & -23.51 & 1.79 & -2.67 \\
13 & -10.32 & -7.69 & 12.95 & -0.49 \\
14 & -0.13 & -13.61 & 15.28 & -3.30 \\
15 & -22.62 & -28.94 & 14.27 & 0.55 \\
16 & -21.06 & -23.27 & 11.03 & -10.65 \\
17 & -31.47 & -26.04 & -3.33 & -3.28 \\
18 & -18.48 & -23.92 & 0.82 & -3.15 \\
19 & -28.04 & -17.37 & 9.07 & 4.78 \\
20 & -26.65 & -35.14 & 9.34 & 10.56 \\
21 & -32.13 & -32.66 & 35.42 & 31.01 \\
22 & -44.26 & -33.76 & 32.27 & 31.43 \\
23 & -36.87 & -28.26 & 39.33 & 35.81 \\
24 & -27.88 & -15.45 & 26.46 & 21.85 \\
\bottomrule
\end{tabular}
\smallskip
\begin{minipage}{0.92\textwidth}\setstretch{1.0}
\footnotesize\textit{Notes:} Each cell reports the percentage MAE improvement of the standalone micro forecast relative to the univariate forecast at horizon $h$. The baseline uses the 11-feature encoding; the augmented robustness check uses the 29-feature encoding (adding the higher moments of the price-change distribution and the time-since-last-change block), the same feature universe as the six-group Shapley decomposition.
Positive values indicate lower MAE than the univariate forecast. Evaluation
windows are 2015--2019 and 2020--2024.
\end{minipage}
\end{table}

\subsubsection{Three-Expert Online Learning}
\label{app:augmented_3expert}

Table~\ref{tab:augmented_3expert_fs_vs_sarima_two_windows} reports
the percentage MAE improvement of the three-expert patient Fixed Share
combination (fixed $\alpha=0.02$, $\eta=0.5$, the main-text baseline)
relative to the univariate forecast, with the micro model trained on the baseline and
augmented encodings respectively, holding the macro and univariate forecasts fixed.
Averaging within the same four horizon groups, the augmented-encoding
three-expert combination's percentage MAE improvements are $0.8\%$, $-2.5\%$,
$-2.8\%$, and $-1.7\%$ in the pre-2020 window and $8.0\%$, $12.1\%$,
$8.1\%$, and $23.3\%$ in the post-2020 window. The corresponding gaps
relative to the baseline-encoding three-expert combination are $-0.4$, $-1.1$,
$-1.1$, and $+1.3$ percentage points pre-2020 (averaging a $0.3$
percentage-point cost) and $-4.0$, $-1.6$, $-3.8$, and $-1.9$
percentage points post-2020 (averaging a $2.8$ percentage-point cost).
The standalone gap therefore shrinks from $7.3$ to $2.8$ percentage
points once the augmented micro forecast is combined with the macro
and univariate forecasts through the Fixed Share combination. The combination is robust to
encoding choice because the algorithm reallocates weight away from the
data-rich expert when its realized performance deteriorates and keeps a
meaningful share on the univariate benchmark in the pre-COVID window.

\begin{table}[htbp!]
\centering
\caption{Three-Expert Fixed Share: Percent MAE Improvement Relative to the Univariate Forecast under Baseline versus Augmented Encoding of Micro Features ($\alpha=0.02$, $\eta=0.5$)}
\label{tab:augmented_3expert_fs_vs_sarima_two_windows}
\begin{tabular}{crrrr}
\toprule
 & \multicolumn{2}{c}{Pre-2020} & \multicolumn{2}{c}{Post-2020} \\
\cmidrule(lr){2-3} \cmidrule(lr){4-5}
$h$ & Baseline & Augmented & Baseline & Augmented \\
\midrule
1 & -1.58 & -0.84 & 12.03 & 14.19 \\
2 & 1.83 & -0.29 & 7.64 & 5.52 \\
3 & 4.29 & 2.68 & 7.40 & 5.86 \\
4 & 0.66 & 0.85 & 13.17 & 5.78 \\
5 & -0.14 & 2.40 & 13.72 & 7.63 \\
6 & 2.37 & 0.26 & 17.89 & 8.76 \\
7 & -0.08 & -0.38 & 15.63 & 11.38 \\
8 & -1.64 & -3.76 & 14.73 & 14.41 \\
9 & -3.29 & -5.03 & 18.23 & 17.20 \\
10 & -0.93 & -3.97 & 15.22 & 15.78 \\
11 & -1.85 & -0.14 & 15.89 & 13.51 \\
12 & -0.32 & -1.46 & 2.81 & 0.43 \\
13 & 2.33 & 2.56 & 12.26 & 8.38 \\
14 & 4.33 & 0.03 & 15.49 & 6.99 \\
15 & -6.69 & -5.66 & 13.89 & 10.19 \\
16 & -6.57 & -7.24 & 10.52 & 4.60 \\
17 & -5.06 & -5.56 & 8.27 & 7.99 \\
18 & 1.55 & -1.08 & 10.88 & 10.52 \\
19 & 1.15 & 0.16 & 13.64 & 11.18 \\
20 & -5.52 & -8.21 & 15.88 & 13.97 \\
21 & -4.20 & -6.14 & 30.03 & 26.94 \\
22 & -8.20 & -4.51 & 30.22 & 29.06 \\
23 & -4.08 & 0.96 & 35.30 & 34.03 \\
24 & 3.30 & 7.75 & 26.53 & 24.82 \\
\bottomrule
\end{tabular}
\smallskip
\begin{minipage}{0.92\textwidth}\setstretch{1.0}
\footnotesize\textit{Notes:} Each cell reports the percentage MAE improvement of the three-expert patient fixed-share combination (fixed $\alpha=0.02$, $\eta=0.5$) relative to the univariate forecast at horizon $h$, with the micro expert trained on the 11-feature baseline versus the 29-feature augmented encoding, holding the macro and univariate forecasts fixed.
Positive values indicate lower MAE than the univariate forecast. Evaluation
windows are 2015--2019 and 2020--2024.
\end{minipage}
\end{table}

\subsection{Unweighted Aggregation vs Baseline}
\label{app:unweighted_encoding}

The unweighted aggregation keeps the 11-feature baseline encoding but
constructs each per-category statistic as an equal-weighted average
across products rather than weighting by CPI shares. This isolates the
contribution of the CPI-share weighting choice in the baseline.

\subsubsection{Standalone Micro Forecast}
\label{app:unweighted_standalone}

Table~\ref{tab:unweighted_xgb_micro_vs_sarima_two_windows} reports the
percentage MAE improvement of the standalone micro forecast relative to the univariate forecast
under CPI-weighted (baseline) and unweighted aggregation, for all
horizons $h=1,\ldots,24$. Averaging within four horizon groups
($h=1$ to $6$, $7$ to $12$, $13$ to $18$, and $19$ to $24$), the
unweighted aggregation's percentage MAE improvements relative to the univariate forecast
are $-43.1\%$, $-22.4\%$, $-15.2\%$, and $-23.9\%$ in the pre-2020
window and $-1.9\%$, $7.2\%$, $8.4\%$, and $26.4\%$ in the post-2020
window. The corresponding gaps relative to the baseline (unweighted
minus baseline) are $-13.6$, $-3.6$, $+2.1$, and $+8.7$ percentage points
pre-2020 (averaging a $1.6$ percentage-point cost) and $-10.6$, $-3.5$,
$-0.1$, and $+1.1$ percentage points post-2020 (averaging a $3.3$
percentage-point cost). Unweighted aggregation therefore underperforms
CPI-weighted aggregation on average in both windows, with the cost
concentrated at short horizons, where CPI shares concentrate on the most
volatile categories, and a partial offset at the longest horizons.

\begin{table}[htbp!]
\centering
\caption{Micro Forecast Percentage MAE Improvement Relative to the Univariate Forecast: Baseline versus Equal-Weighted Aggregation}
\label{tab:unweighted_xgb_micro_vs_sarima_two_windows}
\begin{tabular}{crrrr}
\toprule
 & \multicolumn{2}{c}{Pre-2020} & \multicolumn{2}{c}{Post-2020} \\
\cmidrule(lr){2-3} \cmidrule(lr){4-5}
$h$ & Baseline & Unweighted & Baseline & Unweighted \\
\midrule
1 & -35.67 & -62.02 & 8.49 & -1.74 \\
2 & -28.92 & -58.46 & 4.46 & -2.48 \\
3 & -29.10 & -39.90 & 2.14 & -0.98 \\
4 & -30.97 & -34.31 & 9.67 & -8.29 \\
5 & -32.52 & -28.48 & 6.73 & -1.42 \\
6 & -19.51 & -35.14 & 20.31 & 3.39 \\
7 & -17.33 & -27.21 & 13.00 & 10.15 \\
8 & -17.05 & -20.85 & 8.63 & 7.05 \\
9 & -27.49 & -27.80 & 14.63 & 14.62 \\
10 & -18.19 & -15.65 & 10.43 & 4.13 \\
11 & -19.84 & -26.44 & 16.04 & 10.33 \\
12 & -12.81 & -16.23 & 1.79 & -2.91 \\
13 & -10.32 & -6.35 & 12.95 & 4.71 \\
14 & -0.13 & -1.56 & 15.28 & 11.37 \\
15 & -22.62 & -12.72 & 14.27 & 8.40 \\
16 & -21.06 & -24.60 & 11.03 & 9.78 \\
17 & -31.47 & -27.43 & -3.33 & 8.97 \\
18 & -18.48 & -18.75 & 0.82 & 7.37 \\
19 & -28.04 & -14.36 & 9.07 & 12.16 \\
20 & -26.65 & -22.72 & 9.34 & 11.21 \\
21 & -32.13 & -38.05 & 35.42 & 35.12 \\
22 & -44.26 & -25.26 & 32.27 & 31.55 \\
23 & -36.87 & -24.48 & 39.33 & 38.22 \\
24 & -27.88 & -18.61 & 26.46 & 30.44 \\
\bottomrule
\end{tabular}
\smallskip
\begin{minipage}{0.92\textwidth}\setstretch{1.0}
\footnotesize\textit{Notes:} Each cell reports the percentage MAE improvement
of the standalone micro forecast relative to the univariate forecast at horizon $h$. The
baseline uses CPI expenditure shares when aggregating within categories; the
equal-weighted robustness check gives each product equal weight within the
category. Positive values indicate lower MAE for the micro forecast than for
the univariate forecast. Evaluation windows are 2015--2019 and 2020--2024.
\end{minipage}
\end{table}

\subsubsection{Three-Expert Online Learning}
\label{app:unweighted_3expert}

Table~\ref{tab:unweighted_3expert_fs_vs_sarima_two_windows}
reports the percentage MAE improvement of the three-expert patient
Fixed Share combination (fixed $\alpha=0.02$, $\eta=0.5$, the main-text
baseline) relative to the univariate forecast, with the micro model trained on
CPI-weighted (baseline) and unweighted aggregations respectively,
holding the macro and univariate forecasts fixed. Averaging within the same four
horizon groups, the unweighted three-expert combination's percentage MAE
improvements are $-0.7\%$, $-1.6\%$, $-1.2\%$, and $-1.4\%$ in the
pre-2020 window and $9.4\%$, $12.9\%$, $11.3\%$, and $25.5\%$ in the
post-2020 window. The corresponding gaps relative to the baseline
three-expert combination are $-2.0$, $-0.3$, $+0.5$, and $+1.6$ percentage
points pre-2020 (averaging to zero) and $-2.6$, $-0.8$, $-0.5$, and $+0.2$
percentage points post-2020 (averaging a $0.9$ percentage-point cost). The
standalone gap shrinks to below one percentage point on average once
the equal-share variant is combined with the macro and univariate forecasts through
the Fixed Share combination, which is therefore robust to
the aggregation-weighting choice.

\begin{table}[htbp!]
\centering
\caption{Three-Expert Fixed Share: Percent MAE Improvement Relative to the Univariate Forecast under Baseline versus Equal-Weighted Aggregation of Micro Features ($\alpha=0.02$, $\eta=0.5$)}
\label{tab:unweighted_3expert_fs_vs_sarima_two_windows}
\begin{tabular}{crrrr}
\toprule
 & \multicolumn{2}{c}{Pre-2020} & \multicolumn{2}{c}{Post-2020} \\
\cmidrule(lr){2-3} \cmidrule(lr){4-5}
$h$ & Baseline & Unweighted & Baseline & Unweighted \\
\midrule
1 & -1.58 & -1.11 & 12.03 & 9.54 \\
2 & 1.83 & -4.37 & 7.64 & 6.25 \\
3 & 4.29 & 2.81 & 7.40 & 7.63 \\
4 & 0.66 & 0.10 & 13.17 & 7.57 \\
5 & -0.14 & -0.49 & 13.72 & 10.32 \\
6 & 2.37 & -1.24 & 17.89 & 14.88 \\
7 & -0.08 & -2.25 & 15.63 & 13.24 \\
8 & -1.64 & -1.83 & 14.73 & 14.76 \\
9 & -3.29 & -2.21 & 18.23 & 19.77 \\
10 & -0.93 & -2.51 & 15.22 & 16.38 \\
11 & -1.85 & -1.59 & 15.89 & 13.60 \\
12 & -0.32 & 0.56 & 2.81 & -0.12 \\
13 & 2.33 & 2.05 & 12.26 & 8.40 \\
14 & 4.33 & 5.75 & 15.49 & 11.44 \\
15 & -6.69 & -3.83 & 13.89 & 12.33 \\
16 & -6.57 & -4.60 & 10.52 & 10.44 \\
17 & -5.06 & -5.20 & 8.27 & 11.79 \\
18 & 1.55 & -1.07 & 10.88 & 13.66 \\
19 & 1.15 & 0.57 & 13.64 & 14.85 \\
20 & -5.52 & -6.73 & 15.88 & 14.65 \\
21 & -4.20 & -6.06 & 30.03 & 30.53 \\
22 & -8.20 & -3.33 & 30.22 & 29.80 \\
23 & -4.08 & 1.13 & 35.30 & 34.91 \\
24 & 3.30 & 6.25 & 26.53 & 27.97 \\
\bottomrule
\end{tabular}
\smallskip
\begin{minipage}{0.92\textwidth}\setstretch{1.0}
\footnotesize\textit{Notes:} Each cell reports the percentage MAE improvement
of the three-expert patient fixed-share combination (fixed $\alpha=0.02$,
$\eta=0.5$) relative to the univariate forecast at horizon $h$, with the micro
expert trained on the baseline (CPI-expenditure-weighted) versus equal-weighted aggregation, holding the macro and univariate forecasts
fixed. Positive values indicate lower MAE than the univariate forecast.
Evaluation windows are 2015--2019 and 2020--2024.
\end{minipage}
\end{table}

\subsection{Sales-Excluded Micro vs Baseline}
\label{app:nosales_encoding}

The sales-excluded variant keeps the 11-feature encoding and CPI-weighted
aggregation but drops ONS-sale-flagged price quotes from the underlying
micro panel before computing the per-category statistics. This isolates
the contribution of temporary-sale price changes to the baseline forecast.

\subsubsection{Standalone Micro Forecast}
\label{app:nosales_standalone}

Table~\ref{tab:nosales_xgb_micro_vs_sarima_two_windows} reports the
percentage MAE improvement of the standalone micro forecast relative to the univariate forecast
under sales-included (baseline) and sales-excluded micro input, for all
horizons $h=1,\ldots,24$. Averaging within four horizon groups ($h=1$ to
$6$, $7$ to $12$, $13$ to $18$, and $19$ to $24$), the sales-excluded
variant's percentage MAE improvements relative to the univariate forecast are $-43.6\%$,
$-18.2\%$, $-16.8\%$, and $-10.4\%$ in the pre-2020 window and $-10.3\%$,
$-1.7\%$, $7.3\%$, and $22.5\%$ in the post-2020 window. The
corresponding gaps relative to the baseline (sales-excluded minus
baseline) are $-14.1$, $+0.6$, $+0.6$, and $+22.2$ percentage points
pre-2020 (averaging a $2.3$ percentage-point gain) and $-19.0$, $-12.5$,
$-1.2$, and $-2.8$ percentage points post-2020 (averaging an $8.9$
percentage-point cost). Dropping sale-flagged quotes therefore helps the
standalone forecast on average in the stable pre-COVID period, driven by
the longest horizons, but hurts it across all horizon groups in the
post-COVID period, most severely at short horizons. Sale-driven price changes
add predictive signal during the inflation surge, while in stable times
they primarily inject noise that the baseline encoding has to absorb.

\begin{table}[htbp!]
\centering
\caption{Micro Forecast Percentage MAE Improvement Relative to the Univariate Forecast: Baseline versus Sales-Excluded Input}
\label{tab:nosales_xgb_micro_vs_sarima_two_windows}
\begin{tabular}{crrrr}
\toprule
 & \multicolumn{2}{c}{Pre-2020} & \multicolumn{2}{c}{Post-2020} \\
\cmidrule(lr){2-3} \cmidrule(lr){4-5}
$h$ & Baseline & Sales-Excluded & Baseline & Sales-Excluded \\
\midrule
1 & -35.67 & -35.11 & 8.49 & -0.90 \\
2 & -28.92 & -54.83 & 4.46 & -4.37 \\
3 & -29.10 & -41.51 & 2.14 & -11.52 \\
4 & -30.97 & -46.75 & 9.67 & -15.50 \\
5 & -32.52 & -41.77 & 6.73 & -17.14 \\
6 & -19.51 & -41.49 & 20.31 & -12.62 \\
7 & -17.33 & -27.53 & 13.00 & 6.65 \\
8 & -17.05 & -23.47 & 8.63 & 3.28 \\
9 & -27.49 & -34.67 & 14.63 & 9.03 \\
10 & -18.19 & -11.75 & 10.43 & -10.30 \\
11 & -19.84 & -13.76 & 16.04 & -5.61 \\
12 & -12.81 & 2.21 & 1.79 & -13.47 \\
13 & -10.32 & -14.37 & 12.95 & 4.25 \\
14 & -0.13 & -6.12 & 15.28 & 10.43 \\
15 & -22.62 & -26.52 & 14.27 & 7.94 \\
16 & -21.06 & -26.98 & 11.03 & 7.79 \\
17 & -31.47 & -22.28 & -3.33 & 3.63 \\
18 & -18.48 & -4.39 & 0.82 & 9.92 \\
19 & -28.04 & -6.07 & 9.07 & 10.54 \\
20 & -26.65 & -17.87 & 9.34 & 7.45 \\
21 & -32.13 & -10.19 & 35.42 & 31.75 \\
22 & -44.26 & -3.72 & 32.27 & 26.02 \\
23 & -36.87 & -10.60 & 39.33 & 35.48 \\
24 & -27.88 & -13.89 & 26.46 & 23.91 \\
\bottomrule
\end{tabular}
\smallskip
\begin{minipage}{0.92\textwidth}\setstretch{1.0}
\footnotesize\textit{Notes:} Each cell reports the percentage MAE improvement
of the standalone micro forecast relative to the univariate forecast at horizon $h$. The
baseline uses sales-included price quotes; the sales-excluded robustness check
drops ONS-sale-flagged quotes before computing the micro features. Positive
values indicate lower MAE for the micro forecast than for the univariate forecast. Evaluation
windows are 2015--2019 and 2020--2024.
\end{minipage}
\end{table}

\subsubsection{Three-Expert Online Learning}
\label{app:nosales_3expert}

Table~\ref{tab:nosales_3expert_fs_vs_sarima_two_windows} reports
the percentage MAE improvement of the three-expert patient Fixed Share
combination (fixed $\alpha=0.02$, $\eta=0.5$, the main-text baseline)
relative to the univariate forecast, with the micro model trained on sales-included
(baseline) and sales-excluded micro input respectively, holding the macro
and univariate forecasts fixed. Averaging within the same four horizon groups,
the sales-excluded three-expert combination's percentage MAE improvements are
$-2.5\%$, $-0.9\%$, $-3.5\%$, and $+1.9\%$ in the pre-2020 window and
$7.8\%$, $10.4\%$, $10.5\%$, and $23.7\%$ in the post-2020 window. The
corresponding gaps relative to the baseline three-expert combination are
$-3.7$, $+0.5$, $-1.8$, and $+4.8$ percentage points pre-2020 (averaging
a $0.1$ percentage-point cost) and $-4.2$, $-3.4$, $-1.4$, and $-1.5$
percentage points post-2020 (averaging a $2.6$ percentage-point cost).
The post-COVID standalone gap therefore shrinks from $8.9$ to $2.6$
percentage points once the sales-excluded micro forecast is combined with
the macro and univariate forecasts through the Fixed Share combination. As with
the other robustness checks, the combination absorbs much of the
standalone difference: when the sales-excluded micro expert weakens, the
algorithm reallocates weight toward the macro and univariate forecasts.

\begin{table}[htbp!]
\centering
\caption{Three-Expert Fixed Share: Percent MAE Improvement Relative to the Univariate Forecast under Baseline versus Sales-Excluded Micro Input ($\alpha=0.02$, $\eta=0.5$)}
\label{tab:nosales_3expert_fs_vs_sarima_two_windows}
\begin{tabular}{crrrr}
\toprule
 & \multicolumn{2}{c}{Pre-2020} & \multicolumn{2}{c}{Post-2020} \\
\cmidrule(lr){2-3} \cmidrule(lr){4-5}
$h$ & Baseline & Sales-Excluded & Baseline & Sales-Excluded \\
\midrule
1 & -1.58 & -0.54 & 12.03 & 8.76 \\
2 & 1.83 & -4.38 & 7.64 & 5.02 \\
3 & 4.29 & -1.16 & 7.40 & 4.76 \\
4 & 0.66 & -4.31 & 13.17 & 7.88 \\
5 & -0.14 & -2.17 & 13.72 & 9.74 \\
6 & 2.37 & -2.38 & 17.89 & 10.52 \\
7 & -0.08 & -1.65 & 15.63 & 12.34 \\
8 & -1.64 & -3.20 & 14.73 & 13.61 \\
9 & -3.29 & -3.42 & 18.23 & 16.54 \\
10 & -0.93 & 0.29 & 15.22 & 11.65 \\
11 & -1.85 & -0.23 & 15.89 & 8.72 \\
12 & -0.32 & 2.96 & 2.81 & -0.73 \\
13 & 2.33 & 0.14 & 12.26 & 9.60 \\
14 & 4.33 & -0.69 & 15.49 & 9.87 \\
15 & -6.69 & -9.64 & 13.89 & 12.23 \\
16 & -6.57 & -7.70 & 10.52 & 7.63 \\
17 & -5.06 & -4.40 & 8.27 & 10.56 \\
18 & 1.55 & 1.30 & 10.88 & 13.24 \\
19 & 1.15 & 0.40 & 13.64 & 14.61 \\
20 & -5.52 & -0.49 & 15.88 & 14.12 \\
21 & -4.20 & 1.38 & 30.03 & 28.54 \\
22 & -8.20 & 2.94 & 30.22 & 27.10 \\
23 & -4.08 & 2.52 & 35.30 & 33.13 \\
24 & 3.30 & 4.73 & 26.53 & 24.97 \\
\bottomrule
\end{tabular}
\smallskip
\begin{minipage}{0.92\textwidth}\setstretch{1.0}
\footnotesize\textit{Notes:} Each cell reports the percentage MAE improvement
of the three-expert patient fixed-share combination (fixed $\alpha=0.02$,
$\eta=0.5$) relative to the univariate forecast at horizon $h$, with the micro
expert trained on the sales-included (baseline) versus sales-excluded micro input, holding the macro and univariate forecasts
fixed. Positive values indicate lower MAE than the univariate forecast.
Evaluation windows are 2015--2019 and 2020--2024.
\end{minipage}
\end{table}

\subsection{Sale-Filtered Micro (Nakamura--Steinsson Filter A) vs Baseline}
\label{app:nsa_encoding}

The sale-filtered variant keeps the 11-feature encoding and CPI-weighted
aggregation but, instead of dropping sale-flagged quotes, replaces temporary
sale prices with the regular price inferred by the
\citet{NakamuraSteinsson2008} filter under its baseline parameterisation
(Filter A of Table~\ref{tab:ns_params}; Appendix~\ref{app:NS_filter} describes
the algorithm). Whereas the sales-excluded variant of
Appendix~\ref{app:nosales_encoding} removes the affected quotes, the filter
keeps them at an imputed regular price, so the comparison isolates the
consequences of imputing rather than deleting sale-driven price changes.
Because the filter's rebound and modal-price rules condition on prices up to
$LL$ months after $t$, the filtered panel is a regular-price measurement
exercise rather than a strictly real-time forecasting input.

\subsubsection{Standalone Micro Forecast}
\label{app:nsa_standalone}

Table~\ref{tab:nsa_xgb_micro_vs_sarima_two_windows} reports the percentage MAE
improvement of the standalone micro forecast relative to the univariate
forecast under sales-included (baseline) and sale-filtered micro input, for
all horizons $h=1,\ldots,24$. Averaging within four horizon groups ($h=1$ to
$6$, $7$ to $12$, $13$ to $18$, and $19$ to $24$), the sale-filtered variant's
percentage MAE improvements relative to the univariate forecast are $-36.2\%$,
$-42.1\%$, $-26.0\%$, and $-29.4\%$ in the pre-2020 window and $-1.8\%$,
$-1.4\%$, $6.7\%$, and $24.3\%$ in the post-2020 window. The corresponding
gaps relative to the baseline (sale-filtered minus baseline) are $-6.8$,
$-23.3$, $-8.6$, and $+3.2$ percentage points pre-2020 (averaging an $8.9$
percentage-point cost) and $-10.4$, $-12.1$, $-1.8$, and $-1.0$ percentage
points post-2020 (averaging a $6.3$ percentage-point cost). Unlike the
sales-excluded variant, which helps the standalone forecast in the stable
pre-COVID period, the filter hurts it in both windows: the imputed regular
prices smooth away sale-driven variation that carries signal after 2020, and
the imputation itself injects noise even in stable times.

\begin{table}[htbp!]
\centering
\caption{Micro Forecast Percentage MAE Improvement Relative to the Univariate Forecast: Baseline versus Nakamura--Steinsson Filter-A Sales Treatment}
\label{tab:nsa_xgb_micro_vs_sarima_two_windows}
\begin{tabular}{crrrr}
\toprule
 & \multicolumn{2}{c}{Pre-2020} & \multicolumn{2}{c}{Post-2020} \\
\cmidrule(lr){2-3} \cmidrule(lr){4-5}
$h$ & Baseline & Sale-Filtered & Baseline & Sale-Filtered \\
\midrule
1 & -35.67 & -34.92 & 8.49 & 6.77 \\
2 & -28.92 & -39.51 & 4.46 & -7.33 \\
3 & -29.10 & -26.05 & 2.14 & 3.57 \\
4 & -30.97 & -38.21 & 9.67 & -8.32 \\
5 & -32.52 & -34.63 & 6.73 & -5.31 \\
6 & -19.51 & -44.06 & 20.31 & 0.06 \\
7 & -17.33 & -50.29 & 13.00 & 4.85 \\
8 & -17.05 & -47.84 & 8.63 & 3.08 \\
9 & -27.49 & -52.95 & 14.63 & 5.89 \\
10 & -18.19 & -40.22 & 10.43 & -11.68 \\
11 & -19.84 & -30.16 & 16.04 & 4.99 \\
12 & -12.81 & -31.29 & 1.79 & -15.45 \\
13 & -10.32 & -25.54 & 12.95 & 8.37 \\
14 & -0.13 & -15.77 & 15.28 & 10.60 \\
15 & -22.62 & -31.96 & 14.27 & 0.97 \\
16 & -21.06 & -37.25 & 11.03 & 3.36 \\
17 & -31.47 & -24.49 & -3.33 & 6.75 \\
18 & -18.48 & -20.91 & 0.82 & 10.27 \\
19 & -28.04 & -24.85 & 9.07 & 5.90 \\
20 & -26.65 & -30.30 & 9.34 & 8.55 \\
21 & -32.13 & -30.94 & 35.42 & 35.91 \\
22 & -44.26 & -30.58 & 32.27 & 27.96 \\
23 & -36.87 & -31.13 & 39.33 & 39.47 \\
24 & -27.88 & -28.39 & 26.46 & 27.92 \\
\bottomrule
\end{tabular}
\smallskip
\begin{minipage}{0.92\textwidth}\setstretch{1.0}
\footnotesize\textit{Notes:} Each cell reports the percentage MAE improvement of the standalone micro forecast relative to the univariate forecast at horizon $h$. The baseline includes sale-flagged quotes; the sale-filtered robustness check replaces temporary sale prices with the regular price inferred by the Nakamura--Steinsson filter, parameterisation A.
Positive values indicate lower MAE than the univariate forecast. Evaluation
windows are 2015--2019 and 2020--2024.
\end{minipage}
\end{table}

\subsubsection{Three-Expert Online Learning}
\label{app:nsa_3expert}

Table~\ref{tab:nsa_3expert_fs_vs_sarima_two_windows} reports the percentage
MAE improvement of the three-expert patient Fixed Share combination (fixed
$\alpha=0.02$, $\eta=0.5$, the main-text baseline) relative to the univariate
forecast, with the micro model trained on sales-included (baseline) and
sale-filtered micro input respectively, holding the macro and univariate
forecasts fixed. Averaging within the same four horizon groups, the
sale-filtered three-expert combination's percentage MAE improvements are
$0.3\%$, $-0.9\%$, $-2.3\%$, and $-0.2\%$ in the pre-2020 window and $8.7\%$,
$10.5\%$, $10.4\%$, and $24.9\%$ in the post-2020 window. The corresponding
gaps relative to the baseline three-expert combination are $-0.9$, $+0.5$,
$-0.6$, and $+2.7$ percentage points pre-2020 (averaging a $0.4$
percentage-point gain) and $-3.3$, $-3.3$, $-1.5$, and $-0.3$ percentage
points post-2020 (averaging a $2.1$ percentage-point cost). The post-COVID
standalone gap therefore shrinks from $6.3$ to $2.1$ percentage points once
the sale-filtered micro forecast is combined with the macro and univariate
forecasts through the Fixed Share combination: when the sale-filtered micro
expert weakens, the algorithm reallocates weight toward the macro and
univariate forecasts.

\begin{table}[htbp!]
\centering
\caption{Three-Expert Fixed Share: Percent MAE Improvement Relative to the Univariate Forecast under Baseline versus Nakamura--Steinsson Filter-A Sales Treatment ($\alpha=0.02$, $\eta=0.5$)}
\label{tab:nsa_3expert_fs_vs_sarima_two_windows}
\begin{tabular}{crrrr}
\toprule
 & \multicolumn{2}{c}{Pre-2020} & \multicolumn{2}{c}{Post-2020} \\
\cmidrule(lr){2-3} \cmidrule(lr){4-5}
$h$ & Baseline & Sale-Filtered & Baseline & Sale-Filtered \\
\midrule
1 & -1.58 & -0.84 & 12.03 & 12.29 \\
2 & 1.83 & -0.61 & 7.64 & 4.46 \\
3 & 4.29 & 4.44 & 7.40 & 5.13 \\
4 & 0.66 & 0.09 & 13.17 & 7.08 \\
5 & -0.14 & 0.41 & 13.72 & 12.16 \\
6 & 2.37 & -1.39 & 17.89 & 11.10 \\
7 & -0.08 & -1.54 & 15.63 & 12.35 \\
8 & -1.64 & -3.40 & 14.73 & 12.96 \\
9 & -3.29 & -3.60 & 18.23 & 15.56 \\
10 & -0.93 & -0.82 & 15.22 & 13.09 \\
11 & -1.85 & 2.03 & 15.89 & 11.65 \\
12 & -0.32 & 2.00 & 2.81 & -2.82 \\
13 & 2.33 & 1.88 & 12.26 & 9.51 \\
14 & 4.33 & -0.51 & 15.49 & 10.33 \\
15 & -6.69 & -5.83 & 13.89 & 9.86 \\
16 & -6.57 & -6.69 & 10.52 & 6.71 \\
17 & -5.06 & -4.74 & 8.27 & 12.13 \\
18 & 1.55 & 1.95 & 10.88 & 13.92 \\
19 & 1.15 & 1.54 & 13.64 & 14.06 \\
20 & -5.52 & -6.62 & 15.88 & 14.78 \\
21 & -4.20 & -1.84 & 30.03 & 29.53 \\
22 & -8.20 & -1.04 & 30.22 & 29.06 \\
23 & -4.08 & 1.91 & 35.30 & 35.47 \\
24 & 3.30 & 4.98 & 26.53 & 26.64 \\
\bottomrule
\end{tabular}
\smallskip
\begin{minipage}{0.92\textwidth}\setstretch{1.0}
\footnotesize\textit{Notes:} Each cell reports the percentage MAE improvement of the three-expert patient fixed-share combination (fixed $\alpha=0.02$, $\eta=0.5$) relative to the univariate forecast at horizon $h$, with the micro expert trained on sales-included (baseline) versus Nakamura--Steinsson Filter-A sale-filtered micro input, holding the macro and univariate forecasts fixed.
Positive values indicate lower MAE than the univariate forecast. Evaluation
windows are 2015--2019 and 2020--2024.
\end{minipage}
\end{table}

\clearpage
\section{Additional Results}\label{app:additional_results}

\subsection{Macro Forecasts versus the Univariate Benchmark}
\label{app:macro_vs_sarima}

While the primary focus of the paper is on microdata, we also document the
performance of the macro forecast, which displays an analogous
regime-dependent pattern.

\paragraph{Visual evidence across forecast horizons.}
Figure~\ref{fig:xgb_macro_vs_sarima_h12} plots both sets of forecasts against
realized inflation at horizon $h=12$. In the pre-2020 panels, the univariate and
macro forecasts track realized inflation similarly, with the univariate forecast generally
remaining slightly closer. The post-2020 panel shows why $h=12$ is one of the
macro forecast's weakest post-2020 horizons: neither forecast anticipates the
2021--22 run-up, and the macro forecast stays flat through the surge, so the
univariate forecast retains a $9.9\%$ MAE advantage at this horizon. The
macro forecast's gains are instead concentrated at long horizons, where the
post-2020 improvements reach $29\%$ to $42\%$ ($h=21$ to $24$;
Table~\ref{tab:fine_macro_rents_vs_sarima_improvement}), as the univariate
forecast overshoots badly in 2023--24 at those horizons while the macro
forecast does not. Figures~\ref{fig:xgb_macro_vs_sarima_h1},
\ref{fig:xgb_macro_vs_sarima_h6}, and~\ref{fig:xgb_macro_vs_sarima_h24} report
the $h=1$, $h=6$, and $h=24$ forecasts.

\begin{figure}[t!]
  \centering
  \caption{Macro Forecast vs.\ Univariate Forecast: $h=12$}
  \label{fig:xgb_macro_vs_sarima_h12}
  \includegraphics[width=\textwidth]{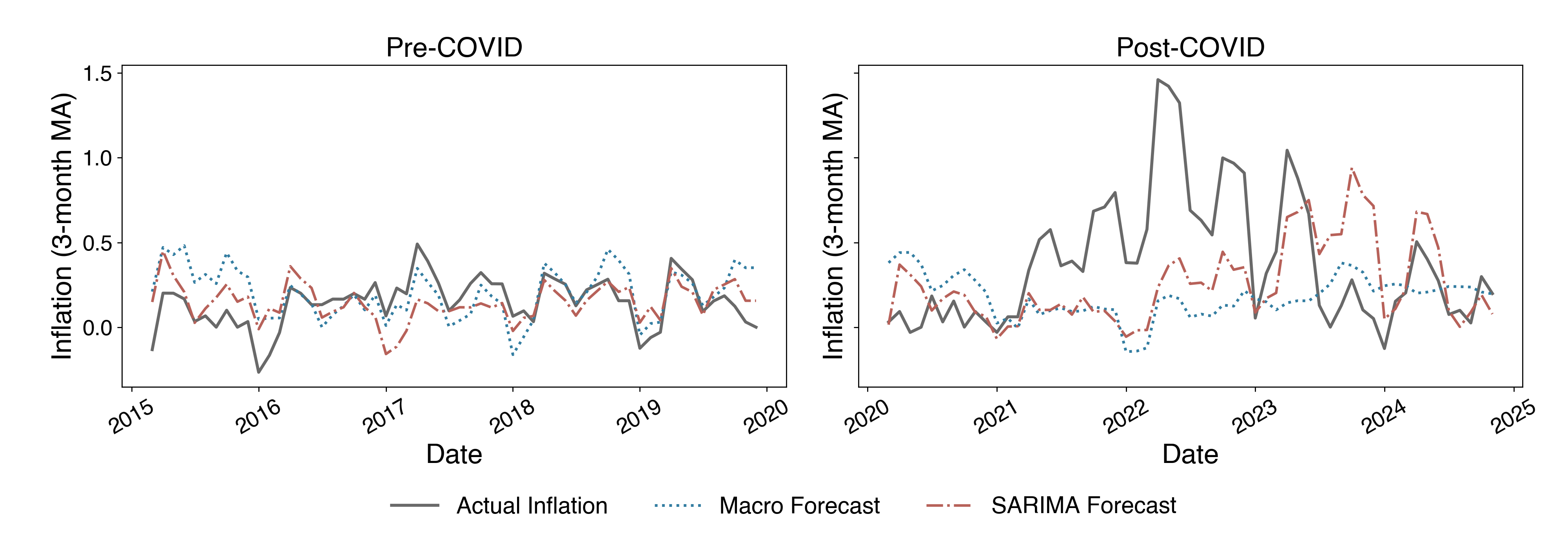}
  \smallskip
  \begin{minipage}{\textwidth}\setstretch{1.0}
    \footnotesize\textit{Notes:} Each panel plots 3-month trailing moving
    averages of realized monthly inflation, the macro 12-month-ahead
    forecast, and the univariate (SARIMA) 12-month-ahead forecast. Dates on
    the x-axis are target months: a forecast dated $t$ was formed at origin
    $t-12$. The panels correspond to the 2015--2019 and 2020--2024 evaluation
    windows.
  \end{minipage}
\end{figure}

\paragraph{Aggregate evidence: pre-2020 versus post-2020 windows.}
Table~\ref{tab:fine_macro_rents_vs_sarima_improvement} reports the
percentage MAE improvement of the macro forecast relative to the univariate forecast
across the two evaluation windows, pre-2020 (2015--2019) and post-2020
(2020--2024), for all horizons $h=1,\ldots,24$;
Figure~\ref{fig:xgb_macro_vs_sarima_improvement_heatmap} visualizes the
same comparison as a heatmap.

Over the pre-2020 window, the macro forecast underperforms the univariate forecast at all
24 horizons; the smallest shortfall is $5.2\%$ at $h=13$. Averaging within
four horizon groups ($h=1$ to $6$, $7$ to $12$,
$13$ to $18$, and $19$ to $24$), the percentage MAE losses are $15.1\%$,
$19.4\%$, $15.9\%$, and $13.5\%$. Macro covariates add no
predictive value beyond the univariate benchmark in a stable-inflation
environment.

After 2020, the ranking changes. The macro forecast improves on the univariate forecast at 19
of the 24 horizons, with the exceptions at $h=2$ to $5$ (between $-1.5\%$
and $-3.8\%$) and at $h=12$ ($-9.9\%$). Averaging within the same four
horizon groups, the post-COVID percentage MAE gains are $-0.4\%$, $9.4\%$,
$6.4\%$, and $26.7\%$, rising with the horizon and peaking at the longest
horizons, where the largest gain reaches $41.9\%$ at $h=23$.
The contrast between the two windows indicates that the value of macro data
is state dependent: limited during normal times but substantially higher
during periods of macroeconomic stress.

\begin{table}[htbp!]
\centering
\caption{Macro percent MAE improvement over SARIMA, pre- and post-COVID.}
\label{tab:fine_macro_rents_vs_sarima_improvement}
\begin{tabular}{crr}
\toprule
$h$ & Pre-COVID \% impr. & Post-COVID \% impr. \\
\midrule
1 & -25.19 & 7.86 \\
2 & -22.25 & -3.14 \\
3 & -14.44 & -3.82 \\
4 & -11.02 & -3.09 \\
5 & -5.59 & -1.51 \\
6 & -11.86 & 1.10 \\
7 & -10.26 & 14.52 \\
8 & -19.80 & 16.59 \\
9 & -25.06 & 19.53 \\
10 & -21.28 & 10.08 \\
11 & -19.13 & 5.34 \\
12 & -20.69 & -9.92 \\
13 & -5.24 & 3.59 \\
14 & -10.14 & 2.51 \\
15 & -23.74 & 11.30 \\
16 & -25.62 & 4.18 \\
17 & -20.12 & 7.66 \\
18 & -10.26 & 9.14 \\
19 & -13.54 & 10.09 \\
20 & -23.46 & 11.46 \\
21 & -20.07 & 33.55 \\
22 & -10.89 & 34.12 \\
23 & -5.73 & 41.87 \\
24 & -7.28 & 29.18 \\
\bottomrule
\end{tabular}
\end{table}

\subsection{Univariate + Macro Online Learning as a Benchmark without Microdata}
\label{app:sarima_macro_online}

To isolate the incremental contribution of microdata to the online-learning
step, we consider a two-expert benchmark that combines only the macro and
univariate forecasts, using the same patient Fixed Share baseline as in
Subsection~\ref{sec:fixed_share} ($\eta=0.5$, $\alpha=0.02$). This exercise asks
whether a flexible combination of macro and
univariate forecasts can already replicate the gains that we attribute to
adding microdata.

\paragraph{Results.}
Table~\ref{tab:fs_mae_two_windows_macsar}
shows that the two-expert benchmark tracks the univariate forecast before
2020 and improves on it broadly afterwards. In the pre-2020 window, the
percentage MAE improvements averaged within four horizon groups ($h=1$ to
$6$, $7$ to $12$, $13$ to $18$, and $19$ to $24$) are $1.1\%$, $-1.5\%$,
$-3.4\%$, and $0.4\%$, with positive gains at 9 of the 24 horizons and an
average gap of $-0.9\%$. In the
post-2020 window, by contrast, the two-expert combination improves on the univariate forecast at
all 24 horizons, with group averages of $8.8\%$, $12.9\%$, $9.6\%$, and
$21.6\%$, averaging $13.2\%$ and strongest at the longest horizons. Fixed
Share combination of macro and univariate
information therefore captures a substantial share of the post-2020
forecasting signal.
Figure~\ref{fig:online_sm_improvement_heatmap} visualizes the
two-expert versus univariate forecast comparison as a heatmap.

\begin{table}[htbp!]
\centering
\caption{Fixed-Share Combination Mean Absolute Error Before and After 2020 (Macro + SARIMA Combination)}
\label{tab:fs_mae_two_windows_macsar}
\begin{tabular}{crrrrrr}
\toprule
& \multicolumn{3}{c}{Pre 2020} & \multicolumn{3}{c}{Post 2020} \\
\cmidrule(lr){2-4} \cmidrule(lr){5-7}
$h$ & FS & Macro & SARIMA & FS & Macro & SARIMA \\
\midrule
1 & 0.1390 & 0.1708 & 0.1364 & 0.3487 & 0.3607 & 0.3915 \\
2 & 0.1352 & 0.1678 & 0.1373 & 0.3596 & 0.3962 & 0.3842 \\
3 & 0.1382 & 0.1629 & 0.1424 & 0.3786 & 0.4137 & 0.3985 \\
4 & 0.1523 & 0.1679 & 0.1512 & 0.3624 & 0.4080 & 0.3957 \\
5 & 0.1440 & 0.1559 & 0.1476 & 0.3553 & 0.4035 & 0.3975 \\
6 & 0.1469 & 0.1681 & 0.1503 & 0.3782 & 0.4221 & 0.4268 \\
7 & 0.1472 & 0.1631 & 0.1479 & 0.4104 & 0.4072 & 0.4764 \\
8 & 0.1546 & 0.1835 & 0.1532 & 0.4082 & 0.4040 & 0.4843 \\
9 & 0.1500 & 0.1843 & 0.1474 & 0.4139 & 0.4079 & 0.5070 \\
10 & 0.1579 & 0.1876 & 0.1547 & 0.3773 & 0.4050 & 0.4504 \\
11 & 0.1513 & 0.1763 & 0.1479 & 0.4283 & 0.4622 & 0.4883 \\
12 & 0.1598 & 0.1878 & 0.1556 & 0.4183 & 0.4656 & 0.4235 \\
13 & 0.1820 & 0.1921 & 0.1825 & 0.4374 & 0.4582 & 0.4753 \\
14 & 0.1848 & 0.2004 & 0.1820 & 0.4539 & 0.4755 & 0.4878 \\
15 & 0.1760 & 0.2042 & 0.1651 & 0.4325 & 0.4340 & 0.4893 \\
16 & 0.1757 & 0.2079 & 0.1655 & 0.4439 & 0.4569 & 0.4768 \\
17 & 0.1683 & 0.1928 & 0.1605 & 0.4158 & 0.4360 & 0.4722 \\
18 & 0.1723 & 0.1868 & 0.1694 & 0.4156 & 0.4303 & 0.4736 \\
19 & 0.1844 & 0.2034 & 0.1792 & 0.4087 & 0.4244 & 0.4720 \\
20 & 0.1814 & 0.2062 & 0.1670 & 0.4083 & 0.4172 & 0.4712 \\
21 & 0.1715 & 0.1988 & 0.1656 & 0.5050 & 0.4339 & 0.6530 \\
22 & 0.1678 & 0.1877 & 0.1693 & 0.4607 & 0.4090 & 0.6208 \\
23 & 0.1600 & 0.1782 & 0.1686 & 0.4850 & 0.4105 & 0.7061 \\
24 & 0.1570 & 0.1907 & 0.1778 & 0.4463 & 0.4113 & 0.5808 \\
\bottomrule
\end{tabular}
\smallskip
\begin{minipage}{0.95\textwidth}\setstretch{1.0}
\footnotesize\textit{Notes:} Mean absolute error of the patient fixed-share
combination (FS, $\alpha=0.02$, $\eta=0.5$, $\rho=0$, grid=fine) against
its experts (Macro, SARIMA), by forecast horizon.
The pre-2020 window is 2015--2019 and the post-2020 window is 2020--2024.
\end{minipage}
\end{table}

\begin{figure}[p!]
\centering
\caption{Macro Forecast MAE Improvement Relative to the Univariate Forecast}
\label{fig:xgb_macro_vs_sarima_improvement_heatmap}
\includegraphics[width=0.9\textwidth,height=0.16\textheight,keepaspectratio]{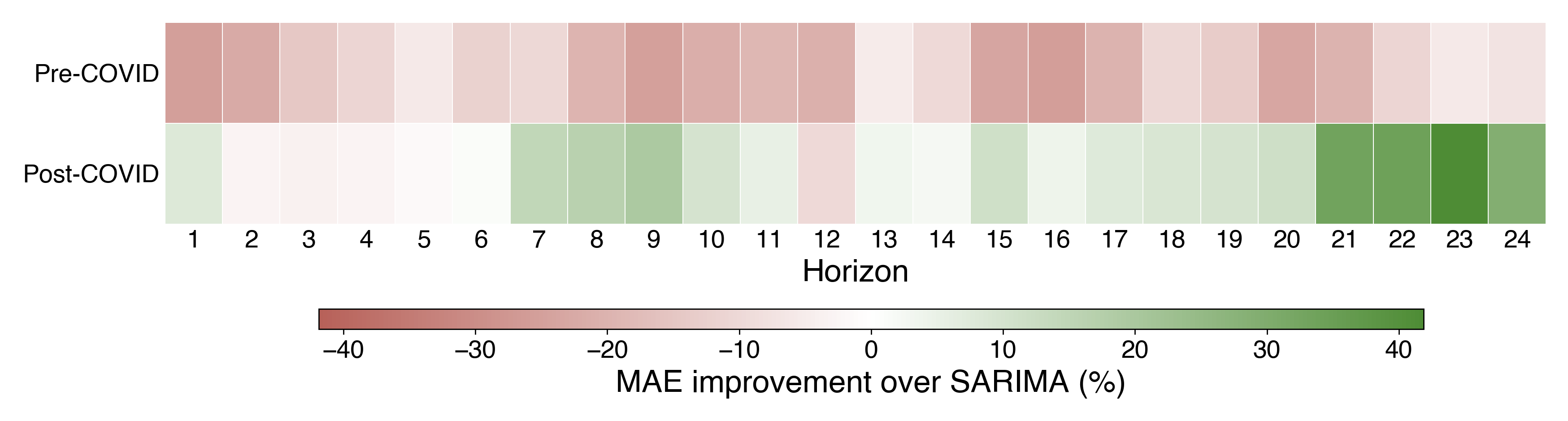}
  \smallskip
  \begin{minipage}{0.96\textwidth}\setstretch{1.0}
    \scriptsize\textit{Notes:} Cells report the percentage MAE improvement
    of the macro forecast relative to the univariate forecast by forecast horizon $h=1,\ldots,24$
    and evaluation window. Positive values indicate lower MAE for the macro
    forecast than for the univariate forecast; negative values indicate the reverse. Green
    shading denotes positive improvements and red shading denotes negative
    improvements. Numerical values are reported in
    Table~\ref{tab:fine_macro_rents_vs_sarima_improvement}.
  \end{minipage}

\vspace{0.6em}

\caption{Univariate + Macro Online-Learning MAE Improvement Relative to the Univariate Forecast}
\label{fig:online_sm_improvement_heatmap}
\includegraphics[width=0.9\textwidth,height=0.16\textheight,keepaspectratio]{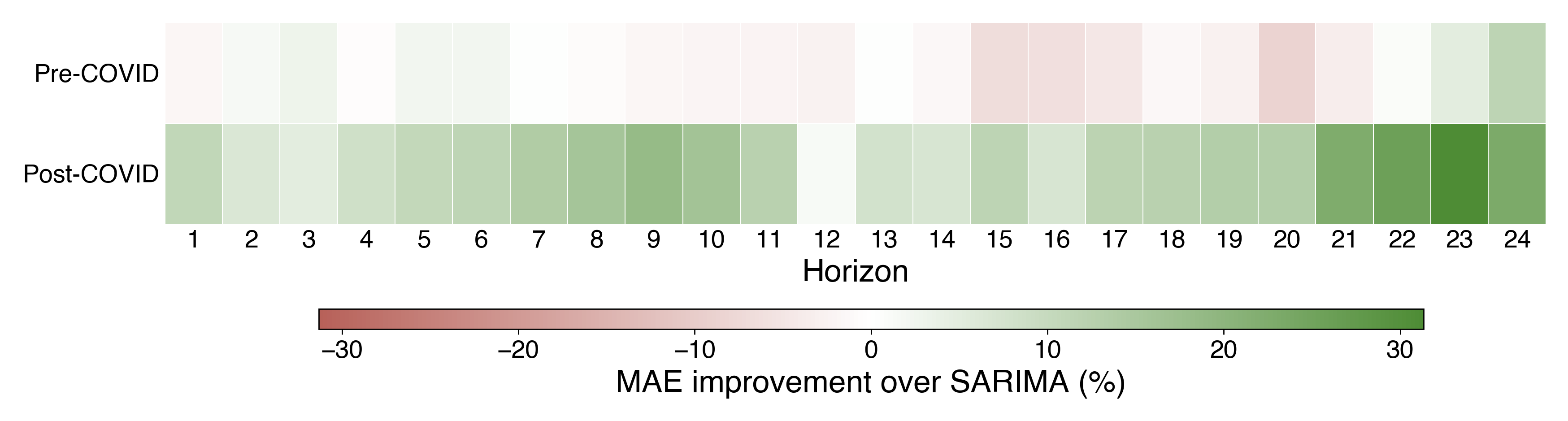}
  \smallskip
  \begin{minipage}{0.96\textwidth}\setstretch{1.0}
    \scriptsize\textit{Notes:} Cells report the percentage MAE improvement
    of the two-expert Fixed Share combination relative to the univariate forecast by
    forecast horizon $h=1,\ldots,24$ and evaluation window. The two experts
    are the macro and univariate forecasts. Positive values indicate lower MAE for the
    combination than for the univariate forecast; negative values indicate the
    reverse. Numerical values are reported in
    Table~\ref{tab:fs_mae_two_windows_macsar}.
  \end{minipage}
\end{figure}

\paragraph{Visual evidence across forecast horizons.}
Figure~\ref{fig:online_sm_h12} displays the aggregate forecast path and expert
weights for the representative horizon $h=12$. In the pre-2020 window, the
two-expert combination tracks realized inflation roughly as closely as the univariate forecast
alone, reflecting modest diversification gains. In the post-2020 window, no
forecast anticipates the 2021--22 run-up at this horizon, and the combined
forecast tracks the univariate forecast through most of the episode,
improving on it only modestly at $h=12$; as with the macro expert itself, the
two-expert combination's gains are concentrated at the longest horizons
(Table~\ref{tab:fs_mae_two_windows_macsar}).
Figures~\ref{fig:online_sm_h1}, \ref{fig:online_sm_h6}, and~\ref{fig:online_sm_h24}
report the corresponding combination weights for $h=1$, $h=6$, and $h=24$.

\begin{figure}[t!]
  \centering
  \caption{Two-Expert Combined Forecast and Weights: $h=12$}
  \label{fig:online_sm_h12}
  \includegraphics[width=\textwidth]{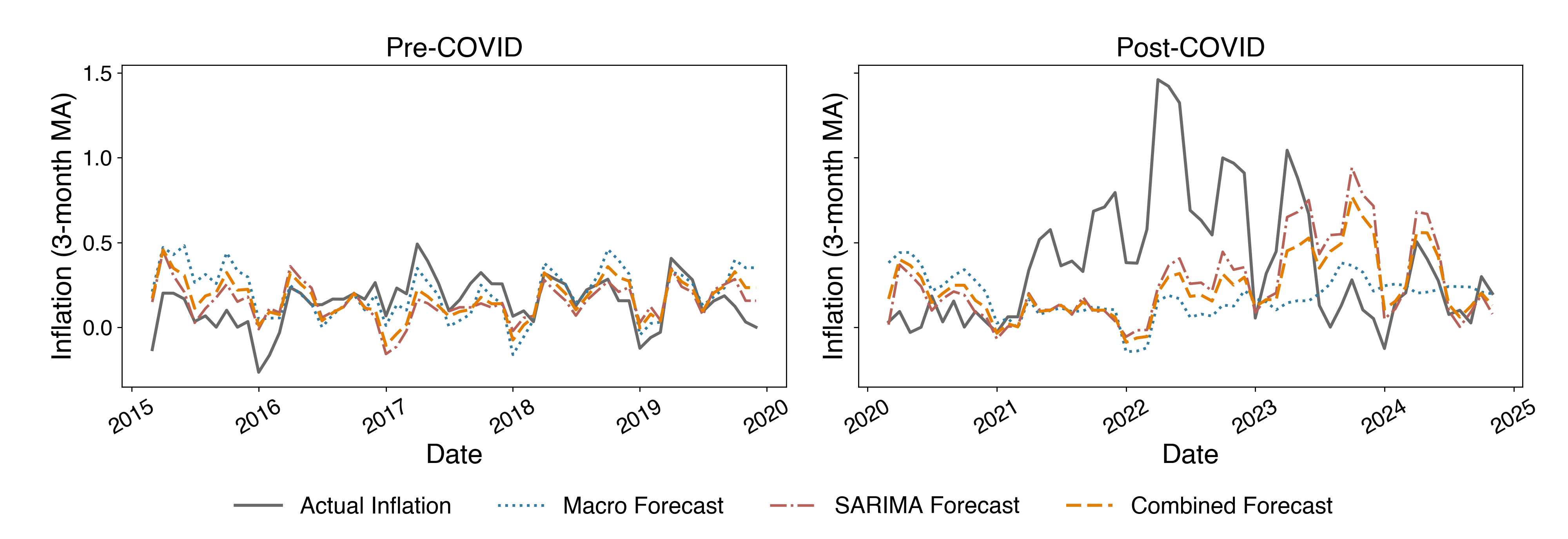}

  \medskip

  \includegraphics[width=\textwidth]{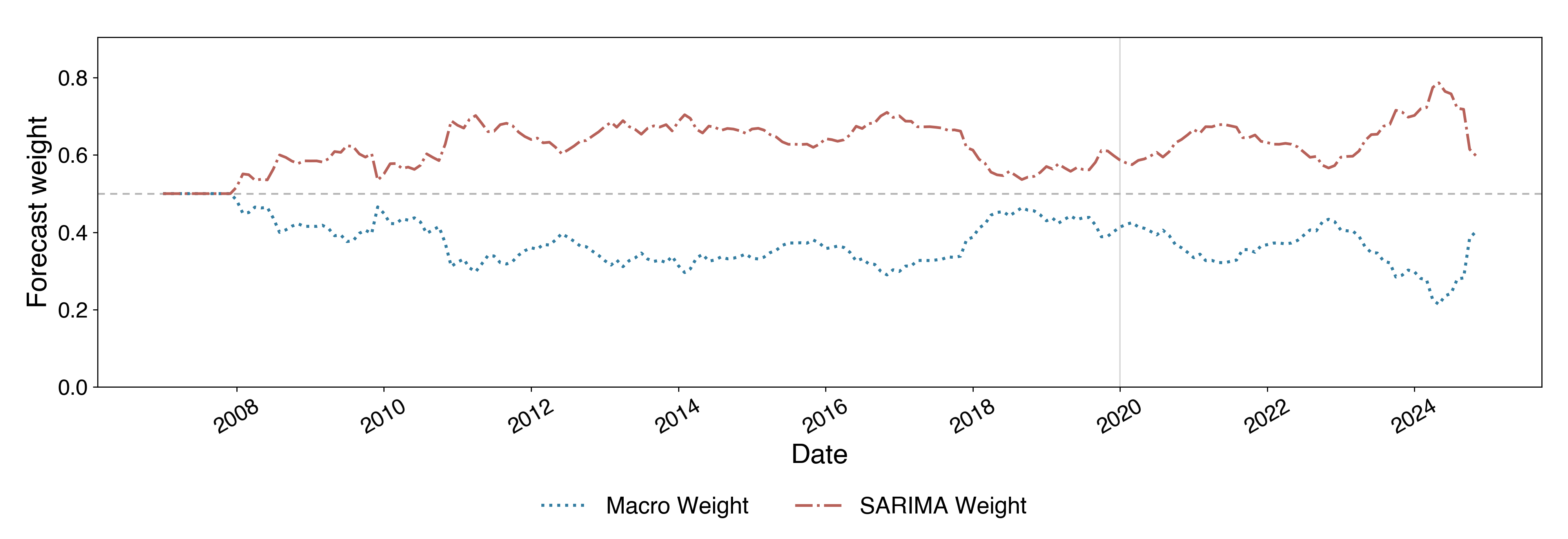}
  \smallskip
  \begin{minipage}{\textwidth}\setstretch{1.0}
    \footnotesize\textit{Notes:} The top panel plots 3-month trailing moving
    averages of realized inflation, the macro and univariate expert
    forecasts, and the two-expert Fixed Share combined forecast for the
    pre-2020 (2015--2019) and post-2020 (2020--2024) windows; dates on the
    x-axis are target months, so a forecast dated $t$ was formed at origin
    $t-12$. The bottom panel plots the combination weights
    on the two experts over the full sample, with the dashed horizontal line
    at the equal weight $1/2$ and the vertical line marking 2020.
  \end{minipage}
\end{figure}

At $h=1$, the combination leans toward the univariate forecast before 2020,
with an average univariate weight of 0.660, but the macro weight rises to
0.539 after 2020 (see Appendix Figure~\ref{fig:online_sm_h1}). At $h=6$, the
average post-2020 weights are 0.457 on the macro forecast and 0.543 on the
univariate forecast (see Appendix Figure~\ref{fig:online_sm_h6}). At $h=12$,
the univariate forecast keeps the majority weight in both windows (0.623
before 2020 and 0.647 after), consistent with $h=12$ being the macro
expert's weakest post-2020 horizon. At $h=24$, the weights stay close to
balanced, with average macro weights of 0.555 before and 0.532 after 2020
(see Appendix Figure~\ref{fig:online_sm_h24}).

\clearpage

\clearpage
\subsection{Three-Expert Combination with a Varying Learning Rate}
\label{app:three_expert_adaptive}

Table~\ref{tab:adaptive_mae_two_windows} reports the MAE of the three-expert
combination when the learning rate is re-selected at each forecast origin by
the neighbor-only rolling search of
Algorithm~\ref{alg:fs_precompute_neighbor_select_eta} ($W=60$ months, fixed
$\alpha=0.02$), alongside the fixed $\eta=0.5$ baseline reported in the main
text. The two specifications are nearly indistinguishable: the adaptive rule
attains a lower MAE at 9 of the 24 horizons before 2020 and at 12 of the 24
horizons after 2020, and the average gap is within $0.3$ percent in both
windows. The transparency of a fixed learning rate therefore costs nothing in
accuracy.

\begin{table}[htbp!]
\centering
\caption{Adaptive Fixed-Share MAE Before and After 2020}
\label{tab:adaptive_mae_two_windows}
\begin{tabular}{crrrr}
\toprule
& \multicolumn{2}{c}{Pre 2020} & \multicolumn{2}{c}{Post 2020} \\
\cmidrule(lr){2-3} \cmidrule(lr){4-5}
$h$ & Adaptive $\eta$ & Fixed $\eta$ & Adaptive $\eta$ & Fixed $\eta$ \\
\midrule
1 & 0.1381 & 0.1385 & 0.3417 & 0.3444 \\
2 & 0.1376 & 0.1348 & 0.3506 & 0.3548 \\
3 & 0.1384 & 0.1363 & 0.3664 & 0.3690 \\
4 & 0.1501 & 0.1502 & 0.3421 & 0.3436 \\
5 & 0.1489 & 0.1478 & 0.3416 & 0.3429 \\
6 & 0.1472 & 0.1467 & 0.3505 & 0.3505 \\
7 & 0.1491 & 0.1481 & 0.4010 & 0.4019 \\
8 & 0.1565 & 0.1557 & 0.4138 & 0.4130 \\
9 & 0.1494 & 0.1523 & 0.4137 & 0.4145 \\
10 & 0.1570 & 0.1562 & 0.3847 & 0.3819 \\
11 & 0.1498 & 0.1507 & 0.4123 & 0.4107 \\
12 & 0.1567 & 0.1561 & 0.4129 & 0.4116 \\
13 & 0.1780 & 0.1783 & 0.4176 & 0.4170 \\
14 & 0.1741 & 0.1741 & 0.4123 & 0.4122 \\
15 & 0.1770 & 0.1761 & 0.4238 & 0.4213 \\
16 & 0.1771 & 0.1763 & 0.4287 & 0.4266 \\
17 & 0.1689 & 0.1686 & 0.4332 & 0.4332 \\
18 & 0.1661 & 0.1668 & 0.4212 & 0.4221 \\
19 & 0.1774 & 0.1771 & 0.4050 & 0.4077 \\
20 & 0.1756 & 0.1762 & 0.3953 & 0.3964 \\
21 & 0.1730 & 0.1725 & 0.4546 & 0.4569 \\
22 & 0.1856 & 0.1831 & 0.4333 & 0.4332 \\
23 & 0.1777 & 0.1754 & 0.4580 & 0.4569 \\
24 & 0.1703 & 0.1719 & 0.4247 & 0.4267 \\
\bottomrule
\end{tabular}
\smallskip
\begin{minipage}{0.95\textwidth}\setstretch{1.0}\footnotesize\textit{Notes:} Mean absolute
error of the varying-learning-rate patient fixed share, which re-selects $\eta$ at each
origin by the neighbor-only rolling search ($W=60$, fixed $\alpha=0.02$,
grid=fine), against the fixed $\eta=0.5$ patient baseline, by forecast
horizon. Pre-2020 is 2015--2019, post-2020 is 2020--2024.\end{minipage}
\end{table}

\clearpage

\subsection{6-Group Shapley Values}
\label{app:6_group_shapley}

\paragraph{Construction.}
We partition the micro variables, computed within each COICOP1 category and
then stacked across categories (except the monthly dummies, which are common
across categories), into six groups:
\begin{enumerate}
  \item the fraction of zero price changes;
  \item the mean price change;
  \item the higher moments of the price-change distribution;
  \item the deciles of the price-change distribution;
  \item the time since last change features;
  \item monthly dummies.
\end{enumerate}

This decomposition is based on the augmented encoding rather than the baseline,
so it adds two feature groups that are absent from the main-text decomposition,
namely the higher moments of the price-change distribution and the hazard-style
statistics based on time since last adjustment. It lets us separate the
location of the price-change distribution (the mean), its coarse shape
(deciles), its higher moments, and the time-since-last-change features.

\begin{table}[htbp!]
\centering
\caption{6-Grouped Shapley Values for the Micro Expert, Averaged over 24 Horizons}
\label{tab:micro_shapley_aggregate_24h_6}
\begin{tabular}{lrr}
\toprule
Feature group & Pre-2020 & Post-2020 \\
\midrule
Fraction of zero price changes & 0.5964 & 0.0743 \\
Mean price change & 0.2853 & 0.7311 \\
Higher moments of price-change distribution & -0.6774 & 0.2298 \\
Deciles of price-change distribution & -0.0886 & 0.7128 \\
Time since last change features & -0.8356 & -2.8658 \\
Monthly dummies & 0.1804 & 0.8718 \\
\bottomrule
\end{tabular}
\begin{minipage}{0.92\textwidth}\vspace{0.4em}\footnotesize\textit{Notes:}
Entries are grouped Shapley contributions of each feature group to the MAE gain
from adding the micro expert to the two-expert online-learning benchmark, which
combines the univariate and macro forecasts. Values are in percent of benchmark
MAE and are averaged over horizons $h=1,\ldots,24$ within each evaluation
window. Positive entries reduce MAE; negative entries increase it.\end{minipage}
\end{table}

Table~\ref{tab:micro_shapley_aggregate_24h_6} decomposes the augmented-encoding
micro expert. In the pre-COVID window, no feature
group makes a large positive contribution: the fraction of zero price changes
is the most positive at $0.60$, followed by the mean price change at $0.29$ and
monthly dummies at $0.18$, while the time-since-last-change
features ($-0.84$), the higher moments ($-0.68$), and the deciles ($-0.09$)
are negative. This again points to limited
incremental forecasting content in microdata during the low and stable
inflation period.

In the post-COVID window, the positive contributions come from monthly
dummies ($0.87$), the mean price change ($0.73$), the deciles of the
price-change distribution ($0.71$), and, more modestly, the higher moments
($0.23$); the fraction of zero price changes is essentially zero ($0.07$).
The
time-since-last-change features, however, contribute $-2.87$, more than
offsetting every positive group. The six contributions therefore sum to the
augmented-encoding three-expert improvement over the two-expert
benchmark, which is negative after 2020 ($-0.25$ percent). The augmented
encoding therefore underperforms the baseline encoding after COVID, and the
shortfall is accounted for by the hazard-style timing block rather than by
the distributional statistics, consistent with the encoding-level robustness
comparison in Subsection~\ref{sec:micro_robustness}.

\begin{figure}[t!]
  \centering
  \caption{6-Horizon Moving Average of Per-Horizon 6-Group Shapley Values}
  \label{fig:micro_shapley_six_horizon_ma_6}
  \begin{subfigure}[t]{0.49\textwidth}
    \centering
    \caption{Pre-COVID}
    \includegraphics[width=\textwidth]{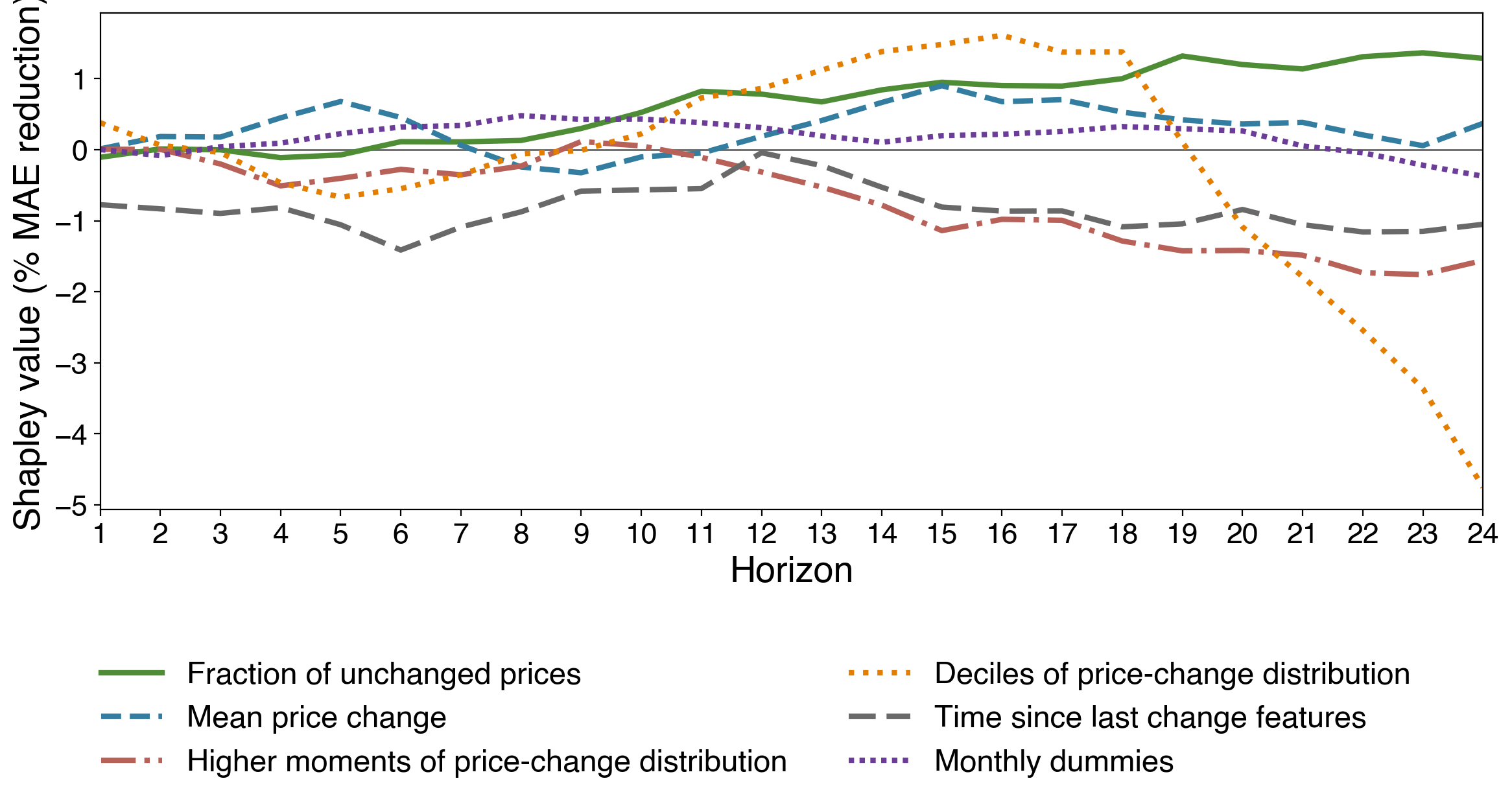}
  \end{subfigure}
  \hfill
  \begin{subfigure}[t]{0.49\textwidth}
    \centering
    \caption{Post-COVID}
    \includegraphics[width=\textwidth]{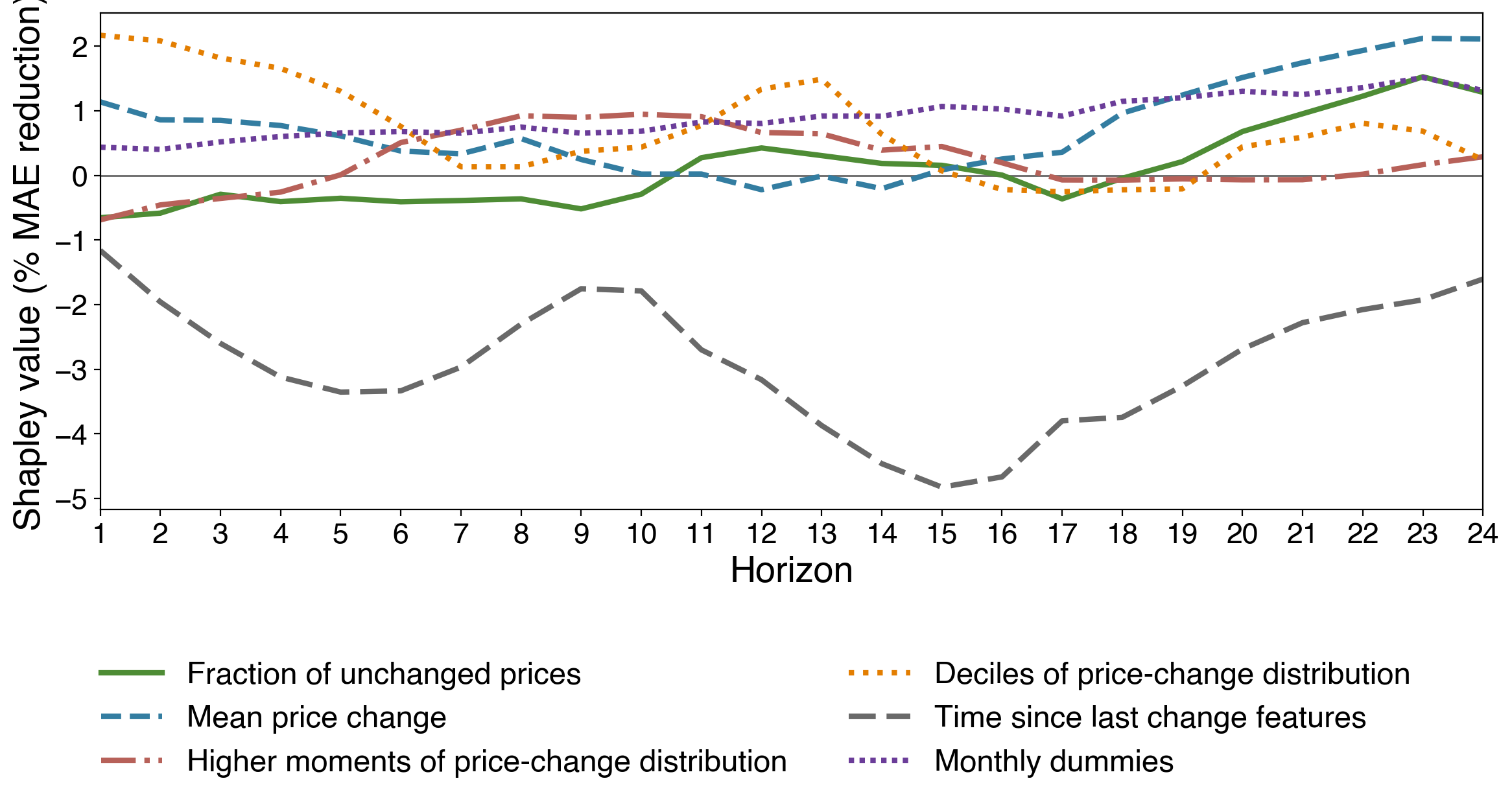}
  \end{subfigure}
  \smallskip
  \begin{minipage}{\textwidth}\setstretch{1.0}
    \footnotesize\textit{Notes:} The figure reports 6-group Shapley
    contributions for the incremental value of the augmented-encoding micro
    forecast in the three-expert combined forecast relative to the two-expert
    benchmark that combines the univariate and macro forecasts. Values are smoothed
    across adjacent horizons using a 6-horizon moving average. Positive
    values indicate reductions in MAE from the corresponding feature group;
    negative values indicate increases in MAE. The horizontal line marks a
    zero contribution.
  \end{minipage}
\end{figure}

Figure~\ref{fig:micro_shapley_six_horizon_ma_6} shows the same pattern at the
horizon level. Before COVID, the per-horizon contributions are small at short
and medium horizons, and the deciles group deteriorates steeply at the
longest horizons, echoing the long-horizon losses of the baseline
decomposition. After COVID, monthly dummies are positive at every horizon,
the mean price change strengthens toward long horizons, and the deciles are
positive at short and medium horizons. The time-since-last-change features
are strongly negative at every horizon in the post-COVID window, reaching
below $-4$ at medium horizons, and are negative at most horizons before COVID
as well. Taken together, the 6-group decomposition suggests that the
post-COVID forecasting value of microdata comes from the location and shape
of the price-change distribution together with seasonal timing, while the
hazard-style duration statistics actively hurt.

\clearpage
\subsection{4-Group Shapley Values}
\label{app:baseline_4_group_shapley}

This subsection reports a coarser 4-group Shapley decomposition. It
partitions the same baseline feature set as the 5-group specification in
Section~\ref{sec:group_shapley}, but instead of separating the center and the
tails of the price-change distribution, it keeps the mean price change and
the deciles as single groups:

\begin{enumerate}
  \item the fraction of zero price changes;
  \item the mean price change;
  \item the deciles of the price-change distribution;
  \item monthly dummies.
\end{enumerate}

\begin{table}[htbp!]
\centering
\caption{4-Group Shapley Values for the Micro Expert, Averaged over 24 Horizons}
\label{tab:micro_shapley_aggregate_24h_baseline4}
\begin{tabular}{lrr}
\toprule
Feature group & Pre-2020 & Post-2020 \\
\midrule
Fraction of zero price changes & 0.1254 & -0.0093 \\
Mean price change & -0.2216 & 0.9672 \\
Deciles of price-change distribution & -0.6009 & 0.6423 \\
Monthly dummies & 0.2993 & 1.3076 \\
\bottomrule
\end{tabular}
\begin{minipage}{0.92\textwidth}\vspace{0.4em}\footnotesize\textit{Notes:}
Entries are grouped Shapley contributions of each feature group to the MAE gain
from adding the micro expert to the two-expert online-learning benchmark, which
combines the univariate and macro forecasts. Values are in percent of benchmark
MAE and are averaged over horizons $h=1,\ldots,24$ within each evaluation
window. Positive entries reduce MAE; negative entries increase it.\end{minipage}
\end{table}

Table~\ref{tab:micro_shapley_aggregate_24h_baseline4} shows that, in the
pre-COVID window, the contributions of the deciles of the price-change
distribution and the mean price change are negative, with values $-0.60$ and
$-0.22$, respectively, while monthly dummies ($0.30$) and the
fraction of zero price changes ($0.13$) are mildly positive. Thus, before
COVID
the grouped micro contributions are small overall, with no group delivering a
large positive contribution. Because the 4-group and 5-group specifications
partition the same baseline feature set, their contributions sum to the same
totals: $-0.40$ percent before 2020 and $2.91$ percent after.

In the post-COVID window, the largest positive contribution comes from monthly
dummies at $1.31$, followed by the mean price change at $0.97$ and the
deciles of the price-change distribution at $0.64$; the fraction of zero price
changes is essentially zero, at $-0.01$. The positive contribution
therefore comes from information in the average and distribution of
price changes, together with seasonal timing, while the extensive-margin
measure contributes essentially nothing on average.

\begin{figure}[t!]
  \centering
  \caption{4-Group Shapley Values by Horizon}
  \label{fig:micro_shapley_six_horizon_ma_baseline4}
  \begin{subfigure}[t]{0.49\textwidth}
    \centering
    \caption{Pre-COVID}
    \includegraphics[width=\textwidth]{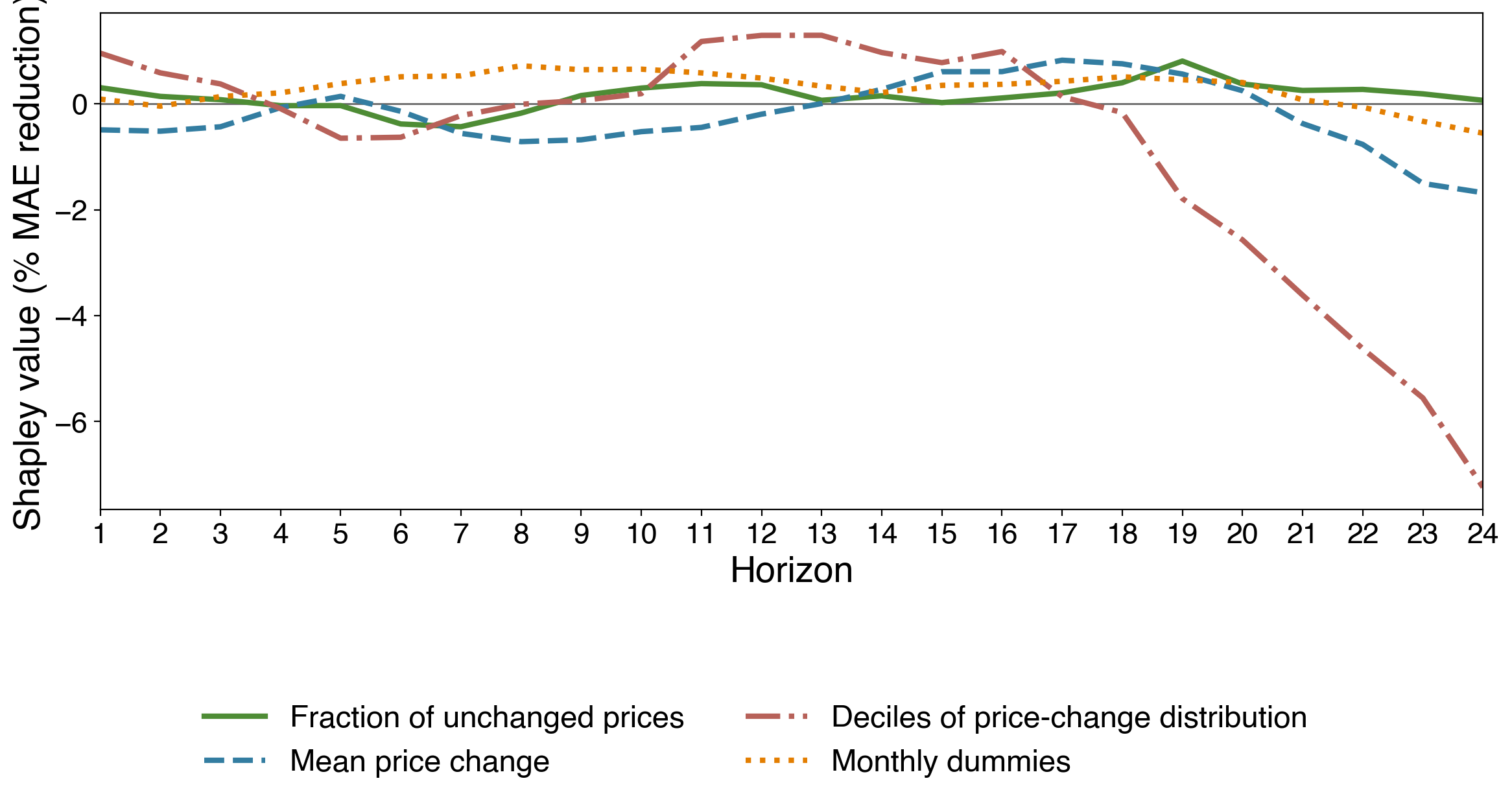}
  \end{subfigure}
  \hfill
  \begin{subfigure}[t]{0.49\textwidth}
    \centering
    \caption{Post-COVID}
    \includegraphics[width=\textwidth]{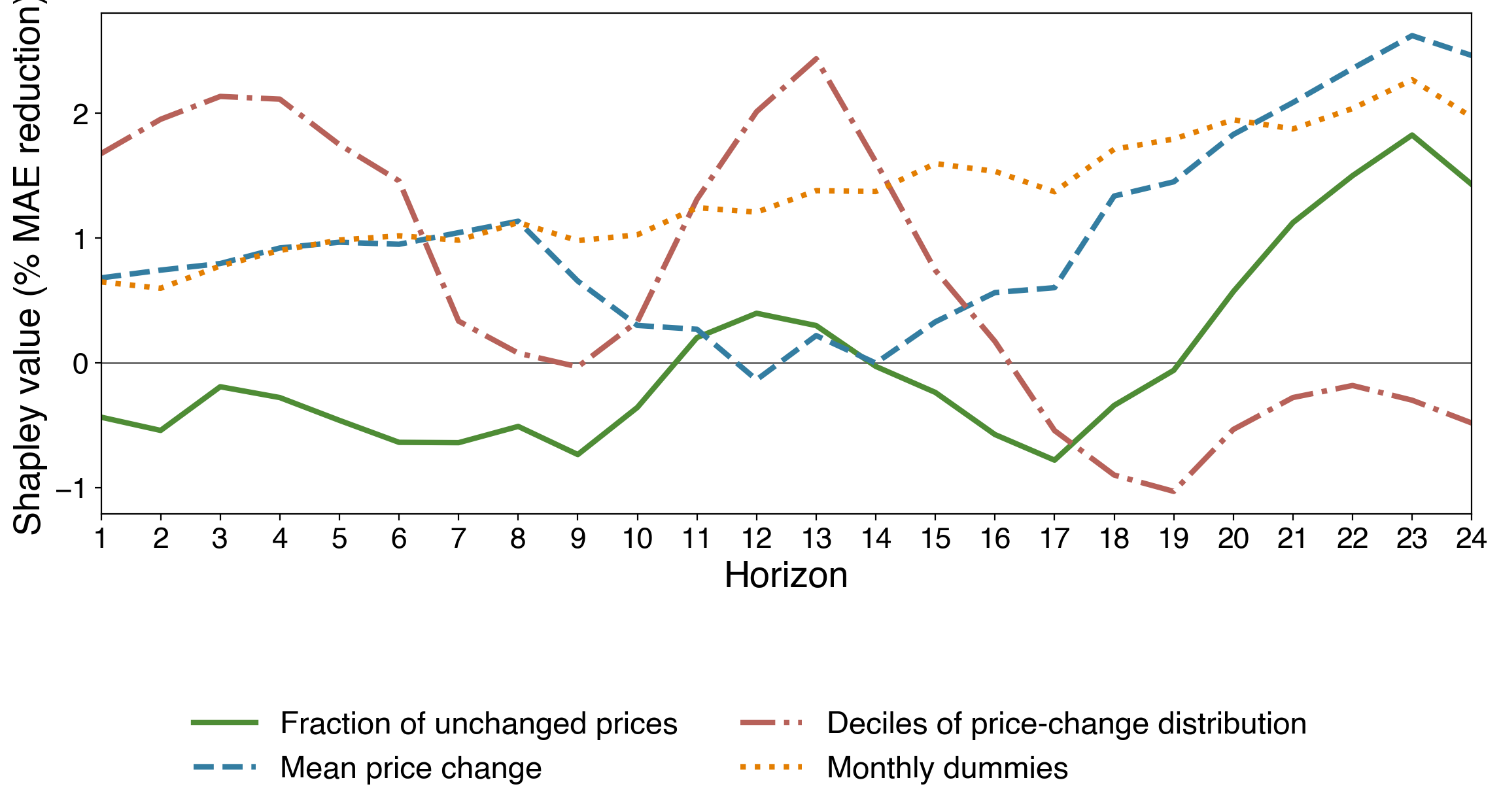}
  \end{subfigure}
  \smallskip
  \begin{minipage}{\textwidth}\setstretch{1.0}
    \footnotesize\textit{Notes:} The figure reports grouped Shapley
    contributions for the incremental value of the micro forecast in the
    three-expert combined forecast relative to the two-expert
    benchmark that combines the univariate and macro forecasts. Values are smoothed
    across adjacent horizons using a 6-horizon moving average. Positive
    values indicate reductions in MAE from the corresponding feature group;
    negative values indicate increases in MAE. The horizontal line marks a
    zero contribution.
  \end{minipage}
\end{figure}

Figure~\ref{fig:micro_shapley_six_horizon_ma_baseline4} confirms the same broad
message at the horizon level: before COVID the contributions are small at
short and medium horizons, with the losses concentrated in the deciles group
at the longest horizons; after
COVID, mean price changes, monthly indicators, and, at short and medium
horizons, the price-change deciles are the
main positive contributors.

\clearpage
\begin{singlespace}
  \putbib
\end{singlespace}
\end{bibunit}

\end{document}